\documentclass[a4paper,USenglish,cleveref, autoref, thm-restate,numberwithinsect]{lipics-v2021}

\pdfoutput=1 \hideLIPIcs  

\title{A Dividing Line for Structural Kernelization of Component Order Connectivity via Distance to Bounded Pathwidth}
\titlerunning{A Dividing Line for Kernelization of COC via Distance to Bounded Pathwidth}

\author{Jakob Greilhuber}
{CISPA Helmholtz Center for Information Security, Saarbrücken, Germany}
{jakob.greilhuber@cispa.de}
{https://orcid.org/0009-0001-8796-6400}
{Member of the Saarbrücken Graduate School of Computer Science, Germany.}

\author{Roohani Sharma} {Discrete Mathematics Group, Institute for Basic Science (IBS), Daejeon, South Korea}{roohani@ibs.re.kr}{https://orcid.org/0000-0003-2212-1359}{Supported by the Young Scientist Fellowship (IBS-R029-C1) of the Institute for Basic Science.}

\authorrunning{J. Greilhuber and R. Sharma}

\Copyright{Jakob Greilhuber and Roohani Sharma} 

\ccsdesc[500]{Theory of computation~Parameterized complexity and exact algorithms}

\keywords{Kernelization, Component Order Connectivity, Caterpillars, Pathwidth, Structural Parameterization}

\nolinenumbers 

\EventEditors{John Q. Open and Joan R. Access}
\EventNoEds{2}
\EventLongTitle{42nd Conference on Very Important Topics (CVIT 2016)}
\EventShortTitle{CVIT 2016}
\EventAcronym{CVIT}
\EventYear{2016}
\EventDate{December 24--27, 2016}
\EventLocation{Little Whinging, United Kingdom}
\EventLogo{}
\SeriesVolume{42}
\ArticleNo{23}

\usepackage{cite}
\usepackage[most]{tcolorbox}

\newcommand{\function}[2]{\textnormal{\textsf{#1}}(#2)}
\newcommand{\Oh}{O}

\newcommand{\defparproblem}[4]{\begin{tcolorbox}[
            enhanced,
            colback=white,
            colframe=black,
            boxrule = 0.5pt,
            coltitle=black,
            title=#1,
            rounded corners,
            attach boxed title to top left={yshift=-10pt, xshift=6pt},
            bottom=-0.5mm,
            boxed title style={
                    interior style={fill=white},
                    frame hidden
                }
        ]
        \begin{tabularx}{12.5cm}{ r X p {0.5cm}}
            Input:     & #2 \\
            Parameter: & #3 \\
            Question:  & #4
        \end{tabularx}
    \end{tcolorbox}
}

\newcommand{\defproblem}[3]{\begin{tcolorbox}[
            enhanced,
            colback=white,
            colframe=black,
            boxrule = 0.5pt,
            coltitle=black,
            title=#1,
            rounded corners,
            attach boxed title to top left={yshift=-10pt, xshift=6pt},
            bottom=-0.5mm,
            boxed title style={
                    interior style={fill=white},
                    frame hidden
                }
        ]
        \begin{tabularx}{12.5cm}{ r X p {0.5cm}}
            Input:    & #2 \\
            Question: & #3
        \end{tabularx}
    \end{tcolorbox}
}

\newcommand{\range}[2][1]{[#1,#2]}
\newcommand{\abs}[1]{{\left| #1 \right|}}

\newcommand{\complexityClass}[1]{\textnormal{\textsf{{#1}}}}
\newcommand{\PClass}{\complexityClass{P}}
\newcommand{\NP}{\complexityClass{NP}}
\newcommand{\FPT}{\complexityClass{FPT}}
\newcommand{\WOne}{\textup{\textnormal{\textsf{W}}[1]}}
\newcommand{\coNPPoly}{\complexityClass{coNP/poly}}

\newcommand{\paramproblem}[2]{\textnormal{#1/#2}}

\newcommand{\problem}[1]{\textnormal{\textsc{#1}}}
\newcommand{\parameter}[1]{\texttt{#1}}

\newcommand{\vc}{\hyperref[problem:vertex_cover]{\problem{Vertex Cover}}}

\newcommand{\vcAbbreviated}{\hyperref[problem:vertex_cover]{\problem{VC}}}

\newcommand{\vcFullAbbreviated}{\hyperref[problem:vertex_cover]{\problem{Vertex Cover} (\problem{VC})}}

\newcommand{\dcoc}[1][d]{\hyperref[problem:dcoc]{$#1$\text{-}\problem{COC}}}
\newcommand{\dcocModToCaterpillars}[1][d]{\hyperref[problem:dcoc_mod_to_caterpillars]{
        \paramproblem{$#1$\text{-}\problem{COC}}{\parameter{pw}-$1$}}}
\newcommand{\dcocModToG}[1][d]{\hyperref[problem:dcoc_mod_to_g]{
        \paramproblem{$#1$\text{-}\problem{COC}}{\parameter{dist}-\parameter{to}-$\mathcal{G}$}}}

\newcommand{\coc}{\hyperref[problem:coc]{\problem{COC}}}

\newcommand{\cocFullAbbreviated}{\hyperref[problem:coc]{\problem{Component Order Connectivity} (\problem{COC})}}

\newcommand{\cocModToCaterpillarsPlusD}{\hyperref[problem:coc_mod_to_caterpillars_plus_d]{\paramproblem{\problem{COC}}{$d$+\parameter{pw}-$1$}}}

\newcommand{\cocModToGPlusD}{\hyperref[problem:coc_mod_to_g_plus_d]{\paramproblem{\problem{COC}}{$d$+\parameter{dist}-\parameter{to}-$\mathcal{G}$}}}

\newcommand{\cocModToCaterpillarsWithoutD}{\hyperref[problem:coc_mod_to_caterpillars_without_d]{\paramproblem{\problem{COC}}{\parameter{pw}-$1$}}}

\newcommand{\cocModToVCAnnotatedPlusK}{\hyperref[problem:coc_mod_to_vc_annotated_plus_k]{\paramproblem{\problem{A-COC}}{$k$+\parameter{vc}}}}

\newcommand{\cocModToVCPlusK}{\hyperref[problem:coc_mod_to_vc_plus_k]{\paramproblem{\problem{COC}}{$k$+\parameter{vc}}}}

\newcommand{\cocModToVCWithoutK}{\hyperref[problem:coc_mod_to_vc_without_k]{\paramproblem{COC}{\parameter{vc}}}}

\newcommand{\cocSolSizePlusD}{\hyperref[problem:coc_sol_size_plus_d]{\paramproblem{\problem{COC}}{$d$+$k$}}}

\newcommand{\MRSS}{\hyperref[problem:mrss]{\problem{UMRSS}}}
\newcommand{\MRSSFullAbbrv}{\hyperref[problem:mrss]{\problem{Unary Multidimensional Relaxed Subset Sum (UMRSS)}}}

\newcommand{\exactSetCover}{\hyperref[problem:exact_set_cover]{\problem{Exact Set Cover}}}

\newcommand{\exactSetCoverUniverse}{\hyperref[problem:exact_set_cover_universe]{\paramproblem{\problem{Exact Set Cover}}{\parameter{U}}}}

\newcommand{\fvs}{\hyperref[problem:fvs]{\problem{Feedback Vertex Set}}}

\newcommand{\boundedDegreeDeletion}{\hyperref[problem:bounded_degree_deletion]{\problem{Bounded Degree Deletion}}}

\newcommand{\fdeletion}{\hyperref[problem:fdeletion]{$\mathcal{F}$-\problem{MinorDeletion}}}

\newcommand{\perfectTSetMatching}{\problem{Perfect $t$-Set Matching}}

\newcommand{\dcocset}{\textnormal{$d$\text{-}\texttt{coc}} set}
\newcommand{\idFunction}{\textnormal{\textsf{id}}^d}
\newcommand{\decFunction}{\textnormal{\textsf{dec}}^d}
\newcommand{\incFunction}{\textnormal{\textsf{inc}}^d}

\newcommand{\spine}[1]{\function{spine}{#1}}
\newcommand{\pendants}[1]{\function{pendants}{#1}}
\newcommand{\opt}[1]{{\textnormal{\textsf{OPT}}}_d(#1)}
\newcommand{\conflicts}[2]{\textnormal{\textsf{conf}}_{d}(#1,#2)}

\newcommand{\partitionSeeingLotsBlocksLeftAndRight}{\tilde{\mathcal{P}}}

\newcommand{\smallPackingConstant}{{(d^3 + 1)}}
\newcommand{\packingConstant}{{c_\textsf{p}}}

\DeclareMathOperator{\fixedPoints}{\textup{\textsf{fp}}} 
\newtheorem{reductionrule}{Reduction Rule}
\crefname{reductionrule}{Reduction Rule}{Reduction Rules}

\crefname{mtheorem}{Main Theorem}{Main Theorems}

\crefname{enumi}{Step}{Steps}
\crefname{claim}{Claim}{Claims}

\begin{document}

\maketitle

\begin{abstract}
    In this work we study a classic generalization of the ubiquitous {\sc Vertex Cover (VC)} problem, called the {\sc Component Order Connectivity (COC)} problem.
In {\sc COC}, given an undirected graph $G$, integers $d \geq 1$ and $k$,
the goal is to determine if there is a set of at most $k$ vertices whose deletion results in a graph where each connected component has at most $d$ vertices.
When $d=1$, this is exactly \textsc{VC}.

This work is inspired by polynomial kernelization results with respect to structural parameters for {\sc VC}. On one hand, Jansen \& Bodlaender [\textit{TOCS} 2013] show that {\sc VC} admits a polynomial kernel when the parameter is the distance to treewidth-$1$ graphs, on the other hand Cygan, Lokshtanov, Pilipczuk, Pilipczuk \& Saurabh [\textit{TOCS} 2014] showed that {\sc VC} does not admit a polynomial kernel when the parameter is distance to treewidth-$2$ graphs.

Greilhuber \& Sharma [\textit{IPEC} 2024] showed that, for any $d \geq 2$, $d$-\textsc{COC} cannot admit a polynomial kernel when the parameter is distance to a forest of pathwidth $2$.
Here, $d$-\textsc{COC} is the same as \textsc{COC} only that $d$ is a fixed constant not part of the input.
We complement this result and show that like for the {\sc VC} problem where distance to treewidth-$1$ graphs versus distance to treewidth-$2$ graphs is the dividing line between structural parameterizations that allow and respectively disallow polynomial kernelization,
for {\sc COC} this dividing line happens between distance to pathwidth-$1$ graphs and distance to pathwidth-$2$ graphs.
The main technical result of this work is that
    {\sc COC} admits a polynomial kernel parameterized by distance to pathwidth-$1$ graphs plus $d$.

The problem  $d$-\textsc{COC} can also be expressed as an \fdeletion{} problem for an appropriate graph family $\mathcal{F}$.
One of the central questions around \fdeletion{} is for which families $\mathcal{F}$ and minor-closed graph classes $\mathcal{G}$ the problem admits a polynomial kernel when parameterized by the distance to $\mathcal{G}$.
Full dichotomies are known when (1) $\mathcal{F} = \{P_2\}$ [Bougeret et al., \textit{SIDMA} 2022], and (2) when $\mathcal{F}$ is a finite collection of biconnected graphs on at least three vertices that contains at least one planar graph [Bougeret et al., \textit{STACS 2026}].
But, as explicitly pointed out by Bougeret et al. [\textit{STACS 2026}], these results do not even capture the $2$-\textsc{COC} problem.
Our result shows that, when $d \geq 2$, the line of tractability for polynomial kernelization of \dcoc{} parameterized by the distance to $\mathcal{G}$ is different from the tractability line for any \fdeletion{} problem that falls into categories (1) or (2).
Thus, with our result, $d$-\textsc{COC} serves as an outlier in the class of \fdeletion{} problems when it comes to understanding the dichotomies for polynomial kernelization when parameterizing by the distance to some minor-closed graph class. \end{abstract}

\section{Introduction}
\label{sec:introduction}
\subparagraph*{Vertex cover.}
The \vcFullAbbreviated{} problem is one of the 21 classical \NP{}-complete problems identified in the seminal work of Karp~\cite{karpReducibilityCombinatorialProblems1972}
and serves as a canonical problem in parameterized complexity, that can be considered the paradigmatic example for developing and testing algorithmic techniques in the field \cite{cyganParameterizedAlgorithms2015,fominKernelizationTheoryParameterized2019}.
In this problem, given an undirected graph $G$ and an integer $k$,
the question is to decide whether there is a set $S \subseteq V(G)$ of size at most $k$ such that each connected component of $G - S$ has size at most one.

A particularly rich line of inquiry for \vcAbbreviated{} has focused on its kernelization complexity
when parameterized by the solution size $k$.
Kernelization (see \cite{fominKernelizationTheoryParameterized2019} or \cite[Chapter 2]{cyganParameterizedAlgorithms2015} for textbook introductions to the topic) is a rigorous paradigm for studying preprocessing for \NP{}-hard problems.
The fundamental question here is to design polynomial-time algorithms that reduce an instance of a parameterized problem to an equivalent instance whose size and parameter are bounded by a function of the parameter of the input.
The gold standard is when the output size is a polynomial function of the parameter, in which case the problem is said to admit a polynomial kernel.

Kernelization for \vcAbbreviated{} with respect to the solution size parameter $k$ has inspired a rich array of techniques
including the Buss rule~\cite{DBLP:journals/siamcomp/BussG93},
crown decompositions~\cite{DBLP:conf/wg/ChorFJ04,DBLP:conf/wg/Fellows03},
the (weighted) expansion lemma~\cite{DBLP:conf/iwpec/KumarL16},
and the LP-based kernel using the celebrated Nemhauser-Trotter theorem~\cite{DBLP:journals/jal/ChenKJ01,DBLP:journals/sigact/Khuller02,DBLP:conf/iwpec/KumarL16,soleimanfallahKernelOrder2kc2011},
resulting in the current best kernel on at most $2k - c \log k$ vertices for the problem, for some constant $c$~\cite{lampisKernelOrder22011}.
These results
have inspired a wide array of generalizations and extensions to broader problem families.

\subparagraph*{Component order connectivity.}
A natural generalization of \vcAbbreviated{} is the \cocFullAbbreviated{} problem: given a graph $G$ and integers $d \geq 1$ and $k$, decide whether there is a set $S \subseteq V(G)$ of size at most $k$ such that deleting $S$ from $G$ results in a graph where every connected component has size at most $d$.
Such a set $S$ is called a \dcocset{} of $G$.
When $d=1$ this is precisely the \vcAbbreviated{} problem.
Apart from being a classic generalization of \vcAbbreviated{},
\coc{} has independently appeared as a natural network vulnerability measure~\cite{grossSurveyComponentOrder2013,kazmierczak2003relationship}.
In this work, we are also interested in the problem family obtained by fixing $d$ to be some constant not part of the input.
Concretely, for any integer $d \geq 1$ the input of the problem \dcoc{} is a graph $G$ and an integer $k$, and one needs to decide whether $G$ has a \dcocset{} of size at most $k$.
The problem \dcoc{} is NP-complete for each $d$ \cite{lewisNodeDeletionProblemHereditary1980}.
Notice that \vc{} is exactly the same problem as \dcoc[1]{}.

Interestingly, many kernelization techniques developed for \vcAbbreviated{}
such as the Buss rule, crown decomposition, (weighted) expansion lemma and even the Nemhauser-Trotter theorem,
also extend to \dcoc{} when $d \geq 2$~\cite{drangeComputationalComplexityVertex2016,DBLP:conf/iwpec/KumarL16,xiaoLinearKernelsSeparating2017a,caselCombiningCrownStructures2024,DBLP:conf/esa/Casel0INZ21, DBLP:journals/algorithmica/BaguleyFNNPZ25,fominKernelizationTheoryParameterized2019}.
This suggests that positive kernelization results for the solution-size parameterization of \vcAbbreviated{} frequently generalize to \dcoc{} parameterized by the solution size.

However, the kernelization landscape dramatically shifts when we consider structural parameters.
One of the most important works in this direction for the \vcAbbreviated{} problem is due to Jansen \& Bodlaender~\cite{jansenVertexCoverKernelization2013b}
who demonstrated the existence of a polynomial kernel for \vcAbbreviated{} when parameterizing by the feedback vertex set number (fvs) of the graph, which is a parameter provably smaller or equal to the size of a minimum solution.
The parameter fvs can equivalently be thought of as the (vertex deletion) distance to treewidth-$1$ graphs.
This positive result is complemented by the result of Cygan, Lokshtanov, Pilipczuk, Pilipczuk, Saurabh~\cite{DBLP:journals/mst/CyganLPPS14} who showed that \vcAbbreviated{} does not admit a polynomial kernel parameterized by the distance to treewidth-$2$ graphs.\footnote{All the lower bounds mentioned here are conditional under reasonable complexity assumptions.}

In stark contrast to \vcAbbreviated{},
the behavior of \dcoc{} under structural parameterizations is more pessimistic.
Greilhuber \& Sharma~\cite{greilhuberComponentOrderConnectivity2024}
show that, when $d \geq 2$, \dcoc{} does not admit a polynomial kernel when parameterizing by fvs,
and even more, no polynomial kernel exists even when parameterizing by the distance to forests of pathwidth $2$.\footnote{The result given in \cite{greilhuberComponentOrderConnectivity2024} is stated for the parameter distance to subdivided comb graphs, which can easily seen to be forests with pathwidth at most two.}
The kernelization lower bound with respect to fvs is also implied by the result of Jansen \& Donkers~\cite{donkersTuringKernelizationDichotomy2021}.

\subparagraph*{Our main result.}
In this work, we identify a clean structural dichotomy for \dcoc{}:
analogous to the distance to treewidth-$1$ graphs versus distance to treewidth-$2$ graphs divide for polynomial kernelization for \vcAbbreviated{},
we establish that {\em distance to pathwidth-$1$ graphs versus distance to  pathwidth-$2$ graphs} form a boundary for polynomial kernelization of \dcoc{}.
Our main result shows that \dcoc{} parameterized by the distance to graphs of pathwidth at most one admits a polynomial kernel.

We denote this problem as \dcocModToCaterpillars{}.
The input is a graph $G$, an integer $k$, and a set $M \subseteq V(G)$ such that $G - M$ has pathwidth at most one.
The task is deciding if $G$ has a \dcocset{} of size at most $k$, and the parameter is $|M|$.
The assumption that the modulator $M$ is given as part of the input is without loss of generality in the sense that one can compute a modulator that is only a constant factor larger than a minimum modulator in polynomial time \cite{guptaLosingTreewidthSeparating2019}.

In our main result we show that even the problem \cocModToCaterpillarsPlusD{} admits a polynomial kernel.
Here, the problem \cocModToCaterpillarsPlusD{} is the same as \dcocModToCaterpillars{}, with the change that $d$ is now part of the input and that the parameter is $|M| + d$.
See \cref{sec:problem_definitions} for fully formal definitions of the problems we consider in this paper.
Giving a kernel for \cocModToCaterpillarsPlusD{} is significantly stronger than just having the result for \dcocModToCaterpillars{}.
For example, a running time of the form $\Oh(|M|^d)$ or kernel size of the form $O(d^d \cdot |M|^4)$ is allowed for \dcocModToCaterpillars{} kernels, but completely unacceptable for \cocModToCaterpillarsPlusD{}.

\begin{restatable}{mtheorem}{mainThmPolyKernelCaterpillarMod}
    \label{main_thm:poly_kernel_caterpillar_mod}
    The problem \cocModToCaterpillarsPlusD{} admits a kernel with $\Oh(d^7 \abs{M}^3 + d^6 \abs{M}^4)$ vertices.
    For each integer $d \geq 1$ the problem \dcocModToCaterpillars{} admits a kernel with $\Oh(\abs{M}^4)$ vertices.
\end{restatable}

\cref{main_thm:poly_kernel_caterpillar_mod} affirmatively answers an open question of Greilhuber \& Sharma~\cite{greilhuberComponentOrderConnectivity2024}, who asked whether \dcoc\ admits a polynomial kernel when the parameter is deletion distance to a linear forest (a forest in which each connected component is a path).

\subparagraph*{Minor deletion problems.}
One of the most important problem families in kernelization (and beyond) is the family of \fdeletion{} problems.
Here, $\mathcal{F}$ is a fixed finite family of graphs.
Given a graph $G$ and integer $k$, the goal of \fdeletion{} is to decide whether there is a set $S \subseteq V(G)$ of size at most $k$ such that $G - S$ is $\mathcal{F}$-minor free, that is, $G-S$ has no graph from $\mathcal{F}$ as a minor.
Numerous fundamental problems including \vc{}, \fvs{}, and \textsc{Planar Vertex Deletion} are captured by this expressive problem family (see \cite{bougeretKernelizationDichotomiesHitting2025} for a recent and detailed introduction to \fdeletion{}).
The problem \dcoc{} is no exception, and also covered by the framework.
Concretely, \vcAbbreviated{} is the same as \fdeletion{} when $\mathcal{F} = \{P_2\}$,
the problem \dcoc[2]{} is obtained by setting $\mathcal{F} = \{P_3\}$, and for $d \geq 3$ setting $\mathcal{F}$ to be the family of all connected graphs on $d+1$ vertices suffices.
With respect to the parameter solution-size polynomial kernels are known to exist when $\mathcal{F}$ contains a planar graph \cite{fominPlanarFdeletionApproximation2012}, and there is an influential line of work dedicated to understanding the kernelization complexity of \fdeletion{} problems with respect to structural parameterizations
\cite{bougeretBridgedepthCharacterizesWhich2022,bougeretKernelizationDichotomiesHitting2025,donkersTuringKernelizationDichotomy2021,dekkerKernelizationFeedbackVertex2024,DBLP:journals/tcs/JansenP20,bougeretHowMuchDoes2019}.
The perhaps most important question that has emerged is the following: For which families $\mathcal{F}$ and minor-closed graph classes $\mathcal{G}$ does \fdeletion{} parameterized by the distance to $\mathcal{G}$ have a polynomial kernel?

This question is fully settled for some problem families $\mathcal{F}$.
By a result of Bougeret, Jansen \& Sau \cite{bougeretBridgedepthCharacterizesWhich2022}, when $\mathcal{F} = \{P_2\}$, that is, for \vcAbbreviated{}, a polynomial kernel exists if and only if $\mathcal{G}$ has bounded bridgedepth.
Here, bridgedepth is a structural parameter that was introduced for this result.
Without going into the formal details, we remark that any forest has bridgedepth at most one.
When $\mathcal{F} = \{K_3\}$, that is for \fvs, a polynomial kernel exists if and only if $\mathcal{G}$ has bounded elimination distance to the class of forests \cite{dekkerKernelizationFeedbackVertex2024}.
Very recently, the result of \cite{dekkerKernelizationFeedbackVertex2024} has been vastly generalized to show that, if $\mathcal{F}$ is a finite set of biconnected graphs on at least three vertices containing at least one planar graph, then a polynomial kernel exists if and only if $\mathcal{G}$ has bounded elimination distance to the class of $\mathcal{F}$-minor free graphs \cite{bougeretKernelizationDichotomiesHitting2025}.
Hence, all of the dichotomies identified so far are centered around $\mathcal{G}$ either having bounded bridgedepth, or bounded elimination distance to the class of $\mathcal{F}$-minor free graphs.

The problem family \dcoc{} is not captured by any of these results.
In fact Bougeret et al. \cite{bougeretKernelizationDichotomiesHitting2025} explicitly mention that the kernelization complexity of \dcoc[2]{} is unclear, and ask whether bridgedepth plays a role for the problem.
As mentioned before, when $d \geq 2$, \dcoc{} does not have a polynomial kernel parameterized by the distance to forests \cite{greilhuberComponentOrderConnectivity2024,donkersTuringKernelizationDichotomy2021},
so $\mathcal{G}$ having bounded bridgedepth does certainly not suffice.
Our \cref{main_thm:poly_kernel_caterpillar_mod} exhibits a polynomial kernel when $\mathcal{G}$ is the class of graphs with pathwidth at most one.
This class $\mathcal{G}$ does not have bounded elimination distance to being $\mathcal{F}$-minor free.
Therefore, when $d \geq 2$, the dichotomy for \dcoc{} must rely on yet-another, so far unidentified parameter.
To the best of our knowledge, \dcoc{} is even the first (non-trivial) \fdeletion{} problem where it is known that the dichotomy relies neither on $\mathcal{G}$ having bounded bridgedepth, nor on $\mathcal{G}$ having bounded elimination distance to the class of $\mathcal{F}$-minor free graphs.

\subparagraph*{Technical challenges.}
On the technical front, let us compare our main result with other polynomial kernels for structural parameterizations of \dcoc{}.
Jansen and Pieterse~\cite{DBLP:journals/tcs/JansenP20} show that, for each $d$ and each constant $\eta$ the problem \dcoc{} has a polynomial kernel parameterized by the size of a treedepth-$\eta$ modulator.
There are other publications that yield polynomial kernels for \dcoc{} \cite{bougeretKernelizationDichotomiesHitting2025,bhyravarapuDifferenceDeterminesDegree2023}, but in fact the existence of polynomial kernels for the parameters they consider is already implied by the aforementioned result of Jansen and Pieterse.
To the best of our knowledge, no further positive results are known.
A key feature of the kernel by Jansen and Pieterse is that
it suffices to be able to bound the number of connected components of the graph obtained after deleting the modulator to obtain the result.
The size of these connected components can then be bounded iteratively in at most $\eta$ further rounds
by exploiting the structure of graphs with treedepth at most $\eta$.
Each round involves moving a bounded number of vertices (the roots of the elimination forest) into the modulator, and then rebounding the number of connected components.

For our main result, the key technical challenge is bounding the size of these connected components.
In particular, long paths are graphs with pathwidth one that do not have bounded treedepth, so the iterative approach used in the previous positive result completely fails.
Our result is the first in \coc{} kernelization that designs reduction rules that bound the size of connected components of the graph after removing the modulator corresponding to the parameter.
We explain this phase and its challenges in more detail in Section~\ref{sec:technical_overview}.

\subsection{Further Results.}
We complement our contribution with a variety of further results about \coc{} kernelization.
While these results are interesting in their own right, they are less technically involved than \cref{main_thm:poly_kernel_caterpillar_mod} and most of them follow by either lifting known results for \vc{} to \dcoc{}, or by relatively standard techniques.

\subparagraph*{On $d$ in the parameter.}
First, we investigate the role the value $d$ plays in more detail.
Recall that in \cref{main_thm:poly_kernel_caterpillar_mod} $d$ is either a constant, or at least part of the parameter.
We show that this is unavoidable.
In \cref{main_thm:vc_kernelization_lower_bound} we exhibit that \coc{} parameterized by the \emph{vertex cover number} of the
graph plus the solution size $k$ cannot have a polynomial kernel, unless \NP{} $\subseteq$ \coNPPoly{} (but the problem is \FPT{} by an easy branching algorithm, see \cref{thm:coc_mod_to_vc_fpt}).
We additionally show in \cref{main_thm:w_one_hardness} that for the parameter
distance to pathwidth one graphs alone \coc{} is even  \WOne{}-hard.
Therefore, there is no hope of obtaining a result similar to \cref{main_thm:poly_kernel_caterpillar_mod} when $d$ is neither a constant, nor part of the parameter.

\subparagraph*{Parameterizing by the distance to bounded degree graphs.}
Next, we utilize the result of \cref{main_thm:poly_kernel_caterpillar_mod} to obtain a polynomial kernel when parameterizing by the distance to the class of graphs with maximum degree at most two (plus $d$ for the problem \coc{}).
We combine this with a \NP{}-hardness result for \dcoc{} on planar graphs with maximum degree three to obtain a kernelization dichotomy.
This result represents a generalization of a result by Majumdar, Raman \& Saurabh~\cite{majumdarPolynomialKernelsVertex2018} for \vc{} to the more general problem \coc{}.
For a graph class $\mathcal{G}$ and integer $d \geq 1$ the problem \dcocModToG{} is \dcoc{} where the input is provided together with a modulator to $\mathcal{G}$, that is a set $M \subseteq V(G)$ such that $G - M \in \mathcal{G}$, and the parameter is $|M|$.
The problem \cocModToGPlusD{} is defined the same way, only that $d$ is part of the input and the parameter is $|M| + d$.

\begin{restatable}{theorem}{mainThmDegreeModDichotomy}
    \label{main_thm:kernel_degree_dichotomy}
    When $\mathcal{G}$ is the class of graphs with maximum degree at most $2$ the problem \cocModToGPlusD{} admits a polynomial-kernel, and, for each $d \geq 1$, \dcocModToG{} admits a polynomial kernel.
    When $\mathcal{G}$ is the class of planar graphs with maximum degree $3$ and $d$ any positive integer, the problem \dcocModToG{} admits no kernel of any size, unless $\PClass = \NP$.
\end{restatable}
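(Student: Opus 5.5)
We prove the two parts of \cref{main_thm:kernel_degree_dichotomy} separately. For the positive part, we reduce to \cref{main_thm:poly_kernel_caterpillar_mod}. Let $(G,k,M)$ be an instance where $G-M$ has maximum degree at most $2$. Each connected component of $G-M$ is then a path or a cycle. Paths have pathwidth at most one, so the only obstruction is the cycles.

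We first handle cycles with at most $d$ vertices. Call a component $C$ of $G-M$ \emph{small} if $|C| \le d$, and \emph{large} otherwise. A large cycle $C$ satisfies $|C| \ge d+1$, so any \dcocset{} must contain at least one vertex of $C$. Our plan is to enlarge the modulator by one arbitrary vertex from each large cycle, which makes the rest a path. This only works if the number of large cycles is bounded polynomially in $|M|+d$. For this we use the following rule. If the number of large cycles exceeds $k$, answer no. Otherwise, $k$ may itself be large, so we bound $k$ differently. A cycle $C$ with no neighbours in $M$ is an isolated component. It can be solved optimally on its own, needing $\lceil |C|/(d+1)\rceil$ deletions, so we delete it and decrease $k$ accordingly. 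Every remaining cycle has a neighbour in $M$.

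For cycles attached to $M$, we use the marking argument from the proof of \cref{main_thm:poly_kernel_caterpillar_mod}. Alternatively, since the main theorem's kernel presumably already bounds the number of components adjacent to $M$ via a packing and expansion argument, we use a cruder argument. Any cycle $C$ contains a path $C-v$ for any $v\in C$. If we put one vertex $v_C$ of each cycle into the modulator, the new modulator $M'$ has size $|M|$ plus the number of cycles. Hence it suffices to bound the number of cycles touching $M$.

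The main obstacle is this bound, and we handle it with an expansion-lemma style rule. For each $m\in M$, keep at most $(d+1)(|M|+1)$ cycles adjacent to $m$, and likewise for pairs. We argue that any unmarked large cycle $C$ behaves like a component that always pays its own internal cost. A solution can be modified so that $C$ is cut into pieces not connected to $M$. This costs at most one extra deletion compared with the isolated optimum $\lceil |C|/(d+1)\rceil$, and that deletion can be paid by deleting the neighbour in $M$. That last step is the delicate one. If it fails, I would instead apply the full reduction machinery of the main theorem, treating each cycle as a path plus one extra modulator-like vertex with a local budget.

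After these reductions the number of cycles is $\mathrm{poly}(|M|,d)$. We then set $M' = M \cup \{v_C\}$ and invoke \cref{main_thm:poly_kernel_caterpillar_mod} on $(G,k,M')$, which gives a polynomial kernel in $|M|+d$. For fixed $d$ the bound becomes polynomial in $|M|$.

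For the negative part, we use that \dcoc{} is NP-hard on planar graphs of maximum degree $3$ for every $d\ge 1$. For $d=1$ this is vertex cover on planar cubic graphs. For larger $d$ we reduce via a local gadget, or use known results of Lewis--Yannakakis type for bounded-degree planar inputs. Setting $M=\emptyset$ gives instances with parameter $0$. A kernel of any size would then map every instance to one of constant size, which is solvable in constant time. This yields a polynomial-time algorithm for an NP-hard problem, so $\PClass=\NP$.
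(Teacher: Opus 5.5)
\textbf{The crux is missing.} Your overall architecture matches the paper's: bound the components of $G-M$, move one vertex of every cycle into the modulator, invoke \cref{main_thm:poly_kernel_caterpillar_mod}, and for the lower bound use instances with $M=\emptyset$. However, the step that carries all the weight is not established, namely bounding the number of cycle components of $G-M$ by a polynomial in $|M|+d$. There are five problems.
\begin{enumerate}
\item Rejecting when there are more than $k$ large cycles gives nothing, since $k$ is not bounded by the parameter.
\item Cycles on at most $d$ vertices are introduced but never handled. They too prevent $G-M$ from being a caterpillar forest, and nothing bounds their number.
\item You cannot borrow a component bound from \cref{main_thm:poly_kernel_caterpillar_mod}, because its input already requires $G-M$ to be a caterpillar forest.
\item Your marking over single vertices and pairs of $M$ is calibrated to minimal $d$-blocking sets of size at most two, which is the caterpillar bound. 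Cycles have minimal $d$-blocking sets of size three. On a cycle with $d+2$ vertices every pair is a minimum \dcocset{}, so no pair is blocking but every triple is. A cycle can therefore be forced to overpay by a \emph{triple} of undeleted modulator vertices with no pair responsible, and per-vertex or per-pair marking gives you no handle on this.
\item The accounting ``at most one extra deletion per unmarked cycle, paid for by deleting a neighbour in $M$'' does not work. To discard a component and lower $k$ by $\opt{C}$ you need \emph{zero} overhead in it. The standard way to get this is an injective charging of each undeleted chunk to a distinct component in which the solution provably overpays. That requires marking by positive conflict $\conflicts{C}{X}>0$, not by adjacency: adjacency does not force suboptimality, e.g.\ when $N(m)\cap V(C)$ is a single vertex.
\end{enumerate}
Your fallback (``apply the full reduction machinery \ldots\ with a local budget'') is not an argument.

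\textbf{How the paper closes this gap.} It does so in two steps.
\begin{enumerate}
\item \cref{thm:graphs_with_max_degree_two_bounded_blocking_sets}: in a graph whose components are caterpillars or cycles, every minimal $d$-blocking set has size at most three. Take four vertices $v_{\lambda_1},\dots,v_{\lambda_4}$ of $X$ in cyclic order; deleting $v_{\lambda_1},v_{\lambda_3}$ separates $v_{\lambda_2}$ from $v_{\lambda_4}$, so \cref{thm:blocking_set_connectivity_property} applies.
\item This class is hereditary and \coc{} is polynomial-time solvable on it. So \cref{thm:remove_components_quickly_d_non_constant} applies with $b=3$ (chunks of size up to three, the conflict graph, the expansion lemma with $q=3(d-1)+1$). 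It reduces the number of \emph{all} components of $G-M$, paths and short cycles included, to at most $(3(d-1)+1)|M|^3$. Only after this is one vertex per cycle moved into $M$.
\end{enumerate}

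\textbf{The lower bound.} Your parameter-$0$ argument is exactly the paper's. However, NP-hardness of \dcoc{} on planar graphs of maximum degree three for $d\ge 2$ is only gestured at, and the degree bound is the point that needs care. The paper gives an explicit gadget. Take a planar max-degree-$3$ \vc{} instance, replace each edge $uv$ by the path $u,u_e,v_e,v$, attach a fresh path on $d-1$ vertices to each of $u_e$ and $v_e$, and set $k'=k+|E(G)|$. This preserves planarity and maximum degree three.
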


\subparagraph*{Bounding the number of connected components.}
Finally, as part of our kernelization routine for \cref{main_thm:poly_kernel_caterpillar_mod} we need to be able to reduce the number of connected components of $G - M$, recall that $M$ is the modulator given in the input.
To do this we perform a careful, but relatively standard lifting of a result for \vc{} due to Hols, Kratsch \& Pieterse~\cite[Theorems 1.3 and 3.12]{holsEliminationDistancesBlocking2022}.
We obtain a preprocessing routine that is usable for a much more general class of parameters in \cref{main_thm:algorithm_to_reduce_components}.
In fact, for any $d$, and any graph class $\mathcal{G}$ that is (1) closed under taking the disjoint-union and (2) such that the problem \dcocModToG{} has a polynomial kernel, our algorithm can be used to reduce the number of connected components of $G - M$ to $|M|^{\Oh(1)}$.
Therefore, our algorithm is the ultimate preprocessing algorithm for reducing the number of connected components.

We remark that the result in \cref{main_thm:algorithm_to_reduce_components} relies on $\mathcal{G}$ having bounded minimal $d$-blocking sets.
A $d$-blocking set of a graph $G$ is a set $X \subseteq V(G)$
such that no minimum \dcocset{} of $G$ contains $X$,
and a $d$-blocking set is minimal if no proper subset of it is a $d$-blocking set \cite{greilhuberComponentOrderConnectivity2024}.
A graph class $\mathcal{G}$ has bounded minimal $d$-blocking set if there is some constant $b$ such that for each graph $G \in \mathcal{G}$ any minimal $d$-blocking set of $G$ has size at most $b$.
The notion of $1$-blocking sets (sometimes implicitly) plays a role in many kernelization results for structural parameterizations of \vc{} \cite{jansenVertexCoverKernelization2013b,fominVertexCoverStructural2016,holsSmallerParametersVertex2017,holsEliminationDistancesBlocking2022,bougeretBridgedepthCharacterizesWhich2022,bougeretHowMuchDoes2019}, and similar blocking set notions are important for \fdeletion{} problems in general, see e.g. \cite{bougeretKernelizationDichotomiesHitting2025}.

\subsection{Outlook.}
In this paper we provide several new positive kernelization results for \coc{}.
In particular, we develop a novel approach of reducing the size of connected components of $G - M$ when $M$ is a modulator to graphs of pathwidth one.
The task of reducing the size of connected components is new in the sense that all previously known positive kernelization results for structural parameterizations of \dcoc{} follow directly from bounding the number of connected components.

Obtaining a full dichotomy for \dcocModToG{} similar to the results which are known for other \fdeletion{} problems is a natural ultimate goal.
However, it seems we do not yet have a sufficiently good understanding of the problem to prove such a result.
We believe that, just like it was the case for \vc{}, it would be fruitful to first collect various positive results before trying to prove a full dichotomy.
Nevertheless, we want to present the following conjecture 

\begin{conjecture}
    For minor-closed graph class $\mathcal{G}$ and all constants $d \geq 1$ the problem \dcocModToG{} admits a polynomial kernel if and only if $\mathcal{G}$ has bounded minimal $d$-blocking sets.
\end{conjecture}

This conjecture seems plausible since,
by a result of Greilhuber \& Sharma \cite{greilhuberComponentOrderConnectivity2024}, the problem \dcoc{} has no polynomial kernel parameterized by the distance to $\mathcal{G}$ when $\mathcal{G}$ is closed-under the disjoint union and does not have bounded minimal $d$-blocking sets.
Moreover, the conjecture is true for the case $d = 1$ since $\mathcal{G}$ has bounded minimal $1$-blocking sets if and only if $\mathcal{G}$ has bounded bridgedepth \cite{bougeretBridgedepthCharacterizesWhich2022}.
And, so far, a bound on minimal $d$-blocking sets of $\mathcal{G}$ was always eventually turned into a polynomial kernel for \dcoc{}.

\section{Technical Overview of Main Theorem 1}
\label{sec:technical_overview}
In this overview we present the main ideas behind the proof of \cref{main_thm:poly_kernel_caterpillar_mod}.
The focus will especially be on how we can reduce the {\em size} of connected components of $G - M$,
as this is the most innovative part of our work.

A graph has pathwidth at most one if and only if it is a caterpillar forest, that is, a forest in which every connected component is a caterpillar \cite[Lemma 2.4]{arnborgMonadicSecondOrder1990}.
Caterpillars are connected graphs such that, after deleting each vertex with degree one, a (possibly empty) path is obtained.
We start by setting some handy terminology.
Then, we define a specific packing of connected subgraphs of $G - M$ that exploits a tight Erd\"os-P\'osa property for connected subgraphs of size at least $d+1$ in trees.
Based on this packing, we identify a subcaterpillar with a large spine (the large central path at which the degree one vertices are attached)
that can be replaced with an equivalent smaller one.
This replacement is based on a novel new insight which we call the {\em essence} of a caterpillar, and which allows one to partition caterpillars into equivalence classes such that the caterpillars in the same equivalence class can be replaced with each other.
An interesting feature about our proof is showing an explicit bound on the size of small representatives of this equivalence class via relations to abstract algebra.
All the above allows one to reduce the length of the spine of a caterpillar of $G - M$.
To bound the number of caterpillars of $G-M$, we exploit that the size of minimal $d$-blocking sets of caterpillars is at most two for all $d \geq 1$.
Finally, we apply a known kernel for the parameter solution-size (plus $d$) to guarantee that the remaining reduced instance is small.

\subparagraph*{Caterpillars and spines.}
For every caterpillar $C$,
we assume for convenience that we are given $\spine{C}$,
which is a fixed nonempty subpath of $C$, such that each vertex of $C$ that is not in $\spine{C}$ has degree one.
Vertices which are not part of the spine are called pendants, and we use $\pendants{C}$ to denote the set of all pendants of $C$.
For each pendant $v$ of $C$, we call the unique neighbor of $v$ the parent of $v$.
Moreover, we assume that the spine has a fixed order, and we refer to one end of the spine as the left end, and to the other end as the right end.
This order remains intact even when taking connected subgraphs $C'$ of the caterpillar $C$, where the spine of the resulting caterpillar $C'$ is the spine of $C$ restricted to the vertices of $C'$.
A spine vertex $v$ is left of spine vertex $w$ if $v$ is closer to the left end than $w$, and left of a pendant $w$ if $v$ is left of the parent of $w$.
A pendant $v$ is left of vertex $w$ if the parent of $v$ is left of $w$.
We similarly define the notions of a vertex/vertex set/subgraph being left/right of another vertex/vertex set/subgraph.

Now, we proceed to how we reduce the size of connected components of $G - M$.
On a very high level, the strategy is to find a large subcaterpillar of $G - M$ which we can safely replace with a smaller caterpillar.
Unfortunately, this is quite a difficult challenge: not only can the concrete structure of a subcaterpillar affect solutions in different ways, but the subcaterpillar can also have many connections to the modulator, which need to be taken into account.

\subparagraph*{Solution-tight packings.}
For figuring out which subcaterpillars we can replace with others, we utilize specific maximum-sized packings of vertex disjoint connected graphs on at least $d+1$ vertices.
These packings are useful because
they exhibit
a tight Erdős-Pósa property
as stated below.
For a graph $H$, we denote the size of a minimum \dcocset{} of $H$ as $\opt{H}$.
The following lemma directly follows from the work of Cockayne et al. \cite[Theorem 1]{cockayneMatchingsTransversalsHypergraphs1979}.

\begin{restatable}[Follows from {Cockayne et al.  \cite[Theorem 1]{cockayneMatchingsTransversalsHypergraphs1979}}]{lemma}{thmPackingCOCDuality}
    \label{thm:packing_coc_duality}
    Let $T$ be a tree, and let $\mathcal{P}$ be a maximum packing of pairwise vertex-disjoint connected subgraphs of $T$ with $d+1$ vertices each.
    Then, $\opt{T} = |\mathcal{P}|$.
    In particular, any optimal \dcocset{}
    of $T$ must select exactly one vertex of each graph of $\mathcal{P}$.
\end{restatable}

Now, we introduce the notion of a solution-tight packing, which builds upon \cref{thm:packing_coc_duality}.
A solution-tight packing is a specific packing of vertex-disjoint connected graphs of size at least $d+1$ each.

\begin{restatable}[Solution-tight Packing]{definition}{defSolutionTightPacking}
    \label{def:solution_tight_packing}
    Let $C$ be a caterpillar with $\spine{C} = v_1,\dots,v_n$, where $v_1$ is the left end and $v_n$ the right end of the spine.
    A set $\mathcal{P}$ of pairwise vertex-disjoint connected subgraphs of $C$ on at least $d+1$ vertices each is a \emph{solution-tight packing} if
    \begin{enumerate}
        \item $\abs{\mathcal{P}} = \opt{C}$, and
        \item for each $p \in \pendants{C}$ if the parent $p'$ of $p$ is part of a graph $H$ of $\mathcal{P}$, then $p$ is also part of $H$, and
        \item there exists an integer $i \in \range[0]{n}$ such that each spine vertex $v_j$ with $j \leq i$ is part of some graph of $\mathcal{P}$, and no spine vertex $v_j$ with $j > i$ is part of a graph of $\mathcal{P}$, and
        \item for each $H \in \mathcal{P}$, the set of size one containing the rightmost spine vertex of $H$ is a \dcocset{} of $H$.
    \end{enumerate}
    Let $F$ be a caterpillar forest, and $\mathcal{C}$ the set of connected components of $F$.
    Then, a packing $\mathcal{P}$ is a solution-tight packing of $F$ if $F = \bigcup_{C' \in \mathcal{C}} \mathcal{P}_{C'}$, where for each $C' \in \mathcal{C}$ we have that $\mathcal{P}_{C'}$ is a solution-tight packing of $C'$.
\end{restatable}

The properties a solution-tight packing must fulfill are sufficient to ensure any caterpillar forest has exactly one unique solution-tight packing.
For any caterpillar forest, it is easy to compute its solution-tight packing using a greedy algorithm (see \cref{thm:compute_sol_tight_packing_poly_time}).
A solution-tight packing of $G - M$ is powerful for kernelization, because we know that any optimum solution must select at least one vertex of each graph of the packing, and that there exists a solution for $G - M$ that selects exactly one vertex of each packed graph.
The further properties of solution-tight packings give us other useful guarantees, for example, if some graph $H \subseteq V(G - M)$ has a solution-tight packing that contains all of its vertices, then there exists an optimal solution for $H$ that selects the rightmost spine vertex of $H$.
We thus have particularly much knowledge about such graphs $H$.

Given this intuition, the subcaterpillars we replace are created by merging several graphs of the solution-tight packing.
Formally we introduce the notion of $\alpha$-merged packings, which are created by merging each $\alpha$ consecutive graphs of the solution-tight packing together.

\begin{restatable}[$\alpha$-merged Packing]{definition}{defMergedPacking}
    \label{def:merged_packing}
    Let $C$ be a caterpillar, $\mathcal{P} = \{H_1,\dots,H_{|\mathcal{P}|}\}$ be the solution-tight packing of $C$, where the graphs of $\mathcal{P}$ are ordered by their appearance along the spine such that graph $H_i$ is left of graph $H_{i+1}$ for all $i \in \range{|\mathcal{P}|-1}$.
    Let $\alpha \geq 1$ be an integer.
    Then, the $\alpha$-merged packing of $\mathcal{P}$ is the packing consisting of $t = \left \lfloor \frac{|\mathcal{P}|}{\alpha} \right \rfloor$ graphs $H_0',\dots,H_{t-1}'$ where $H_i'$ is the subgraph of $C$ induced by the vertex set $\bigcup_{j \in \range[1]{\alpha}} V(H_{i \cdot \alpha + j})$ for each $i \in \range[0]{t-1}$.

    When $F$ is a caterpillar forest, then the $\alpha$-merged packing of $F$ is obtained by taking the union of the $\alpha$-merged packings of each connected component of $F$.
\end{restatable}

We also define the related notion of merged graphs, observe that each graph of the $\alpha$-merged packing is a merged graph.

\begin{restatable}[Merged Graphs]{definition}{defMergedGraphs}
    \label{def:merged_graphs}
    Let $F$ be a caterpillar forest with solution-tight packing $\mathcal{P}$.
    An induced subgraph $C$ of $F$ is a \emph{merged graph} if $V(C) = \bigcup_{C' \in \mathcal{P}'} V(C')$ for some $\mathcal{P}' \subseteq \mathcal{P}$.
\end{restatable}

\subparagraph*{Subcaterpillars with no connections to the modulator.}

Now, we want to elaborate on how we can use these packings to replace subcaterpillars of $G - M$.
To get an initial intuition, let us first consider a graph $C$ of the $\alpha$-merged packing of $G - M$, for some suitable large value of $\alpha$ that we will elaborate on later, such that $C$ has no edges to the modulator.\footnote{We remark that the actual kernelization does not have a specific reduction rule to handle this case. However, elaborating on this case is useful to gain an intuition for how the actual reduction rule we use proceeds.}
Let $\hat C$ be the connected component of $G - M$ of which $C$ is a subgraph.
Since $C$ has no edges to the modulator we do not need to worry about the modulator having a significant effect on the solution within $C$.
However, depending on the exact structure of $C$, the caterpillar $C$ can significantly affect the solution in the rest of $\hat C$.

Since $C$ lies in $\hat C$, we can naturally partition $\hat C$ into three parts, one part is the connected component $L$ of $\hat C - C$ that is left of $C$, the second part is $C$, and the third part $R$ is the connected component of $\hat C -C$ that is right of $C$.
Of course, $L, R$ might not exist, but let us assume that they do for this overview.
Consider some \dcocset{} $S$ of $G$ of size at most $k$.
Then, $S_L = S \cap V(L)$ is a \dcocset{} of $L$, $S_C = S \cap V(C)$ is a \dcocset{} of $C$, and $S_R = S \cap V(R)$ is a \dcocset{} of $R$.
When we replace $C$ with a different graph $C'$, we want to ensure that there is a sufficiently small solution $S_{C'}$ for $C'$ that combines nicely with $S_L$ and $S_R$.
Here, by replacing $C$ with $C'$ we mean deleting $C$, adding $C'$, adding an edge from the rightmost spine vertex of $L$ to the leftmost spine vertex of $C'$, and adding an edge from the leftmost spine vertex of $R$ to the rightmost spine vertex of $C'$.

In $L - S_L$ the rightmost spine vertex of $L$ is in a connected component of size $x$ (set $x = 0$ if the rightmost spine vertex of $L$ is in $S_L$).
Then, these $x$ vertices are not ``taken care of'' by $S_L$, it is the job of $S_C$ to make sure they do not end up in connected components that are too large.
Therefore, $S_C$ is not only a \dcocset{} of $C$, but even a \dcocset{} of the graph $C_x$ we obtain when adding $x$ additional pendants to the leftmost spine vertex of $C$.
If we consider the graph $R$, we find a similar phenomenon: the rightmost spine vertex of $C$ is in some connected component of $C - S_C$ that has size $y$ (set $y = 0$ if the rightmost spine vertex of $C$ is selected).
Then, it is the job of $S_R$ to make sure that these $y$ vertices do not cause any problems.
We can observe that, among all solutions for $C_x$ of size $|S_C|$ we would prefer using the solution that minimizes the respective value of $y$.
If $|S_C| > \opt{C}$, then we are also allowed to choose $S_{C'}$ to be larger than $\opt{C'}$, which is an easy task since $C'$ will have a \dcocset{} of size at most  $\opt{C'} + 1$ that selects the leftmost and rightmost spine vertex of $C'$.
Therefore, we are mostly interested on how small the value $y$ can get when only considering solutions of size $\opt{C}$ for the graph $C_x$.

For exactly this purpose, we introduce the notion of an essence of a caterpillar.
An essence is a function we can compute in polynomial-time which precisely captures how solutions in $C$ can interact with solutions the rest of the graph $\hat C$.
\begin{restatable}[Essence of a Caterpillar]{definition}{defEssence}
    \label{def:caterpillar_essence}
    Let $C$ be a caterpillar such that the solution-tight packing of $C$ contains all vertices of $C$.
    Let the spine of $C$ be $v_1,\dots,v_n$ where $v_1$ is the left end and $v_n$ the right end of the spine.

    For each integer $x \geq 0$, define $C_x$ to be the caterpillar obtained from $C$ by adding $x$ additional pendants to $v_1$ while keeping the spine the same.
    For each integer $x \geq 0$
    and for any \dcocset{} $Y$ of $C_x$ we define $\beta_x(Y)$ to be $0$ if $v_n \in Y$, and otherwise define $\beta_x(Y)$ to be the number of vertices of the connected component of $C_x - Y$ that contains $v_n$.
    For all integers $x \geq 0$, let $\mathcal{D}_x$ be the set of all minimum \dcocset{}s of $C_x$.

    The essence of $C$ (relative to $d$) is the function $\gamma: \range[0]{d+1} \rightarrow \range[0]{d+1}$ where
    \begin{align*}
        \gamma(x) = \begin{cases}
                        \min\{\beta_x(Y) \mid Y \in \mathcal{D}_x\} & \text{if $x \leq d$ and $\opt{C_x} = \opt{C}$}, \\
                        d+1                                         & \text{otherwise}.
                    \end{cases}
    \end{align*}
\end{restatable}

\begin{figure}
    \centering
    \begin{subfigure}{\linewidth}
        \centering
        \includegraphics[scale = 0.6]{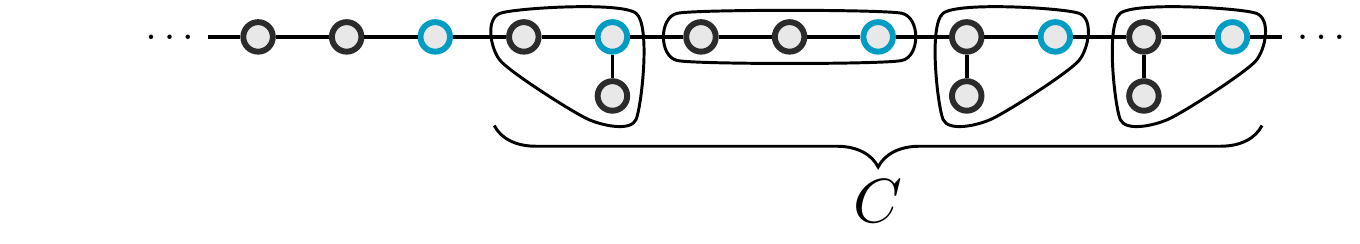}
        \subcaption{The selection before the replacement when $x = 0$.}
    \end{subfigure}
    \begin{subfigure}{\linewidth}
        \centering
        \includegraphics[scale = 0.6]{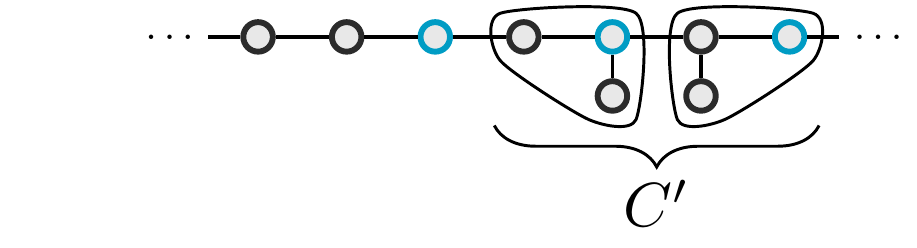}
        \subcaption{The selection after the replacement when $x = 0$.}
    \end{subfigure}

    \begin{subfigure}{\linewidth}
        \centering
        \includegraphics[scale = 0.6]{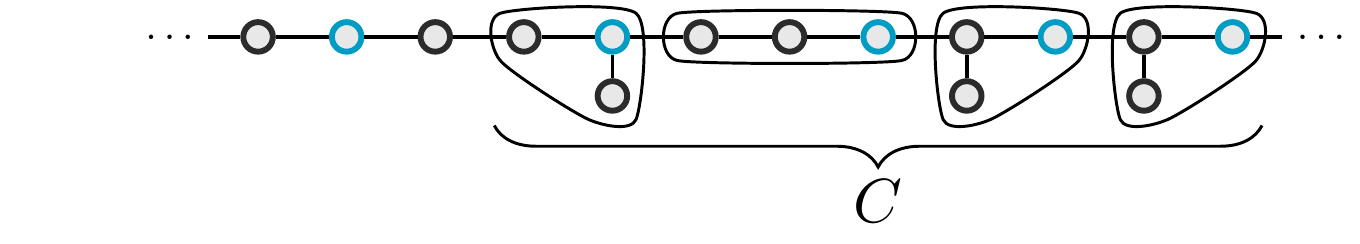}
        \subcaption{The selection before the replacement when $x = 1$.}
    \end{subfigure}
    \begin{subfigure}{\linewidth}
        \centering
        \includegraphics[scale = 0.6]{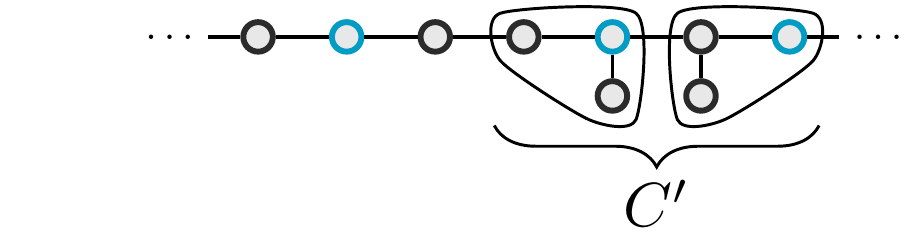}
        \subcaption{The selection after the replacement when $x = 1$.}
    \end{subfigure}

    \begin{subfigure}{\linewidth}
        \centering
        \includegraphics[scale = 0.6]{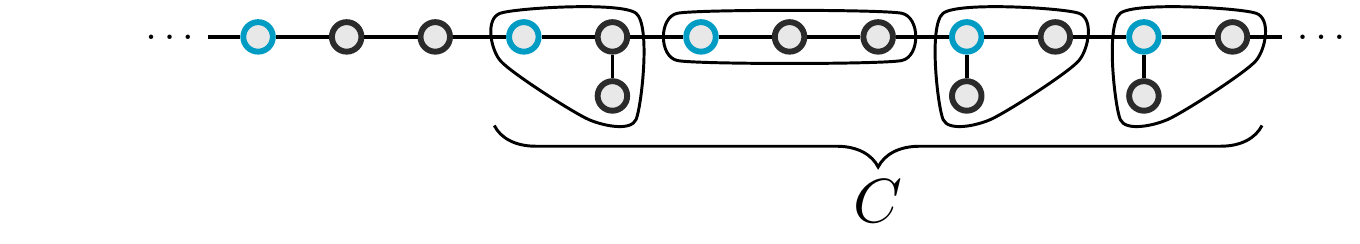}
        \subcaption{The selection before the replacement when $x = 2$.}
    \end{subfigure}
    \begin{subfigure}{\linewidth}
        \centering
        \includegraphics[scale = 0.6]{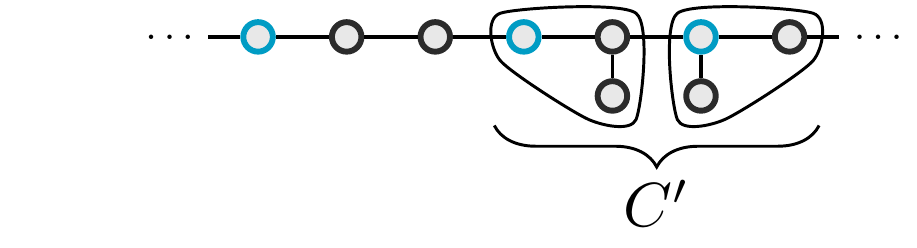}
        \subcaption{The selection after the replacement when $x = 2$.}
    \end{subfigure}
    \caption{The replacement of a subcaterpillar $C$ with a caterpillar $C'$ that has the same essence $\gamma$, in the example $d = 2$. The solution-tight packings of $C$ and $C'$ are highlighted. The essence $\gamma$ fulfills $\gamma(0) = 0$, $\gamma(1) = 0$, $\gamma(2) = 1$, and $\gamma(3) = 3$.
        For any possible number $x$ of vertices of the caterpillar left of $C$ (or $C'$) which are in a connected component with the leftmost spine vertex of $C$ (or $C'$) we draw the corresponding solution within $C$ and $C'$ that also witnesses the values of the essence.
    }
    \label{fig:replacement_same_essence}
\end{figure}

Later, the functional composition of essences will play an important role and is also the reason why we define $d+1$ to be a fixed-point.
The idea behind essences is to replace $C$ with a smaller caterpillar $C'$ that has the same essence as $C$.
\cref{fig:replacement_same_essence} illustrates such a replacement.
The Figure also showcases specific selections depending on how many unselected vertices of the caterpillar to the left of $C$ are in the same connected component with the leftmost spine vertex of $C$ (and $C'$).
The key property of essences is that replacement is safe.

\begin{restatable}{lemma}{thmReplaceWithEssenceSafeNoModConnections}
    \label{thm:replace_with_essence_safe_if_no_mod_connections}
    Let $(G,d,k,M)$ be an instance of \cocModToCaterpillarsPlusD{}, and $P$ be a connected merged subgraph of $G - M$.
    Let $\gamma$ be the essence of $P$, and $P'$ be an arbitrary caterpillar with essence $\gamma$.
    Create $G'$ from $G$ by replacing $P$ with $P'$.
    If there is a \dcocset{} $S$ of $G$ of size at most $k$, then there is an \dcocset{} $S'$ of $G'$ of size at most $k' = k - \opt{P} + \opt{P'}$.
\end{restatable}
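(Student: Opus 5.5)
We will prove the statement by taking a \dcocset{} $S$ of $G$ with $|S| \leq k$, keeping $S$ outside of $P$, and replacing only $S_P = S \cap V(P)$ by a suitable set $S_{P'} \subseteq V(P')$. Throughout we work in the setting described in the overview. That is, $P$ has no edges to $M$, $P$ is attached to the rest of its component $\hat C$ of $G - M$ only through at most two spine edges, one at each end, and $P'$ is attached in the same way. Since $P$ is a merged graph, its solution-tight packing covers all of $P$, so the essence is defined. We will also use that $P'$ can be assumed to be fully covered by its solution-tight packing, since this is the setting in which an essence is defined. Let $v_1,\dots,v_n$ be the spine of $P$ and $v'_1,\dots,v'_{n'}$ the spine of $P'$. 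Since $S_P$ is a \dcocset{} of $P$, we have $|S_P| \geq \opt{P}$. We set $S' = (S \setminus V(P)) \cup S_{P'}$ and aim for $|S_{P'}| \leq |S_P| - \opt{P} + \opt{P'}$, which gives $|S'| \leq k'$.

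The basic quantities are defined as follows. Let $x$ be the number of vertices of the connected component of $G - S - V(P)$ that contains the right end of $L$, and set $x=0$ if that vertex lies in $S$ or $L$ does not exist. Since $P$ touches the outside only through $v_1$ and $v_n$, if $v_1 \notin S$ then $S_P$ is a \dcocset{} of $P_x$ (the $x$ outside vertices behave like $x$ pendants at $v_1$). The same holds in $G'$ for $P'$, $v'_1$ and $P'_x$.

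\emph{Case 1: $|S_P| \geq \opt{P}+1$.} Select in $P'$ the rightmost spine vertex of every graph of its solution-tight packing. By property 4 of \cref{def:solution_tight_packing} and \cref{thm:packing_coc_duality}, this is an optimal \dcocset{} of $P'$, and it contains $v'_{n'}$. Add $v'_1$ to it. The resulting set has size at most $\opt{P'}+1 \leq |S_P| - \opt{P} + \opt{P'}$. It separates $P'$ completely from the outside, so every component of $G'-S'$ is either a component of $G - S$ avoiding $P$, or a component of the outside with $P$'s contribution removed, or a component lying inside $P'$. All of these have size at most $d$.

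\emph{Case 2: $|S_P| = \opt{P}$.} If $v_1 \in S_P$, we use $x = 0$ instead. In either case $S_P$ is a \dcocset{} of $P_x$ of size $\opt{P} \leq \opt{P_x}$. Hence $x \leq d$ (a component containing $x$ outside vertices and $v_1$ has size at most $d$), $\opt{P_x}=\opt{P}$, and $S_P \in \mathcal{D}_x$. Therefore $\gamma(x) \leq y := \beta_x(S_P) \leq d$. As $P'$ has the same essence, $\gamma(x) \leq d$ yields a minimum \dcocset{} $S_{P'}$ of $P'_x$ with $|S_{P'}| = \opt{P'_x} = \opt{P'}$ and $\beta_x(S_{P'}) = \gamma(x) \leq y$.

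It remains to check that all components of $G' - S'$ have size at most $d$, and I expect this to be the main obstacle. Components not meeting $P'$ are unchanged. A component meeting $P'$ but avoiding $v'_1$ and $v'_{n'}$ lies inside $P'$, so its size is fine because $S_{P'}$ is a \dcocset{} of $P'_x$. The delicate components are those containing $v'_1$ or $v'_{n'}$, and the difficulty is that the left and right outside parts may be joined through $M$ or through $\hat C$. I would compare each such component $K'$ with the corresponding component $K$ of $G - S$. The outside part of $K'$ equals the outside part of $K$. The piece of $P'-S_{P'}$ at the right end has size $\gamma(x) \leq y$, which is at most the size of the piece of $P - S_P$ at the right end. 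The piece at the left end, together with the $x$ outside vertices attached to it, lies in one component of $P'_x - S_{P'}$, so it has size at most $d$. The remaining work is a short case analysis on whether $v'_1$ and $v'_{n'}$ lie in the same component of $P'-S_{P'}$, whether the two outside parts are connected, and whether the left end is selected, showing $|K'| \leq \max(|K|, d) \leq d$ in each case. For the case where the two ends are connected both outside and inside, I would first try to show that inside $P'$ (a merged graph with at least one packed block) the two ends cannot lie in the same unselected component under a minimum solution containing each block's rightmost vertex.
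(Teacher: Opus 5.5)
\textbf{Case 2 has a genuine gap (Case 1 is fine).} In Case 1, selecting $v'_1$ instead of the outside neighbour works just as well. In Case 2, your choice of $x$ does not control the component of $v'_1$ in $G'-S'$, and the case analysis you postpone cannot be completed with it. There are two separate failures.

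\emph{Switching to $x=0$ when $v_1\in S_P$ is wrong.} Then $\gamma(0)=0$ only forces $v'_{n'}$ into $S_{P'}$. The piece of $P'-S_{P'}$ at $v'_1$, which can have up to $d$ vertices, then merges with the unselected left outside part. For example, take $d=2$, $M=\emptyset$, $G$ the path $c_1c_2a\,w_1w_2w_3$, $P=w_1w_2w_3$, $P'$ a copy of $P$, and $S=\{c_1,w_1\}$. Your recipe allows $S'=\{c_1,w'_3\}$, which leaves the component $\{c_2,a,w'_1,w'_2\}$ of size $4>2$.

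\emph{Counting only the left outside part fails when the two sides are joined through $M$.} Let $A$ be the component of $G-S-V(P)$ containing the left outside neighbour of $v_1$, so your $x$ is $|A|$. Let $L_P,R_P$ and $L'_{P'},R'_{P'}$ be the components of the two spine ends in $P-S_P$ and $P'-S_{P'}$. If $A$ also contains the right outside neighbour, the component of $v'_1$ contains $L'_{P'}\cup A\cup R'_{P'}$. You only know $|L'_{P'}|\le d-|A|$ and $|R'_{P'}|\le|R_P|$, while $G$ guarantees only $|L_P|+|A|+|R_P|\le d$. Since $|L'_{P'}|$ can exceed $|L_P|$, the bound fails. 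Concretely:
\begin{itemize}
\item $d=6$, and $G-M$ has spine $c_1\dots c_6\,a\,w_1w_2w_3w_4\,b$ with three pendants on $w_2$.
\item $M=\{m\}$, with $m$ adjacent to $a$ and $b$.
\item $P$ is the packed graph on $w_1,\dots,w_4$ and those pendants, and $S=\{c_6,w_2\}$.
\item $P'$ is a path $q_1\dots q_7$ followed by a copy of $P$; its essence is $\gamma\circ\idFunction=\gamma$ by \cref{thm:basic_function_caterpillar_alg,thm:caterpillar_signatures_function_composition}.
\end{itemize}
Your $x=3$ admits $S_{P'}=\{q_4,w'_2\}$, and then $G'-S'$ contains $\{a,q_1,q_2,q_3,m,b,w'_3,w'_4\}$, of size $8>6$.

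\textbf{The missing idea is the paper's definition of $x$.} If the left outside neighbour of $v_1$ is absent or in $S$, your Case 1 construction works without adding $v'_1$, so that case costs nothing. Otherwise:
\begin{itemize}
\item If $v_1\in S$, set $x=d$. This forces $v'_1\in S_{P'}$ once pendants are replaced by their parents.
\item If $v_1\notin S$, set $x=|K|-|L_P|$, where $K$ is the component of $v_1$ in $G-S$. So $x$ counts everything in that component outside the left piece, including $R_P$, the right outside part, and whatever is reached through $M$.
\end{itemize}
The left outside neighbour is unselected, and $S$ meets the leftmost packed graph of $P$ (at least $d+1$ vertices) in exactly one vertex; hence that vertex is a spine vertex. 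Therefore $v_1$ and $v_n$ are separated in $P-S$, $S_P$ is a \dcocset{} of $P_x$, and $\beta_x(S_P)=|R_P|$.

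In $G'$, inject the vertices of the component of $v'_1$ that lie outside $L'_{P'}$ into $K\setminus L_P$. Outside vertices map to themselves. If $R'_{P'}$ is reached via the outside, it maps into $R_P\subseteq K\setminus L_P$, since $|R'_{P'}|\le |R_P|$. The component of $v'_{n'}$ that avoids $v'_1$ is handled by mapping its outside vertices into the component of $v_n$ in $G-S$.

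\textbf{Your no-$M$-edges restriction is not available.} This definition of $x$ also makes the restriction that $P$ has no edges to $M$ unnecessary. That restriction is not a hypothesis of the lemma, and it fails exactly where the lemma is applied: the forward direction of the safety of \cref{rule:replace_by_essence_graph}. Only $P'$ is free of $M$-edges, by construction of the replacement.
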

The intuition behind the safety of this replacement is that, recalling our split of $\hat C$ into $L$, $C$, $R$, no matter the value $x$ based on the set $S_L$ and the corresponding value $y$ of $S_C$, we have a solution $S_{C'}$ of appropriate size for $C'_x$ such that the rightmost spine vertex of $C' - S_{C'}$ is in a connected component of size at most $y$ (or selected if $y = 0$).
Hence, we find that $(S \setminus S_C) \cup S_{C'}$ is a solution for $G'$ of size at most $|S| - \opt{P} + \opt{P'}$.

We have established that replacing a subcaterpillar $C$ that has no connections to the modulator with a different caterpillar $C'$ with the same essence is a safe operation.
However, it is unclear how we can quickly find such a caterpillar $C'$ which is smaller than $C$, such that a reduction rule using such an operation would actually decrease the size of the instance.

\subparagraph*{Computing caterpillars with specific essences.}
Note that for any fixed $d$ the number of possible essences relative to $d$ is bounded by a function of $d$.
Therefore, for the problem \dcoc{} the kernel could hardcode, for each possible essence $\gamma$, the smallest caterpillar that has essence $\gamma$.
However, our result is for the problem \coc{} where $d$ is not a constant.
So, we must be able to compute small graphs with specific essences on the fly.
Let us now explain how we can do that.
This part of our work has interesting connections to abstract algebra.
Observe that a caterpillar essence $\gamma$ is non-decreasing in the sense that if $x \leq y$ then $\gamma(x) \leq \gamma(y)$, and fulfills $\gamma(0) = 0$ and $\gamma(d+1) = d+1$.
Keeping these properties in mind, we define the essence monoid as follows.\footnote{A monoid is an algebraic structure $(M,\oplus,e)$, where $M$ is a set, $\oplus$ an associative binary operator, $M$ is closed under $\oplus$, and $e \in M$ is an element such that $m \oplus e = m$  and $e \oplus m = m$ for all $m \in M$.}

\begin{restatable}[The Essence Monoid]{definition}{defEssenceMonoid}
    \label{def:essence_monoid}
    Let $d$ be a positive integer, and $\mathcal{M}_d$ be the set of all non-decreasing functions $\gamma: \range[0]{d+1} \rightarrow \range[0]{d+1}$ where $\gamma(0) = 0$, and $\gamma(d+1) = d+1$.
    Let $\idFunction: \range[0]{d+1} \rightarrow \range[0]{d+1}$ be the identity function.
    Then, $(\mathcal{M}_d,\circ, \idFunction)$, where $\circ$ is the function composition operation, is the essence monoid (of $d$).
\end{restatable}

It is not difficult to confirm that this structure is indeed a monoid with neutral element $\idFunction$.
We show that for any $d$, and any function $f$ of the essence monoid of $d$, we can efficiently compute at most $d^3$ simple functions whose function composition is $f$.
We refer to these simple functions as basic functions and define them next.

\begin{restatable}[Basic Functions]{definition}{defBasicFunction}
    Let $d$ be a positive integer.
    For any $i \in \range[1]{d}$ let $\decFunction_i: \range[0]{d+1} \rightarrow \range[0]{d+1}$ be the function defined by
    \begin{align*}
        \decFunction_i(x) = \begin{cases}
                                x-1 & \text{if $x = i$}, \\
                                x   & \text{otherwise}.
                            \end{cases}
    \end{align*}
    Let $\incFunction: \range[0]{d+1} \rightarrow \range[0]{d+1}$ be defined by
    \begin{align*}
        \incFunction(x) = \begin{cases}
                              x + 1 & \text{if $1 \leq x \leq d$}, \\
                              x     & \text{otherwise}.
                          \end{cases}
    \end{align*}
    The set of basic functions (relative to $d$) is the set $\{\idFunction\} \cup \{\decFunction_i \mid i \in \range[1]{d}\} \cup \{\incFunction\}$.
\end{restatable}

We now state the algorithmic result for computing functions as a composition of basic functions next.

\begin{restatable}{lemma}{thmFunctionCompositionAlg}
    \label{thm:function_composition_alg}
    There is an algorithm that takes a non-decreasing function $f: \range[0]{d+1} \rightarrow \range[0]{d+1}$ with $f(0) = 0$ and $f(d+1) = d+1$
    as input and outputs $g_1,\dots,g_\ell$, where $\ell \leq d^3$, such that $f = g_\ell \circ g_{\ell -1} \circ \dots \circ g_1$, and the function $g_i$ is a basic function relative to $d$ for all $i \in \range[1]{\ell}$.
    The running time of the algorithm is polynomial in $d$.
\end{restatable}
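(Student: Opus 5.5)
The plan is to read a composition $g_\ell \circ \dots \circ g_1$ as a sequence of moves applied to $d$ tokens. Token $x \in \range[1]{d}$ starts at position $x$, since $g_1$ is applied first to the identity. Applying a basic function $g$ moves every token at position $p$ to $g(p)$. So $\decFunction_i$ moves all tokens at $i$ down to $i-1$, and $\incFunction$ shifts all tokens in $\range[1]{d}$ up by one. Positions $0$ and $d+1$ are absorbing. Tokens that share a position stay together forever, and the relative order of tokens never changes. The goal is to reach a configuration where token $x$ sits at $f(x)$ for every $x$. Since $f$ is non-decreasing with $f(0)=0$ and $f(d+1)=d+1$, its values group the tokens into consecutive classes: a (possibly empty) class $Z$ with $f \equiv 0$, classes $G_1,\dots,G_r$ with distinct targets $1 \le v_1 < \dots < v_r \le d$, and a (possibly empty) class $T$ with $f \equiv d+1$. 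The construction has three phases.

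\emph{Phase A (collapse at the bottom).} Process the tokens $x = 1,\dots,d$ in increasing order. Move token $x$ down to a target position by repeated $\decFunction$'s, namely applying $\decFunction_{p}$ for $p$ the current position, then $p-1$, and so on. The target is $0$ if $x \in Z$, $j$ if $x \in G_j$, and $r+1$ if $x \in T$. Every target is at most $x$, because the groups are consecutive intervals of $\range[1]{d}$ and index $j$ is at most the smallest element of $G_j$. So only downward moves are needed.

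The invariant to verify is that, when token $x$ is processed, all tokens below it already sit at their targets, and all positions strictly between $x$'s target and $x$'s position are empty. Hence each $\decFunction_p$ moves only token $x$, except possibly at the last step, where it joins its own group. Tokens above $x$ are untouched because every $p$ used is at most $x$. This phase costs at most $d$ moves per token, so $O(d^2)$ functions in total.

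\emph{Phase B (lift).} Apply $\incFunction$ exactly $d-r$ times. Tokens at $0$ stay at $0$. Group $j$ moves from position $j$ to $j+d-r \le d$, and the $T$-group moves from $r+1$ to exactly $d+1$, where it stays. None of the $r$ groups reaches $d+1$ prematurely, since $j + d - r \le d$ throughout.

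\emph{Phase C (spread down).} For $j = 1,\dots,r$ in increasing order, move group $j$ down from $j+d-r$ to $v_j$ by consecutive $\decFunction$'s. This is a downward move, because $v_j \le v_r - (r-j) \le d-(r-j)$. It does not collide with other groups: $v_{j-1} < v_j$ and group $j-1$ is already in place, while groups $j' > j$ sit strictly above the starting position $j+d-r$ and are never touched. This phase costs $O(d^2)$ functions.

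In total, $\ell = O(d^2) \le d^3$ for all but trivially small $d$, and I will check the constants explicitly. Moves made within a phase count at most $d$ steps for each of at most $d$ tokens or groups, plus $d$ increments, so $\ell \le 2d^2 + d \le d^3$ once $d \ge 3$. For $d \le 2$ a direct check or a slightly tighter count will be needed. The main obstacle is precisely these invariants: showing that no $\decFunction_p$ accidentally drags an unrelated token along, and that all classes, including the absorbing $Z$ and $T$ classes, end at the correct values. The algorithm just emits these sequences, which clearly takes time polynomial in $d$.
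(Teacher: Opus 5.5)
Your token argument is correct, apart from a small-$d$ detail below, and it is a genuinely different factorization from the paper's.

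\textbf{The paper's route.} It splits $f=f^+\circ f^-$ into a purely decreasing and a purely increasing part. A fixed-point count then writes these with at most $d$ advanced functions $\decFunction_{i,j}$, $\incFunction_{i,j}$ in total. Each $\incFunction_{i,j}$ is expanded into single-point increments $\incFunction_{i'}$. Each $\incFunction_{i'}$ costs $d$ basic functions via the Gomes--Howie trick: push everything above $i'$ down by one, apply the global shift $\incFunction$, then pull $2,\dots,i'$ back down. So every unit of upward movement pays for a full global shift plus corrections, which is where $d\cdot d\cdot d$ comes from.

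\textbf{Your route.} You do all upward movement in one batch, a compress--shift--expand scheme. First you pack the image classes into the contiguous positions $0,1,\dots,r,r+1$. Then you apply $\incFunction$ only $d-r$ times. Finally you spread the groups down to $v_1<\dots<v_r$. Your invariants are the right ones and they check out:
\begin{itemize}
\item targets are non-decreasing in the token index and at most the token itself;
\item positions strictly between a token's target and its current position are empty;
\item the groups not yet moved sit strictly above every $\decFunction_p$ used.
\end{itemize}

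\textbf{What each route buys.} With sharper counts you get $\ell\le \frac{3}{4}d^2+\frac{3}{2}d=O(d^2)$. Phase A costs at most $\sum_x x=d(d+1)/2$. Phase C costs at most $r(d-r)$, since $v_j\ge j$ means each group moves at most $d-r$ steps. This is a factor $d$ better than the paper. The reduction rule uses the $(d^3+1)$-merged packing exactly because of this bound, so your version would shave a factor $d$ off the kernel size; the paper explicitly names this as a possible improvement. In exchange, the paper's route is modular, follows the semigroup literature, and works uniformly for every $d\ge 1$ without special cases.

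\textbf{Loose ends.} The case $d=1$ cannot be fixed by any sharper count. For $f(1)=0$ your phases output $\decFunction_1$ followed by one useless $\incFunction$, so $\ell=2>d^3$. You need to change the algorithm, not the count. One option is to lift only as far as needed: by $d-r$ if $T\neq\emptyset$, by $v_r-r$ if $T=\emptyset$ and $r\ge 1$, and not at all if $r=0$. Phase C still works with these lifts. The other option is to return $f$ itself when $d=1$, since every element of $\mathcal{M}_1$ is basic. For $d=2$, the refined count above gives $\ell\le 6\le 8$, so no change is needed there. Finally, when $f=\idFunction$ your construction outputs the empty sequence. Output $\idFunction$ instead, because the caterpillar construction that consumes this lemma needs $\ell\ge 1$.
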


Although it seems that we cannot directly employ a result given in the literature in a black-box fashion, the approach we utilize to prove \cref{thm:function_composition_alg} closely follows insights of Higgins \cite{higginsCombinatorialResultsSemigroups1993} and Gomes \& Howie \cite{gomesRanksCertainSemigroups1992} given in the abstract algebra literature.
This means that our paper marks one of these exciting cases where ``old'' results in a seemingly unrelated field (semigroup theory) suddenly lead to state-of-the-art algorithmic results.
From an algebraic point of view it is an interesting question to characterize which functions we can use for a generator of $\mathcal{M}_d$, and what the least number of functions is that need to be composed to generate any function $f \in \mathcal{M}_d$.
Settling this question could improve our kernel size.

We now elaborate further on the links between the essence monoid and caterpillars.
It turns out that not only is each caterpillar essence $\gamma$ in the essence monoid, but also for any function $f$ in the essence monoid there is a caterpillar with essence $f$.
We find the fact that all conceivable essences have a caterpillar with that essence quite interesting, as it shows that the interaction of a subcaterpillar with the caterpillars to the left and right of it can be ``as complicated as imaginable''.
To show this, we prove that the operation of composing functions $f_1$ and $f_2$ of the monoid is virtually the same as connecting the spines of the caterpillars with essence $f_1$ and $f_2$.
Formally, we prove \cref{thm:caterpillar_signatures_function_composition}.

\begin{restatable}{lemma}{thmCaterpillarSignatureFunctionComposition}
    \label{thm:caterpillar_signatures_function_composition}
    Let $C_1$ be a caterpillar with a solution-tight packing that contains all vertices of $C_1$.
    Let the essence of $C_1$ be $\gamma_1$, and let the spine of $C_1$ be $v_1,\dots,v_n$.
    Similarly, let $C_2$ be a caterpillar with a solution-tight packing that contains all vertices of $C_2$, the essence of $C_2$ be $\gamma_2$, and its spine be $w_1,\dots,w_m$.

    Create caterpillar $C$ by taking the disjoint union of $C_1$ and $C_2$ and adding an edge from $v_n$ to $w_1$.
    The spine of $C$ is $v_1,\dots,v_n,w_1\dots,w_m$.
    Then, the essence $\gamma$ of $C$ fulfills $\gamma = \gamma_2 \circ \gamma_1$.
\end{restatable}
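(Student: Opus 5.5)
Our plan is to show $\gamma(x) = \gamma_2(\gamma_1(x))$ for every $x \in \range[0]{d+1}$ by proving both inequalities, after first establishing the basic fact $\opt{C} = \opt{C_1} + \opt{C_2}$ and, more strongly, $\opt{C_x} = \opt{(C_1)_x} + \opt{C_2}$ whenever $\opt{(C_1)_x} = \opt{C_1}$. For the lower bound $\opt{C_x} \geq \opt{(C_1)_x} + \opt{C_2}$ we note that a \dcocset{} of $C_x$ restricts to \dcocset{}s of $(C_1)_x$ and of $C_2$, since these are induced subgraphs. For the upper bound we take the union of the solution-tight packings of $C_1$ and $C_2$. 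This is a packing of $C$ of size $\opt{C_1}+\opt{C_2}$ that covers all vertices, so by \cref{thm:packing_coc_duality} (applied to the tree $C$) we get $\opt{C} \geq \opt{C_1}+\opt{C_2}$. Conversely, by property 4 of solution-tight packings, the set of rightmost spine vertices of the packed graphs is a \dcocset{} of size $\opt{C_1}+\opt{C_2}$; it contains $v_n$, so the edge $v_nw_1$ is harmless. We also record that the solution-tight packing of $C$ covers all of $C$, so that the essence of $C$ is defined. This should follow from uniqueness together with the greedy characterisation in \cref{thm:compute_sol_tight_packing_poly_time}: the concatenated packing satisfies properties 1--4.

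\textbf{Upper bound $\gamma(x) \le \gamma_2(\gamma_1(x))$.} If $\gamma_2(\gamma_1(x)) = d+1$ there is nothing to show. Otherwise $x \le d$, $y := \gamma_1(x) \le d$, and $\gamma_2(y) \le d$, so $\opt{(C_1)_x} = \opt{C_1}$ and $\opt{(C_2)_y}=\opt{C_2}$. Choose $Y_1 \in \mathcal{D}_x(C_1)$ attaining $\beta_x(Y_1) = y$, and $Y_2$ minimum for $(C_2)_y$ attaining $\gamma_2(y)$. The component of $v_n$ in $(C_1)_x - Y_1$ has size $y$ and hangs onto $w_1$ exactly like $y$ pendants. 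Hence $Y_1 \cup Y_2$ is a \dcocset{} of $C_x$ of size $\opt{C_1}+\opt{C_2} = \opt{C}$, which gives $\opt{C_x}=\opt{C}$. Its $\beta$-value equals $\beta_y(Y_2)$, because $v_n$'s component does not reach $w_m$ unless it merges into $w_1$'s component, and that case is accounted for through the pendants. We must take some care when $n=1$ or the component spans all of $C_2$, but the pendant simulation handles this uniformly.

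\textbf{Lower bound $\gamma(x) \ge \gamma_2(\gamma_1(x))$.} This is the main obstacle. Suppose $\gamma(x) \le d$, and take a minimum $Y$ of $C_x$ with $\beta_x(Y) = \gamma(x)$, $|Y| = \opt{C}$. By the lower-bound counting, $Y_1 = Y\cap V(C_1)$ and $Y_2 = Y \cap V(C_2)$ have sizes exactly $\opt{(C_1)_x}$ and $\opt{C_2}$, and $\opt{(C_1)_x}=\opt{C_1}$, since otherwise $|Y|$ would exceed $\opt{C}$. Thus $Y_1$ is minimum for $(C_1)_x$, and $y' := \beta_x(Y_1) \ge \gamma_1(x)$. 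Now $Y_2$ is a \dcocset{} of $(C_2)_{y'}$, since the component through $v_nw_1$ has size at most $d$, and it has size $\opt{C_2}$. Hence $\opt{(C_2)_{y'}} = \opt{C_2}$, $y' \le d$, and $\beta(Y) = \beta_{y'}(Y_2) \ge \gamma_2(y')$.

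Finally, monotonicity of $\gamma_2$, which we will verify directly: adding pendants to $w_1$ can only make the feasible set smaller. This gives $\gamma_2(y') \ge \gamma_2(\gamma_1(x))$. The delicate points are, first, justifying that the split of $Y$ is tight on both sides, which needs $\opt{(C_1)_x} \ge \opt{C_1}$ and the counting above. Second, we must check the case where $v_n$'s component continues through $C_2$ to $w_m$ and confirm that the $\beta$-values coincide there. We expect this bookkeeping, rather than any deep idea, to be where the care is needed.
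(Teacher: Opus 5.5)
Your proposal is correct and follows the paper's proof essentially step for step. The upper bound glues the witness solutions $Y_1\cup Y_2$, with the component of $v_n$ standing in for the $\gamma_1(x)$ extra pendants on $w_1$. The lower bound splits a witness $Y$ for $C_x$ using $\opt{C}=\opt{C_1}+\opt{C_2}$. Your explicit use of $y'=\beta_x(Y_1)\ge\gamma_1(x)$ together with monotonicity of $\gamma_2$ is a slightly more careful version of the paper's remark that $Y\cap V(C_2)$ is a \dcocset{} of $(C_2)_{\gamma_1(x)}$.

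Drop the announced ``stronger'' fact that $\opt{C_x}=\opt{(C_1)_x}+\opt{C_2}$ whenever $\opt{(C_1)_x}=\opt{C_1}$, because it is false. It would force $\gamma(x)\le d$ whenever $\gamma_1(x)\le d$, which contradicts the lemma whenever $\gamma_2(\gamma_1(x))=d+1$. Concretely, take $d=1$, let $C_1$ be a single edge, let $C_2$ be the path $w_1w_2p$ with spine $w_1,w_2$, and set $x=1$. Then $C_x$ is a path on six vertices, so $\opt{C_x}=3\neq 2$. Your argument never uses this fact; it only needs $\opt{C_x}\ge\opt{(C_1)_x}+\opt{C_2}$ and $\opt{C}=\opt{C_1}+\opt{C_2}$.
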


Using this fact, we can relatively easily compute a caterpillar with a specific essence by using \cref{thm:function_composition_alg}, assuming that we have caterpillars that have the basic functions as their essence available.
So, we show that we can easily compute such caterpillars next.

\begin{restatable}{lemma}{thmBasicFunctionCaterpillarAlg}
    \label{thm:basic_function_caterpillar_alg}
    There is an algorithm that takes any basic function $f$ (relative to some $d \geq 1$) as input and outputs a caterpillar $C$ with essence $f$ that has a solution-tight packing of size one (that therefore contains only $C$) in time polynomial in $d$.
\end{restatable}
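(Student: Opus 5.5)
The plan is to exhibit, for each basic function, one explicit caterpillar on $d+1$ or $d+2$ vertices, and then check its essence directly from \cref{def:caterpillar_essence}. The check uses the fact that in a caterpillar on $d+2$ vertices, $\opt{C_x}=1$ holds exactly when some single vertex is a \dcocset{} of $C_x$, so we only have to enumerate single deletions. Deleting a pendant is never better than deleting its parent, so it suffices to consider deleting a spine vertex $v_j$. This splits $C_x$ into a left part and a right part. The left part contains the $x$ added pendants, unless $j=1$, in which case they become singletons. The value $\beta_x$ is the size of the right part, or $0$ if $j=n$. For each $x \le d$ we then take the largest feasible $j$, since it minimises the right part.

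\emph{Membership in $\mathcal{P}$-form.} In each construction below, $|V(C)|\ge d+1$ and deleting the rightmost spine vertex $v_n$ alone leaves only components of size at most $d$. Hence $\opt{C}=1$, and $\{C\}$ satisfies all four conditions of \cref{def:solution_tight_packing}. By uniqueness of solution-tight packings, $\{C\}$ is therefore the solution-tight packing of $C$.

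\emph{Identity.} Take $C$ to be the path $v_1,\dots,v_{d+1}$. Deleting $v_j$ leaves a left part of size $j-1+x$ and a right part of size $d+1-j$. The largest feasible $j$ is $d+1-x$, giving $\beta=x$. For $x=0$ we take $j=d+1$, which gives $\beta=0$. Every $x\le d$ is feasible, for instance via $j=1$. So $\gamma=\idFunction$.

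\emph{$\decFunction_i$.} Take the spine $v_1,\dots,v_d$ and attach one pendant $p$ to $v_m$, where $m=d+1-i\in\range{d}$. Deleting $v_j$ gives:
\begin{itemize}
\item for $j>m$: left part of size $j+x$, right part of size $d-j$;
\item for $j=m$: left part of size $m-1+x$, right part of size $d-m$;
\item for $j<m$: left part of size $j-1+x$, right part of size $d-j+1$.
\end{itemize}
A short case analysis then gives the following. For $x<d-m$ we get $\beta=x$ via $j=d-x$. For $x=d-m$ we get $\beta=x$ via $j=m$. For $x=d-m+1=i$ we get $\beta=x-1$ via $j=m$. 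For $x>i$ we get $\beta=x$ via $j=d+1-x<m$. For $x=0$ we get $\beta=0$ via $j=d$. Feasibility for every $x\le d$ holds via $j=1$. Hence $\gamma=\decFunction_i$.

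\emph{$\incFunction$.} Take the path $v_1,\dots,v_{d+1}$ and attach one pendant to $v_{d+1}$; this caterpillar has $d+2$ vertices. For $x=0$, deleting $v_{d+1}$ leaves a component of size exactly $d$, so $\beta=0$. For $x\ge1$, deleting $v_{d+1}$ is infeasible, because the left part then has $d+x$ vertices. Deleting $v_j$ with $j\le d$ needs $j-1+x\le d$ and leaves a right part of size $d+2-j$. The best choice $j=d+1-x$ gives $\beta=x+1$, and this is feasible exactly when $x+1\le d$. For $x=d$ the only remaining candidate is $j=1$, which leaves a right part of size $d+1$, so no single vertex works. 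Hence $\opt{C_d}=2\neq\opt{C}$, giving $\gamma(d)=d+1$. Thus $\gamma=\incFunction$.

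The algorithm simply outputs the appropriate caterpillar, which takes time $\Oh(d)$. The main care point is the $\decFunction_i$ case analysis at the boundary values $x=d-m$ and $x=d-m+1$, together with correctly handling $j=1$ in general, where the $x$ added pendants become isolated.
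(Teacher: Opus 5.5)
Your proposal is correct and follows the paper's proof. You use the same three caterpillars: the path on $d+1$ vertices, the path $v_1,\dots,v_{d+1}$ with a pendant on $v_{d+1}$, and the spine $v_1,\dots,v_d$ with a pendant on $v_{d-i+1}$. You then verify each essence by enumerating single-vertex deletions, organized more uniformly than the paper's case split via the observation that $\beta$ strictly decreases as the deleted spine index grows. Your explicit check of the four solution-tight packing conditions is a small improvement over the paper, which leaves that step as ``not difficult to see''.
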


By combining \cref{thm:function_composition_alg,thm:basic_function_caterpillar_alg} and using \cref{thm:caterpillar_signatures_function_composition} we can easily compute a small caterpillar with a specific essence.

\begin{restatable}{lemma}{thmComputeCaterpillarEssence}
    \label{thm:compute_caterpillars_with_essence}
    There is an algorithm that takes a non-decreasing function
    $\gamma: \range[0]{d+1} \rightarrow \range[0]{d+1}$ such that $\gamma(0) = 0, \gamma(d+1) = d+1$, as input, and outputs a caterpillar $C$ that has essence $\gamma$, such that the solution-tight packing $\mathcal{C}$ of $C$ contains all vertices of $C$, and $|\mathcal{C}| \leq d^3$ in time polynomial in $d$.
\end{restatable}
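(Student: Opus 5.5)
The construction simply combines \cref{thm:function_composition_alg}, \cref{thm:basic_function_caterpillar_alg} and \cref{thm:caterpillar_signatures_function_composition}. Given $\gamma$, I would first run the algorithm of \cref{thm:function_composition_alg} to obtain basic functions $g_1,\dots,g_\ell$ with $\ell \leq d^3$ and $\gamma = g_\ell \circ \dots \circ g_1$. If $\ell = 0$ (which can only happen when $\gamma = \idFunction$), I would set $\ell = 1$ and $g_1 = \idFunction$, which is itself a basic function. For each $i$, the algorithm of \cref{thm:basic_function_caterpillar_alg} then produces a caterpillar $B_i$ with essence $g_i$ whose solution-tight packing has size one and therefore consists of $B_i$ alone. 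Finally, I would chain these caterpillars in order: take the disjoint union of $B_1,\dots,B_\ell$, add an edge from the right end of the spine of $B_i$ to the left end of the spine of $B_{i+1}$, and let the spine of the result $C$ be the concatenation of the spines. Since each step runs in time polynomial in $d$ and there are at most $d^3$ pieces, each of size polynomial in $d$, the total running time is polynomial in $d$.

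For correctness, I would argue by induction on $j$ that the prefix caterpillar $C^{(j)}$, consisting of $B_1,\dots,B_j$ chained together, has essence $g_j \circ \dots \circ g_1$. I would also show that its solution-tight packing is exactly $\{B_1,\dots,B_j\}$, so that it contains all vertices and has size $j$. The essence claim follows directly from \cref{thm:caterpillar_signatures_function_composition} applied to $C^{(j-1)}$ and $B_j$, provided the hypothesis holds that both pieces have solution-tight packings covering all their vertices. This is why I carry the packing claim through the induction as part of the statement.

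The main obstacle is this packing claim. Here I would use the uniqueness of the solution-tight packing and check the four conditions of \cref{def:solution_tight_packing} for $\mathcal{P} = \{B_1,\dots,B_j\}$ in $C^{(j)}$.
\begin{itemize}
\item Condition 1 is $\opt{C^{(j)}} = j$. The lower bound holds because the $B_i$ are disjoint connected subgraphs, each on at least $d+1$ vertices, so any \dcocset{} must hit each of them. For the upper bound, I would select the rightmost spine vertex of each $B_i$. By condition 4 for the packing $\{B_i\}$, this single vertex is a \dcocset{} of $B_i$. Removing these vertices also deletes every connecting edge's left endpoint, so the pieces separate and the components that remain are components of the individual $B_i$ minus their rightmost spine vertex, each of size at most $d$.
\item Condition 2 holds because pendants stay with their parents inside each $B_i$.
\item Condition 3 holds because every spine vertex is covered by some $B_i$ (take $i = n$).
\item Condition 4 is inherited from each $B_i$.
\end{itemize}
Since the solution-tight packing is unique, it equals $\{B_1,\dots,B_j\}$. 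At $j = \ell$, this gives essence $\gamma$ and a packing of size $\ell \leq d^3$ (or $1 \le d^3$ in the degenerate case) that contains all vertices, as required.
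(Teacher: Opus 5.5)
Your proposal is correct and follows the same route as the paper: decompose $\gamma$ into at most $d^3$ basic functions via \cref{thm:function_composition_alg}, realize each with \cref{thm:basic_function_caterpillar_alg}, chain the resulting caterpillars along their spines, and conclude with \cref{thm:caterpillar_signatures_function_composition}. You also make the induction explicit and show that the chained caterpillar's solution-tight packing is $\{B_1,\dots,B_j\}$ (checking the four conditions and using uniqueness); this is needed to reapply the two-piece composition lemma, and the paper leaves it implicit.
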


The caterpillars we want to replace should have solution-tight packings of size at least $d^3 + 1$, as we can then use \cref{thm:compute_caterpillars_with_essence} to compute a caterpillar with the same essence that has a strictly smaller solution-tight packing.
Hence, it makes sense to focus on replacing caterpillars of the $(d^3 + 1)$-merged packing of $G - M$.

\subparagraph*{Handling connections to the modulator.}
We have already seen that, if $C$ has no connections to the modulator, then we can quickly replace $C$ with a smaller graph that has the same essence.
But what if the connected components of $G - M$ have edges to the modulator?

Without going into full detail, in our actual reduction rule (\cref{rule:replace_by_essence_graph}) we identify some large subcaterpillar $C$ of $G - M$ that we replace with a smaller caterpillar $C'$ that has the same essence as $C$, such that $C$ has some further useful guarantees that help us deal with the potential edges to $M$.
We also set $k' = k - \opt{C} + \opt{C'}$.
The forward direction of safety of the rule follows directly from \cref{thm:replace_with_essence_safe_if_no_mod_connections}.
For the backward direction, let $S$ be a \dcocset{} of the reduced graph of size at most $k'$.
If $N(C) \cap (M \setminus S) = \emptyset$, the backward direction of safety again follows from \cref{thm:replace_with_essence_safe_if_no_mod_connections}.
Otherwise, the additional guarantees that $C$ provides come into play.
They ensure that $C$ is actually a subgraph of some significantly larger caterpillar $\hat C$.
Furthermore, we have the guarantee that there exists an optimal solution $S_{\hat C}$ for $\hat C$ selecting all vertices of $N(M \setminus S) \cap V(\hat C)$.
Now, let $\hat C'$ be the caterpillar obtained when replacing $C$ of $\hat C$ with $C'$, clearly $\hat C'$ is a subgraph of the reduced instance.
We ensure that each vertex $v \in N(C) \cap (M \setminus S)$ has $d+1$ neighbors in $\hat C$ which are left of $C$, and $d+1$ neighbors in $\hat C$ which are right of $C$.
Therefore, there are also $d+1$ neighbors of $v$ in $\hat C'$ that are left and right of $C'$.
Hence, $S$ must select at least one of these neighbors on each side of $C'$.
These selected vertices effectively cut $\hat C'$ into three parts, one of them, call it $P^{*\prime}$, contains $C'$.
Let $P^*$ be the graph $P^{*\prime}$ where $C'$ is replaced with $C$.
It can then be shown that $(S \setminus V(P^{*\prime})) \cup (V(P^*) \cap S_{\hat C})$ is a \dcocset{} of the input graph of size at most $k$, showing that the rule is safe.

\subparagraph*{Finishing the kernel.}
We can apply the sketched reduction rule to bound the length of the spine of $G - M$ as we can always find suitable graphs $C$ and $\hat C$ if the $\alpha$-merged packing (for a specific value of $\alpha$ that is polynomial in the parameter) of $G - M$ is sufficiently large.
However, $G - M$ could still contain an unbounded number of connected components.
By exploiting that caterpillars have minimal $d$-blocking sets of size at most two for each integer $d$, one can then bound the number of connected component of $G - M$ using relatively standard techniques going back to Jansen \& Bodlaender \cite{jansenVertexCoverKernelization2013b}.
In \cref{thm:remove_components_quickly_d_non_constant} we use a more involved approach based on the expansion lemma that lifts a result of Hols et al. \cite{holsEliminationDistancesBlocking2022} to \coc{}.

After exhaustively applying the previous rules, the size of $\spine{G - M}$, the graph consisting of the spines of each component of $G - M$, is bounded by a polynomial in the parameter.
In particular, the set $S = M \cup V(\spine{G - M})$ has size polynomial in $d + |M|$, and as $G - S$ is an independent set, the set $S$ is a \dcocset{} of $G$.
If $|S| \leq k$ we have a yes-instance, otherwise $k$ is also bounded by a polynomial in the parameter, and we can apply the kernel of Xiao \cite{xiaoLinearKernelsSeparating2017a} for the parameter $k + d$ to finish our kernelization routine.
 
\section{Preliminiaries}
\label{sec:preliminaries}
\subparagraph*{Graphs.}
We use standard notation, but explicitly explain some of it here to avoid ambiguity.
All considered graphs are undirected and simple.
For a graph $G$ and vertex set $X \subseteq V(G)$, $G[X]$ denotes the graph induced by $X$.
For a graph $G$ and a set $X \subseteq V(G)$ we also write $G - X$ for the graph $G[V(G) \setminus X]$.
For $H$ a subgraph of $G$ we also write $G - H$ for the graph $G - V(H)$.
For a (sub)graph $H$ we denote $\abs{H} = \abs{V(H)}$.
The maximum degree of graph $G$ is denoted as $\Delta(G)$.

A \emph{bipartite graph} is a graph $(V,E)$ such that there exists a partition $(A,B)$ of $V$ such that all edges of the graph are between vertices of $A$ and vertices of $B$.
We generally denote such a graph as the triple $(A,B,E)$.

For a vertex $v$ and a graph $G$ we denote the \emph{open neighborhood} of $v$ in $G$ as $N_G(v)$ and the \emph{closed neighborhood} as $N_G[v]$.
For $X \subseteq V(G)$  we define $N_G(X) = \bigcup_{v \in X} N_G(v) \setminus X$, and $N_G[X] = N_G(X) \cup X$.
If $H$ is a subgraph of $G$ we define $N_G(H) = N_G(V(H))$ and $N_G[H] = N_H[V(H)]$.
We may drop the subscript $G$ when it is clear from the context which graph is meant.

Given a graph $G$ and edge $uv \in E(G)$, the operation of contracting edge $uv$ creates the graph $G'$ obtained by deleting $u$ and $v$, and by then adding a fresh vertex $w$ that is made adjacent to each vertex of $N_G(\{u,v\})$.
A graph $H$ is a minor of $G$ if $H$ can be obtained from $G$ by deleting vertices, deleting edges and contracting edges.

\subparagraph*{Graph classes and modulators.}
A graph class $\mathcal{G}$ is a (usually infinite) set of graphs.
Given a graph $G$, a \emph{modulator to $\mathcal{G}$} of $G$ is a set $M \subseteq V(G)$ such that $G - M$ is a graph in $\mathcal{G}$.

\subparagraph*{Caterpillars.}
A \emph{caterpillar} graph $C$ is a tree, such that deleting all vertices of degree one $C$ results in a path.
For a caterpillar $C$, a \emph{spine} of $C$ is a path $P$ that is a subgraph of $C$ such that all vertices of $V(C) \setminus V(P)$ have degree one in $C$.
Observe that any nonempty caterpillar has a (not necessarily unique) nonempty spine.
Furthermore, given a caterpillar graph, it is easy to compute a nonempty spine for it in polynomial-time. 
So, we can assume that any caterpillar we work with comes with such a fixed spine.
That is, for any (nonempty) caterpillar $C$ we have that $\spine{C}$ is some fixed nonempty spine of $C$, and we call $\spine{C}$ \emph{the} spine of $C$.

A forest $F$ is a \emph{caterpillar forest} if each connected component of $F$ is a caterpillar.
Let $F$ be a caterpillar forest and $\mathcal{C}$ be the set of connected components of $F$.
The \emph{spine} $\spine{F}$ of a caterpillar forest $F$ is the graph obtained by taking the disjoint union of the graphs $\{\spine{C} \mid C \in \mathcal{C}\}$.
A vertex of $\spine{F}$ is called a \emph{spine vertex}.
The \emph{pendants} of $F$ is the set $\pendants{F}$ of all vertices of $F$ which are not part of $\spine{F}$.
Each pendant $v \in \pendants{F}$ has degree exactly one, we refer to the sole neighbor of $v$ as the \emph{parent} of $v$.
Note that the parent $p$ of a pendant $v$ is always a spine vertex.\footnote{Otherwise, the caterpillar $C$ containing $v$ would have to be a path on two vertices, and $\spine{C}$ would have to be empty.}
For a spine vertex $v$ of $F$, we define $\pendants{v} = N(v) \cap \pendants{F}$.

For convenience, given a caterpillar $C$, we will also assume that the spine $\spine{C}$ has a fixed order, that is, we treat the spine as the path $v_1,\dots,v_t$, and refer to $v_1$ as the left end of $C$, and to $v_t$ as the right end of $C$.
A spine vertex $v$ is left of (right of) spine vertex $w$ if $v$ is closer to the left end (right end) than $w$, and left of (right of) a pendant $w$ if $v$ is left of (right of) the parent of $w$.
A pendant $v$ is left of (right of) vertex $w$ if the parent of $v$ is left of (right of) $w$.
Similarly, when $W,W'$ are subgraphs of $C$ (or subsets of $V(C)$), we say that $W$ is left of $W'$ if $W$ contains a vertex that is left of all vertices of $W'$, and $W$ is right of $W'$ if $W$ contains a vertex that is right of all vertices of $W'$.
Furthermore, if we consider some connected subgraph $H$ of a caterpillar $C$, then the spine of $H$ is exactly the spine of $C$ restricted to the vertices of $H$, and the order stays intact.

\subparagraph*{Kernelization}

We now introduce the central notions regarding parameterized complexity and kernelization.

\begin{definition}[Parameterized Problems {\cite[Definition 1.1]{fominKernelizationTheoryParameterized2019}}]
    A parameterized problem is a language $L \subseteq \Sigma^* \times \mathbb{N}$, where $\Sigma$ is a fixed, finite alphabet. For an instance $(x,k) \in \Sigma^* \times \mathbb{N}$, $k$ is called the parameter.
\end{definition}

A kernel is a specific preprocessing algorithm for a parameterized problem.

\begin{definition}[Kernelization {\cite[Definition 1.3]{fominKernelizationTheoryParameterized2019}}]
    Let $L \subseteq \Sigma^* \times \mathbb{N}$ be a parameterized problem.
    A kernelization algorithm (kernel) for $L$ is an algorithm that takes an instance $(x,k) \in \Sigma^* \times \mathbb{N}$ as input and
    \begin{itemize}
        \item runs in time polynomial in $\abs{x} + k$, and
        \item outputs an equivalent instance $(x',k')$ with $\abs{x'}, k' \leq h(x)$ for some computable function $h$.
    \end{itemize}
    We also call $(x',k')$ the kernel of instance $(x,k)$, and $h$ the size of the kernel.
    If $h$ is a polynomial function, then the kernel is polynomial.
\end{definition}

In the area of kernelization, one often works with smaller algorithms that can reduce part of an instance, so-called \emph{reduction rules}.
These rules can have specific purposes, for example reducing the number of connected components or reducing the size of individual connected components.
Such a rule is \emph{safe} if it outputs an equivalent instance of the problem.
Of course, if a rule is applied in the kernelization algorithm, it has to run in polynomial-time.

We now introduce the complexity class \FPT{}, and remark that it is well-known that a parameterized problem is in \FPT{} if and only if it has a (not necessarily polynomial) kernel \cite[Theorem 1.4]{fominKernelizationTheoryParameterized2019}.

\begin{definition}[Fixed-Parameter Tractable Problems {\cite[Definition 1.2]{fominKernelizationTheoryParameterized2019}}]
    A parameterized problem $L \subseteq \Sigma^* \times \mathbb{N}$ is in the class \FPT{} (fixed-parameter tractable) if there is an algorithm that takes instance $(x,k) \in \Sigma^* \times \mathbb{N}$ as input, and decides whether $(x,k) \in L$ in time $f(k) \cdot \abs{x}^{\Oh(1)}$ for some computable function $f: \mathbb{N} \rightarrow \mathbb{N}$.
\end{definition}

Problems that are \WOne{}-hard are conjectured to not be \FPT{}.
Similarly, there are results ruling out polynomial kernels for problems under the assumption that $\NP \not \subseteq \coNPPoly$.
In fact, the lower bounds generally rule out \emph{polynomial compressions}.
A polynomial compression is similar to a polynomial-kernel, except that the target problem can be a different (unparameterized) problem.

\begin{definition}[Polynomial Compressions {\cite[Definition 1.5]{fominKernelizationTheoryParameterized2019}}]
    Let $A \subseteq \Sigma^* \times \mathbb{N}$ be a parameterized problem and $B \subseteq \Sigma^*$ an unparameterized problem.
    A polynomial compression from $A$ to $B$ is an algorithm that, given instance $(I,k)$ of problem $A$ outputs an equivalent instance $I'$ of problem $B$ such that $|I'| \leq g(k)$ for some polynomial function $g$.
    Moreover, the algorithm has to run in time polynomial in $|I| + k$.
\end{definition}
Observe that ruling out polynomial compressions (into any target problem) also rules out polynomial kernels.
More background on these notions can be found in the standard textbook on parameterized algorithms by Cygan et al. \cite{cyganParameterizedAlgorithms2015} or the book on kernelization by Fomin et al. \cite{fominKernelizationTheoryParameterized2019}.
For our hardness proofs, we utilize the corresponding reduction machinery.

\begin{definition}[Parameterized Reductions {\cite[Definition 13.1]{cyganParameterizedAlgorithms2015}}]
    A \emph{parameterized reduction} from parameterized problem $A$ to parameterized problem $B$ is an algorithm that takes instances of $A$ as input.
    Given instance $(I,k)$ of problem $\mathcal{A}$, the reduction algorithm produces an equivalent instance $(I',k')$ of $\mathcal{B}$.
    Moreover, $k'$ has to be bounded by a computable function of the input parameter $k$, and the running time of the algorithm has to be bounded by $f(k) \cdot |I|^{\Oh(1)}$ for some computable function $f$.
\end{definition}

If problem $\mathcal{A}$ is \WOne{}-hard, and there is a parameterized reduction from $\mathcal{A}$ to $\mathcal{B}$, then also $\mathcal{B}$ is \WOne{}-hard.
A similar type of reduction can be used to rule out polynomial compressions.

\begin{definition}[Polynomial Parameter Transformation {\cite[Definition 15.14]{cyganParameterizedAlgorithms2015}}]
    A \emph{polynomial parameter transformation} from parameterized problem $\mathcal{A}$ to parameterized problem $\mathcal{B}$ is an algorithm that takes instances of problem $\mathcal{A}$ as input.
    Given an instance $(I,k)$ of $A$ it runs in time polynomial in $|I| + k$, and outputs and equivalent instance $(I',k')$ of problem $\mathcal{B}$.
    Furthermore, $k' \leq g(k)$ for some polynomial function $g$.
\end{definition}
If problem $\mathcal{A}$ does not have a polynomial compression unless $\NP \subseteq \coNPPoly$, and there is a polynomial parameter transformation from $\mathcal{A}$ to $\mathcal{B}$, then also problem $\mathcal{B}$ has no polynomial compression unless $\NP \subseteq \coNPPoly$. 
\section{Reducing the Number of Connected Components}
\label{sec:reducing_connected_components}
In this section, we cover a general procedure to reduce the number of connected components of $G - M$, where $G$ is a graph, and $M$ is a modulator of $G$ to some graph class $\mathcal{G}$ that fulfills certain properties we will cover later.
This procedure is a generalization of the one introduced in
Hols et al. \cite{holsEliminationDistancesBlocking2022} for the \vc{} problem.
More concretely, our \cref{main_thm:algorithm_to_reduce_components} lifts \cite[Theorem 1.3]{holsEliminationDistancesBlocking2022} and \cite[Theorem 3.12]{holsEliminationDistancesBlocking2022} to the \dcoc{} problem, and \cref{thm:remove_components_quickly_d_non_constant} lifts \cite[Theorem 1.3]{holsEliminationDistancesBlocking2022} to \cocModToGPlusD{}.
We take their approach and proof strategy and make them work for \dcoc{} by making some adaptions.
For example, we utilize the expansion lemma instead of matchings, and make other similar changes to handle the difference between \vc{} and \dcoc{}.
It should be mentioned that this results in an approach that is also similar to the strategy used in Bhyravarapu et al. \cite{bhyravarapuDifferenceDeterminesDegree2023} for their kernel of \dcoc{}  parameterized by the size of a \dcoc[c]{} set, for $c \geq d$.

Formally, we consider the parameterized problem \dcocModToG{}, which is \dcoc{}, where the input additionally contains a modulator $M$ to $\mathcal{G}$ of $G$.
The parameter of the problem is $\abs{M}$.
Note that we treat the problem as a promise problem in the sense that
inputs are guaranteed to be of this form.
The issue is that the graph class $\mathcal{G}$ could be a class such that detecting whether a graph belongs to $\mathcal{G}$ is not decidable, or at least not something that can be achieved in polynomial-time.
Hence, it may not be possible to check whether $M$ is actually a modulator to $\mathcal{G}$ quickly enough.
However, if one can detect whether a graph is in the class $\mathcal{G}$ in polynomial-time, which is possible in all our results in which we fix $\mathcal{G}$ to be some explicit graph class, one can also drop this assumption and treat the problem as a regular parameterized problem.
The results we state in this section will also partially apply to the problem \cocModToGPlusD{} where $d$ is not a fixed constant, but we first focus on the problem \dcocModToG{}, and later explain in which case they also extend to the more general problem.

Now, we formally introduce $d$-blocking sets.
The notion of blocking sets for \vc{} was formalized in \cite{holsSmallerParametersVertex2017,holsEliminationDistancesBlocking2022} and \cite{bougeretBridgedepthCharacterizesWhich2022}, and had, albeit not necessarily explicitly, played an important role in many previous works (see \cite{holsEliminationDistancesBlocking2022} for more details).
Recently, a notion of blocking sets also appeared in Bougeret et al. \cite{bougeretKernelizationDichotomiesHitting2024a}.
The notion of blocking sets for \vc{} was generalized to \dcoc{} in the natural way \cite{greilhuberComponentOrderConnectivity2024}.

\begin{definition}[Blocking Sets {\cite[Definition 3]{greilhuberComponentOrderConnectivity2024}}]
    Let $G$ be a graph, and $d \geq 1$ an integer.
    A subset $X$ of $V(G)$ is a $d$-blocking set if there is no minimum \dcocset{} of $G$ that contains $X$ as a subset.
    Moreover, a $d$-blocking set $X$ is minimal if no strict subset of $X$ is a $d$-blocking set.

    Let $\mathcal{G}$ be a graph class.
    We say that $\mathcal{G}$ has minimal $d$-blocking sets of size at most $b$, for some integer $b$, if for any graph $G \in \mathcal{G}$ it holds that any minimal $d$-blocking set of $G$ has size at most $b$.
\end{definition}

For \dcocModToG{} where $\mathcal{G}$ is closed under taking disjoint unions, it is vital that the size of minimal $d$-blocking sets of graphs in $\mathcal{G}$ is bounded by some constant.
Otherwise, no polynomial kernel can exist \cite[Theorem 4]{greilhuberComponentOrderConnectivity2024} under a standard complexity-theoretic assumption.
In this section we show that when $\mathcal{G}$ has minimal $d$-blocking sets of size at most $b$ then, under some mild additional conditions, we can reduce the number of connected components of $G - M$ to a polynomial in $\abs{M}$ whose exponent depends on $b$.

We now define the notion of \emph{chunks}, which is a concept introduced by Jansen and Bodlaender \cite{jansenVertexCoverKernelization2013b} for their kernelization of \vc{} parameterized by the size of a feedback vertex set.
Hols et al. \cite{holsEliminationDistancesBlocking2022} generalized the notion to the problem \dcocModToG[1]{} for any $\mathcal{G}$ with bounded minimal $1$-blocking sets.

\begin{restatable}[Chunks]{definition}{defChunks}
    \label{def:chunks}
    Let $(G,k,M)$ be an instance of \dcocModToG{}, where $\mathcal{G}$ has minimal $d$-blocking sets of size at most $b$.
    The set of chunks of this instance is $\mathcal{X} = \{X \in 2^M \mid 1 \leq |X| \leq b\}$.
\end{restatable}

We are interested in a chunk $X \in \mathcal{X}$ because we can quantify how bad it is for a solution if it contains no vertex of $X$.
In particular, if taking no vertex of $X$ is very bad for the solution size, then we can deduce that any optimal solution has to contain at least one vertex of $X$.
To quantify how bad it is to not take a vertex of a chunk into the solution, we introduce the notion of a conflict, which is again a concept going back to the seminal work of Jansen and Bodlaender \cite{jansenVertexCoverKernelization2013b}.
Recall that for a graph $H$ we denote the size of a minimum \dcocset{} of $H$ as $\opt{H}$.

\begin{restatable}[Conflict]{definition}{defConflicts}
    \label{def:conflict}
    Let $(G,k,M)$ be an instance of \dcocModToG{}, $G'$ a subgraph of $G - M$, and $X \in \mathcal{X}$ a chunk of the instance.
    The conflict of $X$ imposed on $G'$ is $\conflicts{G'}{X} = |N_{G'}(X)| + \opt{G' - N_{G'}(X)} - \opt{G'}$.
\end{restatable}

For a chunk $X$ and graph $G'$, the conflict of $X$ imposed on $G'$ is the difference between the best solution for $G'$ that contains all vertices of $N_{G'}(X)$, and the best solution for $G'$.
For the \vc{} problem whenever no vertex of a chunk $X$ is taken, any solution must take all neighbors of $X$.
Hence, the conflict value directly describes how much more one has to pay in $G'$ if no vertex of $X$ is taken.
For the problem \coc{} it is generally not true that not selecting a vertex of a chunk $X$ means that we have to select all neighbors of $X$.
Nonetheless, our definition is a natural generalization of the definition for \vc{}, and it turns out that since we have to take \emph{almost all} neighbors of a chunk $X$ if we select no vertex of $X$ the stated notion of conflicts is perfectly suited for our problem.

Next, we introduce the expansion lemma, which is the main tool we use to get rid of connected components.
Before stating the lemma itself, we need to define the notion of a $q$-expansion.

\begin{definition}[Expansion {\cite[Section 2.4]{cyganParameterizedAlgorithms2015}}]
    Let $(A,B,E)$ be a bipartite graph.
    For any integer $q \geq 1$, a set $Q \subseteq E(G)$ is a $q$-expansion of $A$ into $B$ if every vertex of $A$ is incident to exactly $q$ edges of $Q$, and exactly $q \cdot |A|$ vertices of $B$ are incident to an edge of $Q$.
    Endpoints of the edges in $Q$ are called saturated (by $Q$).
\end{definition}

Now, we can state the expansion lemma itself.

\begin{lemma}[Expansion Lemma; {Follows from {\cite[Lemma 2.18]{cyganParameterizedAlgorithms2015}}}]
    \label{thm:expansion_lemma}
    Let $(A,B,E)$ be a bipartite graph, and $q \geq 1$ be an integer, such that $\abs{B} \geq q \cdot \abs{A}$ and there are no isolated vertices in $B$.
    Then, there exists a polynomial-time algorithm that finds nonempty sets $X \subseteq A$ and $Y \subseteq B$ such that there is a $q$-expansion of $X$ into $Y$, no vertex in $Y$ has a neighbor outside $X$, and $\abs{Y} = q \cdot \abs{X}$.
\end{lemma}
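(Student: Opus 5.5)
The plan is to take the statement almost verbatim from \cite[Lemma 2.18]{cyganParameterizedAlgorithms2015} and add one small post-processing step for the extra requirement $\abs{Y} = q \cdot \abs{X}$.

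The cited lemma states the following. Let $(A,B,E)$ be bipartite with $\abs{B} \geq q \cdot \abs{A}$ and no isolated vertices in $B$. Then one can compute in polynomial time nonempty sets $X \subseteq A$ and $Y \subseteq B$ such that there is a $q$-expansion $Q$ of $X$ into $Y$ and $N(Y) \subseteq X$. The lemma does not guarantee that every vertex of $Y$ is saturated by $Q$.

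So the first step is to run that algorithm and obtain $X$, $Y$ and $Q$. In its constructive form the algorithm outputs the expansion $Q$ as well; if a particular version does not, $Q$ can be recovered in polynomial time.

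The second step is to recover $Q$ explicitly if needed. Replace every vertex of $X$ by $q$ copies with the same neighbourhood in $Y$. Then compute a maximum matching in this auxiliary bipartite graph. A $q$-expansion of $X$ into $Y$ corresponds exactly to a matching that saturates all copies, and such a matching exists because the lemma guarantees a $q$-expansion.

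The third step is to set $Y' \subseteq Y$ to be the set of vertices of $Y$ saturated by $Q$. We then verify three properties.
\begin{itemize}
\item Each vertex of $X$ is incident to exactly $q$ edges of $Q$, and these $q \cdot \abs{X}$ edges have pairwise distinct endpoints in $Y$. Hence $\abs{Y'} = q \cdot \abs{X}$, and $Y'$ is nonempty because $X$ is nonempty and $q \geq 1$.
\item $Q$ is still a $q$-expansion of $X$ into $Y'$, since every edge of $Q$ ends in $Y'$.
\item $N(Y') \subseteq N(Y) \subseteq X$, so no vertex of $Y'$ has a neighbour outside $X$.
\end{itemize}
Thus $(X, Y')$ satisfies every claimed property, and the whole procedure runs in polynomial time.

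There is no real obstacle here; the only subtlety is the gap between the textbook statement and the exact-size requirement, which the trimming in the third step handles. For a self-contained argument one could also reprove the cited lemma itself. That proof takes the $q$-fold copy graph of $A$ and a maximum matching $\mathcal{M}$ in it. If $\mathcal{M}$ saturates all copies, it already yields the expansion with $X = A$. Otherwise, let $Z$ be the set of vertices reachable from the unsaturated copies by $\mathcal{M}$-alternating paths. Then set $X = Z \cap A$, with copies identified with their original vertex, and $Y = Z \cap B$. The condition $\abs{B} \geq q\abs{A}$ together with the absence of isolated vertices in $B$ makes this argument go through, but invoking the cited lemma makes that unnecessary.
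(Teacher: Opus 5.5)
Your argument is correct and follows the same route as the paper: run the textbook expansion lemma, recover an explicit $q$-expansion of $X$ into $Y$ via a maximum matching in the $q$-fold copy graph, and discard the vertices of $Y$ that it does not saturate. The one problem is in your optional self-contained sketch, which fails as written, although your proof does not depend on it.

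If you start alternating paths from the \emph{unsaturated copies} of $A$, you obtain a Hall violator. Write $U$ for the set of unsaturated copies. Identical neighbourhoods force all $q\abs{X}$ copies of $X$ into $Z$. Every vertex of $Z \cap B$ is matched to one of these copies, and $N(X) = Z \cap B$. Hence $\abs{N(X)} = q\abs{X} - \abs{U} < q\abs{X}$, so this $X$ admits no $q$-expansion at all.

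To actually reprove the lemma, either start the alternating paths from the vertices of $B$ left unmatched by the maximum matching, or delete the violator together with its neighbourhood and recurse. Either drop that remark or correct it in one of these two ways.
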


The stated variant is not exactly the one given in \cite[Lemma 2.18]{cyganParameterizedAlgorithms2015}, as the one stated there does not guarantee that $\abs{Y} = q \cdot \abs{X}$.
Still, by inspecting the proofs of \cite{cyganParameterizedAlgorithms2015}, it is not difficult to see that we can actually find a $q$-expansion from $X$ to $Y$ in polynomial-time if it exists (using algorithms for maximum matching), and then we can obtain the variant of the lemma we state by just removing all vertices of $Y$ which are not saturated by the found $q$-expansion.
The variant we state is more convenient for us since we would otherwise need to differentiate between those vertices of $Y$ which are saturated and those that are not.
Next, we define the bipartite graph, called the conflict graph, on which we want to employ the expansion lemma.

\begin{definition}[Conflict Graph]
    Let $(G,k,M)$ be an instance of \dcocModToG{}, where $\mathcal{G}$ has minimal blocking sets of bounded size, and $\mathcal{X}$ is the set of chunks of the instance.
    Let $\mathcal{C}$ be the set of connected components of $G - M$.
    The conflict graph $H$ of the instance is the bipartite graph $(\mathcal{X},\mathcal{C},E)$ where $E$ contains an edge between a vertex $X \in \mathcal{X}$ and $C \in \mathcal{C}$ if and only if $\conflicts{C}{X} > 0$.
\end{definition}

Now, we provide our reduction rule to remove connected components.

\begin{reductionrule}
    \label{rule:remove_connected_components_expansion_lemma}
    Let $(G,k,M)$ be an instance of \dcocModToG{} where $\mathcal{G}$ has minimal $d$-blocking sets of size at most $b$.
    Let $\mathcal{X}$ be the set of chunks of the instance,
    and let $\mathcal{C}$ be the connected components of $G - M$.
    Set $q = (d-1) \cdot b + 1$, and let $(\mathcal{X},\mathcal{C},E)$ be the conflict graph of the instance.

    Perform the following procedure:
    \begin{enumerate}
        \item \label{step:remove_ccs_init}     Create a copy $(A',B',E')$ of the conflict graph, set $k' = k$, and let $G'$ be a copy of $G$.

        \item  \label{step:remove_ccs_delete_isolated} Let $\mathcal{I}$ be the set of isolated vertices of $B'$ of graph $(A',B',E')$.
              Update $(A',B',E')$ by deleting $\mathcal{I}$ from the graph and set $k' = k' - \sum_{C \in \mathcal{I}} \opt{C}$.
              Also remove all connected components of $\mathcal{I}$ from the graph $G'$.
        \item \label{step:remove_ccs_terminate} If $|B'| < q \cdot |A'|$, then terminate and output the instance $(G',k',M)$.
        \item \label{step:remove_ccs_apply_expansion_lemma} Otherwise, if $|B'| \geq q \cdot |A'|$, apply the expansion lemma on $(A',B',E')$ to find nonempty sets $X \subseteq A'$ and $Y \subseteq B'$ such that there is a $q$-expansion from $X$ to $Y$.
              Update $(A',B',E')$ by deleting all vertices of $Y$ and $X$, and go to \cref{step:remove_ccs_delete_isolated}.
    \end{enumerate}
\end{reductionrule}

Next, we prove that this rule is safe.

\begin{lemma}
    \cref{rule:remove_connected_components_expansion_lemma} is safe.
\end{lemma}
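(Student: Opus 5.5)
The plan is to prove the two directions separately, writing $\mathcal{I}_{\mathrm{all}}$ for the set of all components removed over the run of the procedure, so that $k' = k - \sum_{C \in \mathcal{I}_{\mathrm{all}}} \opt{C}$.

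\textbf{Forward direction.} This should be immediate. Take a \dcocset{} $S$ of $G$ with $|S| \le k$. Then $S \setminus \bigcup_{C\in\mathcal{I}_{\mathrm{all}}} V(C)$ is a \dcocset{} of the induced subgraph $G'$. Since $S \cap V(C)$ is a \dcocset{} of $C$ for every removed component $C$, we have $|S \cap V(C)| \ge \opt{C}$, so the restricted set has size at most $k'$.

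\textbf{Backward direction: normalisation.} Let $\mathcal{X}_{\mathrm{rem}}$ be the union of all chunk sets $X$ deleted in \cref{step:remove_ccs_apply_expansion_lemma}. Note that the components of the corresponding sets $Y$ are \emph{not} deleted from $G'$.

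The first key step is the following normalisation claim: if $G'$ has a \dcocset{} of size at most $k'$, then it has one, $S'$, that hits every chunk of $\mathcal{X}_{\mathrm{rem}}$, i.e.\ $S' \cap Z \neq \emptyset$ for each $Z \in \mathcal{X}_{\mathrm{rem}}$.

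\begin{itemize}
\item \emph{Counting step.} Fix a chunk $Z$ that avoids $S'$. Each vertex of $Z$ lies in a component of $G' - S'$ of size at most $d$. Hence at most $(d-1)|Z| \le (d-1)b < q$ vertices of $N(Z)\setminus M$ remain unselected.
\item \emph{Consequence.} Among the $q$ components privately assigned to $Z$ by the $q$-expansion, at least one component $C$ has all of $N_C(Z)$ in $S'$. By the definition of conflict, $S'$ then pays at least $\opt{C} + \conflicts{C}{Z} \ge \opt{C}+1$ inside $C$.
\item \emph{Exchange.} We want to replace $S'\cap V(C)$ by an optimal solution of $C$ and add a vertex of $Z$ instead. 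This does not increase the size and makes $Z$ hit.
\end{itemize}

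This exchange is the main obstacle. Adding one vertex of $Z$ does not automatically make $C$ safe, because $C$ may have other neighbours in $M \setminus S'$. I would therefore process the expansion rounds in order and handle all chunks of one round simultaneously. For those chunks, add all of $\bigcup X$ to the solution and, for every private component $C$ (including those in $Y$), replace its part of the solution by an optimal solution of $C$.

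The accounting I would aim for goes through the private components of each unhit chunk: by the counting argument, each such chunk $Z$ has among its $q$ private components one where $S'$ overpays by at least one, and these components are distinct for distinct chunks. This should pay for adding a vertex of each unhit chunk. The replacement inside the components of $Y$ must remain valid because vertices of $Y$ have neighbours only inside $X$, which is now fully selected. Chunks of $\mathcal{X}_{\mathrm{rem}}$ that are already hit are left as they are.

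\textbf{Backward direction: extension.} Given $S'$ as in the claim, I extend it by, for each removed component $C \in \mathcal{I}_{\mathrm{all}}$, an optimal \dcocset{} $S_C$ of $C$ containing $N_C(M\setminus S')$. Such an $S_C$ must exist, for the following reason.

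\begin{itemize}
\item Suppose $W := N_C(M\setminus S')$ were a $d$-blocking set of $C$. Then it contains a minimal blocking set $W'$ with $|W'| \le b$.
\item Choosing one neighbour in $M\setminus S'$ for each vertex of $W'$ yields a chunk $X \subseteq M\setminus S'$ with $N_C(X) \supseteq W'$. So $N_C(X)$ is blocking, i.e.\ $\conflicts{C}{X} > 0$.
\item At the moment $C$ was declared isolated, every chunk still present in $A'$ had conflict $0$ with $C$. So $X$ must already have been deleted, i.e.\ $X \in \mathcal{X}_{\mathrm{rem}}$.
\item This contradicts that $S'$ hits every chunk of $\mathcal{X}_{\mathrm{rem}}$.
\end{itemize}

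Finally, $S' \cup \bigcup_C S_C$ is a \dcocset{} of $G$. No component of $C - S_C$ touches $M \setminus S'$, so every component of $G$ minus this set is either a component of $G'-S'$ or a component of some $C - S_C$. The total size is at most $k' + \sum_C \opt{C} = k$.
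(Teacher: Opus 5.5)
\textbf{Verdict: there is a genuine gap, in the normalisation claim.} Your forward direction matches the paper, and so does your extension phase (the blocking-set argument for the components removed in Step 2). The problem is your resolution of the exchange obstacle, which fails on both things it must deliver.

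\emph{Validity.} The justification ``vertices of $Y$ have neighbours only inside $X$, which is now fully selected'' confuses the conflict graph with $G$. The expansion lemma only guarantees that a component $C \in Y$ has no conflict edge to a chunk outside $X$ in the current copy. In $G$, $C$ can still be adjacent to unselected vertices of $M \setminus \bigcup X$. Replacing $S' \cap V(C)$ by an \emph{arbitrary} optimal \dcocset{} of $C$ may then leave a neighbour of such a vertex unselected, creating a component of size more than $d$ through $M$.

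\emph{Size.} You add all of $\bigcup X$, but your accounting only pays for one vertex per chunk not hit by $S'$. With $b=2$, an unhit chunk $\{m_1,m_2\}$ costs two new vertices, while the counting argument guarantees only $q-2(d-1)=1$ overpaid private component. Unselected vertices of chunks of $X$ that $S'$ already hits are added with no saving at all. This also contradicts your own remark that such chunks are left as they are.

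\textbf{The missing idea.} Apply, already in the normalisation, the same blocking-set argument you use in the extension phase, instead of trying to cut $C$ off from $M$. This is what the paper does.
\begin{enumerate}
\item Add one arbitrary vertex of every chunk of $\mathcal{X}_{\mathrm{rem}}$ not hit by $S'$, for all rounds at once. Call the result $\hat S'$ and set $\hat M = M \setminus \hat S'$.
\item Take a component $C$ of some $Y$ and suppose $N_C(\hat M)$ were $d$-blocking. A minimal blocking subset gives a chunk $X^* \subseteq \hat M$ with $\conflicts{C}{X^*} > 0$.
\item Every chunk of $\mathcal{X}_{\mathrm{rem}}$ is now hit, so $X^*$ is never deleted. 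Hence $X^*$ was present, and adjacent to $C$ in the copy, when $C$ was deleted in Step 4. This contradicts the fact that all conflict-graph neighbours of $Y$ lie in $X$.
\item So each such $C$ has an optimal \dcocset{} containing $N_C(\hat M)$. Substituting these in all components of all the sets $Y$ simultaneously gives a valid \dcocset{} of $G'$.
\item Your injective assignment of an overpaid component to each unhit chunk, via the $q$-expansions, then bounds the size by $k'$.
\end{enumerate}
Processing the rounds in order is not needed.
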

\begin{proof}
    Let $(G,k,M)$ be the input instance, and $(G',k',M)$ be the output instance after applying the reduction rule.
    Moreover, let $(A,B,E)$ be the conflict graph of the input instance.
    Let $\mathcal{I} \subseteq B$ be the set of all connected components of $G - M$ which were deleted by the rule application.
    Observe that only \cref{step:remove_ccs_delete_isolated} of the reduction rule deletes vertices of $G$, so each vertex of $\mathcal{I} $ was an isolated vertex of the copy $(A',B',E')$ of the conflict graph at some point of the rule execution.
    We show both directions of safety separately.

    \proofsubparagraph*{$\Rightarrow$:}
    Assume that the input instance is a yes-instance, and let
    $S$ be a \dcocset{} of $G$ of size at most $k$.
    Then, $S$ must select at least $\opt{C}$ vertices from each $C \in \mathcal{I}$.
    Hence, $S \cap V(G')$ is a solution for the output instance of size at most $k'$.

    \proofsubparagraph*{$\Leftarrow$:}
    Assume the output instance is a yes instance.
    Let $S'$ be a \dcocset{} of $G'$ of size at most $k'$, and set $M' = M \setminus S'$.
    Let $(A',B',E')$ be the modified copy of the conflict graph at the point of the rule application when it terminates, that is, we have that $\abs{A'} < q \cdot \abs{B'}$ and the if-condition of \cref{step:remove_ccs_terminate} evaluates to true.
    Let $\mathcal{X}' = \mathcal{X} \setminus A'$.
    We proceed by showing that we can assume that $S'$ picks a vertex of $X$ for any chunk $X \in \mathcal{X}'$.

    \begin{claim}
        There is a \dcocset{} of $G'$ of size at most $k'$ that intersects each chunk of $\mathcal{X}'$.
    \end{claim}
    \begin{claimproof}
        Toward this goal, we change $S'$ as follows to create a solution $\hat S'$.
        For each chunk $X \in \mathcal{X}'$, where no vertex of $X$ is selected yet by $S'$, select an arbitrary vertex of $X$.
        Now, define $\hat M = M \setminus \hat S'$.

        Let $\mathcal{D}$ be those vertices of $B$ which were deleted by the reduction rule in \cref{step:remove_ccs_apply_expansion_lemma}.
        Observe that each $C \in \mathcal{D}$ is a subgraph of the output graph $G'$.

        We proceed to argue that for each $C \in \mathcal{D}$ there exists some minimum \dcocset{} of $C$ that selects all vertices of $N(\hat M) \cap V(C)$.
        Assume otherwise for the sake of contradiction.
        Then, for some $C \in \mathcal{D}$, there is no minimum \dcocset{} selecting all of $N(\hat M) \cap V(C)$.
        So, $N(\hat M) \cap V(C)$ is a $d$-blocking set for $C$.
        Because minimal $d$-blocking sets of $\mathcal{G}$ have size at most $b$, also minimal $d$-blocking sets of each connected component of a graph of $\mathcal{G}$ must have size at most $b$.
        So, we can find a minimal $d$-blocking set $B$ of $C$ which is a subset of $N(\hat M) \cap V(C)$, and we have $\abs{B} \leq b$.
        Then, by selecting an arbitrary neighbor $w \in \hat M$ of $v$ for each vertex $v \in B$, we obtain a chunk $X \subseteq \hat M$, which is
        a neighbor of $C$ in the conflict graph.
        But, because no vertex of $X$ is selected by $\hat S'$, it must be the case that $X$ is part of $A'$, and hence $X$ has an edge to $C$ in the conflict graph copy at the point when $C$ is removed in \cref{step:remove_ccs_apply_expansion_lemma}.
        However, when $C$ is deleted from the copy of the conflict graph, by the properties of the expansion lemma, all neighbors of $C$ in the conflict graph copy were also deleted, this contradicts that $X$ is part of $A'$.

        So, for each graph $C \in \mathcal{D}$, we remove $V(C)$ from $\hat S'$, and instead select a minimum \dcocset{} of $C$ that selects all vertices of $N(\hat M) \cap V(C)$.
        We call the resulting selection $\hat S''$.

        It is easy to see that $\hat S''$ is a solution of the output instance, because we start from solution $S'$, add more vertices to it, and then only change selections within connected components of $G' - M$ if the selection we change to is both a solution for a connected component and selects all neighbors of $\hat M$ within the component.

        So, we now need to argue that $\hat S''$ has size at most $k'$.
        Toward this, let $\mathcal{X}''$ be the set of chunks $X$ of $\mathcal{X}'$ such that $S'$ selected no vertex of $X$.
        We have that $\abs{\hat S'} \leq k' + \abs{\mathcal{X}''}$, because we add at most one vertex of each chunk of $\mathcal{X}''$ to $S'$ to create $\hat S'$.

        Now, we need to ensure that $\hat S''$ is sufficiently much smaller than $\hat S'$.
        Observe that, by \cref{step:remove_ccs_apply_expansion_lemma}, there is a $q$-expansion from $\mathcal{X}''$ to $\mathcal{D}$.
        Indeed, each chunk $X$ of $\mathcal{X}''$ is, at some point, saturated by a $q$-expansion in \cref{step:remove_ccs_apply_expansion_lemma}, and all endpoints of any $q$-expansion found by the reduction rule are immediately deleted from the copy of the conflict graph afterwards.
        Fix some chunk $X$ of $\mathcal{X}''$, and observe, that $S'$ must select all but at most $b(d-1)$ neighbors of $X$.
        However, since $q = b(d-1) + 1$, this means that there exists a connected component $C$ of $\mathcal{D}$ with $\conflicts{X}{C} > 0$ in which $S'$ selects all vertices of $N(X) \cap V(C)$.
        In particular, this means that $\abs{S' \cap V(C)} > \opt{C}$.
        Using the $q$-expansion, we can injectively map the chunks of $\mathcal{X}''$ to connected components of $\mathcal{D}$ such that $S'$, and therefore also $\hat S'$, had to select a non-optimal solution from each of these components.
        This now yields that $\abs{\hat S''} \leq k'$, because $\hat S''$ selects an optimal solution in each component of $\mathcal{D}$.
    \end{claimproof}

    So, we can assume that $S'$ selects at least one vertex of each $X \in \mathcal{X}'$.
    Now, it is easy to show that there is a minimum \dcocset{} $S_C$ of each $C \in \mathcal{I}$ such that $N(M') \cap V(C) \subseteq S_C$.
    Assume that for some $C \in \mathcal{I}$ there is no such set $S_C$.
    Then, the set $N(M') \cap V(C)$ is a $d$-blocking set of $C$, and there exists a minimal $d$-blocking set $B \subseteq N(M') \cap V(C)$ of size at most $b$.
    Now, for each $v \in B$ let $v'$ be an arbitrary neighbor of $B$ in $M'$, and $X = \{v' \mid v \in B\}$.
    We know that $\abs{X} \leq b$, and hence $X$ is a chunk.

    Moreover, we have $\conflicts{C}{X} > 0$ because $B$ was a $d$-blocking set, and there is an edge between $C$ and $X$ in the conflict graph.
    As $S'$ selects intersects all chunks in $\mathcal{X}'$, but $S'$ selects no vertex of $X$, it must be the case that $X \in A'$.
    This contradicts that $C$ was ever an isolated vertex of a copy of the conflict graph during the reduction rule execution, so $C$ cannot be in $\mathcal{I}$.
    Therefore, a minimum \dcocset{} $S_C$ that selects all neighbors of $M'$ exists for each $C \in \mathcal{C}$.

    Finally, we set $S = S' \cup \bigcup_{C \in \mathcal{I}} S_C$, which is a \dcocset{} of $G$ of size at most $k$.
\end{proof}

Let us also remark that we deliberately allow $k$ to be a negative integer in the problem definition.
Otherwise, the rule above would not necessarily output a valid instance of \dcoc{}.
Note that a negative value of $k$ clearly implies that the instance is a no-instance, and hence one could change the rule to output a constant-sized no-instance instead in the actual kernelization procedure.
Now, we are ready to state our main lemma about reducing the number of connected components.
\begin{lemma}
    \label{thm:algorithm_to_reduce_components_using_rule_1}
    Let $\mathcal{G}$ be a graph class with minimal $d$-blocking sets of size at most $b$.
    \cref{rule:remove_connected_components_expansion_lemma} takes an instance $(G,k,M)$ of \dcocModToG{} as input and outputs an equivalent instance $(G',k')$ of \dcoc{}, where $G'$ is a subgraph of $G$ obtained by deleting connected components of $G - M$, $k' \leq k$, and the number of connected components of $G' - M$ is at most $((d-1) \cdot b + 1) \cdot |M|^b$.
\end{lemma}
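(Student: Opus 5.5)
Most of this lemma is bookkeeping on top of the safety lemma proved just above; only the component bound needs an argument.

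\emph{Equivalence.} Equivalence of $(G,k,M)$ and $(G',k')$ is exactly the safety of \cref{rule:remove_connected_components_expansion_lemma}, so I only invoke that lemma.

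\emph{Structure of the output.} The only step that modifies $G'$ is \cref{step:remove_ccs_delete_isolated}. It deletes whole connected components of $G - M$, namely those in $\mathcal{I}$, so $G'$ is obtained from $G$ by deleting components of $G-M$. The same step decreases $k'$ by $\sum_{C\in\mathcal{I}}\opt{C}\ge 0$, so $k'\le k$.

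\emph{Running time.} Each iteration of \cref{step:remove_ccs_apply_expansion_lemma} deletes the nonempty set $X\subseteq A'$, so there are at most $|\mathcal{X}|\le |M|^b$ iterations. Each iteration is polynomial by \cref{thm:expansion_lemma}. The one subtle point is building the conflict graph, which requires $\opt{C}$ and $\opt{C-N_C(X)}$ for each component $C$ and each chunk $X$. In the intended applications $\mathcal{G}$ allows this in polynomial time (caterpillar forests via the solution-tight packing). Otherwise I would state it as an assumption, since the lemma concerns the rule's output rather than its efficiency.

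\emph{Component bound.} This is the main step, though short. At termination we have $|B'|<q\cdot|A'|$, and $B'$ has no isolated vertices: \cref{step:remove_ccs_delete_isolated} always runs immediately before the check in \cref{step:remove_ccs_terminate}. I classify every component of $G-M$ by how it left $B'$.
\begin{itemize}
\item Components removed in \cref{step:remove_ccs_delete_isolated} are deleted from $G'$.
\item Components in the sets $Y$ removed in \cref{step:remove_ccs_apply_expansion_lemma} remain in $G'$. Since $|Y|=q|X|$, and the sets $X$ over all iterations are disjoint subsets of $\mathcal{X}$, these number $q\cdot|\mathcal{X}\setminus A'_{\text{final}}|$.
\item The components still in $B'$ at termination number fewer than $q\,|A'_{\text{final}}|$.
\end{itemize}
Summing the last two classes, $G'-M$ has at most $q\,|\mathcal{X}|$ components, with $q=(d-1)b+1$. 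Finally, $|\mathcal{X}|=\sum_{i=1}^{b}\binom{|M|}{i}\le |M|^b$; when $|M|\le1$ this is trivial or adjusted by one. This gives the claimed bound $((d-1)b+1)\,|M|^b$.
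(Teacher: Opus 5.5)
Your proposal is correct and follows the paper's proof. The components that survive in $G'$ are either those in the sets $Y$ removed via the expansion lemma, exactly $q\,|\mathcal{X}\setminus A'|$ of them, or those still in $B'$ at termination, fewer than $q\,|A'|$, which gives at most $q\,|\mathcal{X}|\le q\,|M|^b$. Your hedge for $|M|\le 1$ is unnecessary, since $\sum_{i=1}^{b}\binom{|M|}{i}\le |M|^b$ holds for every value of $|M|$. The running-time discussion is also not needed, because the lemma makes no efficiency claim.
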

\begin{proof}
    Let $(G,k,M)$ be an instance of \dcocModToG{}.
    Let $(G',k',M)$ be the graph output by \cref{rule:remove_connected_components_expansion_lemma}, and observe that $G',k'$ indeed fulfill that $G'$ is obtained from $G$ by deleting connected components of $G - M$, and $k' \leq k$.
    Recall that the integer $q$ is set to $(d-1) \cdot b + 1$.
    Hence, all that is left to argue is that $G' - M$ contains at most $q \cdot |M|^b$ connected components.

    Let $(A',B',E')$ be the copy of the conflict graph when the application of the reduction rule terminates in \cref{step:remove_ccs_terminate}.
    Observe that any connected component of $G' - M'$ that remains in the graph $G'$ is either part of $B'$ (and was thus never deleted from the conflict graph by the rule application), or deleted in \cref{step:remove_ccs_apply_expansion_lemma} of the rule application.
    When the rule terminates, we have that $\abs{B'} < q \cdot \abs{A'}$. 
    Moreover, $A' \subseteq \mathcal{X}$, where $\mathcal{X}$ is the set of chunks of the input instance.
    On the other hand, we also have that the number of components deleted from the conflict graph (which are not deleted from $G$) in \cref{step:remove_ccs_apply_expansion_lemma}  is exactly $\abs{\mathcal{X} \setminus A'} \cdot q$.
    So, the number of remaining connected components is at most $q \cdot \abs{\mathcal{X}}$.
    Observe that $\abs{\mathcal{X}} \leq \abs{M}^b$ to conclude the proof.
\end{proof}

Note that we cannot state that the output instance $(G',k',M)$ is necessarily also an instance of \dcocModToG{}, because it is not clear whether $G' - M$ is in $\mathcal{G}$.
However, when $\mathcal{G}$ is closed under deleting connected components, clearly $G' - M$ is part of $\mathcal{G}$, and hence in this case $(G',k',M)$ is another instance of \dcocModToG{}.

Next, we want to show that, assuming certain conditions, we can apply \cref{rule:remove_connected_components_expansion_lemma} in polynomial-time, which in turn means that \cref{thm:algorithm_to_reduce_components_using_rule_1} can be used in a kernelization routine.
Recall that, similar to Hols et al.~\cite{holsEliminationDistancesBlocking2022}, we defined the graph class $\mathcal{G} + c$ for an integer $c$ as the graph class containing exactly those graphs which have a modulator to $\mathcal{G}$ of size exactly $c$.
Observe that from a graph in $\mathcal{G}$ we can create a graph in $\mathcal{G} + c$ by adding $c$ isolated vertices, and this does not change the size of a minimum \dcocset{}.
Therefore, we have the following.

\begin{observation}
    \label{thm:solve_G_plus_c_poly_solve_G_poly}
    For any integer $d \geq 1$, if \dcoc{} can be solved in polynomial-time on instances of $\mathcal{G} + c$ for some integer $c \geq 0$, then \dcoc{}  can be solved in polynomial-time on instances of $\mathcal{G}$.
\end{observation}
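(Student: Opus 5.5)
The plan is a direct reduction: we turn any graph of $\mathcal{G}$ into a graph of $\mathcal{G}+c$ whose optimum is known in terms of the original one, then call the assumed polynomial-time algorithm for $\mathcal{G}+c$. Fix $d \geq 1$ and $c \geq 0$, and let $\mathcal{A}$ be the assumed polynomial-time algorithm that solves \dcoc{} on instances whose graph lies in $\mathcal{G}+c$. Let $(G,k)$ be an instance of \dcoc{} with $G \in \mathcal{G}$.

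\textbf{Step 1: padding.} Build $G^+$ from $G$ by adding a set $I$ of $c$ fresh isolated vertices. Then $I$ is a modulator of $G^+$ to $\mathcal{G}$ of size exactly $c$, since $G^+ - I = G \in \mathcal{G}$. Hence $G^+ \in \mathcal{G}+c$. The construction takes time linear in $|G|+c$, and $c$ is a constant.

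\textbf{Step 2: the optimum is unchanged.} We claim $\opt{G^+} = \opt{G}$, shown by an inequality in each direction.
\begin{itemize}
\item If $S$ is a \dcocset{} of $G$, it is also one of $G^+$. The only new components of $G^+ - S$ are the singletons of $I$, and these have size $1 \leq d$.
\item If $S^+$ is a \dcocset{} of $G^+$, then $S^+ \cap V(G)$ is a \dcocset{} of $G$. This holds because $G$ is a union of components of $G^+$, so the components of $G - (S^+\cap V(G))$ are components of $G^+ - S^+$. This set is no larger than $S^+$.
\end{itemize}
Consequently $(G,k)$ is a yes-instance if and only if $(G^+,k)$ is a yes-instance.

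\textbf{Step 3: solve.} Run $\mathcal{A}$ on $(G^+,k)$ and return its answer. The total running time is polynomial in $|G|$.

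\textbf{Possible obstacle.} The only delicate point is the promise nature of $\mathcal{G}+c$. If $\mathcal{A}$ expects a modulator as part of its input, we hand it $I$, which is a valid modulator by construction, so the promise is met. If $\mathcal{G}$ is not closed under adding isolated vertices, this does not matter, because the definition of $\mathcal{G}+c$ only requires some modulator of size exactly $c$. Nothing else is nontrivial.
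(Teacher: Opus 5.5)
Your proof is correct and is essentially the paper's argument: pad $G$ with $c$ fresh isolated vertices so that it lies in $\mathcal{G}+c$, observe that this leaves the minimum \dcocset{} size unchanged because the new singletons have size $1 \leq d$, and run the assumed algorithm. Your two-direction check of the optimum and your remark on the promise (handing over $I$ as the modulator) just spell out details the paper leaves implicit.
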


Being able to solve instances of $\mathcal{G} + d$ quickly also helps us to detect blocking sets of graphs of $\mathcal{G}$ quickly.

\begin{lemma}
    \label{thm:compute_blocking_sets_poly_time}
    If we can solve \dcoc{}  in polynomial-time on instances of $\mathcal{G} + d$, then, for any graph $G \in \mathcal{G}$ and any set $X \subseteq V(G)$ we can compute whether $X$ is a $d$-blocking set of $G$ in polynomial-time.
\end{lemma}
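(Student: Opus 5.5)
Since $X$ is a $d$-blocking set of $G$ exactly when every \dcocset{} of $G$ containing $X$ has size strictly larger than $\opt{G}$, we only need to compute two numbers in polynomial time: $\opt{G}$, and the minimum size $\mathrm{opt}_X$ of a \dcocset{} of $G$ that contains $X$. Then $X$ is a $d$-blocking set if and only if $\mathrm{opt}_X > \opt{G}$. By \cref{thm:solve_G_plus_c_poly_solve_G_poly} (with $c = d$), \dcoc{} is solvable in polynomial time on $\mathcal{G}$. Hence $\opt{G}$ can be found by trying each $k \in \range[0]{\abs{V(G)}}$ and taking the smallest yes-instance.

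To compute $\mathrm{opt}_X$, I will build a graph $G_X \in \mathcal{G} + d$ that forces $X$ into the solution. Add $d$ fresh vertices $u_1,\dots,u_d$, make them pairwise adjacent (a clique), and make each of them adjacent to every vertex of $X$. Then $\{u_1,\dots,u_d\}$ is a modulator of size exactly $d$ and $G_X - \{u_1,\dots,u_d\} = G \in \mathcal{G}$, so $G_X \in \mathcal{G}+d$. The claim to prove is $\opt{G_X} = \mathrm{opt}_X$, where the minimum \dcocset{} of $G_X$ can be taken to avoid the $u_i$.

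The two directions are as follows.
\begin{itemize}
\item If $S$ is a \dcocset{} of $G$ with $X \subseteq S$, then in $G_X - S$ the $u_i$ have no remaining neighbours in $G$. They therefore form a component of size exactly $d$, so $S$ is also a \dcocset{} of $G_X$.
\item Conversely, let $S'$ be a minimum \dcocset{} of $G_X$. If some $x \in X$ is not in $S'$, then $x$ lies in a component together with every $u_i$ not in $S'$. Replacing each selected $u_i$ and some vertices by elements of $X$ should repair this. Concretely, I exchange $S' \cap \{u_1,\dots,u_d\}$ for vertices of $X \setminus S'$.
\end{itemize}

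The exchange in the second direction is the main obstacle. If $t$ of the $u_i$ are selected, the unselected $d-t$ vertices $u_i$ together with $x$ form a set of $d-t+1$ vertices in one component. This gives no immediate bound showing $\abs{X\setminus S'} \le t$. My first fix is to strengthen the gadget: attach $d$ pendant vertices to each $u_i$, or use more than $d$ copies. The trouble is that the modulator must have size exactly $d$. So I would instead keep exactly $d$ modulator vertices but give each $u_i$ enough forcing power, for instance by adding $d$ private pendants to each $u_i$. Those pendants lie in $G_X - \{u_1,\dots,u_d\}$, so this only works if $\mathcal{G}$ is closed under adding components that are single vertices, which we may not know.

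If that is not available, the fallback is a self-reduction avoiding modification of $G - M$. Process the vertices $x\in X$ one at a time. Use a single universal-to-$x$ vertex chain $u_1,\dots,u_d$ forming a path attached to $x$ (a path of $d$ vertices plus $x$ gives $d+1$ vertices). This forces that either $x$ or some $u_i$ is deleted, and deleting a $u_i$ can be exchanged for deleting $x$ at no extra cost. Since the modulator size must stay $d$, I would handle one vertex of $X$ at a time. After determining that $x$ can be forced, I delete $x$ from $G$ and decrease $k$, which keeps the graph in $\mathcal{G}$ provided $\mathcal{G}$ is hereditary (an assumption I would state). Repeating over $X$ yields $\mathrm{opt}_X = \abs{X} + \opt{G - X}$, which is computed directly, and the comparison with $\opt{G}$ decides blocking.
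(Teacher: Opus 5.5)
Your proposal has a genuine gap. The claim it hinges on is false: namely $\opt{G_X} = \mathrm{opt}_X$, with a minimum \dcocset{} of $G_X$ that avoids the $u_i$. So no exchange argument in your second bullet can succeed. Take $d = 1$, let $G$ be edgeless on $n \geq 2$ vertices, and let $X = V(G)$. Then $\mathrm{opt}_X = n$, but $\{u_1\}$ is a \dcocset{} of $G_X$, so $\opt{G_X} = 1$.

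Your repairs do not help. Private pendants on the $u_i$ need a closure property of $\mathcal{G}$. They would also force every $u_i$ into the solution, which removes all pressure on $X$. The fallback assumes $\mathcal{G}$ is hereditary, which the lemma does not assume. Under heredity the statement is trivial: compare $\opt{G}$ with $\abs{X} + \opt{G - X}$ directly, as in case (2) of \cref{main_thm:algorithm_to_reduce_components}. This lemma is needed precisely for case (1), where $\mathcal{G}$ need not be hereditary.

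The missing idea is that you never need the value $\mathrm{opt}_X$, only whether it equals $\opt{G}$, and your original gadget already decides that.
\begin{itemize}
\item Since $G$ is an induced subgraph of $G_X$, we always have $\opt{G} \leq \opt{G_X}$.
\item If $X$ is not $d$-blocking, apply your first bullet to a minimum \dcocset{} of $G$ containing $X$. This gives $\opt{G_X} = \opt{G}$.
\item If $X$ is $d$-blocking, let $S'$ be any \dcocset{} of $G_X$. Either $S'$ contains some $u_i$, and then $\abs{S'} \geq \abs{S' \cap V(G)} + 1 \geq \opt{G} + 1$. Or $S'$ contains no $u_i$, and then it must contain all of $X$, since an unselected $x \in X$ would share a component with all $d$ vertices $u_i$. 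In that case $S' \cap V(G)$ is a \dcocset{} of $G$ containing $X$, so it has more than $\opt{G}$ vertices.
\end{itemize}
Hence $X$ is $d$-blocking if and only if $\opt{G_X} > \opt{G}$. Both quantities are computable in polynomial time: the first via the algorithm on $\mathcal{G}+d$, the second via \cref{thm:solve_G_plus_c_poly_solve_G_poly}. This is exactly the paper's proof. The paper joins only one vertex of the added $d$-clique to $X$, which makes no difference.
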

\begin{proof}
    Begin by computing $\opt{G}$ in polynomial-time by using \cref{thm:solve_G_plus_c_poly_solve_G_poly} (for example by using binary search to find the size of a minimum \dcocset{} of $G$).
    The set $X$ is a $d$-blocking set of $G$ if and only if $\opt{G-X} + |X| > \opt{G}$.
    Add a clique on $d$ fresh vertices to $G$, call one of the $d$ vertices of the added clique $v$.
    Connect $v$ to each vertex of $X$ with an edge.
    Call the resulting graph $G'$.
    We have that $G' \in \mathcal{G} + d$, hence, we can compute $\opt{G'}$ in polynomial-time.

    Now, observe that if $X$ is not a blocking set, then, there exists some minimum \dcocset{} $S$ of $G$ that selects all vertices of $X$.
    This set $S$ is also a minimum \dcocset{} of $G'$, so we have that $\opt{G'} = \opt{G}$.
    Any \dcocset{} of $G'$ must select a vertex of the fresh clique of size $d$ we added to $G$, or it must select all vertices of $X$.
    Hence, if $X$ is a blocking set, we must have that $\opt{G'} > \opt{G}$.
    This means that we can decide whether $X$ is a blocking set of $G$ in polynomial-time by computing $\opt{G}$ and $\opt{G'}$ and comparing the values.
\end{proof}

We now show a condition under which \cref{rule:remove_connected_components_expansion_lemma} can  be applied in polynomial-time.

\begin{lemma}
    \label{thm:apply_removing_cc_rule_poly_time}
    If $\mathcal{G}$ is a graph class such that
    \begin{itemize}
        \item minimal $d$-blocking sets of graphs of $\mathcal{G}$ have size at most $b$, and
        \item for any $H \in \mathcal{G}$ and set $X \subseteq V(H)$ we can decide in polynomial-time whether $X$ is a $d$-blocking set of $H$,
    \end{itemize}
    then we can apply \cref{rule:remove_connected_components_expansion_lemma} in polynomial-time.
\end{lemma}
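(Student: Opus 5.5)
To apply \cref{rule:remove_connected_components_expansion_lemma} in polynomial time, I would first bound the sizes of the objects it manipulates, then check that each step is polynomial, and finally bound the number of loop iterations. Here $d$ and $b$ are constants, since the problem is \dcocModToG{}. The set of chunks $\mathcal{X}$ has size at most $\sum_{i=1}^{b}\binom{|M|}{i} \leq |M|^b + 1$, which is polynomial, and we can enumerate it explicitly. The number of connected components $\mathcal{C}$ of $G-M$ is at most $|V(G)|$. So the conflict graph has polynomially many vertices, and the only nontrivial work is deciding, for each pair $(X,C)$, whether $\conflicts{C}{X} > 0$.

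For that test, observe that $\conflicts{C}{X} = |N_C(X)| + \opt{C - N_C(X)} - \opt{C}$. This is positive exactly when no minimum \dcocset{} of $C$ contains $N_C(X)$, that is, exactly when $N_C(X)$ is a $d$-blocking set of $C$. The value is always at least $0$, because the union of $N_C(X)$ with a minimum solution of $C - N_C(X)$ is a solution of $C$. It equals $0$ iff such a union is optimal, iff some minimum solution contains $N_C(X)$.

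The main obstacle is that the hypothesis lets us test blocking sets only for graphs $H \in \mathcal{G}$, while $C$ is a single component of $G - M \in \mathcal{G}$ and need not itself lie in $\mathcal{G}$. I would handle this by working in the whole graph $H = G - M$. A minimum \dcocset{} of $H$ is precisely a union of minimum \dcocset{}s of its components. Hence a set $Y \subseteq V(C)$ is a $d$-blocking set of $C$ if and only if it is a $d$-blocking set of $H$: if some minimum solution $S_C$ of $C$ contains $Y$, combine $S_C$ with minimum solutions of the other components; conversely, restrict a minimum solution of $H$ to $C$. So we decide the edge $(X,C)$ by querying whether $N_C(X)$ is a $d$-blocking set of $H$. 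There are polynomially many such queries, each taking polynomial time by the second assumption.

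The remaining steps of the rule are polynomial as well. Each round costs polynomial time. \cref{step:remove_ccs_delete_isolated} deletes isolated vertices and subtracts $\opt{C}$ for each deleted component $C$. Computing $\opt{C}$ is needed only for the value of $k'$; I would obtain it via the same component decomposition, e.g.\ by computing $\opt{H}$ and $\opt{H - C}$ with $H - C$ a graph obtained by deleting components. If direct optimisation on $\mathcal{G}$ is not available, I would note that the blocking-set oracle alone suffices to greedily construct a minimum solution, adding vertices one at a time while the current set stays non-blocking. Either way, the other steps are a size comparison and one application of \cref{thm:expansion_lemma}, which runs in polynomial time. Every pass through \cref{step:remove_ccs_apply_expansion_lemma} deletes a nonempty set $X \subseteq A'$, so there are at most $|\mathcal{X}|$ rounds. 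The total running time is therefore polynomial.
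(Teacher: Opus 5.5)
Your proof is correct and follows the paper's route. The conflict-graph edges are decided by asking whether $N_C(X)$ is a $d$-blocking set of the whole graph $G-M \in \mathcal{G}$, using that blocking sets of a component transfer to the disjoint union, and $\opt{C}$ is obtained by greedily growing a non-blocking subset of $V(C)$ with the same oracle. Your first suggestion for $\opt{C}$, computing $\opt{H}$ and $\opt{H-C}$ directly, is not supported by the lemma's hypotheses: they assume neither an optimisation algorithm on $\mathcal{G}$ nor $H-C \in \mathcal{G}$. So the greedy fallback, which is exactly the paper's argument, is the one you actually need.
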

\begin{proof}
    Let $(G,k,M)$ be an instance of \dcocModToG{}, and let $\mathcal{C}$ be the set of connected components of $G - M$.
    As the expansion lemma can be applied in polynomial-time, we must only be able to compute the conflict graph of the input instance in polynomial-time, and moreover, we must be able to compute $\opt{C}$ in polynomial-time for connected components $C$ of $G - M$.

    By assumption, we can compute for any set $X \subseteq V(G - M)$ whether $X$ is a $d$-blocking set of $G - M$ in polynomial-time.
    This allows us to compute $\opt{C}$ of any connected component $C$ of $G - M$ in polynomial-time.
    To do that, we will build a minimum \dcocset{} $S$ of $C$ iteratively.
    Initially, iterate over all vertices $v$ of $V(C)$, and check whether $\{v\}$ is a $d$-blocking set of $G - M$.
    If each vertex of $V(C)$ forms a $d$-blocking set, then $\opt{C} = 0$, and we are done.
    Otherwise, let $v \in V(C)$ be an arbitrary vertex such that $\{v\}$ is not a $d$-blocking set of $G - M$, and set $S_1 = \{v\}$.
    Because $S_1$ is not a $d$-blocking set, there is a minimum \dcocset{} of $G - M$ that contains $v$, and thus there is also a minimum \dcocset{} of $C$ that contains $v$.
    We will keep the invariant that set $S_i$ is a subset of a minimum \dcocset{} of $C$ as we proceed.
    To compute $S_{i+1}$ from $S_i$, iterate over all vertices $v$ of $V(C) \setminus S_i$, and check whether $\{v\} \cup S_i$ is a $d$-blocking set of $G - M$.
    If $S_i \cup \{v\}$ is a $d$-blocking set for all $v \in V(C) \setminus S_i$, then, there exists no minimum \dcocset{} of $C$ that is a superset of $S_i$.
    But, since $S_i$ is a subset of a minimum \dcocset{} of $C$, this implies that $\opt{C} = i$.
    Otherwise, we find a vertex $v$ such that $S_i \cup \{v\}$ is not a $d$-blocking set of $G - M$, so we set $S_{i+1} = S_i \cup \{v\}$.
    We directly obtain that $S_{i+1}$ is a subset of a minimum \dcocset{} of $C$ from the fact that $S_{i+1}$ is not a $d$-blocking set of $G - M$, and thus also not a $d$-blocking set of $C$.

    The vertex set of the conflict graph is just $\mathcal{X} \cup \mathcal{C}$, where $\mathcal{X}$ is the set of chunks of the input instance, and both $\mathcal{X}$ and $\mathcal{C}$ have size polynomial in the input size.
    Hence, to compute the conflict graph quickly enough, we must only be able to decide whether there is an edge between some $X \in \mathcal{X}$ and some $C \in \mathcal{C}$ in the conflict graph in polynomial-time.
    This requires us to compute whether $\conflicts{X}{C} > 0$, which can be done in polynomial-time by just checking whether $N_G(X) \cap V(C)$ is a blocking set of $G - M$.
\end{proof}

Using the elaborations about the graph class $\mathcal{G} + d$, and some even simpler observations about hereditary graph classes, we now obtain our main result about a polynomial preprocessing algorithm to reduce the number of connected components of $G - M$.

\begin{restatable}{theorem}{mainTheoremReducingComponents}
    \label{main_thm:algorithm_to_reduce_components}
    Let $d \geq 1$ be an integer, and $\mathcal{G}$ be a graph class such that minimal $d$-blocking sets of $\mathcal{G}$ have size at most $b$ for some constant $b$.
    Furthermore, assume that
    \begin{enumerate}
        \item[\textup{\textsf{(1)}}]\dcoc{} can be solved in polynomial-time on graph class $\mathcal{G} + d$, \emph{or}
        \item[\textup{\textsf{(2)}}] \dcoc{} can be solved in polynomial-time on graph class $\mathcal{G}$ and $\mathcal{G}$ is hereditary.
    \end{enumerate}

    Then, there exists a polynomial-time algorithm that, given an instance $(G,k,M)$ of \dcocModToG{} outputs an equivalent instance $(G',k')$ of \dcoc{}, where
    \begin{itemize}
        \item $G'$ is a subgraph of $G$ obtained by deleting connected components of $G - M$, and
        \item the number of connected components of $G' - M$ is at most $((d-1) \cdot b + 1) \cdot |M|^b$, and
        \item $k' \leq k$.
    \end{itemize}
\end{restatable}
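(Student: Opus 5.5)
The plan is to apply \cref{rule:remove_connected_components_expansion_lemma} and take the structural guarantees from \cref{thm:algorithm_to_reduce_components_using_rule_1}. That lemma already gives an equivalent instance $(G',k')$ in which $G'$ arises from $G$ by deleting components of $G - M$, $k' \leq k$, and $G' - M$ has at most $((d-1)\cdot b+1)\cdot|M|^b$ components. So the only remaining task is to show that the rule runs in polynomial time under either hypothesis \textsf{(1)} or \textsf{(2)}. For this I would use \cref{thm:apply_removing_cc_rule_poly_time}. Its first condition, bounded minimal $d$-blocking sets, holds by assumption. The work is to verify its second condition: for every $H \in \mathcal{G}$ and $X \subseteq V(H)$ we can decide in polynomial time whether $X$ is a $d$-blocking set of $H$.

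Under \textsf{(1)} this is exactly \cref{thm:compute_blocking_sets_poly_time}. Attaching a $d$-clique to $X$ produces a graph in $\mathcal{G}+d$, so comparing $\opt{H}$ with the optimum of the augmented graph decides whether $X$ is blocking. Here $\opt{H}$ is computable by \cref{thm:solve_G_plus_c_poly_solve_G_poly} together with binary search over $k$.

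Under \textsf{(2)} I would use the characterization from the proof of \cref{thm:compute_blocking_sets_poly_time}: $X$ is a $d$-blocking set of $H$ if and only if $\abs{X} + \opt{H - X} > \opt{H}$. Since $\mathcal{G}$ is hereditary, $H - X \in \mathcal{G}$. Hence both $\opt{H}$ and $\opt{H - X}$ can be computed in polynomial time from the decision procedure on $\mathcal{G}$, using binary search over $k \in \range[0]{\abs{V(H)}}$.

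I would briefly justify the characterization itself. If some minimum \dcocset{} $S$ of $H$ contains $X$, then $S \setminus X$ is a \dcocset{} of $H - X$, giving $\abs{X} + \opt{H-X} \leq \opt{H}$. Conversely, if $\abs{X} + \opt{H-X} = \opt{H}$, then adding $X$ to a minimum solution of $H - X$ yields a minimum \dcocset{} of $H$ containing $X$.

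I expect no real obstacle, only bookkeeping. One point is that \cref{thm:apply_removing_cc_rule_poly_time} queries blocking sets of $G - M$ and of its components $C$. This is fine, because $G - M \in \mathcal{G}$ and blocking sets of $C$ can be tested via $G - M$, as that proof already does. The other point is that the number of chunks, at most $\abs{M}^b$, is polynomial because $b$ is a constant.
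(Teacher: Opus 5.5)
Your proposal is correct and matches the paper's proof. In both cases you reduce the theorem to the blocking-set oracle condition of \cref{thm:apply_removing_cc_rule_poly_time}, obtaining it via \cref{thm:compute_blocking_sets_poly_time} under \textsf{(1)} and, under \textsf{(2)}, by comparing $\opt{H}$ with $|X| + \opt{H-X}$ using heredity, and then take the structural guarantees from \cref{thm:algorithm_to_reduce_components_using_rule_1}.
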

\begin{proof}
    First, consider the case that \dcoc{} can  be solved in polynomial-time on graph class $\mathcal{G} + d$.
    Then, the theorem
    follows directly from \cref{thm:compute_blocking_sets_poly_time,thm:apply_removing_cc_rule_poly_time,thm:algorithm_to_reduce_components_using_rule_1}.

    Otherwise, we have that $\mathcal{G}$ is hereditary, and that we can solve \dcoc{} in polynomial-time on instances of $\mathcal{G}$.
    This also means that we can compute $\opt{H}$ in polynomial-time for any $H \in \mathcal{G}$.
    Then, for any $H \in \mathcal{G}$ and $X \subseteq V(H)$, we can compute whether $X$ is a blocking set of $H$ by just comparing $\opt{H}$ and $|X| + \opt{H - X}$, where we use that $H - X$ is in $\mathcal{G}$ to see that also $\opt{H - X}$ can be computed in polynomial-time.
    The theorem then follows from \cref{thm:apply_removing_cc_rule_poly_time,thm:algorithm_to_reduce_components_using_rule_1}.
\end{proof}
Note that indeed only one of the conditions (1) or (2) must be fulfilled for the algorithm to work, in particular $\mathcal{G}$ must not be hereditary if \dcoc{} is polynomial-time solvable on $\mathcal{G} + d$.

We briefly argue that the preconditions in \cref{main_thm:algorithm_to_reduce_components} are mild using observations already given by Hols et al. \cite{holsEliminationDistancesBlocking2022}.
A parameterized problem cannot have a kernel of any size if it is not \textsf{FPT}.
On the other hand, if the problem \dcocModToG{} is \textsf{FPT}, then there is a polynomial-time algorithm,
that solves \dcoc{} on the graph class $\mathcal{G} + c$ for any fixed $c$.
In this sense, the preconditions of \cref{main_thm:algorithm_to_reduce_components} are extremely mild: for any kernel to exist, we must have polynomial-time algorithms for \dcoc{}  on the graph class $\mathcal{G} + d$, furthermore, there cannot exist a polynomial-kernel if $\mathcal{G}$ does not have bounded minimal $d$-blocking sets when $\mathcal{G}$ is closed under disjoint union \cite{greilhuberComponentOrderConnectivity2024}.
This means that for any possible graph class $\mathcal{G}$ closed under disjoint union such that there is a polynomial-kernel for the problem \dcocModToG{} \cref{main_thm:algorithm_to_reduce_components} can be applied.

The results stated so far were only for the problem \dcoc{}, in which $d$ is assumed to be a constant.
Since \cref{main_thm:poly_kernel_caterpillar_mod} is about the problem \cocModToCaterpillarsPlusD{}, where $d$ is not constant (but part of the parameter), we need to ensure that we can also apply the rule quickly enough for that problem.
Luckily condition (2) of \cref{main_thm:algorithm_to_reduce_components} about hereditary graph classes $\mathcal{G}$ extends nicely to this setting.
Note that we extend definitions for the problem \dcocModToG{} to the problem \cocModToGPlusD{} in the natural way by treating the instance $(G,d,k,M)$ of \cocModToGPlusD{} as an instance $(G,k,M)$ of \dcocModToG{}.

\begin{restatable}{corollary}{corollaryRemoveComponentsQuicklyDNonConstant}
    \label{thm:remove_components_quickly_d_non_constant}
    Let $\mathcal{G}$ be a hereditary graph class and $b$ a constant such that \coc{} can be solved in polynomial-time on instances of $\mathcal{G}$,
    and minimal $d$-blocking sets of $\mathcal{G}$ have size at most $b$ for all $d \geq 1$.
    Then, there exists a polynomial-time algorithm that takes an instance $(G,d,k,M)$ of \cocModToGPlusD{} as input and outputs an equivalent instance $(G',d,k',M)$ of \cocModToGPlusD{} where
    \begin{itemize}
        \item $G'$ is a subgraph of $G$ obtained by deleting connected components of $G - M$, and
        \item the number of connected components of $G' - M$ is at most $((d-1) \cdot b + 1) \cdot |M|^b$, and
        \item $k' \leq k$.
    \end{itemize}
\end{restatable}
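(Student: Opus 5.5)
The plan is to run \cref{rule:remove_connected_components_expansion_lemma} directly on the instance $(G,d,k,M)$, viewed as an instance $(G,k,M)$ of \dcocModToG{} for the specific value of $d$ in the input. Correctness then follows from \cref{thm:algorithm_to_reduce_components_using_rule_1}. Safety of the rule and the bound of $((d-1)\cdot b+1)\cdot|M|^b$ components hold for each fixed $d$, and the argument never uses that $d$ is constant; it only uses that $\mathcal{G}$ has minimal $d$-blocking sets of size at most $b$, which holds for every $d \geq 1$ by assumption. The same lemma guarantees that $G'$ arises by deleting components of $G-M$ and that $k' \leq k$. Since $\mathcal{G}$ is hereditary, it is closed under deleting connected components, so $G'-M \in \mathcal{G}$. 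Hence $(G',d,k',M)$ is again a valid instance of \cocModToGPlusD{}, not merely of \coc{}.

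What remains is to show that the rule runs in time polynomial in the input size even though $d$ is part of the input. Here I would invoke \cref{thm:apply_removing_cc_rule_poly_time}, whose hypotheses are the blocking-set bound and a polynomial-time test for whether $X$ is a $d$-blocking set of $H \in \mathcal{G}$. I would mirror case (2) in the proof of \cref{main_thm:algorithm_to_reduce_components}. By assumption, \coc{} (with $d$ in the input) is polynomial-time solvable on $\mathcal{G}$, so $\opt{H}$ can be computed in time polynomial in $|H|$, for example by binary search over $k$. Because $\mathcal{G}$ is hereditary, $H - X \in \mathcal{G}$, so $\opt{H-X}$ is also computable in polynomial time. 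The test is then whether $|X| + \opt{H-X} > \opt{H}$.

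The main obstacle is checking that no step of \cref{thm:apply_removing_cc_rule_poly_time} or of the rule secretly runs in time exponential in $d$. The number of chunks is at most $|M|^b$ with $b$ constant. The conflict-graph edges are decided by blocking-set tests on $N(X)\cap V(C)$. The values $\opt{C}$ are polynomial by the above. The parameter $q = (d-1)b+1$ is at most linear in $d$, and $d \leq |V(G)|$ can be assumed (otherwise $\emptyset$ is a solution, and we can output a trivial yes-instance). So the expansion-lemma step is polynomial as well. Each iteration of the loop deletes a nonempty $X \subseteq A'$, or the loop terminates, so there are at most $|M|^b+1$ iterations.
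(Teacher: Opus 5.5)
Your proposal is correct and matches the paper's proof. You run \cref{rule:remove_connected_components_expansion_lemma} with the given $d$, take safety, the component bound and $k'\le k$ from \cref{thm:algorithm_to_reduce_components_using_rule_1}, compute $\opt{\cdot}$ with the assumed polynomial-time \coc{} algorithm, and decide blocking sets by comparing $\opt{H}$ with $|X|+\opt{H-X}$ via heredity; you then use heredity again to get $G'-M\in\mathcal{G}$. The only difference is that you route the running-time argument through \cref{thm:apply_removing_cc_rule_poly_time} and spell out the bookkeeping on $q$ and the iteration count, which the paper leaves implicit; your aside about assuming $d\le|V(G)|$ is not needed, since $q$ only enters through the test $|B'|\ge q|A'|$.
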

\begin{proof}
    The conflict graph of the input instance has polynomial-size.
    Then, even though $d$ is not a constant, the conflict graph of the instance can be still computed in polynomial-time when we can compute $\opt{H}$ in polynomial-time for any $H \in \mathcal{G}$, and when we can compute whether a set $X \subseteq V(H)$ is a blocking set of a graph $H$ of $\mathcal{G}$.

    For any $H \in \mathcal{G}$, we can easily compute $\opt{H}$ in polynomial-time by using our algorithm for \coc{}.
    Furthermore, since $\mathcal{G}$ is hereditary, given a graph $H$ and $X \subseteq V(H)$, we can compute whether $X$ is a blocking set of $H$ by comparing $\opt{H}$ and $|X| + \opt{H - X}$.
    So, we can compute the conflict graph quickly enough.
    The expansion lemma also runs in polynomial-time, and hence, \cref{rule:remove_connected_components_expansion_lemma} runs in polynomial-time.
    Then, the corollary follows directly from \cref{thm:algorithm_to_reduce_components_using_rule_1}, when additionally observing that $G' - M$ is in $\mathcal{G}$, because $\mathcal{G}$ is hereditary and thus closed under the deletion of connected components.
\end{proof} 
\section{Minimal Blocking-Sets of Caterpillars have Size at Most Two}
\label{sec:caterpillars_bounded_blocking_sets}
It is essential for us to show that caterpillar forests, which are exactly the graphs with pathwidth at most one \cite[Lemma 2.4]{arnborgMonadicSecondOrder1990}, have no large minimal $d$-blocking sets.
Concretely, we show that just like in the case of paths (see \cite{greilhuberComponentOrderConnectivity2024}), the size of any minimal $d$-blocking set of a caterpillar forest is at most two.
For this purpose, we begin by restating a simple lemma of \cite{greilhuberComponentOrderConnectivity2024}.

\begin{restatable}[{\cite[Lemma 13]{greilhuberComponentOrderConnectivity2024}}]{lemma}{thmBlockingSetConnectivityProperty}
    \label{thm:blocking_set_connectivity_property}
    Let $G$ be a graph, and $X \subseteq V(G)$ a $d$-blocking set of $G$ for some integer $d \geq 1$.
    If there exists a subset $X' \subseteq X$ such that $G_1$ and $G_2$ are distinct connected components of $G - X'$, $V(G_1) \cap X \not = \emptyset$, and $V(G_2) \cap X \not = \emptyset$, then $X$ is not a minimal $d$-blocking set.
\end{restatable}

Now, we can proceed to the proof of caterpillar forests having bounded minimal $d$-blocking sets.

\begin{restatable}{theorem}{caterpillarsMinimalBlockingSetsBounded}
    \label{thm:caterpillar_bounded_blocking_sets}
    Let $G$ be a caterpillar forest, $d$ a positive integer, and $X \subseteq V(G)$ a minimal $d$-blocking set of $G$.
    Then, $|X| \leq 2$.
\end{restatable}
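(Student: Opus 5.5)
We argue by contradiction: suppose $X$ is a minimal $d$-blocking set of the caterpillar forest $G$ with $|X| \geq 3$. The tool is \cref{thm:blocking_set_connectivity_property}. It suffices to find a subset $X' \subseteq X$ such that two distinct connected components of $G - X'$ each contain a vertex of $X$; that contradicts minimality.

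First we reduce to a single caterpillar. If $X$ meets two different connected components of $G$, take $X' = \emptyset$; the two components then both intersect $X$, and \cref{thm:blocking_set_connectivity_property} applies. So we may assume $X$ lies inside one caterpillar $C$ with spine $v_1,\dots,v_n$.

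Next we pick three distinct vertices $x_1,x_2,x_3 \in X$. The plan is to find one of them, say $x$, that separates the other two in the tree $C$. If so, $X' = \{x\}$ works, because the other two vertices lie in distinct components of $C - x$, and hence of $G - x$. In a tree, $x$ separates $y$ and $z$ exactly when $x$ lies on the $y$--$z$ path.

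The main step is to show that among any three vertices of a caterpillar, one lies on the path between the other two, or else some other choice of $X'$ works. Map each vertex to a spine position: a spine vertex $v_j$ goes to $j$, and a pendant goes to the index of its parent. Order $x_1,x_2,x_3$ by position.

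If the middle vertex $x_2$ is a spine vertex, then every path from a vertex at position at most that of $x_2$ to one at position at least that of $x_2$ passes through $x_2$. The only exception is when another of the three vertices is a pendant of $x_2$ itself. In that exceptional case, removing $x_2$ still isolates that pendant as its own component, separate from the third vertex. Either way two of the three $X$-vertices end up in different components of $G - x_2$, as required.

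If $x_2$ is a pendant with parent $p$, it cannot lie on any path between other vertices, so we change the choice of $X'$. If $p \in X$, take $X' = \{p\}$: then $x_2$ becomes an isolated component, and some other vertex of $X$ lies elsewhere. If $p \notin X$, take $X' = \{x_2\}$; this does not work directly, so instead take $X' = \{x_1, x_3\}$ or a suitable subset. I expect this case analysis to be the main obstacle.

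The cleaner route is to use that pendants are leaves. Take $X' = X \cap V(\spine{C})$. If $X' \neq \emptyset$, every vertex of $X \setminus X'$ is a pendant and so forms its own component. In addition, the remaining spine pieces, together with their pendants, form separate components of $C - X'$. Consider the case where at least two vertices of $X$ are pendants, or where $X'$ cuts the spine so that two of its sides carry $X$-vertices. Then two distinct components of $G - X'$ contain vertices of $X$, namely two isolated pendants, and we are done. The remaining configurations have at most one pendant and $|X| \geq 3$, hence at least two spine vertices. For these, take $X'$ to be a single spine vertex of $X$ lying strictly between two other $X$-vertices in position, or take the extremal spine vertex together with its pendant. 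This separates them as in the spine case above.

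I would write out these few configurations carefully, checking the boundary situations where several $X$-vertices share the same position. These are the pendants of a common parent and their parent, and this is where errors are likely.
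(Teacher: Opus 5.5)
Your reduction to a single caterpillar is fine, and so is your handling of two cases: the middle vertex being a spine vertex, and $X$ containing a pendant together with its parent (via $X' = \{p\}$).

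The gap is that \cref{thm:blocking_set_connectivity_property} cannot finish the proof on its own, because deleting a pendant never disconnects anything. Take $X = \{v_i, b, v_j\}$, where $b$ is a pendant of $v_m$ with $i < m < j$. For $X' = \{v_i\}$ or $X' = \{v_j\}$ the other two vertices of $X$ stay in one component. The choice $X' = \{b\}$ leaves the caterpillar connected. Every $X'$ with $|X'| \geq 2$ leaves at most one vertex of $X$ outside $X'$. So no $X'$ works, and in particular your fallback $X' = \{x_1, x_3\}$ cannot work.

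The same obstruction occurs in two further configurations. One is when $X$ consists only of pendants, since then $G - X'$ is connected for every $X' \subseteq X$. The other is $X = \{u_1, u_2, w\}$ with $u_1, u_2$ pendants of a common parent $p \notin X$. Your ``cleaner route'' does not avoid this. A vertex of $X \setminus X'$ is an isolated component of $C - X'$ only if it is a pendant whose parent lies in $X'$, and that is exactly what fails here. Since the theorem says these sets are not minimal blocking sets either, the proof must use the structure of minimum \dcocset{}s, not just separators.

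Two exchange arguments close the gap, and they are what the paper uses.

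\begin{itemize}
\item \emph{Shared parent.} If $u_1, u_2 \in X$ are pendants of the same parent $p$, then $\{u_1, u_2\} \subsetneq X$ is not blocking. So some minimum \dcocset{} contains both, and replacing them by $p$ gives a smaller one.
\item \emph{Middle pendant.} Once this case and the pendant-with-parent case are excluded, the vertices of $X$ occupy distinct spine positions. Let $z_\ell, z_r$ be the extreme ones, and let a vertex $v$ strictly between them be a pendant whose parent $p$ is not in $X$.
\end{itemize}

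For the middle-pendant case, proceed as follows. Take minimum \dcocset{}s $S_\ell \supseteq X \setminus \{z_r\}$ and $S_r \supseteq X \setminus \{z_\ell\}$. Let $G_\ell, G_r$ be the components of $G - p$ containing $z_\ell$ and $z_r$.

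Swapping $v$ for $p$ in $S_\ell$ shows $\opt{G} = \opt{G_\ell} + \opt{G_r} + 1$. Hence $S_\ell$ and $S_r$ contain neither $p$ nor any other pendant of $p$, and they restrict to optimal solutions of $G_\ell$ and $G_r$.

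Let $c^\ell_\ell$ and $c^r_\ell$ be the numbers of vertices of $G_\ell$ in the component of $p$ in $G - S_\ell$ and in $G - S_r$, respectively.
\begin{itemize}
\item If $c^\ell_\ell \leq c^r_\ell$, then $(S_\ell \cap V(G_\ell)) \cup \{v\} \cup (S_r \cap V(G_r))$ is a minimum \dcocset{} containing $X$. This contradicts that $X$ is blocking.
\item Otherwise, $(S_r \cap V(G_\ell)) \cup (S_\ell \cap V(G_r))$ is a \dcocset{} of size $\opt{G} - 1$, which is also a contradiction.
\end{itemize}

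This exchange step is the missing idea, not a finer case analysis of separators.
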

\begin{proof}
    Toward a contradiction, assume that $X$ is a minimal blocking set of $G$, and that $|X| \geq 3$.
    Then, $X$ has to be part of a single connected component of $G$ (otherwise \cref{thm:blocking_set_connectivity_property} with $X' = \emptyset$ immediately yields a contradiction), and hence we can assume without loss of generality that $G$ is connected.
    Since $G$ is a caterpillar, $G$ contains spine vertices, and pendants.
    We show that the assumption $|X| \geq 3$ leads to a contradiction by considering multiple cases depending on which type of vertices $X$ contains.

    \proofsubparagraph*{Case 1: $X$ contains a pendant $v$ and the parent $p$ of $v$.}
    This implies that there is some minimum \dcocset{} that contains both $v$ and $p$, as $\{v,p\} \subsetneq X$.
    Let $S$ be such a set.
    Then, $S \setminus \{v\}$ would be a smaller set that is evidently still a \dcocset{}, contradicting that $S$ was of minimum size.

    \proofsubparagraph*{Case 2: $X$ contains two pendants $v_1, v_2$ that have the same parent $p$.}
    Again, this would imply that there is some minimum \dcocset{} $S$ that contains $v_1$ and $v_2$.
    Let $S$ be such a set.
    Then the set $(S \setminus \{v_1,v_2\}) \cup \{p\}$ is a \dcocset{} that is strictly smaller than $S$.
    So, also this case is not possible.

    \proofsubparagraph*{Case 3: No two vertices of $X$ are pendants of the same parent, and $X$ does not contain a pendant and the parent of the pendant.}
    Let the spine of $G$ be $v_1,\dots,v_n$.
    Each vertex of $X$ is either a vertex $v_i$ for some $i$, or a pendant of vertex $v_i$ for some $i$.
    Moreover, no two pendants of $X$ have the same parent.
    Hence, we can order the vertices of $X$ from left to right.
    Let $z_\ell$ be the leftmost vertex of $X$, and $z_r$ be the rightmost vertex of $X$.
    Let $v$ be a vertex of $X \setminus \{z_\ell,z_r\}$.
    Moreover, let $S_\ell$ be a minimum \dcocset{} of $G$ that contains all vertices of $X\setminus\{z_r \}$, and $S_r$ be a minimum \dcocset{} of $G$ that contains all vertices of $X \setminus \{z_\ell\}$.
    Importantly, $v \in S_\ell$ and $v \in S_r$.
    We consider two different cases based on whether $v$ is a vertex of the spine or a pendant.

    \proofsubparagraph*{Case 3.1: Vertex $v$ is a spine vertex.}
    In this case, we directly arrive at a contradiction by applying \cref{thm:blocking_set_connectivity_property} with $X' = \{v\}$.

    \proofsubparagraph*{Case 3.2: Vertex $v$ is a pendant.}
    Let $p$ be the parent of $v$, $G_\ell$ the connected component of $G - \{p\}$ that contains $z_\ell$, and $G_r$ the connected component of $G - \{p\}$ that contains $z_r$.
    Observe that  $S_\ell' = (S_\ell \setminus \{v\}) \cup \{p\}$ is also a minimum \dcocset{}.
    Furthermore, $S_\ell'$ has size exactly $\opt{G_\ell} + \opt{G_r} + 1$, and hence also $S_\ell$ has this size and $\opt{G} = \opt{G_\ell} + \opt{G_r} + 1$.
    Then, we must have that $S_\ell \cap V(G_\ell)$ is a \dcocset{} of $G_\ell$ of size $\opt{G_\ell}$, and that $S_\ell \cap V(G_r)$ is a \dcocset{} of $G_r$ of size $\opt{G_r}$.
    For the same reason, we know that $S_r \cap V(G_r)$ is a minumum \dcocset{} of $G_r$ and $S_r \cap V(G_\ell)$ is a minimum \dcocset{} of $G_\ell$.
    Moreover, we know that $S_r$ and $S_\ell$ contain no pendant $v'$ of $p$ apart from $v$: then, it would be strictly better to remove $v$ and $v'$ and instead select $p$.

    Now, we want to inspect solution $S_r$ in slightly more depth.
    This solution contains $v$ and is of minimum size, so it cannot contain $p$.
    So, $p$ is in some connected component of $G - S_r$.
    Let $C^r$ be the connected component of $G - S_r$ that contains $p$.
    Moreover, let $c_\ell^r = |V(C^r) \cap V(G_\ell)|$, and $c_r^r = |V(C^r) \cap V(G_r)|$.
    That is, $c_\ell^r$ tells us how many of the vertices of $C^r$ stem from the graph $G_\ell$, and similarly $c_r^r$ tells us how many stem from the graph $G_r$.
    Also, let $c_p$ be the number of pendants that $p$ has in $G$.
    As argued earlier both $S_\ell$ and $S_r$ cannot contain a pendant of $p$ apart from $v$, therefore $c_p - 1$ pendants of $p$ are part of $C^r$.

    Then, the connected component of $p$ contains exactly $1 + c_p - 1 + c_\ell^r + c_r^r$ vertices, showing that $c_p + c_\ell^r + c_r^r \leq d$ since $S_r$ is a \dcocset{}.
    Similarly, we define $C^\ell$ as the connected component of $G - S_\ell$ that contains $p$, and the quantities $c_\ell^\ell$ and $c_r^\ell$, and we have $c_\ell^\ell + c_r^\ell + c_p \leq d$.

    Toward a contradiction assume $c_\ell^\ell \leq c_\ell^r$.
    Set $\hat S = (S_r \cap V(G_r)) \cup (S_\ell \cap V(G_\ell)) \cup \{v\}$.
    Using our earlier observations about $S_r$ and $S_\ell$, we then know that $|\hat S| = \opt{G}$, and clearly $X \subseteq \hat S$.

    It remains to prove that $\hat S$ is a \dcocset{} of $G$.
    Observe that, as $(S_r \cap V(G_r))$ is a solution for $G_r$, and $(S_r \cap V(G_\ell))$ is a solution for $G_\ell$, the only problem that could occur is that $p$, the vertex connecting $G_\ell$ and $G_r$, is in a connected component that is too large.
    However, in $G - \hat S$ vertex $p$ is in a connected component that has $c_r^r + c_\ell^\ell + c_p$ vertices in total.
    We know that $c_r^r + c_\ell^r + c_p\leq d$, and moreover, by the assumption that $c_\ell^\ell \leq c_\ell^r$ we now obtain $c_r^r + c_\ell^\ell + c_p \leq d$, showing that $\hat S$ is a \dcocset{} of $G$ that selects all vertices of $X$, a contradiction.
    Therefore, $c_\ell^\ell > c_\ell^r$.

    Now, let us build another \dcocset{} $S^*$ by setting $S^* = (S_r \cap V(G_\ell)) \cup (S_\ell \cap V(G_r))$.
    Like in the previous paragraph, we must only show that $p$ is not in a connected component that is too large in $G - S^*$.
    But, $p$ is in a connected component with $c_\ell^r$ vertices of $G_\ell$, and $c_r^\ell$ vertices of $G_r$.
    This means the connected component of $p$ contains $1 + c_p + c_r^\ell + c_\ell^r < 1 + c_p + c_r^\ell + c_\ell^\ell \leq 1 + d$ vertices, so, $S^*$ is indeed a \dcocset{} of $G$.
    However, the size of this solution $S^*$ is $\opt{G_\ell} + \opt{G_r} < \opt{G}$, a contradiction.
\end{proof} 
\section{The Polynomial Kernel}
\label{sec:poly_kernel}
We now proceed to the kernelization algorithm for the \cocModToCaterpillarsPlusD{} problem.
Since we can use \cref{thm:remove_components_quickly_d_non_constant} to reduce the number of connected components, our main remaining challenge is reducing the size of the connected components.

Let us also want remark that we never change the value of $d$ in any reduction rule.
In particular, our kernel is essentially a kernel for the problem \dcoc{} parameterized by the size of a modulator to a pathwidth one graph, where additionally the running time and the kernel size are polynomial in $|M| + d$.

The main goal of this section is reducing the size of connected components of $G - M$ by decreasing the length of the spines of connected components of $G - M$.
Indeed, once we have bounded the length of the spine, we can use a known polynomial kernel for the parameter $k + d$ due to Xiao \cite{xiaoLinearKernelsSeparating2017a} to finish our kernelization routine.
First, we recall the following relation between \dcocset{}s of trees and maximum packings of disjoint connected subgraphs.

\thmPackingCOCDuality*{}

We remark that maximum packings of vertex-disjoint graphs of size $d+1$ are generally useful for \coc{}, as they always provide lower bounds on the solution size, and were, for example, also used by Casel et al. \cite{caselCombiningCrownStructures2024}.
Also observe that for $d = 1$ the set $\mathcal{P}$ is a maximum matching of the graph $T$.
Matchings were also exploited in the \vc{} kernel of Jansen and Bodlaender \cite{jansenVertexCoverKernelization2013b} which inspired our initial interest in these collections.

The duality between \dcoc{}  and the problem of packing vertex-disjoint connected subgraphs is convenient for arguing about solutions.
We now define the notion of solution-tight packings, which are such packings of the same size as an optimal \dcocset{} with some additional convenient properties.

\defSolutionTightPacking*{}

Observe that for any forest $F$ and solution-tight packing $\mathcal{P}$ of $F$ we have that $\opt{F} = \abs{\mathcal{P}}$.
Since we intend to use solution-tight packings directly in our kernel, we must be able to compute them quickly.

\begin{lemma}
    \label{thm:compute_sol_tight_packing_poly_time}
    Let $F$ be a caterpillar forest.
    Then, $F$ has a unique solution-tight packing that can be computed in polynomial-time.
\end{lemma}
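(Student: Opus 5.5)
The plan is to reduce to a single caterpillar, construct the packing by a left-to-right greedy scan of the spine, and then prove optimality and uniqueness by induction along the spine. By the definition of a solution-tight packing of a caterpillar forest, it suffices to prove the statement for each connected component separately. So let $C$ be a caterpillar with spine $v_1,\dots,v_n$. For indices $a\le b$, let $C[a,b]$ denote the subgraph induced by $v_a,\dots,v_b$ together with all their pendants.

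\emph{Greedy construction.} Set $a_1=1$. Given $a_t$, let $i_t\ge a_t$ be the smallest index with $\abs{C[a_t,i_t]}\ge d+1$. If such an index exists, put $H_t=C[a_t,i_t]$ and $a_{t+1}=i_t+1$; otherwise stop. Each $H_t$ has the following properties.
\begin{itemize}
\item It is connected and has at least $d+1$ vertices.
\item It contains the pendants of each of its spine vertices, which gives property 2.
\item The $H_t$ cover a spine prefix $v_1,\dots,v_{i}$ with $i=i_{t_{\max}}$ (or $i=0$), which gives property 3.
\item $H_t-v_{i_t}$ consists of isolated pendants of $v_{i_t}$ and of $C[a_t,i_t-1]$. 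The latter has at most $d$ vertices by minimality of $i_t$, which gives property 4.
\end{itemize}

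\emph{Property 1, i.e.\ $\abs{\mathcal P}=\opt{C}$.} The graphs of $\mathcal P$ are disjoint and connected with at least $d+1$ vertices each, so every \dcocset{} must hit each of them, and $\opt{C}\ge\abs{\mathcal P}$. For the matching upper bound, take $S=\{v_{i_t}\}_t$. The components of $C-S$ are of three kinds: single pendants; the graphs $C[a_t,i_t-1]$, each of size at most $d$; and the leftover tail $C[a_{t_{\max}+1},n]$, which has size at most $d$ because the greedy stopped. Hence $S$ is a \dcocset{} and $\opt{C}\le\abs{\mathcal P}$. (Alternatively, the lower bound follows from \cref{thm:packing_coc_duality}.) The construction is clearly computable in polynomial time.

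\emph{Uniqueness.} Let $\mathcal P'$ be any solution-tight packing of $C$. First I argue that each $H'\in\mathcal P'$ has the form $C[a,b]$.
\begin{itemize}
\item $H'$ contains a spine vertex: a connected subgraph on at least $d+1\ge 2$ vertices cannot consist of pendants only.
\item The spine vertices of $H'$ form an interval, since $H'$ is connected.
\item By property 2, $H'$ contains all pendants of its spine vertices. It contains no other pendants, because such a pendant would be disconnected from the rest of $H'$.
\end{itemize}
By property 3 and disjointness, the graphs of $\mathcal P'$ are then consecutive intervals $C[1,j_1],C[j_1+1,j_2],\dots$ covering a spine prefix.

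Now I induct on $t$, assuming $j_{t-1}=i_{t-1}$, so that both packings start their $t$-th graph at the same index $a_t$.
\begin{itemize}
\item Since $\abs{C[a_t,j_t]}\ge d+1$, minimality of $i_t$ gives $j_t\ge i_t$.
\item Property 4 applied to the $t$-th graph of $\mathcal P'$ says that $C[a_t,j_t-1]$, a component of that graph minus $v_{j_t}$, has at most $d$ vertices. But $C[a_t,i_t]$ already has at least $d+1$ vertices, so $j_t-1<i_t$, i.e.\ $j_t\le i_t$.
\end{itemize}
Hence $j_t=i_t$. Since $\abs{\mathcal P'}=\opt{C}=\abs{\mathcal P}$, the two packings contain the same number of graphs, and therefore they coincide.

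The main care lies in the uniqueness argument: one must check that properties 2 and 3 together with connectivity really force every packed graph to be a full interval block $C[a,b]$, with no stray pendants included and no pendants omitted. After that, the squeeze $i_t\le j_t\le i_t$ coming from the size condition and property 4 is routine.
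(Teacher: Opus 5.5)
Your proof is correct and follows the same route as the paper. The paper also greedily cuts off the shortest spine-prefix block with at least $d+1$ vertices, and it proves uniqueness block by block: minimality of the greedy index gives $j_t \ge i_t$, and property 4 gives $j_t \le i_t$. Your version additionally spells out two things the paper leaves implicit: the explicit \dcocset{} $\{v_{i_t}\}_t$ certifying property 1, and the fact that properties 2 and 3 together with connectivity force every packed graph to be a full interval block $C[a,b]$.
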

\begin{proof}
    We can compute a solution-tight packing for each connected component $C$ of $F$ individually, and then just take the union of these packings.
    So, let $C$ be a caterpillar with spine $v_1,\dots,v_n$.
    We describe a simple recursive procedure to compute a solution-tight packing of $C$.
    If $C$ contains fewer than $d+1$ vertices, simply output the empty set, it is clear that this is the only solution-tight packing of $C$.

    Otherwise, let $i$ be the lowest integer such that the subgraph $L_i = C[\bigcup_{j \in \range{i}} \{v_i\} \cup \pendants{v_i}]$ contains at least $d+1$ vertices.
    Delete $L_i$ from $C$ to obtain the caterpillar $C'$.
    Recurse on $C'$, and let the obtained packing be $\mathcal{P}'$.
    Add $L_i$ to $\mathcal{P}'$, and return the resulting packing $\mathcal{P}$.
    It is not difficult to confirm that this greedy approach outputs a solution-tight packing.

    Let us now show that actually all solution-tight packings of $C$ contain $L_i$.
    For this purpose, let $\hat{\mathcal{P}}$ be an arbitrary solution-tight packing of $C$.
    By property 1 of the definition of solution-tight packings $\abs{\hat{\mathcal{P}}} = \abs{\mathcal{P}} > 0$.
    Let $\hat L$ be the leftmost graph contained in $\hat{\mathcal{P}}$.
    This graph must contain consecutive spine vertices of $C$, including the leftmost spine vertex of $C$ by property 3.
    Similarly, $L_i$ contains the leftmost spine vertex of $C$.
    Moreover, by property 2 both $L_i$ and $\hat L$ contain all pendants of their spine vertices.
    If $L_i$ was not a subgraph of $\hat L$, this would thus contradict our choice of the vertex set of $L_i$.
    However, if $\hat L$ is a proper supergraph of $L_i$, then $\hat L$ contains more spine vertices than $L_i$.
    This means that the graph obtained after deleting the rightmost spine vertex of $\hat L$ is still a (not necessarily proper) supergraph of $L_i$.
    But, then property 4 cannot hold for $\hat L$, because $L_i$ already contains $d+1$ vertices.
    Hence, we must have $L_i = \hat L$.
    Then, by induction, the computed solution-tight packing is the only solution-tight packing of $C$.
\end{proof}

Since the solution-tight packing of a caterpillar forest $F$ (with a fixed order of the spine) is unique by \cref{thm:compute_sol_tight_packing_poly_time}, we can refer to \emph{the} solution-tight packing of $F$.
For our purposes, we sometimes need to work with graphs which are significantly larger than those of a solution-tight packing.
To deal with this situation, while still being able to benefit of the properties of solution-tight packings, we introduce $\alpha$-merged packings.

\defMergedPacking*{}

Intuitively, an $\alpha$-merged packing is just obtained by taking the solution-tight packing $\mathcal{P}$ and merging every $\alpha$ consecutive graphs of $\mathcal{P}$ into a graph of the resulting packing.
Also observe that since we are working on acyclic graphs, any connected subgraph is also an induced subgraph, hence, the $1$-merged packing is exactly the solution-tight packing of a caterpillar or caterpillar forest.
Moreover, since each graph of the solution-tight packing is connected, and the solution-tight packing contains no spine vertex $v$ that is not part of a packed subgraph such that a spine vertex to the right of $v$ is part of a packed subgraph,
also each graph of the $\alpha$-merged packing is connected for any $\alpha$.
Finally, observe that each graph of an $\alpha$-merged packing itself has a solution-tight packing of size $\alpha$ that contains all of its vertices.

Next, we define the notion of merged graphs, observe that each graph of an $\alpha$-merged packing is a merged graph.
\defMergedGraphs*{}

The idea behind our reduction rule is that we identify some part $P$ of a caterpillar of $G - M$ for which we know that a good solution, that picks all neighbors of vertices of the modulator which are themselves not selected, exists.
Then, we want to replace $P$ with a smaller graph.
However, even if we identify such a part $P$ of a caterpillar, it is unclear with which graph we can replace it, as the interaction of $P$ with the rest of the caterpillar can be complex.
To capture the interaction of a part with the rest of the caterpillar we define the notion of an essence of a caterpillar.
An essence of a caterpillar is an object that can be computed in polynomial time for any caterpillar of the correct form.
The crucial idea is that the essence describes how the caterpillar can interact with other caterpillars that are attached to it from the left and right.

\defEssence*{}

Before getting into more details, we provide some basic properties of essences.
By a function $f: \range[0]{d+1} \rightarrow \range[0]{d+1}$ being non-decreasing we mean that $f(x) \geq f(y)$ for all $x,y \in \range[0]{d+1}$ with $x < y$.

\begin{lemma}
    \label{thm:basic_essence_properties}
    Let $C$ be a caterpillar such that the solution-tight packing of $C$ contains all vertices of $C$, and let $\gamma$ be the essence of $C$.
    Then, $\gamma$ is a non-decreasing function fulfilling $\gamma(0) = 0$ and $\gamma(d+1) = d+1$.
    Moreover, the essence $\gamma$ of $C$ can be computed in polynomial-time.
\end{lemma}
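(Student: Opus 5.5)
We prove the three claims in turn: the two boundary values, monotonicity, and polynomial-time computability.

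\emph{Boundary values.} By definition $\gamma(d+1)=d+1$. For $\gamma(0)$, note that $C_0=C$, so $\opt{C_0}=\opt{C}$ and $\gamma(0)=\min\{\beta_0(Y)\mid Y\in\mathcal{D}_0\}$. Let $H$ be the rightmost graph of the solution-tight packing $\mathcal{P}$ of $C$. Since $\mathcal{P}$ covers $V(C)$, the spine vertex $v_n$ lies in $H$. We build an optimal solution containing $v_n$: take $\{v_n\}$ for $H$, which is a \dcocset{} of $H$ by property 4, and in each other packed graph take its rightmost spine vertex. This set has $|\mathcal{P}|=\opt{C}$ vertices. Deleting the rightmost spine vertex of every packed graph separates consecutive packed graphs, because pendants stay with their parents by property 2. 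Each remaining component therefore lies inside one packed graph minus its rightmost spine vertex, and has size at most $d$ by property 4. Hence this is a minimum \dcocset{} containing $v_n$, so $\beta_0=0$ and $\gamma(0)=0$.

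\emph{Monotonicity.} Let $x<y$. If $\gamma(y)=d+1$ there is nothing to show. Otherwise $y\le d$ and $\opt{C_y}=\opt{C}$. Since $C_x$ is an induced subgraph of $C_y$ containing $C$, we get $\opt{C}\le\opt{C_x}\le\opt{C_y}$, so $\opt{C_x}=\opt{C}$. Let $Y$ be a minimum \dcocset{} of $C_y$ attaining $\gamma(y)$. We may assume $Y$ contains none of the added pendants. If $Y$ contains such a pendant while $v_1\notin Y$, swap it for $v_1$; this does not enlarge any component, since each pendant's component is contained in $v_1$'s component after the swap. If $v_1\in Y$, simply drop the pendant. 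Both steps preserve the size and can only decrease $\beta_y$. Now restrict $Y$ to $C_x$ by viewing the $x$ added pendants of $C_x$ as a subset of those of $C_y$. Then $Y\cap V(C_x)$ is a \dcocset{} of $C_x$ of size at most $\opt{C}=\opt{C_x}$, so it is minimum. The component of $v_n$ in $C_x-Y$ is contained in its component in $C_y-Y$, hence $\beta_x\le\beta_y$ and $\gamma(x)\le\gamma(y)$. The only mildly delicate step is the normalisation removing added pendants from $Y$, which the exchange argument handles.

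\emph{Computability.} Two quantities must be computed for each $x\le d$: the value $\opt{C_x}$, and the minimum $\beta_x$ over minimum solutions. The first is easy: $C_x$ is a caterpillar, so $\opt{C_x}$ is the size of its solution-tight packing, computable by \cref{thm:compute_sol_tight_packing_poly_time}. For the second we use a left-to-right dynamic program along the spine of $C_x$. The state at $v_i$ is a pair $(s,c)$, where $s\le n+d$ is the number of selected vertices so far and $c\in[0,d]$ is the size of the open component containing $v_i$, with $c=0$ if $v_i$ is selected. For each $v_i$ we enumerate how many of its pendants to delete, at most $|\pendants{v_i}|$ options, and extend feasible states, rejecting any with $c>d$. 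A state is kept only if the prefix components other than the open one have size at most $d$. At $v_n$ we read off the minimum $c$ among feasible states with $s=\opt{C_x}$. The table has size $O((n+d)\cdot d)$ per step, so the algorithm is polynomial in $|C|+d$.
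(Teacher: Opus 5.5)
Your proof is correct. For $\gamma(d+1)$, $\gamma(0)$ and monotonicity you follow the paper's reasoning in more detail. The paper only asserts that some minimum \dcocset{} of $C$ contains the rightmost spine vertex, and that monotonicity holds because $C_x$ is a subgraph of $C_y$; your arguments fill in exactly these points.

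Two small inaccuracies in the monotonicity part do not affect correctness. First, the normalisation removing added pendants from $Y$ is not needed, since $Y\cap V(C_x)$ is already a \dcocset{} of $C_x$ of size at most $\opt{C_y}=\opt{C_x}$. Second, the case ``$v_1\in Y$, drop the pendant'' would strictly shrink a minimum set. So that case cannot occur, rather than ``preserving the size''.

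The real difference is in computability. The paper computes the solution-tight packing of $C_x$ and sets $\gamma(x)=d+1$ if it has more than $\opt{C}$ graphs. Otherwise it selects the rightmost spine vertex of every packed graph, stating without proof that this greedy set minimises $\beta_x$ among minimum solutions. Making that rigorous needs an exchange argument: after replacing selected pendants by their parents, the $j$-th selected spine vertex of any \dcocset{} lies at or to the left of the $j$-th greedy cut. Hence, with the same number of selections, the component of $v_n$ is smallest for the greedy choice.

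Your dynamic program over pairs (vertices selected so far, size of the open component) sidesteps this structural claim and can be checked step by step. The price is that it is a generic procedure. The paper's route is shorter, reuses the packing machinery of the kernel, and gives an explicit optimal witness for each value $\gamma(x)$.

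Your cap $s\le n+d$ is harmless, because $s$ never decreases and $\opt{C_x}\le n$ (the spine is a \dcocset{}). In any case the table is polynomial in $|V(C_x)|$.
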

\begin{proof}
    By the definition of the essence, we have $\gamma(d+1) = d+1$.
    Regarding $\gamma(0)$, observe that $C_0 = C$, and there is a minimum \dcocset{} of $C$ that selects its rightmost spine vertex, so $\gamma(0)$ is indeed $0$.
    Moreover, as $C_x$ is a subgraph of $C_y$ for any $y > x$, the function $\gamma$ must be non-decreasing.

    Finally, we need to argue that we can compute $\gamma(x)$ in polynomial-time for any $x \in \range[0]{d}$.
    To achieve this, we can create $C_x$ in polynomial-time, and compute a solution-tight packing of $C_x$ in polynomial-time.
    If this packing is larger than $\opt{C}$, then $\gamma(x) = d+1$.
    Otherwise, we can build a solution $S$ for $C_x$ by selecting the rightmost spine vertex of each packed subgraph of the solution-tight packing of $C_x$.
    It is not difficult to confirm that this is the solution that minimizes the value $\beta_x$ of the definition of the essence among all optimal solutions.
    Hence, we can set $\gamma(x)$ to the number of vertices of the connected component of $C_x - S$ that contains the rightmost spine vertex of $C_x$, or to $0$ if that vertex is in $S$.
\end{proof}

Let us now formally define the action of replacing caterpillars, which is the central operation used in our reduction rule to decrease the spine length.

\begin{definition}[Replacing Caterpillars]
    Let $C_1$ be a connected subgraph of $G - M$ that contains at least one spine vertex of $G - M$.
    Set $C$ to be the connected component of $G - M$ that is a supergraph of $C_1$.
    Let $C_2$ be an arbitrary caterpillar.
    Let $v_\ell$ be the rightmost spine vertex of $C$ that is to the left of $C_1$ (or $\bot \notin V(G)$ if no such vertex exists), and $v_r$ be the leftmost spine vertex of $C$ that is to the right of $C_1$ (or $\bot \notin V(G)$ if no such vertex exists).

    Then, the operation of replacing $C_1$ with $C_2$ in $G$ is described by the following procedure:
    \begin{enumerate}
        \item Start with $G$, and delete $C_1$.
        \item Add $C_2$ to the graph.
        \item If $v_\ell \not = \bot$, add an edge between the leftmost spine vertex of $C_2$ and $v_\ell$.
        \item If $v_r \not = \bot$, add an edge between the rightmost spine vertex of $C_2$ and $v_r$. Output the resulting graph.
    \end{enumerate}
\end{definition}

Hence, the replacement operation corresponds to the natural operation of cutting out caterpillar $C_1$ and replacing it with $C_2$.
Next, we show that replacing a caterpillar with essence $\gamma$ with any caterpillar that also has essence $\gamma$ is well-behaved.

\thmReplaceWithEssenceSafeNoModConnections*{}
\begin{proof}
    Let $S$ be a \dcocset{} of $G$ of size at most $k$, and let $C$ be the connected component of $G - M$ that contains $P$.
    Furthermore, let $v_\ell$ be the rightmost spine vertex of $C$ that is to the left of $P$, or $\bot \notin V(G) \cup V(G')$ if no such vertex exists.
    Let the leftmost spine vertex of $P$ be $w_1$, and the rightmost spine vertex of $P$ be $w_r$, let the leftmost spine vertex of $P'$ be $w_1'$ and the rightmost spine vertex of $P'$ be $w_r'$.
    We consider two cases.

    \proofsubparagraph*{Case 1: $v_\ell = \bot$ or $v_\ell \in S$ or $|S \cap V(P)| > \opt{P}$:}
    Let $S_P'$ be a minimum \dcocset{} for $P'$ that selects the rightmost spine vertex of $P'$, note that such a set exists by the properties of the solution-tight packings and the fact that $P'$ has a solution-tight packing containing all of its vertices (and $P'$ must have such a solution-tight packing, as the essence is only defined for such caterpillars).
    Then $S' = (S \setminus V(P)) \cup \{v_\ell\} \cup S_P'$ is a \dcocset{} for $G'$, because we select the rightmost spine vertex of $P'$, as well as the rightmost spine vertex to the left of $P'$ if that vertex exists, and furthermore no vertex of $P'$ has a neighbor in $M$.
    Moreover, since $S_P'$ is optimal for $P'$, this set $S'$ contains at most $k' = k - \opt{P} + \opt{P'}$ vertices of $G'$.

    \proofsubparagraph*{Case 2: $v_\ell \not = \bot$ and $v_\ell \notin S$ and $|S \cap V(P)| = \opt{P}$:}
    We collect an initial fact about this case.

    \begin{claim}
        \label{claim:essence_replacement_path_1}
        $S$ selects a spine vertex of $P$, and there is no path from $w_1$ to $w_r$ in $G - S$ that only uses vertices of $P$.
    \end{claim}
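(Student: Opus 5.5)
The claim follows from a counting argument: if $S$ picked no spine vertex of $P$, then the whole spine of $P$, its unselected pendants and $v_\ell$ would form a single connected component of $G - S$ with more than $d$ vertices.

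First, the two parts of the claim are equivalent. The spine of $P$ is the subpath $w_1,\dots,w_r$ of the spine of the component $C$. Since $P$ is a tree, the unique $w_1$–$w_r$ path inside $P$ is exactly this spine path. So a $w_1$–$w_r$ path in $G - S$ using only vertices of $P$ exists if and only if $S$ contains no spine vertex of $P$. It therefore suffices to show that $S$ selects a spine vertex of $P$. Suppose, toward a contradiction, that $S \cap V(P)$ consists only of pendants.

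Next, I would record the size facts. $P$ is a connected merged subgraph of $G - M$, so its solution-tight packing $\mathcal{P}_P$ consists of graphs of the solution-tight packing of $G - M$ and covers all of $V(P)$. Hence $\opt{P} = |\mathcal{P}_P| \geq 1$, because $P$ is nonempty. Each packed graph has at least $d+1$ vertices, so $|V(P)| \geq (d+1)\opt{P}$. By the case assumption $|S \cap V(P)| = \opt{P}$, which gives
\[
|V(P) \setminus S| \geq (d+1)\opt{P} - \opt{P} = d \cdot \opt{P} \geq d.
\]

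Now consider the component of $G - S$ containing $w_1$. It contains:
\begin{itemize}
\item all spine vertices of $P$, since none is selected and the spine is a path;
\item every unselected pendant of $P$, since each is adjacent to its parent, a spine vertex of $P$;
\item $v_\ell$, since $v_\ell \notin S$, $v_\ell \neq \bot$, and $v_\ell$ is adjacent to $w_1$.
\end{itemize}
So this component has at least $|V(P) \setminus S| + 1 \geq d+1$ vertices, contradicting that $S$ is a \dcocset{} of $G$. Hence $S$ selects a spine vertex of $P$, and no $w_1$–$w_r$ path inside $P$ avoids $S$.

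The only step needing care is the second one: checking that $P$, as a connected merged graph, inherits a packing that covers all of its vertices with exactly $\opt{P}$ graphs of size at least $d+1$. This comes directly from \cref{def:merged_graphs} and the observation that sub-collections of consecutive graphs of a solution-tight packing are themselves solution-tight packings of their union. Everything else is counting.
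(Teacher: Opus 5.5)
Your proof is correct and is essentially the paper's argument: both get the contradiction from the unselected vertex $v_\ell$, adjacent to $w_1$, joining a connected set of at least $d$ unselected vertices of $P$. The only difference is that the paper works locally in the leftmost packed graph $H$ of $P$, where the minimum solution $S \cap V(P)$ picks exactly one vertex (by assumption a pendant), so $H$ minus that pendant is such a set; you instead count over all of $P$ using $|V(P)| \geq (d+1)\opt{P}$.
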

    \begin{claimproof}
        Let $H$ be the leftmost graph of the solution-tight packing of $P$.
        If $S$ did not select a spine vertex of $P$, then because $S \cap V(P)$ is a minimum \dcocset{} of $P$, $S$ selects exactly one vertex of $H$ and that vertex is a pendant.
        Since $|V(H)| \geq d+1$, $S$ must select all neighbors of $H$, in particular $S$ must select $v_\ell$, a contradiction to $v_\ell \notin S$.
        Therefore, $S$ selects a spine vertex of $H$.
        Moreover, there is no path from $w_1$ to $w_r$ in $G - S$ that only uses vertices of $P$ as any such path would need to contain the selected spine vertex.
    \end{claimproof}

    Let $x$ be $d$ if $w_1 \in S$.
    Otherwise, if $w_1 \notin S$, let $L$ be the connected component of $G - S$ that contains $w_1$, and $L_P$ be the connected component of $P - S$ that contains $w_1$.
    Clearly $L_P$ is a subgraph of $L$, and therefore $\abs{L} = \abs{L_P} + \abs{L - L_P}$.
    Set $x = |L| - |L_P|$, and observe that $|L_P| = |L| - x \leq d$ since $|L| \leq d$ and $x \geq 0$.

    Let $y$ be $0$ if $w_r \in S$.
    Otherwise, if $w_r \notin S$, let $R$ be the connected component of $w_r$ in $G -S$, and $R_P$ the connected component of $w_r$ in $P - S$.
    We have that $R_P$ is a subgraph of $R$.
    Set $y = |R_P|$ and observe that we have $|R| = y + |R| - |R_P| \leq d$.

    \begin{claim}
        We have $\gamma(x) \leq y$.
    \end{claim}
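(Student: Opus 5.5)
We have to show $\gamma(x) \leq y$, where $\gamma$ is the essence of $P$. The idea is to use $S_P = S \cap V(P)$, suitably padded, as a competitor in the minimum that defines $\gamma(x)$.

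First dispose of the easy cases. If $y = 0$ then $w_r \in S$, so $\beta$ of any set containing $w_r$ is $0$. If $y \geq d+1$ this cannot happen, since $y = |R_P| \leq |R| \leq d$. So the inequality needs work only when $1 \leq y \leq d$ and $w_r \notin S$, although the argument below is uniform. Also, when $w_1 \in S$ we set $x = d$; this is harmless, because then the extra pendants attached to $w_1$ in $P_x$ are isolated by $S_P$.

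The core step is to show that $S_P$ is a \dcocset{} of $P_x$ and has size $\opt{P}$.
\begin{itemize}
\item \emph{Size.} In Case 2 we have $|S_P| = \opt{P}$.
\item \emph{Validity when $w_1 \in S$.} The $x$ new pendants become isolated vertices, and the other components of $P - S_P$ have size at most $d$, because they are subgraphs of components of $G - S$.
\item \emph{Validity when $w_1 \notin S$.} The component of $w_1$ in $P_x - S_P$ is $L_P$ together with the $x$ new pendants, so it has $|L_P| + x = |L| \leq d$ vertices. Every other component of $P_x - S_P$ is a component of $P - S_P$, hence contained in a component of $G - S$, hence of size at most $d$.
\end{itemize}
Thus $\opt{P_x} \leq \opt{P}$. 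Since $P$ is a subgraph of $P_x$, also $\opt{P_x} \geq \opt{P}$, so $\opt{P_x} = \opt{P}$. Moreover $x \leq d$: if $w_1 \in S$ then $x = d$ by definition, and otherwise $x \leq |L| \leq d$. Therefore the first branch of the definition of $\gamma$ applies, $S_P \in \mathcal{D}_x$, and $\gamma(x) \leq \beta_x(S_P)$.

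It remains to compute $\beta_x(S_P)$. If $w_r \in S$, then $\beta_x(S_P) = 0 = y$. Otherwise, by \cref{claim:essence_replacement_path_1}, $w_1$ and $w_r$ are not connected in $P - S$. Hence the component of $w_r$ in $P_x - S_P$ contains none of the new pendants, which hang off $w_1$, and it is exactly $R_P$. So $\beta_x(S_P) = |R_P| = y$.

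The main subtlety is this last separation step: it ensures the added pendants do not inflate the component of $w_r$, and it is exactly what \cref{claim:essence_replacement_path_1} provides. The case $w_1 \in S$ with the convention $x = d$ also needs care, but it is handled by the isolated-pendant observation above.
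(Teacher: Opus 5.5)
Your proposal is correct and follows the same route as the paper: $S \cap V(P)$ is a \dcocset{} of $P_x$ of size $\opt{P}$, and so it witnesses $\gamma(x) \leq y$. You also spell out what the paper's terse proof leaves implicit, namely that $\opt{P_x} = \opt{P}$ and $x \leq d$ (so the first branch of the essence definition applies), and that the added pendants cannot enlarge the component of $w_r$ because $w_1$ and $w_r$ are separated in $P - S$, which gives $\beta_x(S \cap V(P)) = y$.
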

    \begin{claimproof}
        Consider the graph $P_x$, which is the same as $P$ only that $w_1$ gets $x$ additional pendants.
        Then, $S \cap V(P)$ is a \dcocset{} of $P_x$ since in $P - S$ each connected component that does not contain $w_1$ has size at most $d$, and if there is a connected component $L_P$ that contains $w_1$, then it has size $\abs{L_P} = \abs{L} - x$.
        Therefore, $S \cap V(P)$ shows $\gamma(x) \leq y$.
    \end{claimproof}

    Since $P'$ has the same essence as $P$, there is a minimum-sized solution $S_{P'}$ for $P'_x$ such that $w_r'$ is selected (which must be the case if $y = 0$) or such that $w_r'$ is in a connected component of $P'_x - S_{P'}$ of size at most $y$.
    Set $S' = (S \setminus V(P)) \cup S_{P'}$.
    We clearly have $\abs{S'} \leq k'$, so it remains to shows that $S'$ is a \dcocset{} of $G'$.
    To show this, we first collect some more facts about paths.

    \begin{claim}
        \label{claim:essence_replacement_path_connecting_wr_w1}
        Without loss of generality there is no path from $w_r'$ to $w_1'$ in $P' - S'$.
    \end{claim}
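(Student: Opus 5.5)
The plan is to mirror \cref{claim:essence_replacement_path_1} on the side of $P'$. We show that the solution $S_{P'}$ for $P'_x$ can be chosen so that it contains a spine vertex of $P'$, while keeping its size $\opt{P'}$ and the guarantee on the component of $w_r'$. Every path from $w_r'$ to $w_1'$ inside $P'$ runs along the spine of $P'$ and so passes through every spine vertex. Such a path therefore cannot exist in $P' - S'$ once some spine vertex lies in $S_{P'} \subseteq S'$. If the spine of $P'$ is a single vertex, then $w_1' = w_r'$ is that vertex, and we take the claim to mean that this vertex is selected; the argument below covers this case too.

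First I pin down the size and structure of $S_{P'}$. By the previous claim, $\gamma(x) \leq y \leq d$, so $\gamma(x) \neq d+1$. By the definition of the essence this forces $x \leq d$ and $\opt{P'_x} = \opt{P'}$. Hence $S_{P'}$ is a \dcocset{} of $P'_x$ of size exactly $\opt{P'} = |\mathcal{P}'|$, where $\mathcal{P}'$ is the solution-tight packing of $P'$, which covers all of $V(P')$. By \cref{thm:packing_coc_duality}, applied to the tree $P'_x$ (in which $\mathcal{P}'$ is still a packing of connected graphs on at least $d+1$ vertices, hence maximum), $S_{P'}$ contains exactly one vertex of each $H' \in \mathcal{P}'$. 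In particular it contains none of the $x$ added pendants.

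Now suppose $S_{P'}$ contains no spine vertex of $P'$. Then its vertex in each $H' \in \mathcal{P}'$ is a pendant, and the whole spine of $P'$ lies in one connected component $K$ of $P'_x - S_{P'}$. Split by the size of the packing.

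If $|\mathcal{P}'| \geq 2$, then $K$ contains all of the leftmost packed graph $H'_1$ except one pendant, which is at least $d$ vertices. It also contains at least one spine vertex of the next packed graph. So $|K| \geq d+1$, a contradiction.

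If $|\mathcal{P}'| = 1$, I use an exchange. Let $u$ be the selected pendant and $p$ its parent, and replace $u$ by $p$. The size is unchanged. Every component of the new graph $P'_x - S_{P'}$ is contained in a component of the old one, except that $u$ is now isolated, which is fine since $d \geq 1$. So the result is still a minimum \dcocset{} of $P'_x$. The component containing $w_r'$ does not grow, and $w_r'$ becomes selected if $w_r' = p$. So the guarantee "$w_r'$ selected, or in a component of size at most $y$" is preserved, and the modified set can serve as $S_{P'}$ in the definition of $S'$. It now contains the spine vertex $p$.

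I expect the main obstacle to be bookkeeping rather than any single hard step. One must check that the exchange in the case $|\mathcal{P}'| = 1$ keeps every property of $S_{P'}$ used later. One must also check that the ``one vertex per packed graph'' count is valid in $P'_x$ and not just in $P'$. That validity rests on $\opt{P'_x} = \opt{P'}$, which comes from $\gamma(x) \leq y \leq d$.
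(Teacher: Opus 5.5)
Your proof is correct and uses the same idea as the paper. The paper makes $S_{P'}$ contain a spine vertex by the pendant-to-parent exchange, and that vertex cuts the spine, which is the unique $w_1'$--$w_r'$ path in $P'$. The paper applies this exchange to every selected pendant at once and notes $S_{P'} \neq \emptyset$ because $|V(P')| \geq d+1$, so your duality-based counting and the case split on $|\mathcal{P}'|$ are valid but unnecessary.
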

    \begin{claimproof}
        Without loss of generality $S_{P'}$ contains no pendants of $P'$, and as $|V(P')| \geq d+1$ $S_{P'}$ must contain a spine vertex, which would destroy such a path.
    \end{claimproof}

    \begin{claim}
        \label{claim:essence_replacement_path_w1}
        There is no path going from a vertex $v \notin V(P')$ to $w_1'$ in $G' - S'$ that contains a vertex of $P' - \{w_1'\}$.
    \end{claim}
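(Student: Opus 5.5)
The claim follows from the fact that $P'$ is attached to the rest of $G'$ only through its two spine endpoints, combined with \cref{claim:essence_replacement_path_connecting_wr_w1}.

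\emph{Step 1: $P'$ has only two attachment points.} By the definition of replacing caterpillars, $P'$ is a fresh caterpillar. Its only edges to $G' - V(P')$ are the edge $w_1'v_\ell$ (if $v_\ell \neq \bot$) and the edge $w_r'v_r$ (if $v_r \neq \bot$). In particular, no vertex of $P'$ has a neighbor in $M$. So the only vertices of $P'$ with neighbors outside $P'$ are $w_1'$ and $w_r'$. Pendants of $P'$ have degree one, so they can only occur as endpoints of a path.

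\emph{Step 2: the path must enter through $w_r'$.} Suppose, for contradiction, that $Q$ is a path in $G' - S'$ that starts at some $v \notin V(P')$, ends at $w_1'$, and contains a vertex $u \in V(P') \setminus \{w_1'\}$. Traverse $Q$ from $v$ and let $z$ be the first vertex of $P'$ on $Q$. By Step 1, $z \in \{w_1', w_r'\}$.

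If $z = w_1'$, then $w_1'$ is the last vertex of $Q$, since $Q$ is a simple path ending at $w_1'$. Then $Q$ contains no vertex of $P'$ other than $w_1'$, contradicting the existence of $u$. Hence $z = w_r'$, and in particular $w_r' \neq w_1'$.

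(If the spine of $P'$ were a single vertex, then $w_r' = w_1'$. In that case $u$ would be a pendant, which cannot occur on $Q$: it is not an endpoint of $Q$, since $v \notin V(P')$ and the other endpoint is $w_1'$, and it cannot be interior since it has degree one. So this case is excluded directly.)

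\emph{Step 3: the rest of the path stays inside $P'$.} Consider the subpath $Q'$ of $Q$ from $w_r'$ to $w_1'$. If $Q'$ left $P'$, it would have to do so through $w_1'$ or $w_r'$ by Step 1. Leaving through $w_1'$ is impossible because $w_1'$ is the final vertex of $Q$. Leaving through $w_r'$ is impossible because $w_r'$ already occurs on $Q'$ and $Q$ is simple. Therefore $Q'$ lies entirely in $P' - S'$.

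\emph{Conclusion.} $Q'$ is then a path from $w_r'$ to $w_1'$ in $P' - S'$, which contradicts \cref{claim:essence_replacement_path_connecting_wr_w1}. Hence no such path $Q$ exists, which proves the claim.

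The only point requiring care is the bookkeeping in Step 2: using simplicity of the path to rule out re-entry through $w_1'$, and handling the case $w_1' = w_r'$ separately.
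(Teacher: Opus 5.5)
Your proof follows the same route as the paper's: the first vertex of $P'$ on the path must be $w_r'$, and the remaining segment to $w_1'$ then contradicts \cref{claim:essence_replacement_path_connecting_wr_w1}. However, Step 3 does not justify the one exit case that actually needs ruling out.

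Since $Q'$ starts at $w_r'$ and $Q$ is simple, $w_r'$ occurs on $Q'$ only as its first vertex. So the only way for $Q'$ to leave $P'$ through $w_r'$ is for its second vertex to be the outside neighbour $v_r$. Your justification (``$w_r'$ already occurs on $Q'$ and $Q$ is simple'') does not exclude this, because in that case $w_r'$ is visited only once and nothing repeats at $w_r'$.

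What excludes it is how $Q$ entered $P'$. It arrived at $w_r'$ from outside $P'$, hence, by your Step 1, from $v_r$, the unique neighbour of $w_r'$ outside $P'$ (recall $w_r' \neq w_1'$ here). Stepping from $w_r'$ to $v_r$ would therefore make $v_r$ occur twice on $Q$. This is exactly the paper's argument (``the path contains $v_r$ twice'').

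Once you add this observation, Step 3 is correct, and your proof essentially coincides with the paper's.
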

    \begin{claimproof}
        Assume that such a path exists.
        Then, the path reaches some vertex $w' \not = w_1'$ of $P'$ before reaching $w_1'$.
        Since the only vertices of $P'$ that may have neighbors outside $P'$ are $w_1'$ and $w_r'$, the first vertex of $P'$ that the path reaches must be $w_r'$.
        In particular, it must reach $w_r'$ using the neighbor $v_r$ of $w_r'$ that is a spine vertex that is not part of $P'$.
        However, there is no path from $w_r'$ to $w_1'$ that does not use $v_r$ since by \cref{claim:essence_replacement_path_connecting_wr_w1} there is no path from $w_r'$ to $w_1'$ that only uses vertices of $P'$.
        Therefore, the path contains $v_r$ twice, which is not possible.
    \end{claimproof}

    Since $w_1'$ and $w_r'$ are the only vertices of $P'$ that have neighbors outside $P'$, the only connected components of $G' - S'$ that could have size larger than $d$ are those containing $w_1'$ and $w_r'$.

    \proofsubparagraph*{Connected component containing $w_1'$.}
    Assume that there is a connected component $L'$ of $G' - S'$ that contains $w_1'$.
    Note that this implies that $w_1 \notin S$, as otherwise $S_{P'}$ being a \dcocset{} of $P'_x = P'_d$ implies that one can assume $w_1' \in S_{P'}$.
    Set $L'_{P'}$ to be the connected component of $P' - S'$ that contains $w_1'$.
    Since $S_{P'}$ is a \dcocset{} of $P'_x$ we have $|L'_{P'}| \leq d - x$.
    We will design an injective function $\delta$ that assigns each vertex $v \in L' - L'_{P'}$ a vertex $v' \in L - L_P$.
    This function $\delta$ then proves that $|L'| - |L'_{P'}| \leq |L| - |L_P|  = x$, and therefore $|L'| = |L'_{P'}| + |L'| - |L'_{P'}| \leq d - x + x = d$.

    Let $v$ be a vertex of $L' - L'_{P'}$.
    Then, there is some path from $v$ to $w_1'$ in $G' - S'$ such that this path does not exist in $P' - S'$.

    If $v \notin P'$, then \cref{claim:essence_replacement_path_w1} shows that the path from $v$ to $w_1'$ does not contain a vertex of $P' - \{w_1'\}$.
    Therefore, there is a path from $v$ to $w_1$ in $G - P$, showing that $v$ is also part of $L$.
    But clearly $v \notin V(L_P)$ since $V(L_P) \subseteq V(P)$.
    Hence, $v \in L - L_P$ and
    we can set $\delta(v) = v$.

    Otherwise, $v \in P'$.
    Therefore, the path from $v$ to $w_1'$ contains the vertex $w_r'$, otherwise $v$ would also be part of $L'_{P'}$.
    Let $R'_{P'}$ be the connected component of $w_r'$ in $P' - S'$.
    Since $v$ was arbitrary, the vertices of $P'$ which are part of $L' - L'_{P'}$ are exactly the vertices of $R'_{P'}$.
    By the properties of the essence $\gamma$ and our choice of $S_{P'}$ we have $\abs{R'_{P'}} \leq y = \abs{R_P}$.
    Moreover, since $v$ exists we have $y \geq 1$ and thus vertex $w_r \notin S$.
    Since there is a path from $w_r'$ to $w_1'$ in $G' - S'$, but by \cref{claim:essence_replacement_path_connecting_wr_w1} no path from $w_r'$ to $w_1'$ that lies within $P'$, there must be a path from $w_r$ to $w_1$ in $G - S$.
    Therefore, each vertex of $R_P$ is part of $L$.
    However, since there is no path from $w_r$ to $w_1$ in $P - S$ by \cref{claim:essence_replacement_path_1}, no vertex of $R_P$ is part of $L_P$.
    Since $\abs{R_P} \geq \abs{R'_{P'}}$, we can injectively map the vertices of $R'_{P'}$ to the vertices of $R_P$ with $\delta$.

    \proofsubparagraph*{Connected component containing $w_r'$.}
    Now, we assume that there is a connected component $R'$ of $G' - S'$ that contains $w_r'$.
    This implies $y \geq 1$ and therefore $w_r \notin S$.
    We may additionally assume that $R'$ does not contain $w_1'$, as we have already dealt with the component that contains $w_1'$.
    Let $R'_{P'}$ be the connected component of $w_r'$ in $P' - S'$.
    By our choice of solution $S_{P'}$ we have $\abs{R'_{P'}} \leq y$.
    Similarly to the proof for the component $L'$ we will map each vertex of $R' - R'_{P'}$ injectively to a vertex of $R - R_P$ using a function $\delta$.
    Indeed, if we had such a function $\delta$ then we would get $|R'| = |R'_{P'}| + |R'| - |R'_{P'}| \leq y + |R| - |R_P| \leq d$.

    Let $v$ be a vertex of $R' - R'_{P'}$.
    Then, $v$ has a path to $w_r'$, and this path does not exist in $P' - S'$.
    Moreover, since $w_1'$ is not part of $R'$ the path does not contain $w_1'$.
    So, $v$ cannot be part of $P'$.
    Therefore, the path from $v$ to $w_r'$ gives rise to a path from $v$ to $w_r$ in $G - S$.
    We can map $\delta(v) = v$ to finish the proof.
\end{proof}

To replace a graph with a certain essence $\gamma$ with a smaller graph with the same essence $\gamma$, we also must be able to quickly compute small graphs with essence $\gamma$.
This task is highly non-trivial, but, as the next lemma shows, possible.

\thmComputeCaterpillarEssence*{}

We defer the proof to \cref{sec:caterpillar_essence}, in which we also cover the interesting connection between the essences of caterpillars and abstract algebra.
An interesting aspect of the lemma is that for any conceptually possible essence, there is a small caterpillar that has this essence.

We intend to use \cref{thm:compute_caterpillars_with_essence,thm:replace_with_essence_safe_if_no_mod_connections} to replace some subgraphs of $G - M$ with smaller caterpillars that have the same essence, however, the connections to the modulator that these connected components can have cause some additional difficulties.
Before stating the actual reduction rule, we thus need to figure out which parts of $G - M$ we can actually safely replace, and for this purpose, we again utilize graph packings.

\begin{lemma}
    \label{thm:many_lefw_right_blocks}
    Let $C$ be a connected subgraph of $G - M$, and $\mathcal{P}$ be a packing of pairwise vertex-disjoint connected subgraphs of $C$, such that every graph of $\mathcal{P}$ contains at least one spine vertex of $C$.

    For any integers $c_1,c_2 \geq 1$ there exists a set $\partitionSeeingLotsBlocksLeftAndRight \subseteq \mathcal{P}$ with $\abs{\partitionSeeingLotsBlocksLeftAndRight} \geq \abs{\mathcal{P}} - \abs{M} \cdot (c_1 + c_2)$ such that for each $P \in \partitionSeeingLotsBlocksLeftAndRight$ and for each $v \in N(P) \cap M$ vertex $v$ has at least $c_1$ neighbors in graphs of $\mathcal{P}$ to the left of $P$, and at least $c_2$ neighbors in graphs of $\mathcal{P}$ to the right of $P$.
\end{lemma}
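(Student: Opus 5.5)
The plan is a direct counting argument: for each modulator vertex, discard the few packed graphs that sit near the ends of its neighborhood.

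First I fix an order on $\mathcal{P}$. Every graph of $\mathcal{P}$ contains a spine vertex of $C$ and is connected, and the graphs are pairwise disjoint. Hence the spine vertices of each graph form a contiguous interval of $\spine{C}$, and these intervals are disjoint. So $\mathcal{P}$ is linearly ordered from left to right as $P_1,\dots,P_m$ with $m=\abs{\mathcal{P}}$. A pendant lies in the same graph as its parent or in none, or belongs to some $P_j$ that contains its parent's spine interval. In either case, each vertex lying in a graph of $\mathcal{P}$ is assigned to a unique index $j$.

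For each $v \in M$, let $j_1<j_2<\dots$ be the indices of the graphs of $\mathcal{P}$ that contain a neighbor of $v$. I mark as bad the (at most) $c_1$ graphs of $\mathcal{P}$ with smallest index containing a neighbor of $v$. I also mark as bad the (at most) $c_2$ graphs with largest index containing a neighbor of $v$. This marks at most $c_1+c_2$ graphs per modulator vertex. I set $\partitionSeeingLotsBlocksLeftAndRight$ to be all unmarked graphs, so $\abs{\partitionSeeingLotsBlocksLeftAndRight} \geq \abs{\mathcal{P}} - \abs{M}(c_1+c_2)$.

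Now take $P \in \partitionSeeingLotsBlocksLeftAndRight$ and $v \in N(P)\cap M$. Since $P$ contains a neighbor of $v$ and is not among the first $c_1$ graphs containing neighbors of $v$, at least $c_1$ graphs strictly left of $P$ contain a neighbor of $v$. These graphs are disjoint, so $v$ has at least $c_1$ distinct neighbors in graphs of $\mathcal{P}$ to the left of $P$. The right side is symmetric and gives at least $c_2$ such neighbors.

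The only point needing care is the meaning of "left of $P$": a graph $P_j$ with $j$ smaller than the index of $P$ must lie left of $P$ in the paper's sense. This holds because every vertex of $P_j$ has its spine position (its own, or its parent's) strictly before the spine interval of $P$. I expect this step to be routine.
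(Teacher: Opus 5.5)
Your proposal is correct and is essentially the paper's proof: each $v \in M$ marks the $c_1$ leftmost and $c_2$ rightmost graphs of $\mathcal{P}$ it is adjacent to, and the unmarked graphs form $\tilde{\mathcal{P}}$, giving the bound $\abs{\mathcal{P}} - \abs{M}(c_1+c_2)$. Your justification of the left-to-right order fills in what the paper only calls the ``natural order'', and it is sound: the spine vertices of a connected subgraph form a contiguous interval, these intervals are disjoint, and a pendant in a packed graph has its parent in the same graph.
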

\begin{proof}
    Since each graph of $\mathcal{P}$ has at least one spine vertex, there is a natural order among them determined by the spine.
    For each $v \in M$ we perform the following marking procedure.
    Vertex $v$ marks the $c_1$ leftmost graphs of $\mathcal{P}$ that $v$ has edges to, and also marks the $c_2$ rightmost graphs of $\mathcal{P}$ that $v$ has edges to.
    Our set $\partitionSeeingLotsBlocksLeftAndRight$ consists of all unmarked graphs of $\mathcal{P}$.
    Since each vertex of $M$ marks at most $c_1 + c_2$ graphs of $\mathcal{P}$ we have $\abs{\partitionSeeingLotsBlocksLeftAndRight} \geq \abs{P} - \abs{M} \cdot (c_1 + c_2)$.

    Now, consider some $P \in \partitionSeeingLotsBlocksLeftAndRight$.
    Then, for each $v \in N(P) \cap M$ we know that $v$ did not mark $P$, and therefore $v$ has edges to $c_1$ graphs of $\mathcal{P}$ which are left of $P$, and $c_2$ graphs of $\mathcal{P}$ which are right of $P$.
\end{proof}

We can use this fact to find specific subgraphs of the $\smallPackingConstant$-merged packing.
For convenience, we define $\packingConstant = \smallPackingConstant \cdot (\abs{M} \cdot 2d + 1)$.
We will utilize the $\packingConstant$-merged packing.

\begin{restatable}{lemma}{thmfindAppropriateSubgraphs}
    \label{thm:find_appropriate_subgraphs}
    Let $C$ be a connected component of $G - M$, $\mathcal{P}_\packingConstant$ be the $\packingConstant$-merged packing of $C$, and $\mathcal{P}_{{\smallPackingConstant}}$ be the ${\smallPackingConstant}$-merged packing of $C$.
    Then, for each graph $G_1$ in $\mathcal{P}_{\packingConstant}$ there exists a subgraph $G_2$ of $G_1$ such that 
    \begin{enumerate}
        \item $G_2 \in \mathcal{P}_{{\smallPackingConstant}}$, and 
        \item each $v \in N(G_2) \cap M$ has $d$ neighbors in $G_1$ which are left of $G_2$,
        \item each $v \in N(G_2) \cap M$ has $d$ neighbors in $G_1$ which are right of $G_2$.
    \end{enumerate}
    Moreover, we can find $G_2$ in polynomial-time given $G_1$.
\end{restatable}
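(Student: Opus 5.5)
The plan is to apply \cref{thm:many_lefw_right_blocks} inside $G_1$, using $c_1 = c_2 = d$. The starting point is how the two merged packings relate. The graph $G_1 \in \mathcal{P}_\packingConstant$ is the union of $\packingConstant$ consecutive graphs of the solution-tight packing of $C$. Since $\packingConstant = \smallPackingConstant \cdot (\abs{M}\cdot 2d + 1)$ is a multiple of $\smallPackingConstant$, and both merged packings are built by grouping consecutive packed graphs from the left end, $G_1$ is exactly the disjoint union of $t = \abs{M}\cdot 2d + 1$ consecutive graphs of $\mathcal{P}_{\smallPackingConstant}$. To see this, write $G_1 = H'_i$ in the $\packingConstant$-merged packing. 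Its solution-tight blocks have indices $i\packingConstant + 1, \dots, (i+1)\packingConstant$, and these split into the blocks of the graphs $H''_{i t}, \dots, H''_{(i+1)t - 1}$ of the $\smallPackingConstant$-merged packing. Let $\mathcal{Q}$ denote this set of $t$ graphs. Each graph of $\mathcal{Q}$ is connected (as observed after \cref{def:merged_packing}) and contains a spine vertex, because every graph of the solution-tight packing contains a spine vertex by properties 2 and 3 of \cref{def:solution_tight_packing}.

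Next, apply \cref{thm:many_lefw_right_blocks} with $C$ replaced by $G_1$ (a connected subgraph of $G - M$), with packing $\mathcal{Q}$ and $c_1 = c_2 = d$. This yields $\partitionSeeingLotsBlocksLeftAndRight \subseteq \mathcal{Q}$ with
\[
\abs{\partitionSeeingLotsBlocksLeftAndRight} \geq t - 2d\abs{M} = 1,
\]
so it is nonempty. Pick any $G_2 \in \partitionSeeingLotsBlocksLeftAndRight$. Property 1 holds since $G_2 \in \mathcal{P}_{\smallPackingConstant}$. For properties 2 and 3, each $v \in N(G_2) \cap M$ has at least $d$ neighbours in graphs of $\mathcal{Q}$ to the left of $G_2$, and these lie in $G_1$. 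One point needs checking: ``to the left of $G_2$'' in the lemma refers to whole graphs of $\mathcal{Q}$ in the spine order. Since the graphs of $\mathcal{Q}$ occupy disjoint consecutive spine intervals, with pendants attached to their parents, every vertex of a graph left of $G_2$ is left of all vertices of $G_2$, so these are genuinely neighbours left of $G_2$. The right side is symmetric.

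For the running time, compute the solution-tight packing by \cref{thm:compute_sol_tight_packing_poly_time}, form the merged packings, and run the marking procedure from the proof of \cref{thm:many_lefw_right_blocks}, which is clearly polynomial. I expect the main obstacle to be the bookkeeping in the first paragraph: showing that $G_1$ decomposes exactly into $t$ graphs of $\mathcal{P}_{\smallPackingConstant}$, with the indexing aligned because of the floor in \cref{def:merged_packing} and the common left anchor. A second subtlety is that ``left/right'' for neighbours is used pointwise here, whereas the paper's definition of ``left of'' for subgraphs is existential; the disjoint-interval structure resolves this.
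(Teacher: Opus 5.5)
Your proposal is correct and is essentially the paper's proof. You split $G_1$ into $2d\abs{M}+1$ consecutive graphs of the $(d^3+1)$-merged packing, apply \cref{thm:many_lefw_right_blocks} with $c_1=c_2=d$, and take one of the surviving graphs as $G_2$; your extra bookkeeping (the index alignment, and the pointwise reading of ``left/right of $G_2$'' via disjoint spine intervals) fills in details the paper leaves implicit, with one small slip: each packed graph contains a spine vertex because it is connected with at least $d+1\ge 2$ vertices and pendants are pairwise non-adjacent, not because of properties 2 and 3.
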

\begin{proof}
    Let $G_1$ be an arbitrary graph of $\mathcal{P}_\packingConstant$.
    Then, because $\packingConstant = {\smallPackingConstant} \cdot (|M| \cdot 2d + 1)$, it is the case that $G_1$ itself consists of $|M| \cdot 2d + 1$ graphs of $\mathcal{P}_{{\smallPackingConstant}}$.

    By applying \cref{thm:many_lefw_right_blocks} on component $C$ of $G' - M$ with these $|M| \cdot 2d+1$ packed subgraphs, we see that there exists a graph $G_2$ in $\mathcal{P}_{\smallPackingConstant}$ such that each $v \in N(G_2) \cap M$ has $d$ neighbors in $G_1$ which are left of $G_2$, and $d$ neighbors in $G_1$ which are right of $G_2$.
    We can also find $G_2$ in polynomial-time by simply iterating over all $|M| \cdot 2d+1$ packed subgraphs that make up $G_1$ and checking whether they fulfill the desired property.
\end{proof}

Now, we are finally ready to state the reduction rule we use to decrease the number of spine vertices of connected components of $G - M$.

\begin{reductionrule}
    \label{rule:replace_by_essence_graph}
    Let $(G,d,k,M)$ be an instance of \cocModToCaterpillarsPlusD{}.
    Let $\mathcal{X}$ be the set of chunks of the instance, and let $\mathcal{P}_\packingConstant$ be the $\packingConstant$-merged packing of $G - M$, and let $\mathcal{P}_{\smallPackingConstant}$ be the ${\smallPackingConstant}$-merged packing of $G-M$.
    The rule is applicable if $\abs{\mathcal{P}_\packingConstant} > \abs{M}^2 \cdot (\abs{M} + 2(d-1))$.

    Proceed with the following steps.
    \begin{enumerate}
        \item
              \label{rule_step:replace_by_essence_graph:marking}
              For each chunk $X \in \mathcal{X}$ set $\mathcal{P}^X_\packingConstant = \{\hat P \in \mathcal{P}_\packingConstant \mid \conflicts{X}{\hat P} > 0\}$.
              For each $X \in \mathcal{X}$ mark $\min(|M| + 2(d-1), \abs{\mathcal{P}^X_\packingConstant})$ arbitrary graphs of $\mathcal{P}^X_\packingConstant$.
        \item Fix an arbitrary unmarked $\hat P \in \mathcal{P}_\packingConstant$.
              \label{rule_step:replace_by_essence_graph:fix_unmarked_graph}
        \item
              \label{rule_step:replace_by_essence_graph:fix_subgraph}
              Fix an arbitrary subgraph $P$ of $\hat P$ with $P \in \mathcal{P}_\smallPackingConstant$ such that each $v \in N(P) \cap M$ has $d$ neighbors in $\hat P$ left of $P$ and $d$ neighbors in $\hat P$ right of $P$.
        \item
              \label{rule_step:replace_by_essence_graph:compute_essence}
              Compute the essence $\gamma$ of $P$.
        \item
              \label{rule_step:replace_by_essence_graph:compute_essence_graph}
              Compute a graph $P_\gamma$ with a solution-tight packing of size at most $d^3$ and essence $\gamma$.
        \item
              \label{rule_step:replace_by_essence_graph:replace_graphs}
              Create $G'$ by replacing $P$ with $P_\gamma$, and set $k' = k - \opt{P} + \opt{P_\gamma}$.
        \item
              \label{rule_step:replace_by_essence_graph:return_output_instance}
              Output the instance $(G',d,k',M)$.
    \end{enumerate}
\end{reductionrule}

Next, we argue that the rule can be executed in polynomial-time and that the steps are well-defined.

\begin{lemma}
    \label{thm:replace_by_essence_graph_rule_poly_time}
    \cref{rule:replace_by_essence_graph} can be executed in polynomial-time and each step is well-defined.
    Moreover, if $(G,d,k,M)$ is the instance before the rule is executed and $(G',d,k',M)$ is the instance after the rule is executed, then the solution-tight packing of $G' - M$ is strictly smaller than the solution-tight packing of $G - M$.
\end{lemma}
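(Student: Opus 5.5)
We go through the steps of \cref{rule:replace_by_essence_graph} in order, checking well-definedness and polynomial running time for each, and then compare the solution-tight packings before and after the replacement.

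\textbf{Marking and Step~2.} By \cref{thm:compute_sol_tight_packing_poly_time} the solution-tight packing of $G-M$ is computable in polynomial time, and so are the merged packings $\mathcal{P}_\packingConstant$ and $\mathcal{P}_\smallPackingConstant$. Because $b=2$ by \cref{thm:caterpillar_bounded_blocking_sets}, there are at most $|M|^2$ chunks. For each chunk $X$ and each $\hat P\in\mathcal{P}_\packingConstant$, the quantity $\conflicts{\hat P}{X}$ only needs $\opt{\cdot}$ on caterpillar forests. That value equals the size of the solution-tight packing, so it is polynomial to compute. The rule therefore marks at most $|M|^2(|M|+2(d-1))$ graphs. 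The applicability condition $|\mathcal{P}_\packingConstant|>|M|^2(|M|+2(d-1))$ then guarantees that an unmarked $\hat P$ exists, so Step~2 is well-defined.

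\textbf{Steps~3 to~6.} Step~3 is exactly \cref{thm:find_appropriate_subgraphs} applied with $G_1=\hat P$, and that lemma also provides the polynomial-time search. The chosen $P$ lies in $\mathcal{P}_\smallPackingConstant$, so its solution-tight packing covers all of $V(P)$ and consists of exactly $d^3+1$ graphs. Hence the essence of $P$ is defined, and \cref{thm:basic_essence_properties} computes it in polynomial time. The same lemma shows $\gamma\in\mathcal{M}_d$, which is precisely the input condition of \cref{thm:compute_caterpillars_with_essence}. That lemma returns $P_\gamma$ with a solution-tight packing of size at most $d^3$ that covers all of $V(P_\gamma)$, in time polynomial in $d$; in particular $|P_\gamma|$ is polynomial. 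Replacing $P$ by $P_\gamma$ and computing $\opt{P}$ and $\opt{P_\gamma}$ are both polynomial. All these algorithms are polynomial in $|M|+d$ and the input size, so the whole rule is too.

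\textbf{Packing decrease.} Let $\hat C$ be the component of $G-M$ containing $P$, and let $\hat C'$ be the component of $G'-M$ obtained from $\hat C$ by the replacement. All other components are unchanged. Since $P$ is a merged graph, $\hat C$ splits into $L$ (left of $P$), $P$, and $R$ (right of $P$), and the solution-tight packing of $\hat C$ is the union of those of $L$, $P$ and $R$. Both $L$ and $R$ are merged graphs, so their packings cover them entirely, except possibly for an uncovered right tail of $R$.

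I would rerun the greedy from the proof of \cref{thm:compute_sol_tight_packing_poly_time} on $\hat C'$. On $L$ it produces the same packed graphs, because greedy only looks at prefixes and $L$ is unchanged. It then starts fresh at the left end of $P_\gamma$, and since the packing of $P_\gamma$ covers all of $P_\gamma$, greedy reproduces exactly that packing of size $\opt{P_\gamma}\le d^3$. It then starts fresh again at the left end of $R$ and reproduces the packing of $R$. The size therefore drops by $\opt{P}-\opt{P_\gamma}\ge (d^3+1)-d^3=1$.

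The main obstacle is the greedy claim that the packing of a caterpillar covered by its own solution-tight packing is reproduced when that caterpillar is embedded with an arbitrary prefix on the left. I expect this to follow by induction: each greedy block ends exactly where the block of the standalone packing ends, since greedy's choice depends only on the vertices from the current start onward and the previous block ends exactly at a boundary. Alternatively, one can invoke uniqueness of the solution-tight packing and check the four properties of the union directly, using $\opt{\hat C'}\le\opt{L}+\opt{P_\gamma}+\opt{R}$ together with the reverse bound that the union is itself a packing.
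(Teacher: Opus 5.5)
Your proposal is correct and follows essentially the same route as the paper's proof. You argue step by step: Step~2 is well-defined by counting marks, $P$ exists by \cref{thm:find_appropriate_subgraphs}, the essence and $P_\gamma$ are computable by \cref{thm:basic_essence_properties,thm:compute_caterpillars_with_essence}, and the packing shrinks from $d^3+1$ to at most $d^3$ graphs. The one difference is that the paper simply asserts that the solution-tight packing of $G'-M$ is that of $G-M$ with the packing of $P$ swapped for that of $P_\gamma$, whereas you justify this via the greedy construction or via uniqueness.
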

\begin{proof}
    It is clear that \cref{rule_step:replace_by_essence_graph:marking} can be performed in polynomial-time since we can compute $\conflicts{X}{\hat P}$ in polynomial-time for each $\hat P \in \mathcal{P}_\packingConstant$ and $X \in \mathcal{X}$.

    The graph $\hat P$ in \cref{rule_step:replace_by_essence_graph:fix_unmarked_graph} is well-defined since $\abs{\mathcal{P}_\packingConstant} > \abs{M}^2 \cdot (\abs{M} + 2(d-1))$ if the rule is applicable, for each $X \in \mathcal{X}$ we mark at most $\abs{M} + 2(d-1)$ graphs of $\mathcal{P}_\packingConstant$, and $\abs{\mathcal{X}} \leq \abs{M}^2$.

    The graph $P$ fixed in \cref{rule_step:replace_by_essence_graph:fix_subgraph} exists by \cref{thm:find_appropriate_subgraphs}.

    Since $P$ is part of the $\mathcal{P}_\smallPackingConstant$-merged packing the solution-tight packing of $P$ contains all vertices of $P$.
    Therefore, $P$ has an essence that we can compute in polynomial-time using \cref{thm:basic_essence_properties}.
    Hence, also \cref{rule_step:replace_by_essence_graph:compute_essence} can be done in polynomial-time.

    \cref{rule_step:replace_by_essence_graph:compute_essence_graph} can be executed quickly enough by applying \cref{thm:compute_caterpillars_with_essence}.

    It is also clear that \cref{rule_step:replace_by_essence_graph:replace_graphs,rule_step:replace_by_essence_graph:return_output_instance} can be done in polynomial-time since \coc{} can be solved in polynomial-time on caterpillars.

    To see that the rule reduces the size of the solution-tight packing observe that the graph $P$ has a solution-tight packing that contains $d^3 + 1$ graphs, whereas the solution-tight packing of $P_\gamma$ contains only $d^3$ graphs.
    The solution-tight packing of $G - M$ is identical to that of $G' - M$, apart from the packing of $P$ being exchanged for that of $P_\gamma$.
    Therefore, the size of the solution-tight packing is decreased by the rule application.
\end{proof}

Now, we proceed to showing that the rule is safe.
For this purpose, we first introduce two further auxiliary lemmas.

\begin{restatable}{lemma}{thmSimpleMarking1}
    \label{thm:simple_marking_1}
    Let $\hat{\mathcal{P}}$
    be a packing of pairwise vertex-disjoint connected merged subgraphs of $G - M$.
    For each chunk $X \in \mathcal{X}$ let $\hat{\mathcal{P}}^X = \{H \in \hat{\mathcal{P}} \mid \conflicts{H}{X} > 0\}$.
    Then, there is a minimum \dcocset{} $S$ of $G$ that intersects each chunk $X$ where $|\hat{\mathcal{P}}^X| \geq |M| + 2(d-1)$.
\end{restatable}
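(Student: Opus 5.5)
We take a minimum \dcocset{} $S$ of $G$ that misses as few chunks as possible among the "heavy" chunks, i.e.\ those $X$ with $|\hat{\mathcal{P}}^X| \geq |M| + 2(d-1)$. We then derive a contradiction from a heavy chunk $X$ with $X \cap S = \emptyset$ by exhibiting another minimum solution that also hits $X$ and misses no new heavy chunk. Equivalently, we show that any solution missing a heavy chunk can be modified, without growing, so that it hits that chunk and still hits every heavy chunk it already hit.

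Fix such an $X$. Since no vertex of $X$ is in $S$, every vertex of $N(X)$ outside $S$ lies in a component of $G - S$ together with a vertex of $X$. Each component has at most $d$ vertices, and $|X| \le 2$ (chunks have size at most $b = 2$ by \cref{thm:caterpillar_bounded_blocking_sets}). Hence at most $|X|(d-1) \le 2(d-1)$ vertices of $N(X) \setminus S$ lie in $G - M$. These unselected neighbours meet at most $2(d-1)$ graphs of $\hat{\mathcal{P}}^X$. So there are at least $|M|$ graphs $H \in \hat{\mathcal{P}}^X$ with $N_H(X) \subseteq S$.

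For each such $H$, since $\conflicts{H}{X} > 0$, we get $|S \cap V(H)| \ge |N_H(X)| + \opt{H - N_H(X)} > \opt{H}$. This uses that $S \cap V(H)$ is a \dcocset{} of $H$ containing $N_H(X)$, so that $S \cap V(H) \setminus N_H(X)$ solves $H - N_H(X)$. Thus $S$ overpays by at least one in each of $|M|$ disjoint merged subgraphs.

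Now we rebuild $S$. We replace $S$ by $S^* = (S \setminus \bigcup_H V(H)) \cup M \cup \bigcup_H T_H$, where the union ranges over these $|M|$ graphs $H$ and each $T_H$ is a suitable \dcocset{} of $H$. Adding all of $M$ costs at most $|M|$. We compensate by choosing each $T_H$ of size at most $|S \cap V(H)| - 1$. For safety of the pieces we want $T_H$ to be a minimum \dcocset{} of $H$ that also cuts $H$ from the rest of its caterpillar, namely by selecting the rightmost spine vertex of $H$ and ensuring the vertex just left of $H$ is handled.

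The main obstacle is exactly this interface issue. $H$ is a merged subgraph, so it has a minimum solution selecting its rightmost spine vertex. This cuts $H$ off on the right, and taking $M$ into $S^*$ removes all modulator connections. On the left, however, the spine neighbour $v_\ell$ of $H$ may be unselected, which could merge a component of $H - T_H$ with part of the caterpillar to the left. I would handle this with the argument of Case 1 and \cref{claim:essence_replacement_path_1} from the proof of \cref{thm:replace_with_essence_safe_if_no_mod_connections}. Because $|S \cap V(H)| \ge \opt{H}+1$, we can afford $T_H$ to be a minimum solution of $H$ plus $v_\ell$ instead, removing $S \cap V(H)$ entirely. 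The price is then $\opt{H}+1 \le |S \cap V(H)|$, which is no gain.

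To recover the needed slack, I would first try to strengthen the counting. This requires a sharper conflict bound exploiting that a vertex of $H$ adjacent to $X$ and selected is never the only reason for overpayment. Alternatively, one can use the leftmost packed graph of the solution-tight packing of $H$: selecting one of its spine vertices optimally already protects the left boundary, as in \cref{claim:essence_replacement_path_1}, so that $T_H$ has size $\opt{H} \le |S \cap V(H)| - 1$. With this, $|S^*| \le |S|$. Moreover $S^*$ contains $M \supseteq X$, so it hits every chunk, giving the desired minimum solution intersecting all heavy chunks.
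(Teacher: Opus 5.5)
Your argument is correct up to the overpayment bound and matches the paper there: with $S \cap X = \emptyset$, at most $2(d-1)$ neighbours of $X$ in $G-M$ are unselected. Hence at least $|M|$ graphs $H \in \hat{\mathcal{P}}^X$ have $N_H(X) \subseteq S$, and so $|S \cap V(H)| > \opt{H}$.

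\textbf{The gap is in the rebuilding step.} Your local exchange needs, for each such $H$, a set $T_H$ with $|T_H| \le |S \cap V(H)| - 1$ that is compatible with the unchanged part $S \setminus \bigcup_H V(H)$. Your justification for this does not work. You propose picking a spine vertex of the leftmost packed graph of $H$, ``as in \cref{claim:essence_replacement_path_1}''. That claim only says that such a choice disconnects $w_1$ from $w_r$ inside $H$. It says nothing about the size of the component of $w_1$ in $H - T_H$, and that component merges with the component of the unselected spine neighbour $v_\ell$.

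If $v_\ell$'s component has $x$ vertices to the left of $H$, then $T_H$ must be a \dcocset{} of $H_x$. This is possible with $\opt{H}$ vertices only if $\opt{H_x} = \opt{H}$, and that fails in general. Take $H$ to be the caterpillar $P^{d+1}_{d+1}$ from the proof of \cref{thm:basic_function_caterpillar_alg}, a packed graph with $\opt{H}=1$. Let $m \in X$ be adjacent to $v_1$ only, so $\conflicts{H}{X} > 0$. Let $S \cap V(H) = \{v_1, v_{d+1}\}$, and let $v_\ell \notin S$ lie in a component of $G-S$ consisting of $d$ caterpillar vertices left of $H$. Then no single vertex of $H$ is a valid $T_H$:
\begin{itemize}
\item any choice other than $v_1$ attaches $v_1$ to a component of size $d$;
\item choosing $v_1$ leaves $d+1$ connected vertices.
\end{itemize}
So the exchange saves nothing and cannot pay for adding $M$. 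A further complication is that several of the $|M|$ graphs $H$ may be consecutive on one spine, so the left context of one depends on the choice of $T$ for another.

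\textbf{The missing idea is to drop the boundary-respecting local exchange and re-solve all of $G-M$ at once.}
\begin{itemize}
\item Every graph of the solution-tight packing $\mathcal{P}$ of $G-M$ is connected with at least $d+1$ vertices, so $S$ contains at least one vertex of each. Also $|\mathcal{P}| = \opt{G-M}$.
\item Each overpaying $H$ is a connected merged graph, i.e.\ a union of exactly $\opt{H}$ graphs of $\mathcal{P}$. By pigeonhole, one of these contains at least two vertices of $S$.
\item With $|M|$ disjoint such $H$, you get $|S| \ge \opt{G-M} + |M|$.
\end{itemize}
Therefore $M$ together with any minimum \dcocset{} of $G-M$ is itself a minimum \dcocset{} of $G$. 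Since it contains $M$, it hits every chunk simultaneously, which also makes your extremal choice of $S$ unnecessary.
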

\begin{proof}
    Let $S$ be a minimum \dcocset{} of $G$ and $X$ be a chunk such that $|\hat{\mathcal{P}}^X| \geq |M| + 2(d-1)$ and assume that $S \cap X = \emptyset$.
    Since $|X| \leq 2$, at most $2(d-1)$ neighbors of $X$ are not selected by $S$.
    Thus, in $|M|$ graphs of $\hat{\mathcal{P}}^X$ all neighbors of $X$ are selected.
    Hence, in each of these graphs $\tilde H$ we have that $S \cap V(\tilde H) > \opt{\tilde H} $.
    In particular, since each such graph $\tilde H$ consists of some $\alpha$ graphs of the solution-tight packing $\mathcal{P}$ of $G - M$, there is a subgraph $\tilde H' \in \mathcal{P}$ of $\tilde H$ such that $S$ selects at least two vertices of $\tilde H'$.
    Overall, we can find $|M|$ vertex-disjoint graphs of $\mathcal{P}$ such that $S$ selects at least two vertices of each of them.
    This shows $S \geq |M| + \opt{G - M}$.
    Hence, the solution that selects all of $M$ and a minimum \dcocset{} of $G - M$ is also optimal and intersects all chunks.
\end{proof}

\begin{restatable}{lemma}{thmSimpleMarking2}
    \label{thm:simple_marking_2}
    Let $S \subseteq M$ be a set and $H$ a subgraph of $G - M$ such that $S \cap X \not = \emptyset$
    for all chunks $X \in \mathcal{X}$ with $\conflicts{X}{H} > 0$.
    Then, there is a minimum \dcocset{} $S_H$ of $H$ that selects all vertices of $N(M \setminus S) \cap V(H)$.
\end{restatable}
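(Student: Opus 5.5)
The natural route is proof by contradiction via blocking sets, mirroring the argument already used in the safety proof of \cref{rule:remove_connected_components_expansion_lemma}. Set $M' = M \setminus S$ and $Z = N(M') \cap V(H)$. Suppose no minimum \dcocset{} of $H$ contains $Z$. Then $Z$ is a $d$-blocking set of $H$. Since $H$ is a subgraph of the caterpillar forest $G - M$, it is itself a caterpillar forest; if $H$ is not assumed induced we may restrict to $H$ as given, since subgraphs of caterpillar forests are again caterpillar forests. Therefore \cref{thm:caterpillar_bounded_blocking_sets} applies, and there is a minimal $d$-blocking set $B \subseteq Z$ with $|B| \leq 2$. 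Note that $B$ is nonempty, because the empty set is never blocking.

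Next we turn $B$ into a chunk. For each $u \in B$, pick an arbitrary neighbor $u' \in M'$, which exists because $u \in N(M')$. Let $X = \{u' \mid u \in B\}$. Then $X \subseteq M'$ and $1 \leq |X| \leq 2$, so $X \in \mathcal{X}$ (chunks here have size at most $b = 2$). Moreover $B \subseteq N_H(X)$.

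It remains to show $\conflicts{X}{H} > 0$. We have $\conflicts{X}{H} = |N_H(X)| + \opt{H - N_H(X)} - \opt{H}$. The quantity $|N_H(X)| + \opt{H - N_H(X)}$ is exactly the minimum size of a \dcocset{} of $H$ containing $N_H(X)$. If this equalled $\opt{H}$, some minimum \dcocset{} of $H$ would contain $N_H(X) \supseteq B$, contradicting that $B$ is blocking. Hence the conflict is positive. By hypothesis this forces $S \cap X \neq \emptyset$, but $X \subseteq M \setminus S$, which is a contradiction. So some minimum \dcocset{} of $H$ contains $Z$.

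The only slightly delicate step is the identity between the conflict and ``cheapest solution containing $N_H(X)$.'' It follows because any \dcocset{} $Y \supseteq N_H(X)$ of $H$ has $Y \setminus N_H(X)$ a \dcocset{} of $H - N_H(X)$. Conversely, adding $N_H(X)$ to a \dcocset{} of $H - N_H(X)$ yields one of $H$. The rest is routine.
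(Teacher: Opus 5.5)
Your proof is correct and follows essentially the same route as the paper: assume $N(M\setminus S)\cap V(H)$ is $d$-blocking, extract a blocking subset of size at most two via \cref{thm:caterpillar_bounded_blocking_sets}, map it to a chunk inside $M \setminus S$ with positive conflict, and contradict the hypothesis. Your explicit justification that $\conflicts{X}{H} > 0$, via the identity with the cheapest \dcocset{} of $H$ containing $N_H(X)$, spells out a step the paper leaves implicit.
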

\begin{proof}
    Toward a contradiction assume that there is no minimum \dcocset{} of $H$ that selects all vertices of $B = N(M \setminus S) \cap V(H)$.
    Then, $B$ is a $d$-blocking set of $H$.
    Since minimal $d$-blocking sets of caterpillar forests have size at most two, we find a set $B' \subseteq B$ that is also a $d$-blocking set of $H$ and $|B'| \leq 2$.
    Let $B' = \{b_1,b_2\}$, we may have $b_1 = b_2$.
    Let $m_1, m_2$ be neighbors of $b_1$, respectively $b_2$ which are part of $M \setminus S$.
    Then $B'$ certifies that $\conflicts{H}{\{m_1,m_2\}} > 0$.
    But $\{m_1,m_2\}$ is a chunk, contradicting our assumption that $S$ hits all chunks that have nonzero conflict with $H$.
\end{proof}

Note that the central idea behind these two lemmas was first used by Jansen \& Bodlaender \cite{jansenVertexCoverKernelization2013b} to reduce the number of connected components, although we put a novel spin on it using the solution-tight packings to make \cref{thm:simple_marking_1} work for graphs that are not entire connected components.
Now, we are ready for the full proof of safety of the reduction rule.

\begin{lemma}
    \cref{rule:replace_by_essence_graph} is safe.
\end{lemma}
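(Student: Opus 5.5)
We prove the two directions separately. Let $(G,d,k,M)$ be the input, let $\hat P$, $P$, $P_\gamma$ be as fixed by the rule, and let $(G',d,k',M)$ be the output with $k' = k - \opt{P} + \opt{P_\gamma}$.

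\textbf{Forward direction.} This is immediate from \cref{thm:replace_with_essence_safe_if_no_mod_connections}. The graph $P$ is a connected merged subgraph of $G - M$, and $P_\gamma$ has essence $\gamma$. So a \dcocset{} of $G$ of size at most $k$ yields a \dcocset{} of $G'$ of size at most $k'$. That lemma is stated for $P$ without modulator neighbours, but the modulator edges of $P$ are simply dropped in $G'$. Its proof goes through unchanged, provided we add to $S'$ nothing new in $M$ and check components: edges from $M$ only go into $G' - P_\gamma$, where $S'$ agrees with $S$.

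\textbf{Backward direction, setup and easy case.} Let $S'$ be a \dcocset{} of $G'$ with $|S'| \le k'$, and take it minimum. We first normalise $S'$ on $M$. Since $\hat P$ was unmarked, every chunk $X$ with $\conflicts{X}{\hat P}>0$ has $|\mathcal{P}^X_\packingConstant| > |M|+2(d-1)$. Apply \cref{thm:simple_marking_1} in $G'$ to the packing $\mathcal{P}_\packingConstant$ with $\hat P$ replaced by $\hat P'$, the copy of $\hat P$ with $P$ replaced by $P_\gamma$. The other graphs of the packing are untouched, so at least $|M|+2(d-1)$ of them still carry positive conflict for each such chunk. We may therefore assume $S'$ intersects every chunk with positive conflict on $\hat P$. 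By \cref{thm:simple_marking_2} applied in $G$, there is then a minimum \dcocset{} $S_{\hat P}$ of $\hat P$ containing $N(M\setminus S')\cap V(\hat P)$.

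If $N(P)\cap (M\setminus S')=\emptyset$, the backward direction is the symmetric application of \cref{thm:replace_with_essence_safe_if_no_mod_connections}, with $P_\gamma$ and $P$ swapped. Here $P_\gamma$ is a connected merged subgraph of $G'-M$ with essence $\gamma$, and re-inserting $P$ creates no edges to unselected modulator vertices.

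\textbf{Backward direction, main case.} Otherwise some $v\in N(P)\cap(M\setminus S')$ exists. By the choice of $P$, $v$ has $d$ neighbours in $\hat P$ left of $P$ and $d$ right of $P$; these persist in $\hat P'$. Since $v\notin S'$, $S'$ must select at least one of the neighbours on each side, say $a$ (left) and $b$ (right). Let $P^{*\prime}$ be the part of $\hat P'$ strictly between $a$ and $b$, and let $P^*$ be the corresponding part of $\hat P$ with $P_\gamma$ replaced by $P$. Define
\[S = (S'\setminus V(P^{*\prime}))\cup (S_{\hat P}\cap V(P^*)).\]

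Validity of $S$: $S_{\hat P}$ covers all neighbours of unselected modulator vertices inside $P^*$. The cut vertices $a,b$ are in $S$ and isolate $P^*$ within its spine. Hence every component of $G-S$ meeting $P^*$ lies inside $P^*$, where $S_{\hat P}$ is a solution; this needs care at pendants of $a$ and $b$, which are harmless since $a,b$ are selected. Components avoiding $P^*$ are unchanged.

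Size of $S$ (the main obstacle): we need $|S_{\hat P}\cap V(P^*)| \le |S'\cap V(P^{*\prime})| + \opt{P}-\opt{P_\gamma}$. The plan is to first choose $a$ and $b$ as boundary vertices of merged graphs of the solution-tight packing, by shifting selections with an exchange argument so that $S'$ takes the rightmost spine vertex of each packed graph. Then $\opt{P^*}=\opt{P^{*\prime}}+\opt{P}-\opt{P_\gamma}$ by additivity of solution-tight packings. Moreover $S_{\hat P}\cap V(P^*)$ can be taken optimal for $P^*$, because $S_{\hat P}$ is minimum for $\hat P$ and meets each packed graph exactly once. Finally $|S'\cap V(P^{*\prime})|\ge \opt{P^{*\prime}}$, which gives $|S|\le |S'|+\opt{P}-\opt{P_\gamma}\le k$. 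If that alignment fails, fall back to counting packed graphs of $\hat P'$ hit at least once by $S'$ against those in $P^*$ via \cref{thm:packing_coc_duality}.
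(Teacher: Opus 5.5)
\textbf{Gap: pendant $a$ or $b$ in the main case.} Your forward direction, the normalisation of $S'$ via the two marking lemmas, the choice of $S_{\hat P}$, and the case $N(P)\cap(M\setminus S')=\emptyset$ are fine. The gap is in the validity of $S$ in the main case. You cut $\hat P$ exactly at the selected neighbours $a,b$ of $v$ and claim that $a,b$ ``isolate $P^*$ within its spine''. Your remark about ``pendants of $a$ and $b$'' shows you treat $a,b$ as spine vertices. But $a,b$ are arbitrary neighbours of $v$ in $\hat P$ and may be pendants, and selecting a pendant separates nothing on the spine.

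Concretely, let $a$ be a pendant with parent $p_a$, lying in the packed graph $P_{\mu(a)}$ of $\hat P$. Then $P^{*\prime}$ starts right of $p_a$, still inside $P_{\mu(a)}$. So $p_a$ and everything to its left follow $S'$, while the rest of $P_{\mu(a)}$ follows $S_{\hat P}$. Being minimum and containing $a$, $S_{\hat P}$ selects no other vertex of $P_{\mu(a)}$. Nothing in your argument rules out the following: $S'$ leaves $p_a$ unselected with its component reaching to the left, and cuts the spine by selecting a second vertex of $P_{\mu(a)}$ to the right of $p_a$. Then in $G-S$ the component of $p_a$ glues the $S'$-governed left part to the entirely unselected right part of $P_{\mu(a)}$, and can exceed $d$.

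\textbf{The paper's fix.} Do not cut at $a,b$ themselves. Instead take $P^*$ to be the union of the packed graphs of $\hat P$ strictly between $P_{\mu(a)}$ and $P_{\mu(b)}$. Since $S_{\hat P}\cap V(P_{\mu(a)})=\{a\}$, if $a$ is a pendant then $P_{\mu(a)}-a$ is connected with at least $d$ vertices. This forces $S_{\hat P}$ to contain the leftmost spine vertex of $P^*$, and symmetrically on the right. Hence every component of $G-S$ meeting $P^*$ lies inside $\tilde P = P_{\mu(a)}\cup P^*\cup P_{\mu(b)}$. On $\tilde P$ we have $S\supseteq S_{\hat P}$, because $a,b\in S'$ lie outside $P^{*\prime}$, so such a component has at most $d$ vertices.

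\textbf{Size bound.} With this choice the size bound, which you flagged as the main obstacle, is immediate. $|S_{\hat P}\cap V(P^*)|$ is the number of packed graphs in $P^*$. $P^{*\prime}$ is a merged graph made of $\opt{P^*}-\opt{P}+\opt{P_\gamma}$ packed graphs, and $S'$ must hit each of them. Your primary plan of choosing $a,b$ at packing boundaries by shifting $S'$ onto rightmost spine vertices is unnecessary. It is also unjustified: $a,b$ are determined by $N(v)$, and $S'$ must keep $a$, $b$ and every other neighbour of an unselected modulator vertex.
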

\begin{proof}
    Let $(G,d,k,M)$ be the instance before performing the rule, and let $(G',d,k',M)$ be the instance after performing the rule.
    Let $P$ be the caterpillar graph that is replaced with $P'$, where $\gamma$ is the essence of $P$ and of $P'$.
    Let $C$ be the connected component of $G - M$ that contains $P$, and let $C'$ be the connected component of $G' - M'$ that contains $P'$.
    We prove both directions of safety individually.

    \begin{claim}
        \label{claim:replace_by_essence_safety_forward}
        If the input instance is a yes-instance, then the output instance is a yes-instance.
    \end{claim}
    \begin{claimproof}
        Assume that the input instance is a yes-instance, and let $S$ be an \dcocset{} of $G$ of size at most $k$.
        Then it follows directly from \cref{thm:replace_with_essence_safe_if_no_mod_connections} that there is a \dcocset{} $S'$ of size at most $k'$ for $G'$.
    \end{claimproof}

    We can now proceed to the backward direction of the proof.
    In the forward direction, it was very convenient that $P'$ has no edges to the modulator, as that immediately allowed us to use \cref{thm:replace_with_essence_safe_if_no_mod_connections} however, as $P$ can have such edges, this direction is a bit more complicated to prove.

    \begin{claim}
        \label{claim:replace_by_essence_safety_backward}
        If the output instance is a yes-instance, then the input instance is a yes-instance.
    \end{claim}
    \begin{claimproof}
        Let $S'$ be a \dcocset{} of $G'$ of size at most $k'$.
        Set $M' = N_G(P) \cap (M \setminus S')$, that is, $M'$ is the set of unselected modulator vertices that $P$ has edges to.

        \proofsubparagraph*{Case 1: $M' = \emptyset$:}
        Set $\tilde G' = G' - (S' \cap M)$, and $\tilde G = G - (S' \cap M)$.
        Clearly, $S' \cap V(\tilde G')$ is a \dcocset{} of $\tilde G'$ of size at most $k' - |S' \cap M'|$.
        Moreover, since $M' = \emptyset$ graph $P$ has no neighbors in $\tilde G$.
        Hence, we can create $\tilde G$ by replacing $P'$ with $P$ in $\tilde G'$.
        Observe that the solution-tight packing of $\tilde G' - M = G' - M$ is almost identical to the solution-tight packing of $\tilde G - M = G - M$, the only difference is that the solution-tight packing of $P$ is replaced by the solution-tight packing of $P'$.
        Therefore, $P'$ is a connected merged subgraph of $\tilde G' - M$.
        By \cref{thm:replace_with_essence_safe_if_no_mod_connections} the graph $\tilde G$ then has a \dcocset{} $\tilde S$ of size at most $k - |S' \cap M'|$.
        Hence, $S = \tilde S \cup (S' \cap M)$ is a \dcocset{} for $G$ of size at most $k$.

        \proofsubparagraph*{Case 2: $M' \not = \emptyset$:}
        Let $\mathcal{P}_\packingConstant$ be the $\packingConstant$-merged packing of $G - M$, and $\mathcal{P}_\smallPackingConstant$ the $\smallPackingConstant$-merged packing of $G - M$.
        Moreover, let $\hat P$ be the graph of $\mathcal{P}_\packingConstant$ that contains $P$, and recall that $P$ is actually a graph of the packing $\mathcal{P}_{\smallPackingConstant}$.

        As $M'$ is nonempty, and each vertex of $M'$ has at least $d$ neighbor in $\hat P$ to the left and right of $P$, there is a rightmost spine vertex of $\hat P$ that is to the left of $P$ in $G$, call this spine vertex $v_\ell$, and a leftmost spine vertex of $\hat P$ that is to the right of $P$ in $G$, call this spine vertex $v_r$.
        Also denote the leftmost spine vertex of $P$ as $w_1$, and the leftmost spine vertex of $P'$ as $w_1'$.
        The rightmost spine vertex of $P$ is denoted as $w_r$, and the rightmost spine vertex of $P'$ as $w_r'$.

        Let $\mathcal{M}$ be the set of graphs of $\mathcal{P}_\packingConstant$ which were marked by the reduction rule.
        Observe that the solution-tight packing of $G' - M$ is obtained from the solution-tight packing of $G - M$ by replacing the solution-tight packing of $P$ with the solution-tight packing of $P'$.
        Therefore, all graphs of $\mathcal{M}$ are merged subgraphs of $G'$.
        In the reduction rule the graph $\hat P$ was not marked, and hence for each chunk $X$ with $\conflicts{X}{\hat P} > 0$ we have that we marked at least $|M| + 2(d-1)$ graphs of $\mathcal{P}_\packingConstant^X = \{T \in \mathcal{P}_\packingConstant \mid \conflicts{X}{T} > 0\}$ which are not $\hat P$.
        Therefore, we have that $|\mathcal{P}^X| > |M| + 2(d-1)$ for each chunk $X$ such that $\conflicts{X}{\hat P} > 0$.
        We now apply \cref{thm:simple_marking_1} for the packing $\mathcal{M}$ and the output instance.
        This yields that we can assume that $S'$ intersects all chunks $X$ with $\conflicts{X}{\hat P} > 0$.
        Then, \cref{thm:simple_marking_2} applied to the set $S' \cap M$ and $\hat P$ on the input instance yields that there is a minimum \dcocset{} $S_{\hat P}$ of $\hat P$ that selects all vertices of $N(M') \cap V(\hat P)$.

        \proofsubparagraph*{Building a small solution for $G$.}
        Recall that each vertex of $M'$ has neighbors to at least $d$ vertices of $\hat P$ which are to the left of $P$, and at least $d$ vertices of $\hat P$ which are to the right of $P$.
        For an arbitrary $m \in M'$, let $L_m$ be any $d$ neighbors of $m$ to the left of $P$ which are part of $\hat P$, and $R_m$ be any $d$ neighbors of $m$ of $\hat P$ which are to the right of $P$.
        Since $m$ is not selected, $S'$ must select at least one vertex of $L_m$, and at least one vertex of $R_m$.
        Let $n_\ell$ be a vertex of $L_m$ that $S'$ selects, and $n_r$ a vertex of $R_m$ that $S'$ selects.

        We know that both $n_\ell$ and $n_r$ are also part of $S_{\hat P}$, as $S_{\hat P}$ selects all vertices of $N(M') \cap V(\hat P)$.
        This common ground between $S'$ and $S_{\hat P}$ is what we exploit to build a small solution $S$ for the input instance.

        Let the solution-tight packing of $\hat P$ be $\{P_1,\dots,P_t\}$, where $P_i$ is to the left of all graphs $P_j$ with $j > i$ for any $i$.
        Recall that the solution-tight packing of $\hat P$ contains all vertices of $\hat P$.
        Then, for any vertex $v$ of $\hat P$, let $\mu(v)$ be the integer such that $v \in V(P_{\mu(v)})$.
        We are now interested in $P^* = \hat P\left[\bigcup_{i \in \range[\mu(n_\ell) + 1]{\mu(n_r) - 1}} V(P_i)\right]$.
        In other words, $P^{*}$ is the subgraph of $\hat P$ consisting of all graphs of $P_1,\dots,P_t$ which are in between the graph containing $n_\ell$ and the graph containing $n_r$.
        Observe that $P^*$ contains $P$ as subgraph.
        Define $P^{*\prime}$ as the graph corresponding to $P^*$ in $G'$, that is, the graph we obtain when replacing $P$ of $P^{*}$ with $P'$.
        Clearly $P^{*\prime}$ is a subgraph of $G'$.
        Because $P$,$P'$,$P^{*}$ all have solution-tight packings that contain all of their vertices, also $P^{*\prime}$ has such a packing.
        In particular, $\opt{P^{*\prime}} = \opt{P^{*}} - \opt{P} + \opt{P'}$.

        We set $S = (S' \setminus V(P^{*\prime})) \cup (V(P^{*}) \cap S_{\hat P})$, and want to now show that $S$ is a \dcocset{} of $G$ of size at most $k$.
        We first argue that $S$ has size at most $k$.
        Observe that $S_{\hat P} \cap V(P^{*})$ is a minimum \dcocset{} of $P^{*}$, so $\abs{V(P^{*}) \cap S_{\hat P}} = \opt{P^{*}}$.
        Furthermore, we know that $\abs{S' \cap V(P^{*\prime})} \geq \opt{P^{*}} - \opt{P} + \opt{P'}$.
        This directly yields that $\abs{S} \leq k$.
        So, all that remains to conclude the proof is showing that $S$ is actually a \dcocset{} of $G$.

        Recall that $P_1,\dots,P_t$ are the graphs of the solution-tight packing of $\hat P$, and therefore any minimum \dcocset{} of $\hat P$ contains exactly one vertex of each such graph $P_i$.
        Using the fact that $S_{\hat P}$ is a minimum \dcocset{} of $\hat P$ that selects all neighbors of $M'$, we see that $S_{\hat P}$ contains only vertex $n_\ell$ of $P_{\mu(n_\ell)}$, implying that $n_\ell$ is the only vertex of $P_{\mu(n_\ell)}$ that has a neighbor in $M'$.
        Similarly, $n_r$ is the only vertex of $P_{\mu(n_r)}$ that has a neighbor in $M'$, and $S_{\hat P}$ contains only vertex $n_r$ of $P_{\mu(n_r)}$.
        Denote the leftmost spine vertex of $P^*$ as $v_\ell^*$ and the rightmost spine vertex of $P^*$ as $v_r^*$.

        Now, consider the case that $n_\ell$ is a pendant.
        Then vertex $v_\ell^*$ must be selected by $S_{\hat P}$, because otherwise $v_\ell^*$ and the at least $d$ unselected vertices of $P_{\mu(n_\ell)}$ already lead to a connected component of size larger than $d$ that lies solely within $\hat P$.
        For the same reason, $S_{\hat P}$ contains $v_r^*$ of $P^{*}$ if $n_r$ is a pendant.

        By construction, we have that $S \cap V(P^{*})$ is a \dcocset{} of $P^{*}$, and also $S \cap V(G - P^{*})$ is a \dcocset{} of $G - P^{*}$.
        So, any connected component of $G - S$ that has size larger than $d$ would have to contain vertices of both $P^{*}$ and $G - P^{*}$.
        Let $Z$ be a connected component of $G - S$ that contains both vertices of $P^{*}$ and of $G - P^{*}$.
        Since $S_{\hat P}$ selects all neighbors of $M'$ in $\hat P$, we have that $S$ selects all vertices of $P^{*}$ that have a neighbor in $M'$.
        This means that $Z$ must contain $v_\ell^*$ or $v_r^*$, as no other vertex of $P^{*}$ can have a neighbor outside $P^{*}$ in $G - S$.

        Assume that $v_\ell^*$ is not in $S$ and consider a path in $G - S$ that leaves $v_\ell^*$ to the left.
        If $n_\ell$ is a pendant, then $v_\ell^*$ is selected, which is not the case by assumption.
        Otherwise, $n_\ell \in S$ is a spine vertex, and $n_\ell$ is the only vertex of $P_{\mu(n_\ell)}$ that has a neighbor in $M'$.
        Therefore,  the path can only contain vertices of $P_{\mu(n_\ell)}$ and vertex $v_\ell^*$: the path cannot go further to the left because spine vertex $n_\ell$ is selected, and the path cannot reach $M'$.
        Similar reasoning shows that a path starting in $v_r^*$ that goes to the right can only contain vertices of $P_{\mu(n_r)}$ and vertex $v_r^*$.
        Since any vertex $w$ of $Z$ has a path to $v_\ell^*$ or $v_r^*$, this shows that all vertices of $Z$ are part of $V(P^*) \cup P_{\mu(n_\ell)} \cup P_{\mu(n_r)}$.

        Now, let $\tilde P = \hat P\left[\bigcup_{i \in \range[\mu(n_\ell)]{\mu(n_r)}} V(P_i) \right]$.
        Graph $\tilde P$ is similar to $P^{*}$ but also contains the graphs $P_{\mu(n_\ell)}$ and $P_{\mu(n_r)}$.
        By the elaborations above, we find that each vertex of $Z$ is a vertex of $\tilde P$.
        Furthermore, we have that $S_{\hat P} \cap V(\tilde P) \subseteq S \cap V(\tilde P)$ by our choice of $S$ and the fact that $n_\ell,n_r \in S_{\hat P} \cap S'$.
        Therefore, the vertices of $Z$ also belong to the same connected component in $\tilde P - S_{\hat P}$ and since $S_{\hat P} \cap V(\tilde P)$ is a \dcocset{} of $\tilde P$ we have $\abs{Z} \leq d$.
    \end{claimproof}
    The lemma follows by combining \cref{claim:replace_by_essence_safety_backward,claim:replace_by_essence_safety_forward}.
\end{proof}

\subsection{The Kernelization Algorithm}

As a last ingredient before providing our kernelization algorithm, we state a result by Xiao \cite{xiaoLinearKernelsSeparating2017a} for \coc{} parameterized by the sum of the solution size $k$ and the integer $d$, which we denote as \cocSolSizePlusD{}.

\begin{theorem}[{Follows from \cite[Theorem 2]{xiaoLinearKernelsSeparating2017a}}]
    \label{thm:xiao_kernel_solution_size}
    \cocSolSizePlusD{} admits a kernel with $\Oh(dk)$ vertices.
\end{theorem}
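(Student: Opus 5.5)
The plan is to obtain the statement from Xiao's result~\cite[Theorem 2]{xiaoLinearKernelsSeparating2017a} rather than reprove it, and to check that its hypotheses and guarantees fit the parameterization $k + d$ as we defined it. Xiao gives, for the component order connectivity problem with $d$ and $k$ both in the input, a polynomial-time algorithm. It outputs an equivalent instance $(G', d, k')$ with $k' \leq k$ and at most about $2dk$ vertices; I do not recall the exact constant, only that the bound is linear in $dk$. This is a crown-decomposition/expansion-lemma style kernel, like the ones for \vc{} mentioned in the introduction. The conclusion $\Oh(dk)$ vertices then holds immediately, and both the new budget $k' \leq k$ and the unchanged $d$ are bounded by the parameter.

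I will verify three points in order. First, the running time must be polynomial in $|V(G)| + d + k$, with no exponent that depends on $d$. I will reread how Xiao's algorithm finds its reducible structures. As I recall, it computes them via flows or matchings in an auxiliary bipartite graph between vertices and connected $(d+1)$-sets, built greedily or by LP. The point to confirm is that it never enumerates all connected subgraphs of size $d+1$, which would cost $n^{\Oh(d)}$. Second, the output must be an instance of the same problem with the same $d$. Third, I will handle edge cases our problem definition allows: a negative $k$, or $d \geq |V(G)|$. For these the kernel outputs a constant-size yes- or no-instance, such as a single vertex with $k = 0$, or $K_{d+1}$ with $k = 0$ for a no-instance.

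The main obstacle is the first point, namely making sure the cited theorem is genuinely polynomial when $d$ is part of the input. The paper states it for general $d$ with a linear kernel, and that strongly suggests this is the case. If some step turned out to be $n^{\Oh(d)}$, my fallback is a different source. One option is the Nemhauser--Trotter-type or weighted expansion-lemma kernels for COC~\cite{DBLP:conf/iwpec/KumarL16,caselCombiningCrownStructures2024}, whose running times are polynomial in $n$ and $d$ and which give $\Oh(dk)$ vertices.
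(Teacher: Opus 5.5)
Your proposal matches the paper, which gives no proof of its own and simply invokes Xiao's Theorem~2. Like you, the paper relies on that kernel leaving $d$ unchanged and running in time polynomial in both $n$ and $d$. One caution about your fallback: as far as I recall, the $2dk$-vertex kernel of Kumar and Lokshtanov runs in $n^{\Oh(d)}$ time, so it would not serve for \cocSolSizePlusD{}, where $d$ is part of the input.
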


We are now ready to describe the kernelization algorithm and prove that reduced instances are small.

\mainThmPolyKernelCaterpillarMod*{}
\begin{proof}
    We can clearly solve \coc{} in polynomial-time on caterpillars, $\mathcal{G}$ is hereditary, and minimal $d$-blocking sets of caterpillar forests have size at most two by \cref{thm:caterpillar_bounded_blocking_sets}, so we can apply the algorithm of \cref{thm:remove_components_quickly_d_non_constant} to reduce the number of connected components.
    Similarly, by \cref{thm:replace_by_essence_graph_rule_poly_time} we can execute \cref{rule:replace_by_essence_graph} in polynomial-time.

    Let $(G,d,k,M)$ be the input instance.
    If $|M| = 0$, then $G$ is a caterpillar forest, so we can solve \coc{} in polynomial-time, and output a trivial yes-instance or a trivial-no instance depending on whether $(G,d,k,M)$ is a yes-instance or a no-instance.

    Otherwise, we have $|M| \geq 1$.
    We first apply \cref{rule:replace_by_essence_graph} repeatedly until it is no longer applicable.
    Then we apply the algorithm of \cref{thm:remove_components_quickly_d_non_constant} (which is essentially \cref{rule:remove_connected_components_expansion_lemma}).
    By \cref{thm:replace_by_essence_graph_rule_poly_time} we have that \cref{rule:replace_by_essence_graph} is run a polynomial number of times since each application of \cref{rule:replace_by_essence_graph} reduces the size of the solution-tight packing.
    Neither rule changes the value $d$ or the modulator $M$.
    Let $(G_2,d,k_2,M)$ be the instance at this point.
    If $k_2 < 0$, we output a trivial no-instance.
    Otherwise, let $\mathcal{C}$ be the connected components of $G_2 - M$.
    From \cref{thm:remove_components_quickly_d_non_constant}, we obtain that $\abs{\mathcal{C}} \leq ((d-1) \cdot 2 + 1) \cdot \abs{M}^2 \leq 2d\abs{M}^2$.
    Let $\mathcal{P}_\packingConstant$ be the $\packingConstant$-merged packing of $G_2 - M$, and observe that we have $\abs{\mathcal{P}_\packingConstant} \leq \abs{M}^2 \cdot (\abs{M} + 2(d-1)) \leq \abs{M}^3 + 2d \abs{M}^2$, as otherwise \cref{rule:replace_by_essence_graph} would have still been applicable given that the algorithm of \cref{thm:remove_components_quickly_d_non_constant} does not increase the size of the solution-tight packing.
    Also recall that $\packingConstant = \smallPackingConstant \cdot (\abs{M} \cdot 2d + 1) \leq 6d^4\abs{M}$.

    Let $C$ be an arbitrary component of $\mathcal{C}$.
    Observe that fewer than $\packingConstant \cdot (d+1)$ spine vertices of $C$ are not in some graph of $\mathcal{P}_\packingConstant$, as otherwise, one could strictly increase the size of $\mathcal{P}_\packingConstant$, which is not possible.
    Furthermore, observe that each graph in $\mathcal{P}_\packingConstant$ contains at most $(d+1) \cdot \packingConstant$ spine vertices.
    Therefore, the number of vertices of the spine of $G_2 - M$ is bounded by
    \begin{align*}
        (|\mathcal{C}| + |\mathcal{P}_\packingConstant|)  \cdot (d+1) \cdot \packingConstant & \leq  (2d\abs{M}^2 + \abs{M}^3 + 2d\abs{M}^2)  \cdot (d+1) \cdot 6d^4 \abs{M} \\&\leq 48 d^6 \abs{M}^3 + 12 d^5 \abs{M}^4.
    \end{align*}

    Now, let us set $M' = M \cup V(\spine{G_2 - M})$, that is, we create $M'$ by adding the spine of $G_2 - M$ to the modulator.
    Since $G_2 - M'$ is an independent set, we have that $M'$ is a \dcocset{} of $G_2$.
    Thus, the instance is a yes-instance if $k \geq 48 d^6 \abs{M}^3 + 13 d^5 \abs{M}^4 \geq \abs{M'}$, in which case we output a trivial yes-instance.
    Otherwise, we have that $k$ is at most $48 d^6 \abs{M}^3 + 13 d^5 \abs{M}^4$, therefore we can apply the kernel of \cref{thm:xiao_kernel_solution_size} to create an equivalent instance with $\Oh(d^7 \abs{M}^3 + d^6 \abs{M}^4)$ vertices that our kernel outputs.

    Finally, we observe that since none of the used preprocessing routines change the value $d$ (this also holds for the kernel by Xiao of \cref{thm:xiao_kernel_solution_size}), the algorithm also works for the problem \dcocModToCaterpillars{} where $d$ is a fixed constant.
\end{proof} 
\section{The Essence of Caterpillars and Monoids}
\label{sec:caterpillar_essence}
We now cover the proof of \cref{thm:compute_caterpillars_with_essence}, which states that for any possible essence of a caterpillar, we can quickly compute a small caterpillar of low degree that has such an essence.
Interestingly, using the essence of caterpillars, we can define a certain monoid, and phrase the type of object we seek in algebraic terms.
We begin by defining the essence monoid.
\defEssenceMonoid*{}

It is not difficult to confirm that $(\mathcal{M}_d,\circ, \idFunction)$ is a monoid: Function composition is associative, the identity function is clearly non-decreasing and $\idFunction(0) = 0$ and $\idFunction(d+1) = d+1$, and $\mathcal{M}_d$ is closed under function composition, because the function composition of any non-decreasing function is again non-decreasing.

Regarding the connections to caterpillars, we have already argued that each essence of a caterpillar is a function of $\mathcal{M}_d$ in \cref{thm:basic_essence_properties}.
As we will see, it also holds that for any function $\gamma$ of $\mathcal{M}_d$ there is a caterpillar that has $\gamma$ as its essence that can be computed quickly.
For now, we will disregard this connection to caterpillars, and establish algorithmic properties of the essence monoid itself.

A set $\mathcal{M}_d' \subseteq \mathcal{M}_d$ with $\idFunction \in \mathcal{M}_d'$ is a generator of $(\mathcal{M}_d, \circ, \idFunction)$ if the smallest monoid
$(\hat{\mathcal{M}}_d, \circ, \idFunction)$ with $\mathcal{M}_d' \subseteq \hat{\mathcal{M}}_d$ fulfills $\hat{\mathcal{M}}_d = \mathcal{M}_d$.
In other words, if we only have the generator $\mathcal{M}_d'$, then we can already obtain all of $\mathcal{M}_d$ by utilizing functional composition.

Monoids related to $\mathcal{M}_d$ are well-studied in the mathematical literature.
Of large importance is the monoid over the non-decreasing functions $f: \range{n} \rightarrow \range{n}$ denoted by $\mathcal{O}_n$ \cite{
    howieProductsIdempotentOrderpreserving1973,ahmadubellouniversityQuasiidempotentsFiniteSemigroup2023,gomesRanksCertainSemigroups1992,higginsCombinatorialResultsSemigroups1993,higginsIdempotentDepthSemigroups1994,howieProductsIdempotentsCertain1971}.
Observe that our essence monoid $\mathcal{M}_d$ is isomorphic to the monoid on the function set $\{f \in \mathcal{O}_{d+2} \mid f(1) = 1, f(d+2) = d+2\}$.
It is known that $\mathcal{O}_n$ it can be generated by composing idempotent functions \cite{howieProductsIdempotentsCertain1971,howieProductsIdempotentOrderpreserving1973,gomesRanksCertainSemigroups1992,higginsIdempotentDepthSemigroups1994},
or quasi-idempotent functions (and the identity function) \cite{ahmadubellouniversityQuasiidempotentsFiniteSemigroup2023}.\footnote{A function $f$ is idempotent if $f = f \circ f$, and quasi-idempotent if it is not idempotent but $f \circ f$ is.}
A monoid that is isomorphic to ours has appeared in work by Kudryavtseva and Mazorchuk \cite[Section 6.7]{kudryavtsevaPartializationCategoriesInverse2008}, who also exhibit a small generator for it.
Next, we define a small set of simple functions, that we will then show to be a generator of the essence monoid.

\defBasicFunction*{}

Now, we not only show that the set of basic functions is a generator of the essence monoid, but that we can actually obtain any function of the monoid by composing a few basic functions.
For this purpose, we first introduce a set of \emph{advanced functions}, these are still relatively simple functions which are later built from basic functions.

\begin{definition}[Advanced Functions]
    Let $d$ be a positive integer. For any $i \in \range[1]{d}$ let $\incFunction_i: \range[0]{d+1} \rightarrow \range[0]{d+1}$ be the function
    \begin{align*}
        \incFunction_i(x) = \begin{cases}
                                x+1 & \text{if $x = i$}, \\
                                x   & \text{otherwise.}
                            \end{cases}
    \end{align*}

    For any $i,j \in \range[1]{d+1}$ with $i < j$ define $\incFunction_{i,j}:  \range[0]{d+1} \rightarrow \range[0]{d+1}$ as
    \begin{align*}
        \incFunction_{i,j}(x) = \begin{cases}
                                    j & \text{if $i \leq x \leq j$}, \\
                                    x & \text{otherwise.}
                                \end{cases}
    \end{align*}

    For any $i,j \in \range[0]{d}$ with $i < j$ define $\decFunction_{i,j}:  \range[0]{d+1} \rightarrow \range[0]{d+1}$ as
    \begin{align*}
        \decFunction_{i,j}(x) = \begin{cases}
                                    i & \text{if $i \leq x \leq j$}, \\
                                    x & \text{otherwise.}
                                \end{cases}
    \end{align*}
    We call the functions $\incFunction_i$ $(i \in \range[1]{d})$, $\incFunction_{i,j}$ $(i,j \in \range[1]{d+1}, i < j)$, and $\decFunction_{i,j}$ $(i,j \in \range[0
        ]{d}, i < j)$ advanced functions (relative to $d$).
\end{definition}

The approach we use to compose the functions of the essence monoid is based on the techniques of the abstract algebra literature \cite{higginsIdempotentDepthSemigroups1994,gomesRanksCertainSemigroups1992}.
Recall that $\mathcal{O}_n$ is the monoid of non-decreasing functions over the domain $\range[1]{n}$.
In Higgins \cite{higginsCombinatorialResultsSemigroups1993} it is shown that any function $g \in \mathcal{O}_n$ can be composed from $\Oh(n^2)$ idempotents of $\mathcal{O}_n$ of defect one.
These are functions $g \in \mathcal{O}_n$ such that $g = g \circ g$ and such that there is only a single input value $x$ with $g(x) \not = x$.
This implies that for such a function $g$ with $g(x) \not = x$ we must have $g(x) \in \{x-1,x+1\}$.
Therefore, the functions $\incFunction_i$ and $\decFunction_i$ have exactly the same form as these functions.
Although Higgins gives the result for non-decreasing functions $g$ where $0, d+1$ are not necessarily fixed-points, the central techniques work fine even in our setting.
We provide formal proofs here to keep the paper self-contained and to reconfirm the correctness in our setting.

However, our set of basic functions does not include the functions $\incFunction_i$, thus we first need to compose these functions using basic functions.
For this purpose we utilize the insight Gomes and Howie \cite{gomesRanksCertainSemigroups1992} use in the proof of their Lemma 2.6 to obtain the following \cref{thm:compose_incfunction_i}.

\begin{lemma}
    \label{thm:compose_incfunction_i}
    Let $d$ be a positive integer, and $i \in \range[1]{d}$.
    Then, there are $\ell \leq d$ basic functions $g_1,\dots,g_\ell$ relative to $d$ such that $\incFunction_i = g_1 \circ g_2 \dots \circ g_\ell$.
\end{lemma}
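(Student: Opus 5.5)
We write $\incFunction_i$ as a three-stage composite $D_2 \circ \incFunction \circ D_1$, where $D_1$ and $D_2$ are compositions of functions $\decFunction_j$. In the convention $g_1 \circ \dots \circ g_\ell$, the rightmost function $g_\ell$ is applied first. The shape of this decomposition is forced by two observations.

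\begin{itemize}
\item $\incFunction$ never outputs the value $1$, so $\incFunction$ cannot be applied last. We need decrements after it ($D_2$) to restore the small values.
\item $\incFunction$ identifies $d$ and $d+1$, whereas $\incFunction_i$ keeps them distinct for $i<d$. So before applying $\incFunction$ we must already have moved the values in $\range[i+1]{d}$ down by one. That is the role of $D_1$.
\end{itemize}

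\textbf{The first stage.} We set $D_1 = \decFunction_d \circ \decFunction_{d-1} \circ \dots \circ \decFunction_{i+1}$, so $\decFunction_{i+1}$ is applied first; $D_1$ is the identity if $i=d$. Applying these in increasing order of index, each $\decFunction_j$ moves the current value $j$ to $j-1$. The value $j$ has not yet been touched, because the earlier functions only lowered values $j' < j$ to $j'-1<j$. Hence $D_1$ acts as follows:
\begin{itemize}
\item it fixes $\range[0]{i}$ and $d+1$;
\item it maps each $x\in\range[i+1]{d}$ to $x-1$.
\end{itemize}
In particular it identifies $i$ and $i+1$, which is exactly the identification $\incFunction_i$ performs.

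\textbf{The middle stage.} Next we apply $\incFunction$, giving the composite $\incFunction\circ D_1$. It fixes $0$ and $d+1$. It maps $x\in\range[1]{i-1}$ to $x+1$, and both $i$ and $i+1$ to $i+1$. For $x\in\range[i+2]{d}$ it sends $x \mapsto x-1 \mapsto x$, so these values return to themselves. Only the values $1,\dots,i-1$ are now wrong: each sits one too high, at $2,\dots,i$.

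\textbf{The last stage.} We set $D_2 = \decFunction_i \circ \dots \circ \decFunction_2$, so $\decFunction_2$ is applied first; $D_2$ is empty if $i=1$. By the same increasing-order argument, $D_2$ maps each $y\in\range[2]{i}$ to $y-1$ and fixes every other value. In particular it fixes $0$, all values $\geq i+1$, and $d+1$. Composing the three stages, the result agrees with $\incFunction_i$ at every point of $\range[0]{d+1}$.

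The total number of basic functions is $(d-i)+1+(i-1)=d$, as claimed. The only delicate point is the bookkeeping of the application order inside $D_1$ and $D_2$. Applying the decrements in increasing index order ensures that no value is decremented twice, and that no value $\geq i+1$ is touched by $D_2$. The boundary cases $i=1$ and $i=d$ follow by taking $D_2$ or $D_1$ empty.
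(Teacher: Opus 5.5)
Your proof is correct and uses exactly the paper's decomposition $\incFunction_i = (\decFunction_i\circ\dots\circ\decFunction_2)\circ\incFunction\circ(\decFunction_d\circ\dots\circ\decFunction_{i+1})$, checked pointwise in the same way; dropping empty stages rather than inserting $\idFunction$ keeps the count at exactly $d$. One motivational remark is overstated: the claim that $\incFunction$ ``cannot be applied last'' fails for $i=1$, where $\incFunction_1$ never outputs $1$ either and your own construction applies $\incFunction$ last. This does not affect the proof.
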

\begin{proof}
    Set
    \begin{align*}
        h_1 & = \begin{cases}
                    \decFunction_{d} \circ \dots \circ \decFunction_{i+1} & \text{if $i + 1 \leq d$}, \\
                    \idFunction                                           & \text{otherwise,}
                \end{cases} \\
        h_2 & = \incFunction,                                                                     \\
        h_3 & = \begin{cases}
                    \decFunction_{i} \circ \dots \circ \decFunction_2 & \text{if $i \geq 2$}, \\
                    \idFunction                                       & \text{otherwise.}
                \end{cases}
    \end{align*}
    Observe that $h_1$ decreases all inputs $x \in \range[i+1]{d}$ by exactly one, $h_2$ increases all inputs between $1$ and $d$, and $h_3$ decreases all $x \in \range[2]{i}$ by exactly one.
    All other inputs are mapped to themselves by these functions.
    Set $h^* = h_3 \circ h_2 \circ h_1$.

    Fix any $x \in \range[0]{d+1}$.
    If $x \in \{0,d+1\}$, then $h^*(x) = x$ as desired since $0,d+1$ are fixed-points of all composed functions.

    If $1 \leq x \leq i-1$ we have that $h_1(x) = x$, $(h_2 \circ h_1)(x) = x+1$ and $h^*(x) = x$.

    If $x = i$ we have $h_1(x) = i$, $(h_2 \circ h_1)(x) = i+1 = h^*(x)$.

    If $i +1 \leq x \leq d$ we have $h_1(x) = x-1$, $(h_2 \circ h_1)(x) = x = h^*(x)$.
    Thus, $\incFunction_i = h^*$.
\end{proof}

Now, we continue with the approach of Higgins \cite{higginsIdempotentDepthSemigroups1994} which forms the backbone of the next observations, corollaries and lemmas up until \cref{thm:function_composition_alg}.
First, we can observe that it is easy to compose the functions $\incFunction_{i,j}$ and $\decFunction_{i,j}$ using at most $d$ functions which are either basic, or the function $\incFunction_{i'}$ for some values of $i'$.

\begin{observation}
    \label{obs:compose:inc_i_j}
    Let $d$ be a positive integer.
    For any $i,j \in \range[1]{d+1}$ with $i < j$
    we have $\incFunction_{i,j} = \incFunction_{j-1} \circ \dots \circ \incFunction_{i}$.
    For any $i,j \in \range[0]{d}$ with $i < j$ we have $\decFunction_{i,j} = \decFunction_{i+1} \circ \dots \circ \decFunction_j$.
\end{observation}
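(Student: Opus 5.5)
Both identities can be checked pointwise, by following a single input $x \in \range[0]{d+1}$ through the composition. Recall that the rightmost function in a composition is applied first. First I check the indices are admissible. For $\incFunction_{i,j}$ with $1 \leq i < j \leq d+1$, the functions used are $\incFunction_{i'}$ with $i \leq i' \leq j-1 \leq d$, so all are defined. For $\decFunction_{i,j}$ with $0 \leq i < j \leq d$, the functions used are $\decFunction_{i'}$ with $1 \leq i+1 \leq i' \leq j \leq d$, so again all are basic functions.

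\emph{The increasing identity.} The composition $\incFunction_{j-1} \circ \dots \circ \incFunction_i$ applies $\incFunction_i, \incFunction_{i+1}, \dots, \incFunction_{j-1}$ in this order. I split into three cases.
\begin{enumerate}
\item If $x < i$ or $x > j$, then every $\incFunction_{i'}$ with $i' \in \range[i]{j-1}$ fixes $x$. Hence the composition fixes $x$, matching $\incFunction_{i,j}(x) = x$.
\item If $x = j$, then no function in the chain has index $j$, so $x$ is fixed and the output is $j$.
\item If $i \leq x \leq j-1$, the functions $\incFunction_{i'}$ with $i' < x$ leave $x$ unchanged. Then $\incFunction_x$ sends $x$ to $x+1$, $\incFunction_{x+1}$ sends it to $x+2$, and so on. This is an easy induction: after applying $\incFunction_{i'}$ for $i' \geq x$, the current value is exactly $i'+1$, which is precisely the index of the next function. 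The final value is therefore $j$, matching $\incFunction_{i,j}(x) = j$.
\end{enumerate}

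\emph{The decreasing identity.} The composition $\decFunction_{i+1} \circ \dots \circ \decFunction_j$ applies $\decFunction_j, \decFunction_{j-1}, \dots, \decFunction_{i+1}$ in this order, and the argument is symmetric.
\begin{enumerate}
\item If $x < i$ or $x > j$, then $x$ is fixed by every function in the chain.
\item If $x = i$, then $x$ is fixed, because the smallest index in the chain is $i+1$.
\item If $i < x \leq j$, the functions with index above $x$ leave $x$ alone. Then $\decFunction_x$ moves it to $x-1$, $\decFunction_{x-1}$ moves it to $x-2$, and so on until $\decFunction_{i+1}$ yields $i$.
\end{enumerate}
In all cases the composition agrees with $\decFunction_{i,j}$.

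There is no real obstacle here. The only points requiring care are the order of application in the composition and the boundary cases $x = j$ (increasing case) and $x = i$ (decreasing case). In particular, it is the chosen order of the chain that makes each value get carried along, rather than moved by only one step.
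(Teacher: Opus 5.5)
Your pointwise verification is correct, and it is exactly the routine check the paper leaves implicit (the statement is given as an observation without proof): the order of the chain moves each $x$ in the affected range one step at a time until it reaches $j$ (respectively $i$), while every other input, including the boundary values $x=j$ and $x=i$, is fixed. One cosmetic point: the functions $\incFunction_{i'}$ are advanced rather than basic functions, so the word ``again'' in your index check for the decreasing case is slightly off, but nothing in the argument depends on it.
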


We can summarize the results about the advanced functions in the following simple algorithmic corollary.
\begin{corollary}
    \label{corollary:compose_advanced_functions}
    Let $d$ be a positive integer.
    There is an algorithm that takes an advanced function $f: \range[0]{d+1} \rightarrow \range[0]{d+1}$ relative to $d$ as input and outputs at most $d^2$ basic functions $g_1,\dots,g_\ell$ relative to $d$ such that $f = g_1 \circ g_2 \dots \circ g_\ell$.
    The algorithm runs in time polynomial in $d$.
\end{corollary}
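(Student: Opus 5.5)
The plan is a case split over the three kinds of advanced function, with each case reduced to the two compositions already established. By definition, an advanced function is one of $\incFunction_i$ with $i \in \range[1]{d}$, $\incFunction_{i,j}$ with $1 \leq i < j \leq d+1$, or $\decFunction_{i,j}$ with $0 \leq i < j \leq d$. The algorithm first reads off which of these forms $f$ has. This takes time polynomial in $d$: we evaluate $f$ on all $d+2$ inputs and compare against the three templates for every admissible choice of $i,j$, of which there are $\Oh(d^2)$.

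\textbf{Case $f = \incFunction_i$.} We output the composition given by \cref{thm:compose_incfunction_i}. It uses at most $d$ basic functions, which is at most $d^2$ since $d \geq 1$. Its factors are the explicit functions $h_1,h_2,h_3$ from that proof, namely $\decFunction_d,\dots,\decFunction_{i+1}$, then $\incFunction$, then $\decFunction_i,\dots,\decFunction_2$. These can be written down directly in time $\Oh(d)$.

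\textbf{Case $f = \decFunction_{i,j}$.} By \cref{obs:compose:inc_i_j} we have $\decFunction_{i,j} = \decFunction_{i+1} \circ \dots \circ \decFunction_j$. This is $j-i \leq d$ factors, and each $\decFunction_t$ with $t \in \range[1]{d}$ is already basic, so we output this list.

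\textbf{Case $f = \incFunction_{i,j}$.} By \cref{obs:compose:inc_i_j} we have $\incFunction_{i,j} = \incFunction_{j-1} \circ \dots \circ \incFunction_i$. This has $j - i \leq d$ factors, each of the form $\incFunction_t$ with $t \in \range[i]{j-1} \subseteq \range[1]{d}$. We replace every factor by its expansion from \cref{thm:compose_incfunction_i}, which has at most $d$ basic functions. Composition is associative, so the concatenation of these lists is a valid composition of $f$, with at most $d \cdot d = d^2$ basic functions in total.

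In every case the output length is at most $d^2$, and the total work is polynomial in $d$.

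The only step needing care is the last case. There we must check that the indices $t$ of the intermediate factors $\incFunction_t$ lie in $\range[1]{d}$, so that \cref{thm:compose_incfunction_i} applies to each of them. This holds because $j \leq d+1$. We must also check that the count multiplies to at most $d^2$ rather than something larger; this holds because $j-i \leq d$.

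We also need the order of composition to be consistent. \cref{thm:compose_incfunction_i} is stated in the form $g_1 \circ \dots \circ g_\ell$, matching the corollary. When splicing the sublists together, we keep the outermost factor $\incFunction_{j-1}$ first.
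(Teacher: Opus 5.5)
Your proposal is correct and is essentially the paper's argument. The paper states this corollary without a separate proof, as a direct summary of \cref{thm:compose_incfunction_i} and \cref{obs:compose:inc_i_j}, and that is exactly your three-way case split, giving at most $d$, $d$, and $d \cdot d = d^2$ basic factors. One detail to tidy: since $\incFunction_i = h_3 \circ h_2 \circ h_1$, the outermost-first sequence $g_1,\dots,g_\ell$ is $\decFunction_i,\dots,\decFunction_2,\incFunction,\decFunction_d,\dots,\decFunction_{i+1}$, not the order you list. Read literally in that convention, your list composes to $h_1 \circ h_2 \circ h_3$, which differs from $\incFunction_i$ at $x=1$ when $i \geq 2$, so output the lemma's composition as stated.
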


We first focus on functions $f$ which map any input $x$ to a value that is at least $x$.
For a function $f$ let $\fixedPoints(f)$ be the number of fixed-points of $f$.

\begin{lemma}
    \label{lemma:compose_plus_function}
    Let $d$ be a positive integer.
    There is an algorithm that takes a non-decreasing function $f: \range[0]{d+1} \rightarrow \range[0]{d+1}$ with
    \begin{enumerate}
        \item $f \not = \idFunction$, and
        \item $f(x) \geq x$ for all $x \in \range[0]{d+1}$, and
        \item $f(0) = 0, f(d+1) = d + 1$,
    \end{enumerate}
    as input and outputs $\ell \leq d+2 - \fixedPoints(f)$ advanced functions $g_1,\dots,g_\ell$ relative to $d$ such that $f = g_1 \circ \dots \circ g_\ell$.
    The algorithm runs in time polynomial in $d$.
\end{lemma}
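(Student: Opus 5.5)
We argue by induction on the number of non-fixed-points of $f$, that is, on $d+2-\fixedPoints(f)$. In each step we peel off a single advanced function of the form $\incFunction_{i,j}$, with $i<j$, so that the remaining function still satisfies the three hypotheses and has strictly more fixed points. The base case is $f=\idFunction$, where we output the empty composition. The hypothesis $f\neq\idFunction$ only guarantees that at least one function is output; the count is unaffected.

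For the peeling step we take $j$ to be the smallest value in the image of $f$ that is not a fixed point, i.e.\ the least $j$ with $f^{-1}(j)\neq\emptyset$ and $f(j)\neq j$. Alternatively, take the largest non-fixed point $x$ and work with its value. We fix the choice concretely as follows. Let $x^*$ be the largest $x$ with $f(x)>x$. This exists and satisfies $1\le x^*\le d$, since $0$ and $d+1$ are fixed and $f\ge \idFunction$ with $f\neq\idFunction$. Put $j=f(x^*)$ and let $i$ be the smallest element of $f^{-1}(j)$, so that $i\le x^*<j$.

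Monotonicity together with $f\ge\idFunction$ gives two facts:
\begin{itemize}
\item $f$ is constant equal to $j$ on $[i,x^*]$;
\item every $y$ with $x^*<y$ is a fixed point, by maximality of $x^*$.
\end{itemize}
In particular $f(j)=j$, and $j\notin f^{-1}(j)\cap[i,x^*]$. We then define $f'$ by $f'(x)=x$ for $x\in[i,x^*]$ and $f'(x)=f(x)$ otherwise. We claim $f=\incFunction_{i,j}\circ f'$ and check it pointwise:
\begin{itemize}
\item On $[i,x^*]$, $\incFunction_{i,j}$ maps $x$ to $j$.
\item For $x<i$ we have $f(x)<j$ by the minimality of $i$ and monotonicity. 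We also need $f(x)<i$, so that $\incFunction_{i,j}$ fixes it.
\item For $x>x^*$ we have $f(x)=x$. For $x^*<x\le j$ we need $\incFunction_{i,j}(x)=x$, but $\incFunction_{i,j}$ sends such $x$ to $j$, so this needs care.
\end{itemize}

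The main obstacle is this interaction with the interval $(x^*,j]$ and with values $f(x)\in[i,j)$ for $x<i$. To repair it I would instead use the first non-fixed point from the left, or compose $\incFunction_{i,j}$ on the \emph{inside}: $f=f'\circ \incFunction_{i,j}$ with a suitable $i,j$. Take $i$ to be the smallest non-fixed point and $j=f(i)$. If $f$ maps all of $[i,j]$ to $j$, then $f=f'\circ\incFunction_{i,j}$, where $f'$ agrees with $f$ outside $[i,j)$ and is the identity on $[i,j)$. Here $f'$ is non-decreasing because $f'(x)=x<j=f(j)$, it has $\ge$ one more fixed point, and $f'\ge\idFunction$.

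If $f$ is not constant on $[i,j]$, the peeling instead uses $\incFunction_i$, writing $f=f''\circ\incFunction_i$ with $f''(i)=f(i)$ and $f''(i+1)=f(i)$. I would simply prove that this two-case step strictly increases the number of fixed points while preserving the hypotheses; that is the real verification work. The inner-composition form also matches the output order $g_1\circ\dots\circ g_\ell$, and the bound $\ell\le d+2-\fixedPoints(f)$ follows because each step consumes at least one non-fixed point. Each step costs $\Oh(d)$ time, so the algorithm is polynomial in $d$.
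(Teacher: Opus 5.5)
Your first case ($f(j)=j$, so $f$ is constant on $\range[i]{j}$) is correct. The second case, however, cannot be completed, and the problem is not just deferred verification. Whatever $f''$ is, $f''\circ\incFunction_i$ takes the same value at $i$ and at $i+1$, because $\incFunction_i$ sends $i$ to $i+1$ and fixes $i+1$. More generally, any inner factor $\incFunction_{a,b}$ or $\decFunction_{a,b}$ forces $f$ to be constant on $\range[a]{b}$. Your case hypothesis does not give $f(i)=f(i+1)$.

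For instance, take $d=2$ and $f(0)=0$, $f(1)=2$, $f(2)=3$, $f(3)=3$. Then $i=1$, $j=2$, $f(j)\neq j$, and the level set of $i$ is just $\{1\}$. So no inner advanced factor can move $i$ at all, and your step, which is built around $i$, has nothing to peel. As written, setting $f''(i)=f(i)$ with $f''=f$ elsewhere would even give $f''=f$, so no fixed point is gained. The correct factorization here is $f=\incFunction_{1,2}\circ\incFunction_{2,3}$, with the factor belonging to $i$ on the \emph{outside}.

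The missing idea is the order in which level sets are peeled: the factor for the level set with the smallest value belongs on the outside, and the one with the largest value on the inside. Your two attempts each used the wrong side: first the largest level set on the outside, then the smallest one on the inside.

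The paper does this in one shot rather than inductively. It partitions $\range[0]{d+1}$ into the level sets of $f$, discards the singleton fixed points, and sorts the remaining sets $X_1,\dots,X_t$ by decreasing value. It then checks pointwise that $f=\incFunction_{\min X_t,f(X_t)}\circ\dots\circ\incFunction_{\min X_1,f(X_1)}$. A point of $X_s$ is not moved by the factors of larger value, is sent to $f(X_s)$ by its own factor, and that value is not moved by the factors of smaller value; fixed points are never moved.

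Your induction can be repaired without any case split. Keep $i$ and $j=f(i)$, but write $f=\incFunction_{i,j}\circ f'$, where $f'$ is the identity on the level set $X=f^{-1}(j)=\range[i]{\max X}$ and agrees with $f$ elsewhere. This works because:
points below $i$ are fixed and stay below $i$;
points of $X$ lie in $\range[i]{j}$ and are sent to $j$;
points above $\max X$ have image larger than $j$ and are untouched.
Then $f'$ is non-decreasing, satisfies $f'\geq\idFunction$, fixes $0$ and $d+1$, and has $i$ as an additional fixed point. This yields $\ell\le d+2-\fixedPoints(f)$.
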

\begin{proof}
    Let $\mathcal{X}$ be a partition of $\range[0]{d+1}$ such that for all $x,y \in \range[0]{d+1}$ we have that $x,y$ are in the same partition of $\mathcal{X}$ if and only if $f(x) = f(y)$.
    For $X \subseteq \range[0]{d+1}$ we define $f(X) = y$ if $f(x) = y$ for all $x \in X$.

    Now, remove those sets $X$ from $\mathcal{X}$ where $\min X = f(X)$ to create $\mathcal{X'}$, that is, $\mathcal{X}' = \{X \in \mathcal{X} \mid \min X \neq f(X)\}$.
    We will briefly argue that $|\mathcal{X}'| \neq \emptyset$.
    For this purpose, consider some $X \in \mathcal{X} \setminus \mathcal{X}'$, that is, some set $X \in \mathcal{X}$ with $f(X) = \min X$.
    We argue that we must have $|X| = 1$.
    Otherwise, there is some $x \in X \setminus \{\min X\}$.
    However, then $f(x) = f(X) = \min X < x$, which contradicts that $f(y) \geq y$ for all domain elements $y$.
    Therefore, we have $|X| = 1$.
    Now, if all sets of $\mathcal{X}$ have size one, then we must have $f = \idFunction$.
    As this is not the case, some set of $\mathcal{X}$ has size at least two, and is part of $\mathcal{X}'$.

    Next, we argue that we do not have $\min X = 0$ for any $X \in \mathcal{X}'$.
    Assume otherwise.
    Then, from $f(0) = 0$ we would obtain $\min X = f(X) = 0$, contradicting that $X \in \mathcal{X}'$.

    Now, order $\mathcal{X}'$ by decreasing values of $f$, that is, $\mathcal{X}' = \{X_1,\dots,X_t\}$ such that $f(X_i) > f(X_j)$ for all $i < j$.
    Set $h = \incFunction_{\min X_t,f(X_t)} \circ \dots \circ \incFunction_{\min X_1,f(X_1)}$, and observe that this function is well-defined by the arguments of the previous paragraph.

    Fix some $x \in \range[0]{d+1}$.
    Function $\incFunction_{j_1,j_2}$ only changes the input $x$ if $x \in \range[j_1]{j_2 - 1}$.
    Using our choice of $h$ this would imply that for some set $X_j$ we have that $\min X_j \leq x < f(X_j)$ if the function $ \incFunction_{\min X_j,f(X_j)}$ used for $X_j$ in the composition of $h$ changes the input value $x$.

    We will show that $h(x) = f(x)$.
    First, consider that $f(x) > x$, then $x \in X_i$ for some $X_i \in \mathcal{X}'$ and $f(x) = f(X_i)$.
    Consider an arbitrary set $X_j$ with $j < i$.
    Since we have sorted $\mathcal{X}'$ by decreasing values of $f$, we have that $f(X_j) > f(X_i) = f(x)$.
    Moreover, since $f$ is non-decreasing, we obtain that $\min X_j > x$.
    Therefore, the function $\incFunction_{\min X_j,f(X_j)}$ used for $X_j$ in the function composition does not change the input value $x$.
    So, \[(\incFunction_{\min X_i, f(X_i)} \circ \dots \circ \incFunction_{\min X_1, f(X_1)})(x) = \incFunction_{\min X_i,f(X_i)}(x).\]
    Because $f(y) \geq y$ for all domain elements, we have that $\min X_i \leq x \leq f(X_i)$, which yields $\incFunction_{\min X_i,f(X_i)}(x) = f(X_i) = f(x)$.
    Now, consider some set $X_j$ with $j > i$.
    Then, $f(X_j) < f(X_i)$, and therefore the function $\incFunction_{\min X_j, f(X_j)}$ does not change the input value $f(X_i)$.
    This yields $h(x) = (\incFunction_{\min X_t, f(X_t)} \circ \dots \circ \incFunction_{\min X_i, f(X_i)})(x) = f(x)$.

    Now, consider that $f(x) = x$.
    Then, there is no $X_j \in \mathcal{X}'$ such that the corresponding function $\incFunction_{\min X_j, f(X_j)}$ changes the input value $x$.
    Indeed, this would imply that $\min X_j \leq x = f(x) < f(X_j) = f(\min X_j)$.
    Then, we would have $x \geq \min X_j$ but $f(x) < f(\min X_j)$, which contradicts that $f$ is non-decreasing.
    Thus, $h(x) = f(x)$ as none of the composed functions change the value $x$.

    Finally, we need to argue that we compose at most $d+2 - \fixedPoints(f)$ functions to create $h$.
    To see this, we argue that each fixed-point $x$ of $f$ is either not part of any set of $\mathcal{X}'$, or is part of a set of $\mathcal{X}'$ that contains an additional element $x' \in \range[0]{d+1}$ which is not a fixed-point.
    Assume that there is a fixed-point $x \in X$ for some $X \in \mathcal{X}'$.
    Then, $X$ cannot contain another fixed-point other than $x$ since all elements of $X$ have the same image.
    Moreover, if $\abs{X} = 1$, then we would find that $\min X = x = f(X)$, contradicting that $X \in \mathcal{X}'$.
    Therefore, we have that $X$ contains some further integer $x'$ which is not a fixed-point.
    Thus, any set of $\mathcal{X}'$ contains an integer of $\range[0]{d+1}$ that is not a fixed-point, showing that the size of $\mathcal{X}'$ is at most $d+2 - \fixedPoints(f)$.
\end{proof}

We can obtain an analogous lemma for functions $f$ such that $f(x) \leq x$ for all domain-elements $x$.
The proof is symmetric to the previous one and therefore mostly identical, but we still provide it here for completeness.

\begin{lemma}
    \label{lemma:compose_minus_function}
    Let $d$ be a positive integer.
    There is an algorithm that takes a non-decreasing function $f: \range[0]{d+1} \rightarrow \range[0]{d+1}$ with
    \begin{enumerate}
        \item $f \not = \idFunction$, and
        \item $f(x) \leq x$ for all $x \in \range[0]{d+1}$, and
        \item $f(0) = 0, f(d+1) = d + 1$,
    \end{enumerate}
    as input and outputs $\ell \leq d+2 - \fixedPoints(f)$ advanced functions $g_1,\dots,g_\ell$ relative to $d$ such that $f = g_1 \circ \dots \circ g_\ell$.
    The algorithm runs in time polynomial in $d$.
\end{lemma}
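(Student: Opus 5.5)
The proof mirrors that of \cref{lemma:compose_plus_function}, with the roles of the maximum and minimum of each fibre swapped and $\incFunction_{i,j}$ replaced by $\decFunction_{i,j}$.

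\textbf{Setup.} Let $\mathcal{X}$ be the partition of $\range[0]{d+1}$ into the fibres of $f$, i.e.\ $x,y$ lie in the same set iff $f(x)=f(y)$. Let
\[
\mathcal{X}'=\{X\in\mathcal{X}\mid \max X\neq f(X)\}.
\]
Any $X\in\mathcal{X}\setminus\mathcal{X}'$ has size one: an element $x<\max X$ would satisfy $f(x)=\max X>x$, contradicting $f(x)\le x$. Since $f\neq\idFunction$, some fibre has size at least two, so $\mathcal{X}'\neq\emptyset$. No $X\in\mathcal{X}'$ has $\max X=d+1$, because $f(d+1)=d+1$. Hence for every $X\in\mathcal{X}'$ we have $0\le f(X)<\max X\le d$, and $\decFunction_{f(X),\max X}$ is a well-defined advanced function.

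\textbf{Construction.} Order $\mathcal{X}'=\{X_1,\dots,X_t\}$ by increasing value of $f$, so $f(X_i)<f(X_j)$ for $i<j$. Set
\[
h=\decFunction_{f(X_t),\max X_t}\circ\dots\circ\decFunction_{f(X_1),\max X_1},
\]
so the function for $X_1$ is applied first. A factor $\decFunction_{a,b}$ changes an input $x$ only if $a<x\le b$.

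\textbf{Case $f(x)<x$.} Then $x\in X_i$ for some $X_i\in\mathcal{X}'$.
\begin{itemize}
\item For $j<i$ we have $f(X_j)<f(x)$, so by monotonicity $\max X_j<x$. Hence the $X_j$-factor leaves $x$ unchanged.
\item The $X_i$-factor maps $x$ to $f(X_i)=f(x)$, since $f(X_i)\le x\le\max X_i$.
\item For $j>i$ we have $f(X_j)>f(X_i)$, so the $X_j$-factor fixes the value $f(X_i)$.
\end{itemize}
Thus $h(x)=f(x)$.

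\textbf{Case $f(x)=x$.} Suppose some factor moved $x$, i.e.\ $f(X_j)<x\le\max X_j$. Monotonicity then gives $x=f(x)\le f(\max X_j)=f(X_j)<x$, a contradiction. So no factor moves $x$, and $h(x)=x=f(x)$.

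\textbf{Counting.} Each $X\in\mathcal{X}'$ contains at most one fixed point, since all its elements share one image. If it contained only that fixed point, then $\max X=f(X)$, contradicting $X\in\mathcal{X}'$. So every $X\in\mathcal{X}'$ contains a non-fixed point, and these are distinct across sets. Hence $t\le d+2-\fixedPoints(f)$.

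\textbf{Running time.} Computing the fibres, sorting them and outputting the list is clearly polynomial in $d$.

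The only step needing care is the choice of composition order, so that earlier factors never move inputs belonging to later fibres and later factors never move values already produced. Sorting by increasing $f$ is the mirror of the decreasing order used in \cref{lemma:compose_plus_function}, and it resolves this.
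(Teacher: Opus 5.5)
Your proposal is correct and follows the paper's proof essentially step for step. It uses the same fibre partition and the same set $\mathcal{X}'$, the same composition of $\decFunction_{f(X_i),\max X_i}$ ordered by increasing value of $f$, the same two-case verification ($f(x)<x$ versus $f(x)=x$), and the same count of $|\mathcal{X}'|$ via distinct non-fixed points.
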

\begin{proof}
    Let $\mathcal{X}$ be a partition of $\range[0]{d+1}$ such that for all $x,y \in \range[0]{d+1}$ we have that $x,y$ are in the same partition of $\mathcal{X}$ if and only if $f(x) = f(y)$.
    For $X \subseteq \range[0]{d+1}$ we define $f(X) = y$ if $f(x) = y$ for all $x \in X$.

    Now, remove those sets $X$ from $\mathcal{X}$ where $\max X = f(X)$ to create $\mathcal{X'}$, that is, $\mathcal{X}' = \{X \in \mathcal{X} \mid \max X \neq f(X)\}$.
    We will briefly argue that $|\mathcal{X}'| \neq \emptyset$.
    For this purpose, consider some $X \in \mathcal{X} \setminus \mathcal{X}'$, that is, some set $X \in \mathcal{X}$ with $f(X) = \max X$.
    We argue that we must have $|X| = 1$.
    Otherwise, there is some $x \in X \setminus \{\max X\}$.
    However, then $f(x) = f(X) = \max X > x$, which contradicts that $f(y) \leq y$ for all domain elements $y$.
    Therefore, we have $|X| = 1$.
    Now, if all sets of $\mathcal{X}$ have size one, then we must have $f = \idFunction$.
    As this is not the case, some set of $\mathcal{X}$ has size at least two, and is part of $\mathcal{X}'$.

    Next, we argue that we do not have $\max X = d+1$ for any $X \in \mathcal{X}'$.
    Assume otherwise.
    Then, from $f(d+1) = d+1$ we would obtain $\max X = f(X) = d+1$, contradicting that $X \in \mathcal{X}'$.

    Now, order $\mathcal{X}'$ by increasing values of $f$, that is, $\mathcal{X}' = \{X_1,\dots,X_t\}$ such that $f(X_i) < f(X_j)$ for all $i < j$.
    Set $h = \decFunction_{f(X_t),\max X_t} \circ \dots \circ \decFunction_{f(X_1), \max X_1}$, and observe that this function is well-defined by the arguments of the previous paragraph.

    Fix some $x \in \range[0]{d+1}$.
    Function $\decFunction_{j_1,j_2}$ only changes the input $x$ if $x \in \range[j_1+1]{j_2}$.
    Using our choice of $h$ this would imply that for some set $X_j$ we have that $f(X_j) < x \leq \max X_j$ if the function $\decFunction_{f(X_j), \max X_j}$ used for $X_j$ in the composition of $h$ changes the input value $x$.

    We will show that $h(x) = f(x)$.
    First, consider that $f(x) < x$, then $x \in X_i$ for some $X_i \in \mathcal{X}'$ and $f(x) = f(X_i)$.
    Consider an arbitrary set $X_j$ with $j < i$.
    Since we have sorted $\mathcal{X}'$ by increasing values of $f$, we have that $f(X_j) < f(X_i) = f(x)$.
    Moreover, since $f$ is non-decreasing, we obtain that $\max X_j < x$.
    Therefore, the function $\decFunction_{f(X_j),\max X_j}$ used for $X_j$ in the function composition does not change the input value $x$.
    So, \[(\decFunction_{f(X_i),\max X_i} \circ \dots \circ \decFunction_{f(X_1), \max X_1})(x) = \decFunction_{f(X_i),\max X_i}(x).\]
    Because $f(y) \leq y$ for all domain elements, we have that $f(X_i) \leq x \leq \max X_i$, which yields $\decFunction_{f(X_i),\max X_i}(x) = f(X_i) = f(x)$.
    Now, consider some set $X_j$ with $j > i$.
    Then, $f(X_j) > f(X_i)$, and therefore the function $\decFunction_{f(X_j), \max X_j}$ does not change the input value $f(X_i)$.
    This yields $h(x) = (\decFunction_{f(X_t), \max X_t} \circ \dots \circ \decFunction_{f(X_i),\max X_i})(x) = f(x)$.

    Now, consider that $f(x) = x$.
    Then, there is no $X_j \in \mathcal{X}'$ such that the corresponding function $\decFunction_{f(X_j),\max X_j}$ changes the input value $x$.
    Indeed, this would imply that $f(\max X_j) = f(X_j) < x = f(x) \leq \max X_j$.
    Then, we would have $x \leq \max X_j$ and $f(x) > f(\max X_j)$, which contradicts that $f$ is non-decreasing.
    Thus, $h(x) = f(x)$ as none of the composed functions change the value $x$.

    Finally, we need to argue that we compose at most $d+2 - \fixedPoints(f)$ functions to create $h$.
    To see this, we argue that each fixed-point $x$ of $f$ is either not part of any set of $\mathcal{X}'$, or is part of a set of $\mathcal{X}'$ that contains an additional element $x' \in \range[0]{d+1}$ which is not a fixed-point.
    Assume that there is a fixed-point $x \in X$ for some $X \in \mathcal{X}'$.
    Then, $X$ cannot contain another fixed-point other than $x$ since all elements of $X$ have the same image.
    Moreover, if $\abs{X} = 1$, then we would find that $\max X = x = f(X)$, contradicting that $X \in \mathcal{X}'$.
    Therefore, we have that $X$ contains some further integer $x'$ which is not a fixed-point.
    Thus, any set of $\mathcal{X}'$ contains an integer of $\range[0]{d+1}$ that is not a fixed-point, showing that the size of $\mathcal{X}'$ is at most $d + 2 - \fixedPoints(f)$.
\end{proof}

Now, we can use \cref{lemma:compose_minus_function,lemma:compose_plus_function} to obtain the main result about composing functions of the essence monoid.

\thmFunctionCompositionAlg*{}
\begin{proof}
    If $f = \idFunction$ we output the basic function $\idFunction$.

    Otherwise, we create the following two functions $f^+$ and $f^-$ from $f$:
    \begin{align*}
        f^+(x) = \begin{cases}
                     f(x) & \text{if $f(x) > x$}, \\
                     x    & \text{otherwise.}
                 \end{cases} & \qquad & f^-(x) = \begin{cases}
                                                     f(x) & \text{if $f(x) < x$}, \\
                                                     x    & \text{otherwise.}
                                                 \end{cases}
    \end{align*}

    We argue that $f = f^+ \circ f^-$.
    If $f(x) > x$ this is obvious since $f^-(x) = x$.
    When $f(x) < x$, notice that we must have $f(f(x)) \leq f(x)$ since $f$ is non-decreasing.
    Therefore, $f^+(f^-(x)) = f^+(f(x)) = f(x)$.

    We can apply \cref{lemma:compose_plus_function,lemma:compose_minus_function} to obtain function compositions of $f^+$ and $f^-$ in terms of advanced functions, if one of $f^+$ or $f^-$ is $\idFunction$ we can simply omit the function in the composition.
    Since each $x \in \range[1]{d}$ is a fixed-point of $f^+$ or of $f^-$, and additionally $0,d+1$ are fixed-points of both $f^+$ and $f^-$ we find that the total number of used advanced functions is at most $2(d+2) - \fixedPoints(f^+) - \fixedPoints(f^-) \leq d$.
    By \cref{corollary:compose_advanced_functions} we can compose any advanced function from at most $d^2$ basic functions, which concludes the proof.
\end{proof}

Next, we want to create the connection between these functions and the caterpillars we use.
Concretely, we first show that for any $d$ and any basic function (relative to $d$), there is a simple caterpillar whose essence is the basic function.

\thmBasicFunctionCaterpillarAlg*{}
\begin{proof}
    There are three different types of basic functions, $\idFunction$, $\incFunction$, and $\decFunction_i$ (for some $i \in \range[1]{d}$).

    \proofsubparagraph*{The algorithm.}
    Let $f$ be the input basic function.
    Moreover, for any integers $i,j$ with $i \leq j$, let $P^i_j$ be the caterpillar created by taking a path $v_1,\dots,v_j$ on $j$ vertices, and adding a pendant to $v_i$.
    The spine of $P_i$ is $v_1,\dots,v_j$.
    The algorithm proceeds in three cases depending on $f$.
    \begin{itemize}
        \item If $f = \idFunction$, the algorithm simply outputs the path on $d+1$ vertices, the spine of the caterpillar is the whole path.
        \item If $f = \incFunction$, the algorithm outputs the caterpillar $P^{d+1}_{d+1}$.
        \item If $f = \decFunction_i$ for some $i \in \range[1]{d}$, then the algorithm outputs the caterpillar $P^{d-i+1}_d$.
    \end{itemize}
    It is easy to see that one can compute these caterpillars in time linear in $d$, and also check which of the basic functions $f$ is in time linear in $d$.
    Hence, the algorithm has linear running time in $d$ overall.

    \proofsubparagraph*{Correctness of the algorithm.}

    It remains to proof that the caterpillars the algorithm outputs actually have the correct essence.
    For any essence $\gamma$ we have $\gamma(d+1) = d+1$ by definition, so we must only show that the essences match the function $f$ for inputs between $0$ and $d$.

    \begin{description}
        \item[$f = \idFunction$:] Then, the algorithm output is a path $v_1,\dots,v_{d+1}$ on $d+1$ vertices.
              Observe that the solution-tight packing of $P$ contains a single graph containing all vertices of $P$.
              Let $\gamma$ be the essence of $P$.
              We can select vertex $v_{d+1}$ to show that $\gamma(0) = 0 = f(0)$.
              If we consider the graph $P_d$, which is $P$ in which $v_1$ has $d$ additional pendants, then selecting $v_1$ shows that $\gamma(d) = d$.

              Let $x \in \range[1]{d-1}$.
              Consider the graph $P_x$ obtained by adding $x$ additional pendants to the leftmost spine vertex of $P$.
              Then, $S = \{v_{d+1 - x}\}$ is a \dcocset{} of $P$, and there is no \dcocset{} of size one that selects a vertex $v_j$ with $j > d+1 - x$.
              Indeed, $P - S$ contains a connected component containing $v_1,\dots,v_{d - x}$, and $x$ pendants of $v_1$, and this components has size exactly $d$.
              This also shows that there cannot be a \dcocset{} of size at most one containing a vertex $v_j$ with $j > d + 1 - x$.
              Furthermore, $ P - S$ contains a connected component containing $v_{d - x + 2}, \dots, v_{d+1}$, this component has size exactly $(d + 1) - (d - x + 2) + 1 = x$, showing that  $\gamma(x) = x = f(x)$.

        \item[$f = \incFunction$:] Then, the algortihm outputs $P = P^{d+1}_{d+1}$ with spine $v_1,\dots,v_{d+1}$.
              Once again, the solution-tight packing of $P$ has size exactly one and selects all vertices of $P$.
              Let $\gamma$ be the essence of $P$.
              We then have $\gamma(0) = 0 = f(0)$, because $\{v_{d+1}\}$ is a \dcocset{} of $P$.

              Consider graph $P_d$ obtained by adding $d$ pendants to $v_1$.
              We show that $\opt{P_d} > 1$.
              Any minimum \dcocset{} $S$ can be assumed to select $v_1$ due to the $d$ pendants of $v_1$.
              However, if $S$ selects no other vertex, then $v_2,\dots,v_{d+1}$ and the pendant of $v_{d+1}$ form a connected component of size $d+1$, so $S$ would not be a \dcocset{} at all.
              As $\opt{P_d} > \opt{P}$ we have $\gamma(d) = d+1 = f(d)$, as desired.

              Now, consider any $x \in \range[1]{d-1}$.
              Then, $S = \{v_{d+1 - x}\}$ is a \dcocset{} of $P_x$ of size one.
              Indeed, $P_x - S$ contains a connected component containing $v_1,\dots,v_{d-x}$ and $x$ pendants of $v_1$, this component has size exactly $d$.
              This also shows that there is no \dcocset{} of size at most one selecting a vertex $v_j$ with $j > d + 1 - x$.
              The other connected component of $P_x - S$ contains $v_{d - x + 2},\dots,v_{d+1}$, as well as the additional pendant of $v_{d+1}$.
              This connected component then has size exactly $x + 1 = \gamma(x) = f(x)$.

        \item[$f = \decFunction_i$ for some $i \in \range{d}$:]
              In this case the output was $P = P^{d-i + 1}_d$ with spine $v_1,\dots,v_d$.
              Again, it is not difficult to see that the solution-tight packing of $P$ contains all its vertices in a single graph.
              For any $x \geq 0$ the graph $P_x$ is the same as $P$ where $v_1$ get $x$ additional pendants.
              Let $\gamma$ be the essence of $P$.
              Then, we can select $v_d$ to see that $\gamma(0) = 0$.

              \proofsubparagraph*{$i = d$:}
              Next, we want to cover the case where $i = d$, in which case $d - i + 1 = 1$, and so $v_1$ has a pendant in graph $P$.
              Then, we can select $v_1$ of the graph $P_d$ to show that $\gamma(d) = d-1$, indeed, $P_d - \{v_1\}$ only contains connected components of size one, namely the pendants of $v_1$, as well as the connected component containing $v_2,\dots,v_d$ of size $d-1$.
              For $x = d-1$, we also select $v_1$.
              Since this results in essentially the same connected components as in the case where $x = d$, and $f(d-1) = d-1$, it remains to argue that there is no solution of size one that selects $v_2$.
              And indeed, $v_1$ is in a connected component of size $d+1$ in $P_{d-1} - \{v_2\}$, as $v_1$ has $d$ pendants in $P_{d-1}$.
              Hence, $\gamma(d-1) = f(d-1)$.
              For any $x \in \range[1]{d-2}$, we can select the vertex $v_{d - x}$ of $P_x$, observe that $v_{d - x} \not = v_1$.
              Then, $P_x - \{v_{d - x}\}$ has a connected component containing $v_1,\dots,v_{d -x  -1}$, as well as the $x + 1$ pendants of $v_1$.
              This component has size $d$, so there is no minimum \dcocset{} of $P_x$ selecting a vertex $v_j$ with $j > d - x$.
              Furthermore, $P_x - \{v_{d-  x}\}$ also has a connected component containing $v_{d - x + 1},\dots,v_d$ of size $x = \gamma(x) = f(x)$.

              \proofsubparagraph*{$i < d$:}
              Now, we consider the cases where $i < d$.
              We have already established that $\gamma(0) = 0$.
              Now, consider an arbitrary $x \in \range{i-2}$.
              Then, set $S = \{v_{d - x}\}$, and observe that $d-x > 1$.
              The graph $P_x - S$ contains a connected component containing the vertices $v_1,\dots,v_{d - x - 1}$, and, as $d - x - 1 \geq d - i + 1$, also the additional pendant of $v_{d - i +1}$, as well as the $x$ further pendants of $v_1$.
              This connected component then has size exactly $d$, which also shows there is no solution for $P_x$ of size at most one selecting a vertex $v_j$ with $j > d - x -1$.
              Furthermore, $P_x - S$ contains a connected component containing $v_{d - x + 1},\dots,v_{d}$ of size exactly $x = \gamma(x) = f(x)$.

              Next, consider that $x = i - 1$.
              In this case, we again set $S = \{v_{d-x}\} = \{v_{d - i + 1}\}$.
              Again, observe that $d - x > 1$ as $i < d$.
              Hence, there is a connected component in $P_x - S$. containing the vertices $v_1,\dots,v_{d - i}$, as well as the $i-1$ additional pendants of $v_1$.
              This connected component has size $d - 1$.
              So, we need to argue that there no solution $S'$ selecting a vertex $v_j$ with $j > d - i + 1$.
              And indeed, $S' = \{v_{d - i + 2}\}$ is not a \dcocset{} of $P_x$, as $P_x - S'$ contains a connected component containing $v_1,\dots,v_{d - i +1}$, as well as the $i - 1$ additional pendants on $v_1$, and the additional pendant of $v_{d - i + 1}$, yielding a connected component of size $d+1$.
              Graph $P_x - S$ also contains a connected component of size exactly one, namely the one containing the pendant of $v_{d - i +1}$.
              Finally, $P_x - S$ contains a connected component containing $v_{d - x + 1},\dots,v_d$, and hence $v_d$ is in a connected component of size exactly $x = \gamma(x) = f(x)$.

              Now, consider that $x = i$.
              In this case, we can again set $S = \{v_{d - x + 1}\}$, which means that we select exactly the same vertex as in the previous case, namely vertex $v_{d - i + 1} \not = v_1$.
              It is clear from the elaboration for the previous case that there cannot be a solution for $P_x$ that selects a vertex $v_j$ with $j > d- x + 1$.
              Hence, it remains to argue that $S$ is actually a \dcocset{} of $P_x$.
              And indeed, $P_x$ contains a connected component containing $v_1,\dots,v_{d - i}$, as well as the $x = i$ additional pendants of $v_1$, and that component has size exactly $d$.
              The other connected components are exactly like in the case when we had $x = i-1$, so, we in particular have that $v_d$ is in a connected component of size $i - 1 = x - 1 = \gamma(x) = f(x)$.

              Next, consider the case that $x \in \range[i+1]{d-1}$.
              Then, we set $S = \{v_{d - x + 1}\}$.
              Observe once again that we have $d - x + 1 > 1$.
              In this case, $P_x - S$ has a connected component containing $v_1,\dots,v_{d - x}$, as well as $x$ additional pendants of $v_1$.
              This connected component has size exactly $d$, which also shows that there is no minimum \dcocset{} selecting $v_j$ for any $j > d - x + 1$.
              Moreover, $P_x - S$ has a connected component containing $v_{d - x + 2},\dots,v_d$, and, as $d - x + 2 \leq d - i + 1$, also the additional pendant of $v_{d - i + 1}$.
              Hence, $v_d$ is in a connected component of size exactly $x = \gamma(x) = f(x)$.

              Finally, consider that $x = d$.
              Then, we set $S = \{v_1\}$.
              In this case $P_{d+1} - S$ has connected component of size one (containing the pendants of $v_1$), as well as a connected component containing $v_2,\dots,v_d$, as well as the additional pendant of $v_{d - i + 1}$.
              This connected component has size exactly $d = \gamma(d) = f(d)$.
              It remains to argue that there is no solution of size one selecting $v_2$.
              And indeed, $v_1$ is in a connected component of size $d+1$ in $P_d - \{v_2\}$, as $v_1$ has $d$ pendants in that graph.
    \end{description}
\end{proof}

Now, we show that attaching two caterpillar together by their spine vertices results in a caterpillar whose essence is exactly the function composition of the essences of the individual caterpillars.

\thmCaterpillarSignatureFunctionComposition*{}
\begin{proof}
    Observe that we obtain the solution-tight packing for $C$ by simply taking the union of the solution-tight packings of $C_1$ and $C_2$.
    This directly yields that $\opt{C} = \opt{C_1} + \opt{C_2}$.
    Furthermore, we have $\gamma(d+1) = d+1 = \gamma_1(d+1) = \gamma_2(d+1) = \gamma_2 \circ \gamma_1 (d+1)$, because any essence maps $d+1$ to $d+1$.
    Let $x \in \range[0]{d}$ be an integer, we show that $\gamma(x) = \gamma_2(\gamma_1(x))$ by showing $\gamma(x) \leq \gamma_2(\gamma_1(x))$ and $\gamma(x) \geq \gamma_2(\gamma_1(x))$ in two distinct steps.
    Recall that $(C_i)_x$ (where $i \in \{1,2\}$) is the graph $C_i$ where the leftmost spine vertex of $C_i$ gets an additional $x$ pendants.

    \proofsubparagraph*{$\gamma(x) \leq \gamma_2(\gamma_1(x))$:}
    If $\gamma_1(x) = d+1$, then we also have that $\gamma_2(\gamma_1(x)) = d+1$.
    Moreover, if $\gamma_2(\gamma_1(x)) = d+1$, then we clearly have that $\gamma(x) \leq \gamma_2(\gamma_1(x))$.

    So, now consider that $\gamma_1(x) \leq d$ and $\gamma_2(\gamma_1(x)) \leq d$.
    If $\gamma_1(x) = 0$, let $S_1$ be the minimum solution of $(C_1)_x$ that selects $v_n$, if $\gamma_1(x) > 0$, let $S_1$ be the solution of $(C_1)_x$ in which $v_n$ is in a connected component of $(C_1)_x - S_1$ of size $\gamma_1(x)$.
    Similarly, if $\gamma_2(\gamma_1(x)) = 0$, let $S_2$ be the minimum solution of $(C_2)_{\gamma_1(x)}$ which selects $w_m$, otherwise let $S_2$ be the solution of $(C_2)_{\gamma_1(x)}$ in which $w_m$ is in a connected component of $(C_2)_{\gamma_1(x)} - S_2$ of size $\gamma_2(\gamma_1(x))$.
    We also have that $|S_1| = \opt{C_1}$ and $|S_2| = \opt{C_2}$.
    Now, we claim that $S = S_1 \cup S_2$ is a solution for $C_x$ such that if $\gamma_2(\gamma_1(x)) = 0$, then $w_m$ is selected, and when $\gamma_2(\gamma_1(x)) > 0$, then $w_m$ is in a connected component of size $\gamma_2(\gamma_1(x))$ in $C_x - S$.

    First, we show that $S$ is a \dcocset{} of $C_x$.
    We can assume that $S_1$ and $S_2$ only contain spine vertices of $C_1$ and $C_2$.
    Hence, $S_1$ is a solution of $(C_1)_x$, and $S_2$ is a solution of $C_2$.
    Thus, any connected component of $C_x - S$ that is too large would have to contain both vertices of $(C_1)_x$ and $C_2$.
    Consider a connected component $Z$ of $C_x - S$ containing both vertices of $C_1$ and $C_2$.
    Then, this connected component contains exactly $\gamma_1(x)$ vertices of $C_1$, as it must contain vertex $v_n$.
    Furthermore, $Z$ must contain $w_1$.
    However, as $S_2$ is actually solution for the graph $(C_2)_{\gamma_1(x)}$, $Z$ can contain at most $d - \gamma_1(x)$ vertices of $C_2$, so component $Z$ has size at most $d$ overall.
    It follows that $S$ is a \dcocset{} of $C_x$.

    Vertex $w_m$ is selected only if $\gamma_2(\gamma_1(x)) = 0$, which shows that we also have $\gamma(x) = 0$ in that case.
    Otherwise, $w_m$ is not selected, and $w_m$ is in a connected component of $C_x - S$.
    As we can assume that $S_2$ selects a spine vertex of $C_2$, this connected component is exactly the same connected component that contains $w_m$ in the graph $(C_2)_{\gamma_1(x)} - S_2$, and that component has size $\gamma_2(\gamma_1(x))$.
    So, we have that $\gamma(x) \leq \gamma_2(\gamma_1(x))$ as evidenced by solution $S$.

    \proofsubparagraph*{$\gamma(x) \geq \gamma_2(\gamma_1(x))$:}
    Clearly, if $\gamma(x) = d+1$, we have $\gamma(x) \geq \gamma_2(\gamma_1(x))$, so we shall assume that $\gamma(x) \leq d$ in the rest of the proof.
    Then, there is a solution $S$ with $|S| = \opt{C_1} + \opt{C_2}$ for $C_x$ such that either $S$ selects $w_m$ if $\gamma(x) = 0$, or otherwise such that $w_m$ is in a connected component of $C_x - S$ of size $\gamma(x)$.

    We have that $S \cap V(C_1)$ is a solution for $(C_1)_x$ of size $\opt{C_1}$, because $(C_1)_x$ and $C_2$ are subgraphs of $C_x$, and the size of $S$ is $\opt{C_1} + \opt{C_2}$.
    Hence, $S$ can select $v_n$ only if $\gamma_1(x) = 0$, and otherwise, $v_n$ is in a connected component of $C_x - S$ of size at least $\gamma_1(x)$.

    Now, we claim that it must be the case that $S \cap V(C_2)$ is a solution for $(C_2)_{\gamma_1(x)}$ of size $\opt{C_2}$.
    That $S \cap V(C_2)$ has size $\opt{C_2}$ follows directly from the fact that $|S| = \opt{C_1} + \opt{C_2}$.
    So, all that remains is showing that $S \cap V(C_2)$ is actually a solution for $(C_2)_{\gamma_1(x)}$.
    We show this property by contradiction.
    Assume that $S \cap V(C_2)$ is not a solution for $(C_2)_{\gamma_1(x)}$.
    Then, the connected component $Z$ of $(C_2)_{\gamma_1(x)}$ that contains $w_1$ has size larger than $d$.
    However, in $C_x - S$ there is also a connected component having the same size as $Z$, because the connected component of $w_1$ contains $\gamma_1(x)$ vertices of $C_1$, and these are exactly as many as the additional number of pendants that $w_1$ has in $(C_2)_{\gamma_1(x)}$ compared to $C_2$.
    Hence, $S$ would not be a solution of $C_x$ at all.

    So, we in fact have that $S \cap V(C_2)$ is an optimal solution for $(C_2)_{\gamma_1(x)}$, and the connected component of $C_2 - (S \cap V(C_2))$ that contains $w_m$ has size $\gamma(x)$.
    This shows that $\gamma_2(\gamma_1(x)) \leq \gamma(x)$, as desired.
    We conclude that $\gamma(x) = \gamma_2(\gamma_1(x))$ for all $x \in \range[0]{d+1}$.
\end{proof}

Now, we can easily compute a caterpillar with a specific essence.

\thmComputeCaterpillarEssence*{}
\begin{proof}
    We first utilize \cref{thm:function_composition_alg} to find basic functions $g_1,g_2,\dots,g_\ell$ such that $\gamma = g_\ell \circ g_{\ell-1} \circ \dots \circ g_1$, where $\ell \leq d^3$.

    Then, we use \cref{thm:basic_function_caterpillar_alg} to compute the caterpillar $C_i$ with essence $g_i$ for any $i$.
    Finally, we take the disjoint union of all of these caterpillars, and connect the rightmost spine vertex of $C_i$ to the leftmost spine vertex of $C_{i+1}$ for all $i \in \range{\ell-1}$.
    The spine of the resulting caterpillar is the path starting with the leftmost spine vertex of $C_1$ ending with the rightmost spine vertex of $C_\ell$.

    It is not difficult to confirm that this can be done in time polynomial in $d$, as the algorithms of \cref{thm:function_composition_alg,thm:basic_function_caterpillar_alg} also run in time polynomial in $d$.
    Finally, the resulting caterpillar $C$ indeed has signature $\gamma$ by \cref{thm:caterpillar_signatures_function_composition}.
\end{proof} 
\section{A Kernelization Dichotomy for Small Degree Modulators}
\label{sec:kernels_small_degree_mod}
We now shift our attention to the problem \dcocModToG{} where $\mathcal{G}$ is the class of graphs with maximum degree at most $c$, for some fixed constant $c$.
A modulator to $\mathcal{G}$ is called a degree-$c$ modulator.
For \vc{}, polynomial-kernels are known when $c \leq 2$, and are unlikely to exist for $c \geq 3$ \cite{majumdarPolynomialKernelsVertex2018}.

Any graph with maximum degree at most one only contains connected components of size at most two.
Hence, a degree-$1$ modulator is a \dcocset{} for any $d \geq 2$.
This means that when $d \geq 2$, the kernels for solution-size parameterization already do the trick.
When $d = 1$, the problem \dcoc{}  is \vc{}, and here it is already known that a polynomial-kernel exists when parameterizing by a degree-$c$ modulator for $c \leq 1$ \cite{jansenVertexCoverKernelization2013b,majumdarPolynomialKernelsVertex2018}.
Another argument would be that a degree-$c$ modulator is also a modulator to the class of caterpillars when $c \leq 1$, so our \cref{main_thm:poly_kernel_caterpillar_mod} can be directly employed.

To handle the case of degree-$c$ modulators when $c \geq 3$,
we show that \dcoc{}  is \NP{}-complete on graphs with maximum degree at most three for any value of $d \geq 1$, and thus any kernels parameterized by a degree-$c$ modulator are implausible.

So, especially the case $c = 2$ is of interest.
A graph with maximum degree two only contains connected components which are either paths (and therefore caterpillars), or cycles.
When $d \leq 3$ any \dcocset{} only leaves connected components that have maximum degree at most two, and hence, the size of a minimum degree-$2$ modulator is at most the size of a minumum \dcocset{}.
When $d \geq 4$, the parameters solution size and the size of a smallest degree-$2$ modulator are generally incomparable.

We now show that the problem \dcocModToG{} admits a polynomial kernel for all $d \geq 1$ when $\mathcal{G}$ is the class of graphs with maximum degree at most two.
In fact, we show that even the problem \cocModToGPlusD{} admits a polynomial kernel, and we slightly generalize the results by actually defining $\mathcal{G}$ as the class of graphs whose connected components are cycles or caterpillars.
As a first step, we show that graphs of $\mathcal{G}$ have bounded minimal $d$-blocking set size.
Note that a statement bounding the size of minimal $1$-blocking sets on cycles was (independently) proven in \cite[Theorem 2]{majumdarPolynomialKernelsVertex2018}.

\begin{lemma}
    \label{thm:graphs_with_max_degree_two_bounded_blocking_sets}
    Let $G$ be a graph whose connected components are caterpillars or cycles, and $d \geq 1$ be an integer.
    Then, any minimal $d$-blocking set of $G$ has size at most three.
\end{lemma}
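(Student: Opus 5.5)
Suppose, for contradiction, that $X$ is a minimal $d$-blocking set of $G$ with $|X| \geq 4$. First I reduce to the connected case. If $X$ met two distinct components of $G$, then \cref{thm:blocking_set_connectivity_property} with $X' = \emptyset$ would contradict minimality. So $X$ lies in one component $C$. A minimum \dcocset{} of $G$ is a union of minimum \dcocset{}s of its components, hence $X$ is also a minimal $d$-blocking set of $C$. If $C$ is a caterpillar, \cref{thm:caterpillar_bounded_blocking_sets} gives $|X| \leq 2$, a contradiction. It remains to handle the case that $C$ is a cycle.

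Let $C$ be a cycle and pick $v \in X$. Minimality of $X$ means $X \setminus \{v\}$ is not $d$-blocking. Hence some minimum \dcocset{} $S$ of $C$ contains $X \setminus \{v\}$, and in particular some vertex $u \in X \setminus \{v\}$. Fix such a $u$ with $u \in S$.

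The key observation is that the path $C - u$ behaves well here. Since $S$ is minimum for $C$ and contains $u$, we get $\opt{C} = 1 + \opt{C - u}$; the inequality $\leq$ is trivial and $S \setminus \{u\}$ witnesses $\geq$. Therefore, for any $Y \subseteq V(C) \setminus \{u\}$, the set $Y \cup \{u\}$ lies in a minimum \dcocset{} of $C$ if and only if $Y$ lies in a minimum \dcocset{} of the path $C - u$. It follows that $X \setminus \{u\}$ is a $d$-blocking set of $C - u$, because $X$ is $d$-blocking for $C$.

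I claim it is even minimal. For a proper subset $Y \subsetneq X \setminus \{u\}$, the set $Y \cup \{u\}$ is a proper subset of $X$, so it is not $d$-blocking in $C$. By the equivalence above, $Y$ is then not $d$-blocking in $C - u$. Thus $X \setminus \{u\}$ is a minimal $d$-blocking set of the path $C - u$, which is a caterpillar. \cref{thm:caterpillar_bounded_blocking_sets} now gives $|X \setminus \{u\}| \leq 2$, so $|X| \leq 3$, a contradiction.

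The step that needs care is the equivalence between blocking in $C$ with $u$ forced and blocking in $C - u$. It holds because of the identity $\opt{C} = 1 + \opt{C-u}$ and because $u$ belongs to a minimum solution. The definition of minimality already provides that $u$, which is why no separate argument is needed for it.
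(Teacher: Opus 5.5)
Your proof is correct. The reduction to a single component and the caterpillar case match the paper, but for cycles you take a genuinely different route.

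The paper handles the cycle using only \cref{thm:blocking_set_connectivity_property}. It takes four vertices $v_{\lambda_1},v_{\lambda_2},v_{\lambda_3},v_{\lambda_4}$ of $X$ in cyclic order and sets $X' = \{v_{\lambda_1},v_{\lambda_3}\}$. Removing $X'$ leaves two arcs, one containing $v_{\lambda_2}$ and the other containing $v_{\lambda_4}$, so $X$ is not minimal.

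You instead prove a peeling principle that holds in any graph $H$. If $X$ is a minimal $d$-blocking set of $H$ with $|X| \geq 2$ and $u \in X$, then $u$ lies in some minimum \dcocset{}. Hence $\opt{H} = 1 + \opt{H-u}$, and $X \setminus \{u\}$ is a minimal $d$-blocking set of $H - u$. Deleting a vertex from a cycle leaves a path, so \cref{thm:caterpillar_bounded_blocking_sets} finishes the argument.

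The paper's argument is shorter and purely structural: it uses no optimum values and does not need the caterpillar bound for cycle components. Yours is more reusable. It shows that if $H - u$ lies in a class with minimal $d$-blocking sets of size at most $b$ for every vertex $u$, then the minimal $d$-blocking sets of $H$ have size at most $b+1$. Both arguments give the bound of three, which the paper notes is tight.
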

\begin{proof}
    Let $X$ be a blocking set of $G$ of size at least four.
    If the vertices of $X$ are part of more than one connected component of $G$, then, \cref{thm:blocking_set_connectivity_property} with $X' = \emptyset$ shows that $X$ is not a minimal $d$-blocking set.
    Otherwise, $X$ is a blocking set of a single connected component of $G$.
    Hence, without loss of generality, we assume that $G$ is connected.
    If $G$ is a caterpillar, then \cref{thm:caterpillar_bounded_blocking_sets} shows that $X$ is not a minimal $d$-blocking set.

    Otherwise, $G = v_1,\dots,v_n$ is a cycle.
    Let $v_{\lambda_1},v_{\lambda_2},v_{\lambda_3},v_{\lambda_4}$ be four vertices of $X$ such that $\lambda_i < \lambda_{i+1}$ for all $i \in \range{3}$.
    Then, if we choose $X' = \{v_{\lambda_1},v_{\lambda_3}\}$, observe that removing $X'$ from $G$ results in two distinct connected components, one of which contains $v_{\lambda_2}$, and one of which contains $v_{\lambda_4}$.
    Hence, by \cref{thm:blocking_set_connectivity_property}, $X$ is not minimal.
\end{proof}

We remark that \cref{thm:graphs_with_max_degree_two_bounded_blocking_sets} is optimal in the sense that for any $d$ there are graphs in $\mathcal{G}$ that have a minimal $d$-blocking set of size three.
If $G$ is a cycle on $d+2$ vertices, then, any subset of $V(G)$ of size exactly three is a minimal $d$-blocking set, because any subset of $V(G)$ of size exactly two is an optimal solution of $G$.

Now, we proceed to the polynomial kernel for \cocModToGPlusD{}.
To obtain the kernel, we employ a clever trick which is often used to obtain polynomial kernels for \vc{} \cite{bougeretHowMuchDoes2019,fominVertexCoverStructural2016,holsEliminationDistancesBlocking2022,majumdarPolynomialKernelsVertex2018}.
The trick is that reducing the number of connected components of $G - M$ ($M$ is a modulator to some graph class) is sometimes sufficient for finding a polynomial kernel, because one can then lift a number of vertices to the modulator that is polynomial in the parameter, and obtain an instance of a problem for which a polynomial kernel is already known.

\begin{lemma}
    \label{thm:deg_2_mod_kernel}
    Let $\mathcal{G}$ be the class of graphs whose connected components are cycles or caterpillars.
    Then, \cocModToGPlusD{} admits a polynomial-kernel.
    For any $d \geq 1$ \dcocModToG{} admits a polynomial-kernel.
\end{lemma}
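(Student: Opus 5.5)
The plan follows the ``lift to the modulator'' trick described just before the statement. First, I reduce the number of connected components of $G - M$ with \cref{thm:remove_components_quickly_d_non_constant}. Its hypotheses need checking for $\mathcal{G}$, the class of graphs whose components are cycles or caterpillars.
\begin{itemize}
\item $\mathcal{G}$ is hereditary: an induced subgraph of a cycle is a cycle or a disjoint union of paths, and an induced subgraph of a caterpillar is a caterpillar forest.
\item \coc{} is polynomial-time solvable on $\mathcal{G}$. On caterpillar components this holds via the solution-tight packing (\cref{thm:compute_sol_tight_packing_poly_time}). On a cycle on $n$ vertices, $\opt{C_n} = \lceil n/(d+1) \rceil$ when $n \ge d+1$ and $0$ otherwise, since deleting one vertex leaves a path.
\item Minimal $d$-blocking sets have size at most $b = 3$ by \cref{thm:graphs_with_max_degree_two_bounded_blocking_sets}.
\end{itemize}
This yields an equivalent instance $(G',d,k',M)$ in which $G'-M$ has at most $(3(d-1)+1)\abs{M}^3 \le 3d\abs{M}^3$ components.

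Next I split off the cycle components. Let $Z$ contain one arbitrary vertex from each cycle component of $G'-M$, so $|Z| \le 3d\abs{M}^3$. Removing one vertex from a cycle leaves a path, which is a caterpillar. Hence $M \cup Z$ is a modulator of $G'$ to caterpillar forests, i.e.\ to pathwidth at most one. The tuple $(G',d,k',M\cup Z)$ is then an instance of \cocModToCaterpillarsPlusD{}. It has the same yes/no answer, because the modulator is only extra input and does not affect the question. Its parameter is $|M|+|Z|+d = \Oh(d\abs{M}^3)$, polynomial in $|M|+d$. (If $k'<0$ I output a trivial no-instance first.)

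Finally I apply \cref{main_thm:poly_kernel_caterpillar_mod} to this instance. It produces an equivalent instance with $\Oh(d^7 (d\abs{M}^3)^3 + d^6 (d\abs{M}^3)^4)$ vertices, which is polynomial in $d+|M|$. The overall running time is polynomial. Because no step changes $d$, the same procedure gives a polynomial kernel for \dcocModToG{} for each fixed $d$, with size polynomial in $|M|$. The output should be an instance of the original problem, so I output the kernelized graph with the original modulator intersected with its vertex set, or simply an appropriate modulator. A caterpillar-forest modulator is not automatically a modulator to $\mathcal{G}$, but caterpillar forests lie in $\mathcal{G}$, so it is in fact a valid $\mathcal{G}$-modulator. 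Alternatively, since the Xiao kernel outputs a plain \coc{} instance, I can take the whole vertex set, or the output instance's modulator, as the modulator.

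The step I expect to need the most care is verifying the polynomial-time solvability and blocking-set hypotheses on cycles; this is routine given the explicit formula for $\opt{C_n}$. The other point to get right is checking that the final output is a legitimate instance of \cocModToGPlusD{} with a modulator of polynomial size. Both are minor, and the substantive work is already done by \cref{thm:remove_components_quickly_d_non_constant} and \cref{main_thm:poly_kernel_caterpillar_mod}.
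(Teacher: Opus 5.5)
Your proposal is correct and follows the paper's proof essentially step for step. You reduce the number of components of $G-M$ via \cref{thm:remove_components_quickly_d_non_constant} (hereditary class, polynomial-time solvable, minimal blocking sets of size at most three), move one vertex of each remaining cycle into the modulator, and invoke \cref{main_thm:poly_kernel_caterpillar_mod}; your extra care about what modulator to attach to the final output (e.g.\ the whole vertex set when the last step returns a plain \coc{} instance) is a small refinement that the paper glosses over.
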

\begin{proof}
    Let $(G,d,k,M)$ be the input instance.
    Observe that the problem \dcoc{} can be solved in polynomial-time on graphs of $\mathcal{G}$, because it can be solved in polynomial-time on caterpillars, and a polynomial-time algorithm for cycles is also trivial (just select an arbitrary vertex and then optimally solve the remaining caterpillar).
    Moreover, the graph class $\mathcal{G}$ is hereditary.
    Furthermore, minimal $d$-blocking sets of the considered graph class have size at most three by \cref{thm:graphs_with_max_degree_two_bounded_blocking_sets}.
    Hence, we can apply \cref{thm:remove_components_quickly_d_non_constant}
    to obtain an equivalent instance $(G_2,d,k_2,M)$, such that $G_2 - M$ has at most $((d - 1) \cdot 3 + 1) \cdot \abs{M}^3$ connected components, and
    $k_2 \leq k$.

    Next, from each connected component of $G_2 - M$ that is a cycle, we move an arbitrary vertex of the cycle into the modulator $M$, and call the resulting instance $(G_3,d,k_3,M_3)$.
    This means that $\abs{M_3} \leq ((d - 1) \cdot 3 + 1) \cdot \abs{M}^3 + \abs{M}$, which is polynomial in $\abs{M} + d$.
    Now, each connected component of $G_3 - M_3$ is a caterpillar, and hence $(G_3,d,k_3,M_3)$ is an instance of \cocModToCaterpillarsPlusD{}.
    We apply \cref{main_thm:poly_kernel_caterpillar_mod} to obtain an equivalent instance $(G_4,d,k_4,M_4)$ of
    \cocModToCaterpillarsPlusD{}, where $\abs{V(G_4)}$ is polynomial in $\abs{M_3} + d$, and hence $\abs{V(G_4)}$ is polynomial in $\abs{M} + d$.
    Now, we can simply use $(G_4,d,k_4,M_4)$ as the output instance of \cocModToGPlusD{}.
    The preprocessing algorithm never changes the value of $d$, therefore it also works for \dcocModToG{}.
\end{proof}

On the lower bound side, we utilize the fact that \vc{} is \NP{}-complete on planar graphs with maximum degree three \cite[Lemma 1]{gareyRectilinearSteinerTree1977} to show that \dcoc{}  is also \NP{}-hard on planar graphs with maximum degree three for any $d \geq 2$.
Note that for some values of $d$ results in this direction were already known.
For example, for $d = 2$ \NP{}-hardness on cubic planar graphs \cite{tuVertexCoverP_32013}, and bipartite graphs with maximum degree three \cite{boliacComputingDissociationNumber2004} are known.
However, to the best of our knowledge, \NP{}-hardness on graphs with maximum degree three has not yet been explicitly shown in the literature for \dcoc{} when $d \geq 3$.
We show that the problem is hard, even on planar graphs with maximum degree three by exhibiting a simple polynomial-time reduction.

\begin{lemma}
    \label{thm:d_coc_hard_max_deg_three}
    For any $d \geq 1$, \dcoc{}  is \textnormal{\NP{}}-complete on planar graphs with maximum degree at most three.
\end{lemma}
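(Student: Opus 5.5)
Membership in \NP{} is immediate, so the work is hardness. For $d=1$ the problem is \vc{}, which is \NP{}-complete on planar graphs of maximum degree three \cite[Lemma 1]{gareyRectilinearSteinerTree1977}. For $d \geq 2$ I reduce from that restricted \vc{} by attaching gadgets that make every original vertex ``heavy''. Given a planar graph $H$ with $\Delta(H) \leq 3$ and an integer $k$, build $G$ as follows. For every vertex $v \in V(H)$ attach a pendant path $Q_v$ on $d-1$ fresh vertices, with its first vertex adjacent to $v$. Set $k' = k$.

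The intended effect is that $v$ together with $Q_v$ forms a connected set of exactly $d$ vertices. Hence if $u,v$ are adjacent and both unselected, their component has at least $2d > d$ vertices. The degree increases by one at each original vertex, so a vertex of degree three would reach degree four. To avoid this I first make $H$ have maximum degree at most two at every vertex receiving a gadget. I use the standard degree-preserving planar transformation: subdividing an edge twice raises the vertex cover number by exactly one. Applying it to every edge gives a graph $H'$ with $\mathrm{vc}(H') = \mathrm{vc}(H) + |E(H)|$. This does not by itself reduce degrees, so alternatively I would attach the gadget only to \emph{subdivision} vertices, which have degree two.

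Concretely, the cleaner plan is as follows. Subdivide each edge $uv$ of $H$ by two new vertices $a_{uv}, b_{uv}$, giving the path $u,a_{uv},b_{uv},v$, and attach $Q_x$ to every vertex $x$ of the resulting graph that has degree at most two. The original vertices of degree three instead receive a different gadget, namely a pendant path of $d-1$ vertices on one of their subdivision neighbours' sides. This needs care, and it is the step I expect to be the main obstacle: ensuring every ``cover vertex'' has weight $d$ without exceeding degree three.

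A simpler fix I would try first is to avoid gadgets at original vertices altogether. I replace each edge $uv$ by the path $u,a,b,v$ and hang a path of $d-1$ vertices off $a$ and another off $b$; these have degree three. Original vertices keep degree at most three and act as singletons. I then set $k' = k + |E(H)|$. Every edge gadget contains at least one disjoint $(d+1)$-vertex connected subgraph that must be hit, namely $a$ together with its pendant path and $b$. The correspondence is then as follows. If $u$ is in the cover, select $b$ (if $v$ is unselected) or either of $a,b$. The remaining component through $a$ has size $d$ plus possibly $u$'s side, and this must be checked.

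The forward and backward directions then reduce to a local exchange argument per edge gadget. Any \dcocset{} can be normalized to select exactly one of $\{a,b\}$ per edge plus a vertex cover of $H$, and conversely. I would verify the component sizes, including the component of an unselected original vertex, which joins at most three partial gadget pieces and must stay at most $d$; this is where the lengths $d-1$ may need tuning, e.g.\ to $\lceil d/2\rceil$-type values. Planarity and maximum degree three are preserved by construction.
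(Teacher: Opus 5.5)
Your final construction is exactly the paper's reduction, and it works as stated. That construction replaces each edge $uv$ by the path $u,a,b,v$, hangs a path of $d-1$ fresh vertices off each of $a$ and $b$, and sets $k'=k+|E(H)|$. The earlier vertex-gadget idea and the ``cleaner plan'' can simply be dropped.

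The checks you defer are short and need no tuning. In the forward direction, for each edge select the subdivision vertex adjacent to an endpoint outside the cover, or either one if both endpoints are in the cover. Then every unselected original vertex becomes a singleton. Every unselected subdivision vertex lies in a component consisting of itself and its path, of size exactly $d$; there is no contribution from ``$u$'s side'', since that endpoint is selected.

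In the backward direction, first swap any selected path vertex for its attachment vertex. Then at least one of $a,b$ must be selected, since otherwise $a$, $b$ and both paths give $2d>d$ connected vertices. If both are selected, swap $a$ for $u$. After this, an unselected $a$ forces $u$ into the solution, because $u$, $a$ and its path are $d+1$ connected vertices. Hence $S'\cap V(H)$ is a vertex cover of size at most $k$.

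This forcing step is precisely why the paths should have $d-1$ vertices. Shortening them to something like $\lceil d/2\rceil$ would break the reduction rather than fix it.
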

\begin{proof}
    Clearly, \dcoc{}  is in \NP{}.
    So, we must only show that it is \NP{}-hard on planar graphs with maximum degree at most three.
    For $d = 1$ the problem is \vc{}, for which \NP{}-hardness on planar graphs with maximum degree $3$ is known \cite[Lemma 1]{gareyRectilinearSteinerTree1977}.

    For $d \geq 2$, we reduce from \vc{} on planar graphs with maximum degree $3$.
    Let $(G,k)$ be the input instance.
    For each edge $e = uv \in E(G)$, we remove $uv$ from $G$, and replace the edge with the path $u,u_e,v_e,v$, where $u_e$ and $v_e$ are fresh vertices not part of $G$.
    Furthermore, we add a fresh path $p_{u,e}^1,\dots,p_{u,e}^{d-1}$ and a fresh path $p_{v,e}^1,\dots,p_{v,e}^{d-1}$ to the graph, and add edges between $p_{u,e}^1$ and $u_e$, and between $p_{v,e}^1$ and $v_e$.
    Let $G'$ be the resulting graph.

    We set $k' = k + |E(G)|$, and the output instance of \dcoc{}  is $(G',k')$.
    Observe that each vertex of $V(G)$ has the same degree in $G$ and $G'$, and each vertex of $V(G') \setminus V(G)$ has degree at most three.
    Moreover, $G'$ is planar since the used operation to replace edges preserves planarity.
    We now prove that the instances are equivalent, starting with the forward direction.

    \begin{claim}
        \label{claim:thm:d_coc_hard_max_deg_three_forward}
        If $(G,k)$ is a yes-instance of \vc{}, then $(G',k')$ is a yes-instance of \dcoc{}.
    \end{claim}
    \begin{claimproof}
        Let $S$ be a vertex cover of $G$ of size at most $k$.
        We show how to build a \dcocset{} $S'$ of $G'$ of size at most $k'$.

        Initially, set $S' = S$.
        Now, iterate over all edges $e = uv$ of $G$.
        Edge $e$ was replaced by the path $u,u_e,v_e,v$ in $G'$.
        If we have that $u$ is in $S$, then add the vertex $v_e$ to $S'$, otherwise, add the vertex $u_e$ to $S'$.
        Clearly, the set $S'$ has size at most $k'$.

        Now, we argue that $S'$ is a \dcocset{} of $G'$.
        Consider an arbitrary connected component $C$ of $G' - S'$.
        If $C$ contains a vertex $v$ of $V(G)$, then $v$ was not part of $S$.
        Consider an edge $e = uv \in E(G)$, since $v$ was not part of $S$, vertex $u \in S$, and hence vertex $v_e$ is in $S'$.
        So, all neighbors of $v$ are part of $S'$, and $C$ has size exactly one.

        Otherwise, $C$ contains no vertex of $V(G)$.
        This means that $C$ must contain vertex $p_{u,e}^1$ for some edge $e = uv$ and some vertex $u \in V(G)$.
        If $C$ does not contain vertex $u_e$, then $C$ only contains the $d-1$ vertices $p_{u,e}^1,\dots,p_{u,e}^{d-1}$.
        Otherwise, if $C$ contains $u_e$, then, we know that both $v_e$ and $u$ are in $S'$, so $C$ contains exactly the vertices $p_{u,e}^1,\dots,p_{u,e}^{d-1}$ and vertex $u_e$, and has size $d$.
    \end{claimproof}

    We proceed to the backward direction of equivalence.
    \begin{claim}
        \label{claim:thm:d_coc_hard_max_deg_three_backward}
        If $(G',k')$ is a yes-instance of \dcoc{}, then $(G,k)$ is a yes-instance of \vc{}.
    \end{claim}
    \begin{claimproof}
        Let $S'$ be a \dcocset{} of $G'$ of size at most $k' = k + |E(G)|$.
        For any edge $e = uv \in E(G)$, we replaced the edge with a path $u,u_e,v_e,v$ in $G'$, such that $u_e$ and $v_e$ also have fresh paths of length $d-1$ attached to them.
        Observe that, if $S'$ contains a vertex of the path of length $d-1$ that is attached to $u_e$, we can replace the selected vertex with $u_e$.
        Similarly, if $S'$ contains a vertex of the path of length $d-1$ that is attached to $v_e$, we can replace the selected vertex with $v_e$.

        So, without loss of generality, $S'$ contains no vertices of the attached paths of length $d-1$.
        If $S'$ contains both $u_e$ and $v_e$, we can remove $u_e$ from the solution, and instead add vertex $u$.
        Because both $u$ and $v_e$ are selected by the resulting set, this will only yield a connected component of size $d$ that contains $u_e$.
        Hence, we can even assume that $S'$ contains at most one of $u_e$ and $v_e$.
        Moreover, $S'$ must select at least one vertex of $\{u_e,v_e\}$,
        as otherwise there would be a connected component in $G' - S'$ containing $u_e$, $v_e$, and the two paths of length $d-1$ attached to $u_e$, respectively $v_e$, and that component would have size larger than $d$.

        Now, consider the case that $S'$ does not contain vertex $u_e$.
        Then, $S'$ must contain $u$, as otherwise there would be a connected component of $G' - S'$ containing vertex $u_e$, the path of length $d+1$ attached to $u_e$, and vertex $u$.
        Similarly, if $S'$ does not contain $v_e$, then $S'$ must contain $v$.
        So, for any edge $uv$ replaced by $u,u_e,v_e,v$, we can assume that $S'$ contains exactly one vertex of  $\{u_e,v_e\}$, so also at least one of these vertices is not selected by $S'$.
        Then, at least one of $\{u,v\}$ must be selected by $S'$.
        In particular, $S' \cap V(G)$ contains one vertex of each edge of $G$, so $S' \cap V(G)$ is a vertex cover of $G$.
        Moreover, the size of $S' \cap V(G)$ is at most $k$, because $S'$ must select at least one vertex of $\{u_e,v_e\}$ for each edge $e = uv$.
    \end{claimproof}
    The lemma now follows from \cref{claim:replace_by_essence_safety_backward,claim:replace_by_essence_safety_forward}.
\end{proof}

Now, we can provide a kernelization dichotomy for \dcoc{}  parameterized by the size of degree-$c$ modulators.

\mainThmDegreeModDichotomy*{}
\begin{proof}
    If $\mathcal{G}$ is the class of graphs with maximum degree at most $2$, then a polynomial-kernel exists by \cref{thm:deg_2_mod_kernel} (note that a graph with maximum degree at most two has connected components that are either cycles or paths, and paths are catpillars).
    If $\mathcal{G}$ is the class of planar graphs with maximum degree $3$, then \cref{thm:d_coc_hard_max_deg_three} shows that the problem \dcoc{}  is \NP{}-hard on $\mathcal{G}$.
    In particular, for any instance $(G,k)$ of \dcoc{}  on graphs of $\mathcal{G}$, we have that $(G,k,\emptyset)$ is an instance of \dcocModToG{} with a parameter value of $0$.
    A kernel (of any size) would then yield that we can reduce $(G,k)$ to an equivalent instance of constant size in polynomial-time, which allows us to solve \dcoc{}  in polynomial-time on the graph class $\mathcal{G}$.
    Hence, the problem does not admit a kernel unless $\PClass = \NP$.
\end{proof} 
\section{The Necessity of d in the Parameter}
\label{sec:necessity_of_d}
In our positive results in \cref{main_thm:poly_kernel_caterpillar_mod,main_thm:kernel_degree_dichotomy} we parameterize by the distance to some graph class $\mathcal{G}$ and $d$ is either part of the parameter, or even a constant.
This begs the question whether it is possible to still obtain polynomial kernels for \coc{} when parameterizing by either $d$ or the distance to $\mathcal{G}$.
Clearly, parameterizing just by $d$ is not suitable, since the problem is already \NP{}-hard when $d$ is the constant $1$.
So, in this section we investigate the complexity of the problem when parameterizing solely by the distance to some graph class $\mathcal{G}$.
It turns out that \coc{} parameterized by the distance to a caterpillar forest is \WOne{}-hard.
This is in contrast to \cref{main_thm:poly_kernel_caterpillar_mod} which shows the existence of a polynomial kernel when $d$ is also part of the parameter.

\subsection{Distance to a Caterpillar Forest}
Let us now formally prove the hardness result for \cocModToCaterpillarsWithoutD{}, which is \coc{} parameterized by the distance to a caterpillar forest.
The core of our reduction is the same as a reduction by Ganian et al. \cite[Theorem 2]{ganianStructuralParameterizationsBoundedDegree2021} for the \boundedDegreeDeletion{} problem, apart from a key difference we elaborate on later.
The reduction is from \MRSSFullAbbrv{}.
Here, the input consists of an integer $k$, a set $S = \{s_1,\dots,s_n\}$ of vectors, where each $s_i \in \mathbb{N}^k$, a target vector $t \in \mathbb{N}^k$, and an integer $k'$.
All integers are given in unary.
The task is to decide whether there exists a subset $S' \subseteq S$ with $|S'| \leq k'$ such that $\sum_{s \in S'} s \geq t$.
\MRSS{} is \WOne{}-hard parameterized by $k + k'$ \cite[Lemma 4]{ganianStructuralParameterizationsBoundedDegree2021}.

\begin{theorem}
    \label{main_thm:w_one_hardness}
    \cocModToCaterpillarsWithoutD{} is \WOne{}-hard.
\end{theorem}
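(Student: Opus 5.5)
We give a parameterized reduction from \MRSS{}, parameterized by $k + k'$, to \cocModToCaterpillarsWithoutD{}, following the blueprint of Ganian et al.\ for \boundedDegreeDeletion{}. Let $(k, S = \{s_1,\dots,s_n\}, t, k')$ be the input. The modulator will be a set $M = \{m_1,\dots,m_k\}$ with one vertex per coordinate, together with a gadget making sure these vertices are never deleted. All numbers are in unary, so the construction has polynomial size.

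\textbf{Construction.} Pick $d$ large: at least $\max_j t_j$ plus the total sum of all coordinates, plus a margin. For every coordinate $j$ we create a modulator vertex $m_j$ and attach to it a large number of pendant-like vertices. Their number is chosen so that, if no neighbour of $m_j$ is deleted, the component of $m_j$ exceeds size $d$ by exactly $t_j$.

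For every vector $s_i$ we create a star (a caterpillar with a one-vertex spine). Its centre $c_i$ has a set $P_i$ of pendants, and for each $j$ exactly $s_i[j]$ of these pendants are adjacent to $m_j$. Deleting $c_i$ should disconnect these $s_i[j]$ pendants from one another. Since they remain adjacent to $m_j$, a single pendant that is only adjacent to $m_j$ does not shrink the component of $m_j$. So this naive design must be changed.

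\textbf{Fixing the gadget.} Here we need the key difference from the bounded-degree reduction. Instead of letting $m_j$ see the pendants, the vector gadget for $s_i$ should hang a block of $s_i[j]$ vertices attached to $c_i$ and connected to $m_j$ only through $c_i$. Concretely, we make $c_i$ adjacent to $m_j$ through spine vertices, one per coordinate: a caterpillar spine $u_{i,1},\dots,u_{i,k}$, where $u_{i,j}$ is adjacent to $m_j$ and carries $s_i[j]$ pendants. Here $c_i$ is a single extra spine vertex whose deletion is what "choosing $s_i$" means, but a spine vertex only separates left from right. The cleaner option is to make $u_{i,j}$ itself a star centre adjacent to $m_j$ with $s_i[j]$ pendants, and to have these stars for different $j$ be disjoint components. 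Then "choosing $s_i$" corresponds to deleting all of $u_{i,1},\dots,u_{i,k}$, and we must price this uniformly. We do that by making $\sum_j$ costs balanced: every $u_{i,j}$ exists even when $s_i[j] = 0$, and the budget is $k'' = k \cdot k'$.

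Coupling the choices across coordinates, so that a solution deletes either all $u_{i,j}$ for a fixed $i$ or none of them, is the main obstacle. I would try to enforce it with a second family of modulator vertices $a_1,\dots,a_{k'}$ acting as selectors, or, more simply, by attaching all $u_{i,j}$ for fixed $i$ along one caterpillar spine with the $m_j$-adjacencies, and arguing that deleting one centre merely cuts the spine.

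\textbf{Correctness.} For the forward direction, delete the gadget vertices of the chosen vectors. The component of each $m_j$ then loses at least $t_j$ vertices, so every component has size at most $d$, and the count matches the budget. For the backward direction, the pendants on each $m_j$ and the choice of $d$ force every solution to leave $M$ untouched; this requires $d$ and the pendant counts to exceed the budget. An exchange argument then turns an arbitrary solution into a gadget-aligned one that encodes at most $k'$ vectors whose sum dominates $t$. The parameter $|M| = O(k + k')$ is bounded by a function of $k + k'$, which gives \WOne{}-hardness.
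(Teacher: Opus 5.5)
Your proposal does not yet contain a working reduction. The step you flag as ``the main obstacle'', namely forcing each vector to be used all-or-nothing across the coordinates, is the heart of the proof, and you leave it open. With disjoint stars $u_{i,j}$, each attached only to $m_j$, a solution can delete $u_{i,j}$ for some coordinates and not others. It can also use different vectors for different coordinates. The instance then splits into independent per-coordinate problems and is no longer equivalent to UMRSS. Threading the $u_{i,j}$ along one spine does not repair this as stated. An undeleted stretch of spine merges the components of different $m_j$, which breaks your size accounting. Moreover, nothing forces a solution that cuts the spine once to pay for the rest.

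The paper's gadget is built for exactly this. For each vector $s$ it takes the caterpillar with spine $s_1,\hat s_1,s_2,\dots,\hat s_{\gamma-1},s_\gamma$, where every $\hat s_i$ carries $d-1$ pendants. Every contact vertex $s_\ell$ is adjacent to at most one modulator vertex, and each modulator vertex of coordinate $i$ gets $s[i]$ private contacts. Each $\hat s_i$ forces a deletion, so the gadget costs at least $\gamma-1$. If some $s_{i'}$ is deleted, take $\hat s_j$ with its pendants together with $s_j$ (for $j<i'$) or $s_{j+1}$ (for $j\ge i'$). These are $\gamma-1$ disjoint connected subgraphs on $d+1$ vertices avoiding $s_{i'}$, so the cost is at least $\gamma$, and one may as well delete exactly $s_1,\dots,s_\gamma$ instead. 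Hence each gadget either costs $\gamma-1$ with no contact deleted, or $\gamma$ with all contacts deleted. In neither case does it connect two modulator vertices. Choosing a vector therefore costs exactly one extra unit, uniformly over all coordinates, and the budget $(\gamma-1)|S|+k'$ caps the number of chosen vectors at $k'$.

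The second gap is protecting the modulator. Pendants on $m_j$ cannot force a solution to keep $m_j$; they only make deleting it more attractive. Deleting $m_j$ costs one vertex and settles coordinate $j$ outright. For example, with your budget $k\cdot k'$ and $k'\ge 1$, deleting all of $M$ is always affordable and leaves only small components.

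The paper instead creates $k'+1$ copies $u_i^1,\dots,u_i^{k'+1}$ of each coordinate vertex, all wired to the gadgets as above. The gadgets already absorb $(\gamma-1)|S|$ deletions, so at most $k'$ modulator vertices are deleted. Hence for some copy index $j$, none of $u_1^j,\dots,u_k^j$ is deleted. Each $u_i^j$ is given exactly $d+t[i]-1$ neighbours, and its pendants may be assumed undeleted, so at least $t[i]$ of its contacts are deleted. Letting $S'$ be the set of vectors whose contacts are all deleted, this gives $\sum_{s\in S'}s\ge t$. The modulator then has size $k(k'+1)$, which is still bounded by the parameter.
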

\begin{proof}
    We provide a parameterized reduction from \MRSS{}.
    Let $(k,S,t,k')$ be the input instance of \MRSS{}.
    Set $\beta = \max\{\max s \mid s \in S\}$, that is, $\beta$ is the largest integer appearing in any vector of the set $S$.
    Set $\gamma = (k' + 1) \cdot k \cdot \beta$.
    We begin by creating a graph that contains vertices $u_i^1,\dots,u_i^{k' + 1}$ for each $1 \leq i \leq k$.
    Next, for each $s \in S$, we introduce $\gamma$ vertices $s_1,\dots,s_{\gamma}$.
    We make each vertex $u_i^j$ adjacent to $s[i]$ vertices of $s_1,\dots,s_\gamma$, such that no vertex of $s_1,\dots,s_\gamma$ is adjacent to two vertices $u^i_j$.
    Observe that this is always possible by the way we choose $\gamma$.
    Let $\hat G$ be the graph constructed so far.
    We set $d = \Delta(\hat G) + 1$, that is, $d$ is the maximum degree of $\hat G$ plus one.

    Next, for each $s \in S$, and each $1 \leq i < \gamma$, we add a fresh star with $d-1$ leaves and center vertex $\hat s_i$ to the graph.
    We make $s_i$ and $s_{i+1}$ adjacent to $\hat s_i$.
    Finally, we add enough pendants to each vertex $u_i^j$ such that the vertex has $d + t[i] - 1$ neighbors overall.
    Observe that this is always possible since $u^i_j$ has at most $d - 1$ neighbors in the graph $\hat G$ and $t[i] \geq 0$.
    Call the resulting graph $G$.
    Observe that the set $M = \{u_i^j \mid i \in \range{k}, j \in \range{k'+1}\}$ is a modulator to a caterpillar forest, and thus the parameter of the output instance is $k \cdot (k' + 1)$.
    We set the solution size $\ell$ of the output \cocModToCaterpillarsWithoutD{} instance to $(\gamma - 1) \cdot |S| + k'$.
    The output instance is then $(G,d,\ell,M)$.
    Observe that the reduction can be performed quickly enough since all integers of the input instance are encoded in unary.
    We now show the correctness of the reduction in two separate claims.

    \begin{claim}
        \label{claim:main_thm:w_one_hardness_forward}
        If the input instance $(k,S,t,k')$ is a yes-instance of \MRSS{}, then $(G,d,\ell,M)$ is a yes-instance of \cocModToCaterpillarsWithoutD{}.
    \end{claim}
    \begin{claimproof}
        Let $S' \subseteq S$ be a set of vectors such that $\sum_{s \in S'} s \geq t$, and $|S'| \leq k'$.
        We create a selection $S^*$ of $V(G)$ as follows.
        For each $s \in S'$, we select vertex $s_i$ for all $i \in \range{\gamma}$, and for each $s \in S \setminus S'$ we select vertex $\hat s_i$ for all $i \in \range{\gamma - 1}$.
        Observe that the resulting selection $S^*$ has size exactly $|S'| \cdot \gamma + |S \setminus S'| \cdot (\gamma - 1) \leq |S| \cdot (\gamma - 1) + k'$.
        So, all that remains is arguing that $G - S^*$ has no connected component of size larger than $d$.

        Consider an arbitrary connected component $Z$ of $G - S^*$, if $Z$ contains no vertex $u^i_j$, then $Z$ must either be a singleton vertex $s_{i'}$, or $Z$ contains a vertex $\hat s_{i'}$ for some $s \in S'$.
        In the latter case, since both $s_{i'}$ and $s_{i' + 1}$ are in $S^*$, the connected component $Z$ has size exactly $d$.
        So, now consider that $Z$ contains a vertex $u^i_j$.
        We have that $u^i_j$ has exactly $d + t[i] - 1$ neighbors in $G$.
        Moreover, for each neighbor of $u^i_j$ that is not selected by $S^*$, it holds that the neighbor has degree exactly one in $G - S^*$.
        Indeed, neighbors of $u^i_j$ are either pendants of $u^i_j$, or vertices $s_{i'}$ (for some $i'$), and in the latter case the neighbors of $s_{i'}$ apart from $u^i_j$ are part of $S^*$.
        Then, since $\sum_{s \in S'} s[i] \geq t[i]$, and we have that $u^i_j$ has exactly $s[i]$ neighbors in the set $s_1,\dots,s_\gamma$ in the graph $G$ which are all selected by $S^*$ for each $s \in S'$, we see that at least $t[i]$ neighbors of $u^i_j$ are selected.
        Thus, in $G - S^*$, the vertex $u^i_j$ has at most $d - 1$ neighbors, and hence $Z$ has size at most $d$.
    \end{claimproof}

    \begin{claim}
        \label{claim:main_thm:w_one_hardness_backward}
        If instance $(G,d,\ell,M)$ is a yes-instance of \cocModToCaterpillarsWithoutD{}, then $(k,S,t,k')$ is a yes-instance of \MRSS{}.
    \end{claim}
    \begin{claimproof}
        Let $S^* \subseteq V(G)$ be a \dcocset{} of $G$ of size at most $\ell$.
        Consider an arbitrary $s \in S$.
        Then, for each $i \in \range{\gamma - 1}$ vertex $\hat s_i$ has $d-1$ pendants, and thus $S^*$ must select at least one of $\hat s_i$, $s_i$, or a pendant of $\hat s_i$.
        In particular, this means that $S^*$ must select at least $(\gamma - 1) \cdot |S|$ vertices which are not part of $M$.
        Hence, $S^*$ selects at most $k'$ vertices which are not part of $M$, and thus there exists a $j$ such that no vertex of $u_1^j,\dots,u_k^j$ is selected by $S^*$.
        Clearly, $S^*$ could select $u_i^j$ instead of a pendant of $u_i^j$, so we can also assume that no pendant of $u_i^j$ in $G$ is selected by $S^*$.

        Before we continue, we want to analyze which vertices of $s_1,\dots,s_\gamma$ and $\hat s_1,\dots,\hat s_{\gamma - 1}$ may be selected by $S^*$ for any $s \in S$.
        Clearly, if $S^*$ selects a pendant of a vertex $\hat s_{i'}$ (for some $i'$), we can remove this pendant from $S^*$ and select $\hat s_{i'}$ instead.
        So, we can assume that $S^*$ selects none of the pendants of the vertices $\hat s_1,\dots, \hat s_{\gamma - 1}$.
        If we have that $S^*$ selects $\gamma$ vertices of $s_1,\dots,s_\gamma$,$\hat s_1,\dots,\hat s_{\gamma - 1}$, then we can assume that $S^*$ selects exactly $s_1,\dots,s_\gamma$, as no vertex of $\hat s_1, \dots \hat s_{\gamma -1}$ has a neighbor in $M$.
        Otherwise, we have that $S^*$ selects $\gamma - 1$ vertices of $s_1,\dots,s_\gamma, \hat s_1,\dots, \hat s_{\gamma - 1}$, we now want to argue that then, $S^*$ must select exactly the vertices $\hat s_1,\dots, \hat s_{\gamma - 1}$.
        Otherwise, assume that $S^*$ selects vertex $s_{i'}$ for some $i'$.
        Then, for all $j < i'$, we have that $\hat s_j$, the $d-1$ pendants of $\hat s_j$, and vertex $s_j$ form a connected subgraph of size $d+1$ in $G$.
        Moreover, for all $j \geq i'$, we have that $\hat s_j$, the pendants of $\hat s_j$, and the vertex $s_{j + 1}$ form a connected subgraph of size $d+1$ in $G$.
        All of these subgraphs are vertex-disjoint.
        This yields that $S^*$, that selects $s_{i'}$, must select $\gamma -1$ further vertices of $s_1,\dots, s_\gamma$ and $\hat s_1,\dots, \hat s_{\gamma - 1}$, contradicting that at most $\gamma - 1$ of these are selected.

        So, we can without loss of generality assume that for any $s \in S$, $S^*$ selects either all vertices $s_1,\dots,s_\gamma$, or all vertices $\hat s_1,\dots \hat s_{\gamma - 1}$.
        We build our solution for \MRSS{} by choosing a vector $s \in S$ if and only if $S^*$ selects the $\gamma$ vertices $s_1,\dots,s_\gamma$, and we call the resulting set of vectors $S'$.
        By the choice of $\ell$, we see that the size of $S'$ is at most $k'$.
        Now, consider any $i \in \range{k}$.
        We have that $u_i^j$ has $d + t[i] - 1$ neighbors in $G$, and furthermore, neither $u_i^j$ nor a pendant of $u_i^j$ in $G$ is selected by $S^*$.
        Hence, $S^*$ must select at least $t[i]$ neighbors of $u_i^j$ which are of the form $s_i$ for some $s \in S'$.
        Since $S^*$ contains $s[i]$ vertices of $s_1,\dots,s_\gamma$ that are neighbors of $u_i^j$ if and only if $s \in S'$, we see that indeed $\sum_{s \in S'} s[i] \geq t[i]$.
    \end{claimproof}
    By \cref{claim:main_thm:w_one_hardness_forward,claim:main_thm:w_one_hardness_backward} the reduction is correct and the theorem follows.
\end{proof}

The main difference between our reduction and proof and the one by Ganian et al. \cite{ganianStructuralParameterizationsBoundedDegree2021} lies in the fact that their reduction does not work for the problem \cocModToCaterpillarsWithoutD{}, mainly because the gadgets they use to encode sets of $S$ are not caterpillars (but trees with bounded depth).
Hence, our contribution here is the creation of new gadgets to encode the sets of $S$, which make the reduction work for our problem.

\subsection{Vertex Cover Parameterization}

By the previous result, the parameter $d$ cannot be dropped from \cref{main_thm:poly_kernel_caterpillar_mod}.
However, it may still be possible that polynomial kernels exist for different structural parameters, even when not considering $d$ in the parameter.
Next, we show that \coc{} does not admit a polynomial-kernel even when the parameter is the size of a \emph{vertex cover} plus the \emph{solution size} (unless $\NP \subseteq \coNPPoly$).\footnote{Recall that while \dcoc{} has a polynomial kernel parameterized by solution size, the problem \coc{} is \WOne{}-hard.}
We denote the problem \coc{} parameterized by the size of a vertex cover (that is given as part of the input) plus the solution size as \cocModToVCPlusK{}.
The result means that \coc{} does not have a polynomial kernel for ``almost all'' regular structural parameterizations.

We show our kernelization lower bound by a polynomial parameter transformation from the \exactSetCover{} problem.
In this problem, the input consists of a universe $\mathcal{U}$, a set family $\mathcal{F} \subseteq \binom{\mathcal{U}}{t}$ for some integer $t$, and an integer $k$.
The question is whether there is a subset $\mathcal{F}' \subseteq \mathcal{F}$ of size $k$ such that each element of $\mathcal{U}$ is in exactly one set of $\mathcal{F}'$.
This problem was considered by Dell and Marx \cite{dellKernelizationPackingProblems2012} for fixed constant $t$ under the name \perfectTSetMatching{}.
They show that the problem with $t \geq 3$ does not have a compression of size $\Oh(k^{t - \varepsilon})$ for any $\varepsilon > 0$, unless $\NP \subseteq \coNPPoly$ \cite[Theorem 1.2]{dellKernelizationPackingProblems2012}.
Note that they phrase their result in terms of kernels, however the underlying machinery also rules out compressions.
A compression algorithm for the \exactSetCover{} problem can also be used for the problem with fixed $t$.
Since any yes-instance of the problem must fulfill that $|\mathcal{U}| = k \cdot t = O(k)$ when $t$ is fixed, their result implies that the problem \exactSetCover{} parameterized by the universe size $|\mathcal{U}|$, which we denote as \exactSetCoverUniverse{}, does not have a polynomial compression, unless $\NP \subseteq \coNPPoly$.

\begin{lemma}[Follows from {\cite[Theorem 2]{dellKernelizationPackingProblems2012}}]
    \label{thm:exact_set_cover_universe_lb}
    \exactSetCoverUniverse{} does not admit a polynomial compression, unless $\NP \subseteq \coNPPoly$.
\end{lemma}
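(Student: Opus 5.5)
We derive the statement from the Dell--Marx lower bound for \perfectTSetMatching{}, arguing by contraposition. Suppose \exactSetCoverUniverse{} had a polynomial compression $\mathcal{A}$ into some problem $B$, of output size at most $|\mathcal{U}|^{c}$ for a constant $c$. Fix an integer $t \ge 3$ with $t > c$, for instance $t = \lceil c\rceil + 1$. We then build a compression for \perfectTSetMatching{} with this $t$, parameterized by $k$, whose output size is $\Oh(k^{t-\varepsilon})$ for some $\varepsilon > 0$. This contradicts \cite[Theorem 1.2]{dellKernelizationPackingProblems2012} unless $\NP \subseteq \coNPPoly$.

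The compression works as follows. Let $(\mathcal{U},\mathcal{F},k)$ be an instance of \perfectTSetMatching{}, so $\mathcal{F} \subseteq \binom{\mathcal{U}}{t}$. The $k$ chosen sets are disjoint and cover $\mathcal{U}$, so every yes-instance satisfies $|\mathcal{U}| = kt$. We first test this equality in polynomial time.
\begin{itemize}
\item If $|\mathcal{U}| \neq kt$, the instance is a no-instance, and we output a fixed constant-size no-instance of $B$. If $B$ has no no-instances, we may instead output any fixed instance, since then $\mathcal{A}$ maps every input to a yes-instance and the equivalence is preserved trivially.
\item Otherwise, the instance is literally an instance of \exactSetCover{} with the same universe, family and $k$, and we apply $\mathcal{A}$ to it.
\end{itemize}

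In the second case the output has size at most $|\mathcal{U}|^{c} = (kt)^{c} = \Oh(k^{c})$, since $t$ is a constant. Because $c < t$, this is $\Oh(k^{t-\varepsilon})$ with $\varepsilon = t - c > 0$. The procedure runs in polynomial time and preserves equivalence, so it is the forbidden compression.

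The main subtlety is ensuring the Dell--Marx bound applies to compressions into an arbitrary target problem $B$, not only to kernels. The text preceding the statement notes that their machinery rules out compressions, so we invoke it in that form. The remaining points are routine: the trivial handling of the case $|\mathcal{U}| \neq kt$, and the fact that the bound on $|\mathcal{U}|$ depends on $t$ being fixed.
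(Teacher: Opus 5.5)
Your argument is correct and follows the same route as the paper. Every yes-instance of \perfectTSetMatching{} has $|\mathcal{U}| = kt$, so for fixed $t > c$ a compression of size $|\mathcal{U}|^{c}$ for \exactSetCoverUniverse{} yields one of size $\Oh(k^{c}) = \Oh(k^{t-\varepsilon})$, contradicting the Dell--Marx bound (which, as the paper notes, also rules out compressions). Your justification for the case where $B$ has no no-instances is off, but that case cannot occur, so this is harmless; in the case $|\mathcal{U}| \neq kt$, simply output $\mathcal{A}(I_0)$ for a fixed no-instance $I_0$ of \exactSetCover{}.
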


Let us now use \cref{thm:exact_set_cover_universe_lb} as the starting point of a polynomial parameter transformation.
On the technical level, our reduction is inspired by a reduction of Giannopoulou et al. \cite{giannopoulouUniformKernelizationComplexity2017}.
They reduce from the same problem (with fixed $t$) and show compression lower bounds for three different \fdeletion{} problems parameterized by vertex cover.
The problem we consider is not a \fdeletion{} problem (because $d$ is not constant), but the same general structure of their construction still extends fairly well to \cocModToVCPlusK{}.

To ease the presentation, let us first introduce an annotated variant of \cocModToVCPlusK{}.
In the \cocModToVCAnnotatedPlusK{} problem the input consists of a graph $G$, integers $k$, $d$, a vertex cover $M$ of $G$, and annotations $A \subseteq \binom{M}{2}$.
The goal is to decide whether there exists a set $S \subseteq V(G)$ of size at most $k$ such that (1) each connected component of $G - S$ has size at most $d$, and (2) for each $\{u,v\} \in A$ we have that $S \cap \{u,v\} \not = \emptyset$.
Intuitively, a solution needs to solve two problems at the same time: \dcoc{} on the graph $G$, and \vc{} on the graph with edge set $A$.
The parameter is $|M| + k$.

\begin{lemma}
    \label{thm:kernel_lower_bound_annotated_coc}
    The problem \cocModToVCAnnotatedPlusK{} does not have a polynomial compression, unless $\NP \subseteq \coNPPoly$.
\end{lemma}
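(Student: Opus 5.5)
We give a polynomial parameter transformation from \exactSetCoverUniverse{}, which has no polynomial compression unless $\NP \subseteq \coNPPoly$ by \cref{thm:exact_set_cover_universe_lb}. Let $(\mathcal{U},\mathcal{F},k)$ be an instance with $\mathcal{F} \subseteq \binom{\mathcal{U}}{t}$. We may assume $|\mathcal{U}| = kt$, since otherwise the instance is trivially a no-instance and we output a constant-size no-instance. Then $k \leq |\mathcal{U}|$.

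\textbf{Construction.} The modulator consists of vertices $a_u$ and $b_u$ for each $u \in \mathcal{U}$, together with annotations $\{a_u,b_u\} \in A$. For each set $F \in \mathcal{F}$ we add an independent "set vertex" $f_F$ adjacent to $a_u$ for all $u \in F$. The vertex $b_u$ serves as the partner of $a_u$; it is adjacent to a large number of pendants, which forces its component to contain either $a_u$ or $b_u$ only in a controlled way. Set the solution budget $k' = |\mathcal{U}|$ and choose $d$ so that the following two conditions hold.
\begin{enumerate}
\item The annotations force exactly one of $a_u,b_u$ into $S$. With budget $|\mathcal{U}|$, exactly one vertex per pair is selected, and no set vertex or pendant is selected.
\item For each element $u$, the component containing the unselected modulator vertex has size at most $d$ if and only if $u$ is covered exactly once.
\end{enumerate}

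A cleaner alternative encoding follows Giannopoulou et al. Make all $a_u$'s connected through the set vertices of chosen sets. To "choose" a set $F$, the solution deletes nothing. Instead, the component structure counts the elements: a solution picks $b_u$ for all $u$, so every $a_u$ stays, and the selected sets correspond to stars. Since exact cover is inherently about component sizes, we would tune $d$ to be $t+1$ plus padding. In this encoding each chosen set $F$ forms a component consisting of $f_F$ plus $\{a_u : u \in F\}$, of size exactly $t+1 \le d$. Unchosen set vertices must be isolated from the $a$'s, so they are removed. This means removing set vertices costs budget: $k' = |\mathcal{F}| - k$ plus the annotation cost, and $k'$ is not bounded by $|\mathcal{U}|$.

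So the budget must instead be spent on the modulator side. I would therefore attach to each $f_F$ a pendant clique so that $f_F$ is "heavy", and let the solution cut the $a_u$'s rather than the $f_F$'s. Concretely, I would make $a_u$ have degree larger than $d$ into padding. The solution then keeps at most one incident set-edge per element "active", and the covered-exactly-once condition is checked via the sizes of components around each $f_F$ being exactly $d$.

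\textbf{Backward direction (main obstacle).} The expected hardest step is the backward direction: showing that a solution of size at most $|\mathcal{U}|$ (plus a polynomial in $|\mathcal{U}|$) must have the canonical form. This requires exchange arguments replacing selected pendants or set vertices by modulator vertices, in the same style as the proofs of \cref{thm:d_coc_hard_max_deg_three} and \cref{main_thm:w_one_hardness}. Afterwards, a counting argument shows that every $f_F$ component has size $d$ exactly when $F$ is in the cover and its elements are not shared.

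\textbf{Parameter bound.} The modulator has size $O(|\mathcal{U}|)$ and $k' = O(|\mathcal{U}|)$, so the parameter is polynomial in $|\mathcal{U}|$. Once canonical form and the counting argument are established, equivalence follows, giving the lemma.
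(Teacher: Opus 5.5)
\textbf{The proposal has a genuine gap: it never commits to a construction, and none of the three encodings you sketch is a reduction.}

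\emph{First encoding.} The budget $|\mathcal{U}|$ and the annotations $\{a_u,b_u\}$ force exactly one vertex per pair into $S$ and nothing else. So the solution's only freedom is one binary choice per element. Nothing it does corresponds to choosing sets, and your condition 2 (``$u$ is covered exactly once'') refers to no object determined by $S$. Concretely, if $b_u$ has at least $d$ pendants, then $b_u$ is forced into $S$ and every $a_u$ survives. Then $G-S$ contains the fixed incidence graph of $(\mathcal{U},\mathcal{F})$, whether or not an exact cover exists.

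\emph{Second encoding.} You observe yourself that it fails, because deleting the unchosen set vertices costs $|\mathcal{F}|-k$.

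\emph{Third encoding.} It cannot be realized. Vertex deletion cannot keep ``at most one incident set-edge per element active''. Giving $a_u$ more than $d$ pendants just forces $a_u$ into any small solution. Consequently your backward-direction and parameter-bound paragraphs argue about an unspecified graph.

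\emph{A constraint on $d$.} In any correct transformation, $d$ cannot be bounded by a polynomial in $|\mathcal{U}|$ unless $\NP \subseteq \coNPPoly$. Otherwise you could replace the annotations by the gadget of the subsequent theorem (budget $k+|A|$) and apply Xiao's $O(dk)$-vertex kernel. That would polynomially compress Exact Set Cover parameterized by $|\mathcal{U}|$. So component sizes must be able to grow with $|\mathcal{F}|$, and a threshold of the form ``$t+1$ plus polynomial padding'' cannot work.

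\textbf{The missing idea.} You need to encode the $k$ chosen sets purely by which modulator vertices survive. Set vertices must be excluded from components by being cut off, not by being deleted. The paper does this as follows, with $n=|\mathcal{U}|=kt$.
\begin{itemize}
\item It takes $k$ copies (columns) $C^j=\{u_1^j,\dots,u_n^j\}$ of the universe and makes each a clique.
\item It annotates every pair within each row $R_i=\{u_i^1,\dots,u_i^k\}$, so at most one copy of each element survives. With budget $n(k-1)$, the whole budget goes to $M$ and exactly one copy per element survives.
\item For every column $j$ and every $F\in\mathcal{F}$, it adds a vertex $v_F^j$ adjacent to the copies in $C^j$ of the elements \emph{not} in $F$.
\item It sets $d=|\mathcal{F}|+t-1$.
\end{itemize}
If more than $t$ vertices of $C^j$ survive, every $v_F^j$ has a surviving neighbor and the clique's component exceeds $d$. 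Since $n(k-1)=k(n-t)$, each column then has exactly $t$ survivors. Their component has size at most $d$ only if some $v_F^j$ has no surviving neighbor, i.e.\ only if the survivors are exactly the copies of some $F\in\mathcal{F}$. Thus the columns spell out $k$ sets and the rows make them an exact cover, with $|M|=nk$ and budget $n(k-1)$ both $O(n^2)$.

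Multiple columns, complement adjacency, and a threshold $d$ that counts set vertices are exactly the ingredients your plan lacks.
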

\begin{proof}
    Let the input instance of \exactSetCoverUniverse{} be $(\mathcal{U},\mathcal{F},k)$, where each set of $\mathcal{F}$ has size $t \leq |\mathcal{U}|$.
    For convenience let us denote $\mathcal{U} = \{u_1,\dots,u_n\}$.
    If we do not have $k = \frac{n}{t}$ we clearly have a no-instance, and hence we can output a constant-size no-instance of \cocModToVCAnnotatedPlusK{}.

    Otherwise, we have that $k = \frac{n}{t}$ and thus $n = k \cdot t$.
    We now describe the output graph $G$ of our \coc{} instance, begin with the empty graph, and initially set $A = \emptyset$.
    For each $i \in \range{n}$ and $j \in \range{k}$ we create a vertex $u_i^j$.
    Intuitively, each vertex $u_i^j$ represents the universe element $u_i$, however for technical reasons we have $k$ copies of each such representative.
    For all $j \in \range{k}$ set $C^j =  \{u_i^j \mid i \in \range{n}\}$, and for all $i \in \range{n}$ set $R_i = \{u_i^j \mid j \in \range{k}\}$.
    We refer to $C^j$ as the $j$th column and to $R_i$ as the $i$th row of the construction.
    For each $j \in \range{k}$ we make the vertices $C^j$ into a clique.
    Moreover, for each $i \in \range{n}$ and each vertex pair $\{u_i^{j},u_i^{j'}\}$ where $j \not = j'$ we add the annotation $\{u_i^{j},u_i^{j'}\}$ to the set $A$.
    Set $M = \bigcup_{j \in \range{k}} C^j = \bigcup_{i \in \range{n}} R_i$.

    Now, for each $j \in \range{k}$ and each $F \in \mathcal{F}$ we add a vertex $v_F^j$.
    We make $v_F^j$ adjacent to all vertices of the set $\{u^j_i \mid i \in \range{n}, u_i \notin F\}$.
    In other words, $v_F^j$ is adjacent to all vertices of $C^j$ which do \emph{not} represent an element of the set $F$.
    We set $d = |\mathcal{F}| + t - 1$, and $k' = n \cdot (k-1)$.
    We can assume that $t \geq 2$ as the input problem is easily solvable in polynomial-time if $t = 1$, and thus $d \geq 1$.
    The output instance of \cocModToVCAnnotatedPlusK{} is $(G,d,k,M,A)$.
    Observe that the set $M$ is a vertex cover of $G$ of size exactly $n \cdot k = n \cdot \frac{n}{t} \in \Oh{(n^2)}$.
    Moreover, $k' = n \cdot (k-1) = n \cdot (\frac{n}{t} - 1) \in \Oh{(n^2)}$.
    Hence, the parameter $k' + |M|$ of the output instance is bounded by a polynomial in $n$.
    We now show the forward direction of correctness.

    \begin{claim}
        \label{claim:thm:kernel_lower_bound_annotated_coc_forward}
        If $(\mathcal{U},\mathcal{F},k)$ is a yes-instance of \exactSetCoverUniverse{}, then $(G,d,k',M,A)$ is a yes-instance of \cocModToVCAnnotatedPlusK{}.
    \end{claim}
    \begin{claimproof}
        Let $S = \{F_1,\dots,F_k\} \subseteq \mathcal{F}$ be an exact set cover of $\mathcal{U}$, that is, a set family such that each element of $\mathcal{U}$ appears in exactly one set of $S$.
        Then, define \[S' = \{u_i^j \mid i \in \range{n}, j \in \range{k}, u_i \notin F_j\}.\]
        As each universe element $u_i$ is in exactly one set $F_j$, we have that for each $i \in \range{n}$ there is exactly one $j$ such that $u_i^j$ is not selected by $S'$.
        Hence, $|S'| = k'$.
        For the annotations in the set $A$, observe that these are fulfilled since all apart from exactly one vertex of each row $R_i$ is selected.

        It remains to argue that  $S'$ is also a \dcocset{} of $G$.
        Consider an arbitrary connected component $C$ of $G - S'$.
        If $C$ has size one we are done since we have $d \geq 1$, so consider that $C$ has size larger than one.
        Then, $C$ must contain a vertex of $M$ since $G - M$ is an independent set.
        Let $u_i^j$ be a vertex of $M$ contained in $C$.
        Now, observe that a vertex of $M$ is only connected to other vertices of $M$ of the same column, and that vertices of $G - M$ are also only adjacent to vertices of the same column.
        So, $C$ can only consist of vertices of $M$ stemming from column $C^j$, as well as of vertices $v_F^j$ for some $F \in \mathcal{F}$.
        Observe that $S'$ selects all vertices of column $C^j$ which do not correspond to elements of the set $F_j$, so at most $t$ vertices of $C^j$ are not selected.
        The neighborhood of $v_{F_j}^j$ consists exactly of those vertices which do not correspond to elements of $F_j$, and hence all neighbors of $v_{F_j}^j$ are selected by $S'$.
        Finally, this means that $C$ contains at most $|\mathcal{F}| - 1$ vertices of $G - M$, so the size of $C$ is at most $d = |\mathcal{F}| - 1 + t$.
    \end{claimproof}

    We now show the backward direction of correctness.

    \begin{claim}
        \label{claim:thm:kernel_lower_bound_annotated_coc_backward}
        If $(G,d,k,M,A)$ is a yes-instance of \cocModToVCAnnotatedPlusK{} then $(\mathcal{U},\mathcal{F},k')$ is a yes-instance of \exactSetCoverUniverse{}.
    \end{claim}
    \begin{claimproof}
        Let $S'$ be a solution to the output instance of \cocModToVCAnnotatedPlusK{}.
        Fix an arbitrary $i \in \range{n}$.
        Since each pair of distinct vertices of row $R_i$ has an annotation in $A$, we have that $S'$ must select all apart from at most one vertex of $R_i$.
        Given that $k' = n \cdot (k-1)$, and there are $n$ rows, we observe that $S'$ must select exactly $k-1$ vertices of $R_i$, and hence there is exactly one $j$ such that $u_i^j$ is not part of $S'$.

        Now, assume that there is some column $C^j$ such that $S'$ selects fewer than $n-t$ vertices of $C^j$, and recall that $C^j$ contains exactly $n$ vertices.
        Let $u_i^j$ be one of the vertices of $C^j$ that is not part of $S'$.
        Recall that $C^j$ is a clique in $G$, so the connected component of $u_i^j$ in $G - S'$ contains at least $t+1$ vertices of $C^j$.
        Next, consider an arbitrary vertex $v_{F}^j$ for $F \in \mathcal{F}$.
        Since $v_F^j$ has $n - t$ neighbors in $C^j$, and $|C^j| = n$, $v_F^j$ has an unselected neighbor in $C^j$, and is thus part of the connected component of $u_i^j$ in $G- S'$.
        Hence, the component of $u_i^j$ in $G' - S'$ contains at least $|\mathcal{F}| + t + 1 > d$ vertices, a contradiction.

        If some column $C^j$ is such that $S'$ selects more than $n-t$ vertices of $C^j$, observe that since $k' = n \cdot (k-1) = nk - kt = k(n-t)$, there would have to be a column $C^{j'}$ in which fewer than $n - t$ vertices are selected by $S'$, which cannot be the case by the previous paragraph.

        Hence, $S'$ selects exactly $n - t$ vertices of each column $C^j$.
        Since for each universe element $u_i$ additionally there exists exactly one column $C^j$ such that $u_i^j$ is not selected, it suffices to show that those vertices of $C^j$ that are not selected by $S'$ are a set of $\mathcal{F}$.
        Indeed, if that was the case then if we set $F_j$ to be $\{u_i \mid u_i^j \in C^j \setminus S'\}$ for each $j \in \range{k}$ and $S = \{F_1,\dots,F_k\}$, we would have that each element $u_i \in \mathcal{U}$ appears in exactly one set of $S$, and so the input instance would be a yes-instance.

        So, fix an arbitrary $j \in \range{k}$, we shall show that $F_j = \{u_i \mid u_i^j \in C^j \setminus S'\}$ is in $\mathcal{F}$.
        Let $u_i^j$ be a vertex of $C^j \setminus S'$, and $C$ the connected component of $u_i^j$ in $G - S'$.
        Recall that $S'$ selects exactly $n - t$ vertices of each column, and thus $C$ contains exactly $t$ vertices of $C^j$.
        Toward a contradiction, assume that $F_j \notin \mathcal{F}$.
        Consider an arbitrary $F \in \mathcal{F}$.
        Then, the vertex $v_F^j$ is adjacent to all $n-t$ vertices of $C^j$ that do not represent an element of $F$.
        In particular, since $F \not = F_j$ we have that $v_F^j$ is adjacent to a vertex of $C^j \setminus S'$, and thus $v_F^j$ is part of $C$.
        So, $C$ has size $|\mathcal{F}| + t > d$, which is a contradiction.
    \end{claimproof}
    By \cref{claim:thm:kernel_lower_bound_annotated_coc_backward,claim:thm:kernel_lower_bound_annotated_coc_forward} we see that the polynomial parameter transformation is correct and \cref{thm:exact_set_cover_universe_lb} concludes the proof.
\end{proof}

Finally, we need to provide a polynomial parameter transformation from the annotated problem \cocModToVCAnnotatedPlusK{} to \cocModToVCPlusK{}.
For this, we can use the idea at the core of \cref{thm:d_coc_hard_max_deg_three}.

\begin{theorem}
    \label{main_thm:vc_kernelization_lower_bound}
    The problem \cocModToVCPlusK{} does not admit a polynomial compression unless $\NP \subseteq \coNPPoly$.
\end{theorem}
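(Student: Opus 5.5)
We give a polynomial parameter transformation from \cocModToVCAnnotatedPlusK{} to \cocModToVCPlusK{} and then invoke \cref{thm:kernel_lower_bound_annotated_coc}. The idea is the one behind \cref{thm:d_coc_hard_max_deg_three}: we replace every annotation by a small gadget. This gadget can only be made harmless either by selecting one of the two annotated endpoints, or by paying one extra vertex inside the gadget that is unavoidable anyway.

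Let $(G,d,k,M,A)$ be the annotated instance. For every annotation $a=\{u,v\}\in A$ we add two fresh vertices $u_a,v_a$ and the path $u,u_a,v_a,v$. We also attach a fresh star, or path, of $d-1$ vertices to $u_a$ and another to $v_a$. Since a clique or other edges might already connect $u$ and $v$ inside $M$, we do not delete anything from $G$. We set $k'=k+|A|$ and put $M'=M\cup\{u_a,v_a\mid a\in A\}$. Then $M'$ is a vertex cover, because the attached $d-1$ vertices are leaves of a star centered at $u_a$ or $v_a$; we choose stars rather than paths for exactly this reason. Finally $|M'|+k'\le |M|+k+3|A|$, and $|A|\le |M|^2$, so the new parameter is polynomial in the old one.

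For the forward direction, take a solution $S$ of the annotated instance. For each $a=\{u,v\}$ add $v_a$ if $u\in S$ and $u_a$ otherwise; in the second case $v\in S$. Components that meet $G$ do not change, because every gadget vertex adjacent to an unselected endpoint of the annotation is selected. Each gadget component is a star of $d-1$ leaves, possibly together with its unselected center, and has size at most $d$.

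For the backward direction, first normalize a solution $S'$ of the new instance. A selected star leaf can be swapped for its center. The pair $\{u_a,v_a\}$ must contain a selected vertex, since otherwise the two stars together have $2d$ vertices in one component. If both $u_a$ and $v_a$ are selected, replace $u_a$ by $u$. This exchange step needs the most care: the component containing $u_a$ then consists of $u_a$ and its $d-1$ leaves, which has size $d$, and adding $u$ to the solution can only shrink the other components. If exactly one of them is selected, say $v_a$, then $u_a$ together with its leaves already forms $d$ vertices, so $u$ must be selected; symmetrically for $v$. Hence $S'\cap V(G)$ hits every annotation. It has size at most $k$, because each gadget contributes exactly one selected vertex. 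It is also a \dcocset{} of $G$, since $G$ is an induced subgraph of the new graph. This shows the two instances are equivalent.

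The construction runs in polynomial time, so it is a polynomial parameter transformation. Together with \cref{thm:kernel_lower_bound_annotated_coc}, it rules out a polynomial compression for \cocModToVCPlusK{} unless $\NP\subseteq\coNPPoly$.
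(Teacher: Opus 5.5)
Your proposal is correct and is essentially the paper's own proof. You use the same gadget (the path $u,u_a,v_a,v$ with $d-1$ pendants on each of $u_a,v_a$), the same $k'=k+|A|$ and $M'=M\cup\{u_a,v_a \mid a\in A\}$, and the same backward-direction normalization (swap selected leaves for their centers, and if both centers are selected replace one by its endpoint in $G$). The one detail you skip is the paper's preliminary remark that one may assume $d\le |V(G)|$, since capping $d$ there does not change the answer. Without it, adding $2(d-1)|A|$ pendants is not polynomial-time when $d$ is large.
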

\begin{proof}
    We provide a polynomial parameter transformation from \cocModToVCAnnotatedPlusK{}.
    Let $(G,k,d,M,A)$ be the input instance.
    Note that we can assume that $k,d \leq |V(G)|$, and hence the magnitude of these values is polynomial in the input size.

    Initially set $M' = M$.
    For each $a = \{u,v\} \in A$, we add two vertices $a_u$ and $a_v$ to the graph.
    We make $a_v$ adjacent to $v$, and $a_u$ adjacent to $u$, and add an edge between $a_u$ and $a_v$.
    Moreover, we add $d-1$ pendants each to $a_u$ and $a_v$.
    We also add $a_u$ and $a_v$ to $M'$.
    Denote the resulting graph as $G'$.

    It is clear that $M'$ is a vertex cover of $G'$ because we added $a_u$ and $a_v$ to the vertex cover when handling annotation $a$.
    Set $k' = k + |A|$.
    Let the output instance be $(G',k',d,M')$.
    The size of $M'$ is at most $|M| + 2|A| \leq |M| + 2|M|^2 \in \Oh{(|M|^2)}$ (recall that we only allowed annotations between pairs of vertices in $M$).
    Moreover, $k' = k + |A| \leq k + |M|^2$.
    Thus, $|M'| + k'$ is bounded polynomially in $|M| + k$.
    It is also clear that the parameter transformation runs in polynomial-time since we only add $2d \cdot |A| \leq 2|V(G)| \cdot |V(G)|^2$ new vertices to the graph.
    So, all that remains is showing the correctness.
    We begin with the forward direction.

    \begin{claim}
        \label{claim:main_thm:vc_kernelization_lower_bound_forward}
        If $(G,k,d,M,A)$ is a yes-instance of \cocModToVCAnnotatedPlusK{}, then $(G',k',d,M')$ is a yes-instance of \cocModToVCPlusK{}.
    \end{claim}
    \begin{claimproof}
        Let $S$ be a solution of the input instance, and initially set $S' = S$.
        For each $a = \{u,v\}\in A$ we have that $S$ selects at least one of $u$ or $v$.
        If $S$ selects $u$, then we add $a_v$ to $S'$, otherwise if $S$ does not select $u$, we add $a_u$ to $S'$.
        The resulting set $S'$ has size at most $k + |A|$ because we change $S$ by adding exactly one vertex for each $a \in A$.

        Clearly, each connected component of $G' - S'$ that only contains vertices of $V(G)$ has size at most $d$.
        So, consider some connected component $C$ of $G' - S'$ that contains a vertex of $V(G') \setminus V(G)$.
        Every such component that is not just an isolated vertex also contains a vertex $a_u$ or $a_v$ for some annotation $\{u,v\} \in A$.
        So, consider an arbitrary annotation $\{u,v\} \in A$.

        If $u \in S$, then we have that $a_v \in S'$, and $a_u \notin S'$.
        Since the only neighbors of $a_u$ are $a_v$, the pendants of $a_u$, and vertex $u$, we have that the connected component of $a_u$ in $G' - S'$ contains only $a_u$ and $d-1$ pendants of $a_u$.

        Similarly, if $u \notin S$, then we have that $a_u \in S'$.
        Moreover, because $\{u,v\} \in A$ we have that $v \in S$.
        Hence, neighbors $v$ and $a_u$ of $a_v$ are selected, and the connected component of $G' - S'$ that contains $a_v$ has size at most $d$.
    \end{claimproof}

    \begin{claim}
        \label{claim:main_thm:vc_kernelization_lower_bound_backward}
        If $(G',k',d,M')$ is a yes-instance of \cocModToVCPlusK{}, then $(G,k,d,M,A)$ is a yes-instance of \cocModToVCAnnotatedPlusK{}.
    \end{claim}
    \begin{claimproof}
        Let $S'$ be a solution of the output instance.
        For each $a = \{u,v\} \in A$ the output graph $G'$ contains $a_u$, $a_v,$ and $d-1$ pendants each for $a_u$ and $a_v$.
        Hence, $S'$ must select $a_u,a_v$ or a pendant of $a_u$ or $a_v$.
        It is not difficult to see that, if $S'$ selects a pendant of $a_v$, we can instead select $a_v$.
        So, we can assume that $S'$ contains at least one of $\{a_u,a_v\}$, and that $S'$ contains no pendant of $a_u$ or $a_v$.

        Next, assume that $S'$ contains both $a_u$ and $a_v$.
        In this case, we can select $v$ instead of $a_v$ (or equivalently $u$ instead of $a_u$).
        Thus, we can assume that $S'$ selects exactly one of $\{a_u,a_v\}$.
        Then, given that one of $\{a_u,a_v\}$, say $a_u$, is not selected, $S'$ must select vertex $u$, as otherwise $a_u$, the $d-1$ pendants of $a_u$ and vertex $u$ would lead to a connected component of size $d+1$ in $G' - S'$.
        Thus, we see that $S' \cap V(G)$ selects one vertex of each annotation in $A$.
        Moreover, $S' \cap V(G)$ must be a \dcocset{} of $G$ because $G$ is a subgraph of $G'$.
        Finally, the fact that $S'$ selects one vertex of $\{a_u,a_v\}$ for each annotation $a = \{u,v\} \in A$ shows that $|S' \cap V(G)| \leq k$.
    \end{claimproof}

    By combining \cref{claim:main_thm:vc_kernelization_lower_bound_backward,claim:main_thm:vc_kernelization_lower_bound_forward} and \cref{thm:kernel_lower_bound_annotated_coc} the theorem is proven.
\end{proof}

In \cref{main_thm:w_one_hardness} we were able to show that dropping the parameter $d$ results in a \WOne{}-hard problem.
In contrast to this, \cref{main_thm:vc_kernelization_lower_bound} only rules out polynomial compressions.
It turns out that this is not a weakness of our result: the problem \cocModToVCWithoutK{} (\coc{} parameterized by the size of a vertex cover given in the input) is \FPT{}.
Hence, the result of \cref{main_thm:vc_kernelization_lower_bound} cannot be strengthened to show \WOne{}-hardness of the problem.
We present the simple branching algorithm that shows this next.

\begin{theorem}
    \label{thm:coc_mod_to_vc_fpt}
    \cocModToVCWithoutK{} is \FPT{}.
\end{theorem}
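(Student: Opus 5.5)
We give a simple branching algorithm over the vertex cover $M$, whose running time depends only on $|M|$. The key point is that once we know which vertices of $M$ are deleted, the rest of the problem becomes easy to decide. First branch over all subsets $S_M \subseteq M$; these are the guesses for $S \cap M$, giving $2^{|M|}$ branches. Fix a branch and set $M' = M \setminus S_M$. Since $M$ is a vertex cover, every vertex of $I = V(G) \setminus M$ has all its neighbors in $M$. A vertex of $I$ with no neighbor in $M'$ is isolated in $G - S_M$, so it is harmless (as $d \geq 1$) and never needs to be deleted. Every connected component of $G - S_M$ of size larger than one therefore contains a vertex of $M'$.

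Next, branch over the partition of $M'$ induced by the final connected components. We guess a partition $\Pi$ of $M'$ into blocks, with at most $|M'|^{|M'|}$ choices. The intended meaning is that two vertices of $M'$ lie in the same block exactly when they end in the same component of $G - S$. For fixed $\Pi$ we must decide which vertices of $I$ to delete. Each remaining vertex $w \in I \setminus S$ with $N(w) \cap M' \neq \emptyset$ joins the component of its $M'$-neighbors. Hence $w$ may be kept only if $N(w) \cap M'$ lies inside a single block $B$, and then it contributes one vertex to the component of $B$. This lets us group the vertices of $I$ by neighborhood type $N(w) \cap M' \subseteq M'$; there are at most $2^{|M'|}$ types.

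Consistency of $\Pi$ requires two things. First, each block $B$ must actually be connected after deletion, using edges inside $G[B]$ plus kept vertices of $I$ whose type meets $B$. Second, the total size $|B|$ plus the number of kept $I$-vertices attached to $B$ must be at most $d$. So we must keep, for each block, at most $d - |B|$ of its attached $I$-vertices, and all vertices of types that straddle several blocks must be deleted. To stay \FPT{} and avoid dealing with connectivity optimally, we may relax the model: we only require that no block-union component exceeds $d$. Concretely, we guess for each block $B$ a connected ``witness'' set of at most $|B|-1$ types that must keep at least one vertex to connect $B$. Then, greedily, we keep as many $I$-vertices attached to $B$ as allowed, up to $d - |B|$, and delete the rest. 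The count of deletions is then a closed-form expression in the type counts. We minimize over all guesses and compare the result with $k$.

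The step I expect to be the main obstacle is justifying that connectivity and the size budget interact correctly when the guessed blocks need not be genuinely connected. The cleanest fix is to drop connectivity entirely. We instead require only that $\Pi$ is a partition such that no $M'$-vertices in different blocks are joined by an edge or by a kept $I$-vertex, and that each block together with its kept vertices has size at most $d$. Any solution yields such a $\Pi$, namely its components restricted to $M'$. Conversely, any feasible choice for such a $\Pi$ is a \dcocset{}, because every component is contained in a single block plus its attached kept vertices. Under this relaxation, the optimal count per block is simply the number of attached $I$-vertices minus $\max(0, d - |B|)$. If some edge of $G[M']$ crosses two blocks, the branch is infeasible. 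The whole algorithm runs in time $2^{|M|}\cdot |M|^{|M|}\cdot n^{\Oh(1)}$, which is \FPT{}.
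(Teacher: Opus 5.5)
Your algorithm is correct, but it handles the vertices of $G-M$ differently from the paper.

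\textbf{What the paper does.} It branches over colorings $\Gamma\colon M\to[0,|M|]$: color $0$ means deleted, and the other colors are component labels. This carries exactly the information of your pair $(S_M,\Pi)$. It then deletes precisely the vertices of $G-M$ that see two distinct nonzero colors, and merely verifies whether the resulting set is a solution of size at most $k$. Its correctness rests on an exchange argument. In some optimal solution, every deleted vertex $w$ of $G-M$ has undeleted neighbors in two different components. If not, $w$ can be dropped (when $N(w)\subseteq S$), or swapped for one of its undeleted neighbors in the single component it sees, without enlarging any component.

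\textbf{What you do instead.} You avoid this normalization entirely. You keep the guessed $S\cap M$ as is and compute in closed form the cheapest way to handle the independent side. That cost is all straddling vertices plus $\max(0,a_B-(d-|B|))$ for each block $B$. Here $a_B$ is the number of vertices of $G-M$ whose $M'$-neighborhood is nonempty and contained in $B$.

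\textbf{Comparison.} Both arguments rest on the same observation: the guessed classes never need to be checked for connectivity. The paper's per-branch step is simpler, at the price of the exchange lemma. Yours computes the exact optimum for each branch directly, with no solution normalization.

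\textbf{Fixes for the write-up.}
\begin{itemize}
\item Discard the abandoned ``witness types'' paragraph.
\item Clip the per-block count at zero; your formula ``attached vertices minus $\max(0,d-|B|)$'' can go negative.
\item State explicitly that a branch is rejected if some block has $|B|>d$. Your size constraint implies this, but the formula alone does not catch it.
\item Make clear that the branch cost is $|S_M|$ plus the number of straddling vertices plus the per-block counts.
\end{itemize}
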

\begin{proof}
    Let the input instance be $(G,d,k,M)$.
    We branch over all (non-proper) colorings of $M$ using the colors $\range[0]{|M|}$.

    Let $\Gamma$ be the coloring of a fixed branch.
    \begin{itemize}
        \item  Set $S_1$ to be those vertices $v \in M$ with $\Gamma(v) = 0$, which indicates that $v$ is part of the solution set.
        \item Set $S_2 = S_1$.
        \item Iterate over all $w \in G - M$.
              If there are $m_1,m_2 \in N(w)$ such that $\Gamma(m_1) \not = 0$ and $\Gamma(m_2) \not = 0$ and $\Gamma(m_1) \not = \Gamma(m_2)$, then add $w$ to $S_2$.
        \item If $S_2$ is a \dcocset{} of $G$ of size at most $k$ output yes.
    \end{itemize}
    If no branch outputs that the instance is a yes-instance, output that it is a no-instance.
    Let us now proceed to the correctness proof of the algorihm.

    \begin{claim}
        \label{claim:thm:coc_mod_to_vc_fpt_correctness_1}
        The algorithm outputs no if $(G,d,k,M)$ is a no-instance.
    \end{claim}
    \begin{claimproof}
        Since the algorithm can only output yes if a \dcocset{} of $G$ of size at most $k$ is found, if the input instance is a no-instance, the algorithm will always output no.
    \end{claimproof}

    Finally, we need to show that the algorithm outputs yes if the input instance is a yes-instance.
    \begin{claim}
        \label{claim:thm:coc_mod_to_vc_fpt_correctness_2}
        If $(G,d,k,M)$ is a yes-instance, then the algorithm outputs yes.
    \end{claim}
    \begin{claimproof}
        Let $S$ be a \dcocset{} of $G$ of size at most $k$.
        Let the connected components of $G - S$ that contain vertices of $M$ be $C_1,\dots,C_t$ (clearly $t \leq |M|$).

        We now show that we can assume that $S$ selects no vertex $w$ of $G - M$ such that $N(w) \setminus S \subseteq V(C_i)$ for some $i$.
        Towards a contradiction, assume there is such a $w \in S$.
        If $N(w) \subseteq S$ selecting $w$ is not necessary.
        Otherwise, we can select an arbitrary vertex of $N(w) \setminus S$ instead of selecting $w$.
        This cannot increase the size of a connected component because all vertices of $N(w) \setminus S$ are already in the same connected component in $G-S$.
        Hence, we can assume that all selected vertices of $G - M$ have unselected neighbors in two different connected components of $C_1,\dots,C_t$.

        Clearly, there is a coloring $\Gamma$ of $M$ with the colors $\range[0]{|M|}$ such that all vertices in $S \cap M$ get color $0$, and all vertices in $C_i$ get color $i$ for all $i \in \range{t}$.
        Let us inspect the algorithm for this branch $\Gamma$.
        First, it creates set $S_1$ which clearly fulfills $S_1 = S \cap M$.
        Then, it creates $S_2$ by selecting those vertices of $G - M$ which have neighbors with different colors, which thus correspond to vertices having neighbors in different connected components of $G - S$.
        It is not difficult to confirm that we have $S_2 \subseteq S$.
        However, since $S$ selects no vertex $w$ of $G - M$ that has unselected neighbors in only a single connected component of $G - S$, we obtain that $S = S_2$, and the algorithm outputs yes.
    \end{claimproof}

    It is clear that the sketched algorithm terminates and runs in time $f(|M|) \cdot |V(G)|^{\Oh(1)}$ for a computable function $f$, and it is correct by \cref{claim:thm:coc_mod_to_vc_fpt_correctness_1,claim:thm:coc_mod_to_vc_fpt_correctness_2}.
\end{proof} 
\section{Problem Definitions}
\label{sec:problem_definitions}
In this section we collect and formally restate the definitions of the problems we consider in the paper.

\subsection{Component Order Connectivity Variants}
Recall that for an integer $d \geq 1$ and graph $G$, a \dcocset{} of $G$ is a set $S$ such that each connected component of $G - S$ has size at most $d$.

\defproblem{\phantomsection\cocFullAbbreviated{}}{Graph $G$, integer $d \geq 1$, integer $k$}{Is there a \dcocset{} $S$ of $G$ with $|S| \leq k$?}
\label{problem:coc}

If $d$ is a fixed-constant, we instead obtain the problem \dcoc{}. 
For each fixed $d$ we have a different problem.

\defproblem{\phantomsection\dcoc{}}{Graph $G$, integer $k$}{Is there a \dcocset{} $S$ of $G$ with $|S| \leq k$?}
\label{problem:dcoc}
\label{problem:vertex_cover}

The problem \vcFullAbbreviated{} is the same as $1$-\coc{}.
Now, we proceed to the basic parameterized problems.
First, we have the parameterization by $d + k$.

\defparproblem{\phantomsection\cocSolSizePlusD{}}{Graph $G$, integer $d \geq 1$, integer $k \geq 0$}{$d+k$}{Is there a \dcocset{} $S$ of $G$ with $|S| \leq k$?}
\label{problem:coc_sol_size_plus_d}

Next, we consider structural parameters.
In the following, $\mathcal{G}$ always denotes a fixed graph class, and we parameterize by the size of a modulator $M$ to $\mathcal{G}$.
For the problem \coc{} we additionally consider $d$ as part of the parameter.

\defparproblem{\phantomsection\cocModToGPlusD}{Graph $G$, integer $d \geq 1$, integer $k$, set $M$ such that $G - M \in \mathcal{G}$}{$d + \abs{M}$}{Is there a \dcocset{} $S$ of $G$ with $|S| \leq k$?}
\label{problem:coc_mod_to_g_plus_d}

When we consider the problem \dcoc{} instead of \coc{} $d$ is a constant, and thus it is naturally no longer part of the parameter.

\defparproblem{\phantomsection\dcocModToG}{Graph $G$, integer $k$, set $M$ such that $G - M \in \mathcal{G}$}{$\abs{M}$}{Is there a \dcocset{} $S$ of $G$ with $|S| \leq k$?}
\label{problem:dcoc_mod_to_g}

Our result in \cref{main_thm:poly_kernel_caterpillar_mod} is for the problem \cocModToCaterpillarsPlusD{}, which is \cocModToGPlusD{} where $\mathcal{G}$ is fixed to be the class of graphs with pathwidth at most $1$, that is, the class of caterpillar forests.

\defparproblem{\phantomsection\cocModToCaterpillarsPlusD}{Graph $G$, integer $d \geq 1$, integer $k$, set $M$ such that $G - M$ is a caterpillar forest}{$d + \abs{M}$}{Is there a \dcocset{} $S$ of $G$ with $|S| \leq k$?}
\label{problem:coc_mod_to_caterpillars_plus_d}

The result also applies to the problem where $d$ is fixed.

\defparproblem{\phantomsection\dcocModToCaterpillars{}}{Graph $G$, integer $k$, set $M$ such that $G - M$ is a caterpillar forest}{$\abs{M}$}{Is there a \dcocset{} $S$ of $G$ with $|S| \leq k$?}
\label{problem:dcoc_mod_to_caterpillars}

We additionally provide results for the following problems where $d$ is not part of the parameter.

\defparproblem{\phantomsection\cocModToVCWithoutK}{Graph $G$, integer $d \geq 1$, integer $k$, set $M$ such that $G - M$ is an independent set}{$\abs{M}$}{Is there a \dcocset{} $S$ of $G$ with $|S| \leq k$?}
\label{problem:coc_mod_to_vc_without_k}

\defparproblem{\phantomsection\cocModToVCPlusK}{Graph $G$, integer $d \geq 1$, integer $k$, set $M$ such that $G - M$ is an independent set}{$k + \abs{M}$}{Is there a \dcocset{} $S$ of $G$ with $|S| \leq k$?}
\label{problem:coc_mod_to_vc_plus_k}

\defparproblem{\phantomsection\cocModToVCAnnotatedPlusK}{Graph $G$, integer $d \geq 1$, integer $k$, set $M$ such that $G - M$ is an independent set, set $A \subseteq \binom{M}{2}$}{$k + \abs{M}$}{Is there a set $S \subseteq V(G)$ of size at most $k$ that is a \dcocset{} of $G$ and for each $\{u,v\} \in A$ we have $S \cap \{u,v\} \not = \emptyset$?}
\label{problem:coc_mod_to_vc_annotated_plus_k}

\defparproblem{\phantomsection\cocModToCaterpillarsWithoutD}{Graph $G$, integer $d \geq 1$, integer $k$, set $M$ such that $G - M$ is a caterpillar forest}{$\abs{M}$}{Is there a \dcocset{} $S$ of $G$ with $|S| \leq k$?}
\label{problem:coc_mod_to_caterpillars_without_d}

\subsection{Further Problems}

For our lower bound in \cref{main_thm:w_one_hardness} we reduce from \MRSS{} which is \WOne{}-hard \cite[Lemma 4]{ganianStructuralParameterizationsBoundedDegree2021}.

\defparproblem{\phantomsection\MRSSFullAbbrv{}}{Integer $k$, set $S = \{s_1,\dots,s_n\}$ of vectors where each $s_i \in \mathbb{N}^k$, target vector $t \in \mathbb{N}^k$, integer $k'$, all integers of the instance are encoded in unary.}{$k + k'$}{Is there a set $S' \subseteq S$ with $|S'| \leq k'$ and $\sum_{s \in S'} s \geq t$?}
\label{problem:mrss}

For our kernelization lower bound in \cref{main_thm:vc_kernelization_lower_bound} we reduce from \exactSetCover{} parameterized by the universe size, which we denote by \exactSetCoverUniverse{}.
A lower bound for the problem follows from the work of Dell and Marx \cite{dellKernelizationPackingProblems2012}.

\defproblem{\phantomsection\exactSetCover}{Universe $\mathcal{U}$, set family $\mathcal{F} \subseteq \binom{\mathcal{U}}{t}$ for some integer $t$, integer $k$}{Is there a set $\mathcal{F}' \subseteq \mathcal{F}$ of size $k$ such that each element of $\mathcal{U}$ is in exactly one set of $\mathcal{F}'$?}
\label{problem:exact_set_cover}

\defparproblem{\phantomsection\exactSetCoverUniverse{} }{Universe $\mathcal{U}$, set family $\mathcal{F} \subseteq \binom{\mathcal{U}}{t}$ for some integer $t$, integer $k$}{$|\mathcal{U}|$}{Is there a set $\mathcal{F}' \subseteq \mathcal{F}$ of size $k$ such that each element of $\mathcal{U}$ is in exactly one set of $\mathcal{F}'$?}
\label{problem:exact_set_cover_universe}

Note that in our definition of \exactSetCoverUniverse{} each set has the same size $t$, but $t$ is not a constant and also not part of the parameter.

We provide no results for the following problems, but provide their definitions here nevertheless since they are mentioned in our paper.

\defproblem{\phantomsection\fvs{}}{Graph $G$, integer $k$}{Is there a set $S \subseteq V(G)$ of size at most $k$ such that $G - S$ is acyclic?}
\label{problem:fvs}

\defproblem{\phantomsection\boundedDegreeDeletion{}}{Graph $G$, integers $k$,$t$}{Is there a set $S \subseteq V(G)$ of size at most $k$ such that the maximum degree of $G - S$ is at most $t$?}
\label{problem:bounded_degree_deletion}

In the \fdeletion{} problem $\mathcal{F}$ is a (usually finite) set of graphs.

\defproblem{\phantomsection\fdeletion{}}{Graph $G$, integer $k$}{Is there a set $S \subseteq V(G)$ of size at most $k$ such that $G - S$ has no graph of $\mathcal{F}$ as minor?}
\label{problem:fdeletion} 
\bibliography{bib}

@article{holsEliminationDistancesBlocking2022,
  title     = {Elimination Distances, Blocking Sets, and Kernels for Vertex Cover},
  author    = {Eva{-}Maria C. Hols and Stefan Kratsch and Astrid Pieterse},
  year      = {2022},
  journal   = {{SIAM} J. Discret. Math.},
  volume    = {36},
  number    = {3},
  pages     = {1955--1990},
  doi       = {10.1137/20m1335285},
  bibsource = {dblp computer science bibliography, https://dblp.org},
  url       = {https://doi.org/10.1137/20m1335285}
}

@article{bougeretBridgedepthCharacterizesWhich2022,
  title     = {Bridge-Depth Characterizes which Minor-Closed Structural Parameterizations of Vertex Cover Admit a Polynomial Kernel},
  author    = {Marin Bougeret and Bart M. P. Jansen and Ignasi Sau},
  year      = {2022},
  journal   = {{SIAM} J. Discret. Math.},
  volume    = {36},
  number    = {4},
  pages     = {2737--2773},
  doi       = {10.1137/21m1400766},
  bibsource = {dblp computer science bibliography, https://dblp.org},
  url       = {https://doi.org/10.1137/21m1400766}
}

@article{jansenVertexCoverKernelization2013b,
  title     = {Vertex Cover Kernelization Revisited - Upper and Lower Bounds for a Refined Parameter},
  author    = {Bart M. P. Jansen and Hans L. Bodlaender},
  year      = {2013},
  journal   = {Theory Comput. Syst.},
  volume    = {53},
  number    = {2},
  pages     = {263--299},
  doi       = {10.1007/s00224-012-9393-4},
  bibsource = {dblp computer science bibliography, https://dblp.org},
  url       = {https://doi.org/10.1007/s00224-012-9393-4}
}

@article{bougeretHowMuchDoes2019,
  title     = {How Much Does a Treedepth Modulator Help to Obtain Polynomial Kernels Beyond Sparse Graphs?},
  author    = {Marin Bougeret and Ignasi Sau},
  year      = {2019},
  journal   = {Algorithmica},
  volume    = {81},
  number    = {10},
  pages     = {4043--4068},
  doi       = {10.1007/s00453-018-0468-8},
  bibsource = {dblp computer science bibliography, https://dblp.org},
  url       = {https://doi.org/10.1007/s00453-018-0468-8}
}

@inproceedings{greilhuberComponentOrderConnectivity2024,
  title     = {Component Order Connectivity Admits No Polynomial Kernel Parameterized by the Distance to Subdivided Comb Graphs},
  booktitle = {19th International Symposium on Parameterized and Exact Computation, {IPEC} 2024, September 4-6, 2024, Royal Holloway, University of London, Egham, United Kingdom},
  author    = {Jakob Greilhuber and Roohani Sharma},
  editor    = {{\'{E}}douard Bonnet and Pawel Rzazewski},
  year      = {2024},
  series    = {LIPIcs},
  volume    = {321},
  pages     = {21:1--21:17},
  publisher = {Schloss Dagstuhl - Leibniz-Zentrum f{\"{u}}r Informatik},
  doi       = {10.4230/LIPIcs.IPEC.2024.21},
  url       = {https://doi.org/10.4230/LIPIcs.IPEC.2024.21},
  urldate   = {2025-03-18},
  bibsource = {dblp computer science bibliography, https://dblp.org}
}

@book{cyganParameterizedAlgorithms2015,
  title     = {Parameterized Algorithms},
  author    = {Marek Cygan and Fedor V. Fomin and Lukasz Kowalik and Daniel Lokshtanov and D{\'{a}}niel Marx and Marcin Pilipczuk and Michal Pilipczuk and Saket Saurabh},
  year      = {2015},
  publisher = {Springer},
  doi       = {10.1007/978-3-319-21275-3},
  isbn      = {978-3-319-21274-6},
  bibsource = {dblp computer science bibliography, https://dblp.org},
  url       = {https://doi.org/10.1007/978-3-319-21275-3}
}

@inproceedings{bhyravarapuDifferenceDeterminesDegree2023,
  title     = {Difference Determines the Degree: Structural Kernelizations of Component Order Connectivity},
  booktitle = {18th International Symposium on Parameterized and Exact Computation, {IPEC} 2023, September 6-8, 2023, Amsterdam, The Netherlands},
  author    = {Sriram Bhyravarapu and Satyabrata Jana and Saket Saurabh and Roohani Sharma},
  editor    = {Neeldhara Misra and Magnus Wahlstr{\"{o}}m},
  year      = {2023},
  series    = {LIPIcs},
  volume    = {285},
  pages     = {5:1--5:14},
  publisher = {Schloss Dagstuhl - Leibniz-Zentrum f{\"{u}}r Informatik},
  doi       = {10.4230/LIPIcs.IPEC.2023.5},
  bibsource = {dblp computer science bibliography, https://dblp.org},
  url       = {https://doi.org/10.4230/LIPIcs.IPEC.2023.5}
}

@article{majumdarPolynomialKernelsVertex2018,
  title     = {Polynomial Kernels for Vertex Cover Parameterized by Small Degree Modulators},
  author    = {Diptapriyo Majumdar and Venkatesh Raman and Saket Saurabh},
  year      = {2018},
  journal   = {Theory Comput. Syst.},
  volume    = {62},
  number    = {8},
  pages     = {1910--1951},
  doi       = {10.1007/s00224-018-9858-1},
  url       = {https://doi.org/10.1007/s00224-018-9858-1},
  bibsource = {dblp computer science bibliography, https://dblp.org}
}

@article{gareyRectilinearSteinerTree1977,
  title     = {The Rectilinear Steiner Tree Problem is {NP} Complete},
  author    = {M. R. Garey and David S. Johnson},
  year      = {1977},
  journal   = {{SIAM} Journal of Applied Mathematics},
  volume    = {32},
  pages     = {826--834},
  bibsource = {dblp computer science bibliography, https://dblp.org}
}

@article{tuVertexCoverP_32013,
  title     = {The Vertex Cover {$P_3$} Problem in Cubic Graphs},
  author    = {Jianhua Tu and Fengmei Yang},
  year      = {2013},
  volume    = {113},
  number    = {13},
  pages     = {481--485},
  doi       = {10.1016/j.ipl.2013.04.002},
  url       = {https://doi.org/10.1016/j.ipl.2013.04.002},
  bibsource = {dblp computer science bibliography, https://dblp.org},
  journal   = {Inf. Process. Lett.}
}

@article{boliacComputingDissociationNumber2004,
  title     = {On computing the dissociation number and the induced matching number of bipartite graphs},
  author    = {Rodica Boliac and Kathie Cameron and Vadim V. Lozin},
  year      = {2004},
  journal   = {Ars Comb.},
  volume    = {72},
  bibsource = {dblp computer science bibliography, https://dblp.org}
}

@inproceedings{fominVertexCoverStructural2016,
  title     = {Vertex Cover Structural Parameterization Revisited},
  booktitle = {Graph-Theoretic Concepts in Computer Science - 42nd International Workshop, {WG} 2016, Istanbul, Turkey, June 22-24, 2016, Revised Selected Papers},
  author    = {Fedor V. Fomin and Torstein J. F. Str{\o}mme},
  editor    = {Pinar Heggernes},
  year      = {2016},
  series    = {Lecture Notes in Computer Science},
  volume    = {9941},
  pages     = {171--182},
  doi       = {10.1007/978-3-662-53536-3_15},
  url       = {https://doi.org/10.1007/978-3-662-53536-3\_15},
  bibsource = {dblp computer science bibliography, https://dblp.org}
}

@inproceedings{karpReducibilityCombinatorialProblems1972,
  title     = {Reducibility Among Combinatorial Problems},
  booktitle = {Proceedings of a symposium on the Complexity of Computer Computations, held March 20-22, 1972, at the {IBM} Thomas J. Watson Research Center, Yorktown Heights, New York, {USA}},
  author    = {Richard M. Karp},
  editor    = {Raymond E. Miller and James W. Thatcher},
  year      = {1972},
  series    = {The {IBM} Research Symposia Series},
  pages     = {85--103},
  publisher = {Plenum Press, New York},
  doi       = {10.1007/978-1-4684-2001-2_9},
  url       = {https://doi.org/10.1007/978-1-4684-2001-2\_9},
  bibsource = {dblp computer science bibliography, https://dblp.org}
}

@article{donkersTuringKernelizationDichotomy2021,
  title     = {A Turing kernelization dichotomy for structural parameterizations of F-Minor-Free Deletion},
  author    = {Huib Donkers and Bart M. P. Jansen},
  year      = {2021},
  journal   = {J. Comput. Syst. Sci.},
  volume    = {119},
  pages     = {164--182},
  doi       = {10.1016/j.jcss.2021.02.005},
  url       = {https://doi.org/10.1016/j.jcss.2021.02.005},
  bibsource = {dblp computer science bibliography, https://dblp.org}
}

@article{lewisNodeDeletionProblemHereditary1980,
  title     = {The Node-Deletion Problem for Hereditary Properties is NP-Complete},
  author    = {John M. Lewis and Mihalis Yannakakis},
  year      = {1980},
  journal   = {J. Comput. Syst. Sci.},
  volume    = {20},
  number    = {2},
  pages     = {219--230},
  doi       = {10.1016/0022-0000(80)90060-4},
  bibsource = {dblp computer science bibliography, https://dblp.org},
  url       = {https://doi.org/10.1016/0022-0000(80)90060-4}
}

@inproceedings{holsSmallerParametersVertex2017,
  title     = {Smaller Parameters for Vertex Cover Kernelization},
  booktitle = {12th International Symposium on Parameterized and Exact Computation, {IPEC} 2017, September 6-8, 2017, Vienna, Austria},
  author    = {Eva{-}Maria C. Hols and Stefan Kratsch},
  editor    = {Daniel Lokshtanov and Naomi Nishimura},
  year      = {2017},
  series    = {LIPIcs},
  volume    = {89},
  pages     = {20:1--20:12},
  publisher = {Schloss Dagstuhl - Leibniz-Zentrum f{\"{u}}r Informatik},
  doi       = {10.4230/LIPIcs.IPEC.2017.20},
  url       = {https://doi.org/10.4230/LIPIcs.IPEC.2017.20},
  bibsource = {dblp computer science bibliography, https://dblp.org}
}

@article{dekkerKernelizationFeedbackVertex2024,
  author    = {David Dekker and
               Bart M. P. Jansen},
  title     = {Kernelization for feedback vertex set via elimination distance to
               a forest},
  journal   = {Discret. Appl. Math.},
  volume    = {346},
  pages     = {192--214},
  year      = {2024},
  url       = {https://doi.org/10.1016/j.dam.2023.12.016},
  doi       = {10.1016/J.DAM.2023.12.016},
  bibsource = {dblp computer science bibliography, https://dblp.org}
}

@article{grossSurveyComponentOrder2013,
  title     = {A Survey of Component Order Connectivity Models of Graph Theoretic Networks},
  author    = {Gross, Daniel and Heinig, Monika and Iswara, Lakshmi and Kazmierczak, L William and Luttrell, Kristi and Saccoman, John T and Suffel, Charles},
  year      = {2013},
  journal   = {WSEAS Transactions on Mathematics},
  volume    = {12},
  number    = {9},
  pages     = {895--910},
  publisher = {WSEAS}
}

@inproceedings{bougeretKernelizationDichotomiesHitting2024a,
  title     = {Kernelization Dichotomies for Hitting Subgraphs Under Structural Parameterizations},
  booktitle = {51st International Colloquium on Automata, Languages, and Programming, {ICALP} 2024, July 8-12, 2024, Tallinn, Estonia},
  author    = {Marin Bougeret and Bart M. P. Jansen and Ignasi Sau},
  editor    = {Karl Bringmann and Martin Grohe and Gabriele Puppis and Ola Svensson},
  year      = {2024},
  series    = {LIPIcs},
  volume    = {297},
  pages     = {33:1--33:20},
  publisher = {Schloss Dagstuhl - Leibniz-Zentrum f{\"{u}}r Informatik},
  doi       = {10.4230/LIPIcs.ICALP.2024.33},
  url       = {https://doi.org/10.4230/LIPIcs.ICALP.2024.33},
  urldate   = {2025-04-02},
  bibsource = {dblp computer science bibliography, https://dblp.org}
}

@inproceedings{arnborgMonadicSecondOrder1990,
  title     = {Monadic Second Order Logic, Tree Automata and Forbidden Minors},
  booktitle = {Computer Science Logic, 4th Workshop, {CSL} '90, Heidelberg, Germany, October 1-5, 1990, Proceedings},
  author    = {Stefan Arnborg and Andrzej Proskurowski and Detlef Seese},
  editor    = {Egon B{\"{o}}rger and Hans Kleine B{\"{u}}ning and Michael M. Richter and Wolfgang Sch{\"{o}}nfeld},
  year      = {1990},
  series    = {Lecture Notes in Computer Science},
  volume    = {533},
  pages     = {1--16},
  publisher = {Springer},
  doi       = {10.1007/3-540-54487-9_49},
  url       = {https://doi.org/10.1007/3-540-54487-9\_49},
  urldate   = {2025-04-09},
  bibsource = {dblp computer science bibliography, https://dblp.org}
}

@article{DBLP:journals/mst/CyganLPPS14,
  author    = {Marek Cygan and Daniel Lokshtanov and Marcin Pilipczuk and Michal Pilipczuk and Saket Saurabh},
  title     = {On the Hardness of Losing Width},
  journal   = {Theory Comput. Syst.},
  volume    = {54},
  number    = {1},
  pages     = {73--82},
  year      = {2014},
  url       = {https://doi.org/10.1007/s00224-013-9480-1},
  doi       = {10.1007/s00224-013-9480-1},
  bibsource = {dblp computer science bibliography, https://dblp.org}
}

@article{DBLP:journals/tcs/JansenP20,
  author    = {Bart M. P. Jansen and Astrid Pieterse},
  title     = {Polynomial kernels for hitting forbidden minors under structural parameterizations},
  journal   = {Theor. Comput. Sci.},
  volume    = {841},
  pages     = {124--166},
  year      = {2020},
  url       = {https://doi.org/10.1016/j.tcs.2020.07.009},
  doi       = {10.1016/j.tcs.2020.07.009},
  bibsource = {dblp computer science bibliography, https://dblp.org}
}

@article{DBLP:journals/siamcomp/BussG93,
  author    = {Jonathan F. Buss and Judy Goldsmith},
  title     = {Nondeterminism Within {P}},
  journal   = {{SIAM} J. Comput.},
  volume    = {22},
  number    = {3},
  pages     = {560--572},
  year      = {1993},
  url       = {https://doi.org/10.1137/0222038},
  doi       = {10.1137/0222038},
  bibsource = {dblp computer science bibliography, https://dblp.org}
}

@inproceedings{DBLP:conf/wg/ChorFJ04,
  author    = {Benny Chor and Mike Fellows and David W. Juedes},
  editor    = {Juraj Hromkovic and Manfred Nagl and Bernhard Westfechtel},
  title     = {Linear Kernels in Linear Time, or How to Save k Colors in O(n\({}^{\mbox{2}}\)) Steps},
  booktitle = {Graph-Theoretic Concepts in Computer Science, 30th International Workshop,WG 2004, Bad Honnef, Germany, June 21-23, 2004, Revised Papers},
  series    = {Lecture Notes in Computer Science},
  volume    = {3353},
  pages     = {257--269},
  publisher = {Springer},
  year      = {2004},
  url       = {https://doi.org/10.1007/978-3-540-30559-0\_22},
  doi       = {10.1007/978-3-540-30559-0_22},
  bibsource = {dblp computer science bibliography, https://dblp.org}
}

@inproceedings{DBLP:conf/wg/Fellows03,
  author    = {Michael R. Fellows},
  editor    = {Hans L. Bodlaender},
  title     = {Blow-Ups, Win/Win's, and Crown Rules: Some New Directions in {FPT}},
  booktitle = {Graph-Theoretic Concepts in Computer Science, 29th International Workshop, {WG} 2003, Elspeet, The Netherlands, June 19-21, 2003, Revised Papers},
  series    = {Lecture Notes in Computer Science},
  volume    = {2880},
  pages     = {1--12},
  publisher = {Springer},
  year      = {2003},
  url       = {https://doi.org/10.1007/978-3-540-39890-5\_1},
  doi       = {10.1007/978-3-540-39890-5_1},
  bibsource = {dblp computer science bibliography, https://dblp.org}
}

@inproceedings{DBLP:conf/iwpec/KumarL16,
  author    = {Mithilesh Kumar and Daniel Lokshtanov},
  editor    = {Jiong Guo and Danny Hermelin},
  title     = {A 2lk Kernel for l-Component Order Connectivity},
  booktitle = {11th International Symposium on Parameterized and Exact Computation, {IPEC} 2016, August 24-26, 2016, Aarhus, Denmark},
  series    = {LIPIcs},
  volume    = {63},
  pages     = {20:1--20:14},
  publisher = {Schloss Dagstuhl - Leibniz-Zentrum f{\"{u}}r Informatik},
  year      = {2016},
  url       = {https://doi.org/10.4230/LIPIcs.IPEC.2016.20},
  doi       = {10.4230/LIPIcs.IPEC.2016.20},
  bibsource = {dblp computer science bibliography, https://dblp.org}
}

@article{DBLP:journals/jal/ChenKJ01,
  author    = {Jianer Chen and Iyad A. Kanj and Weijia Jia},
  title     = {Vertex Cover: Further Observations and Further Improvements},
  journal   = {J. Algorithms},
  volume    = {41},
  number    = {2},
  pages     = {280--301},
  year      = {2001},
  url       = {https://doi.org/10.1006/jagm.2001.1186},
  doi       = {10.1006/jagm.2001.1186},
  bibsource = {dblp computer science bibliography, https://dblp.org}
}

@article{DBLP:journals/sigact/Khuller02,
  author    = {Samir Khuller},
  title     = {Algorithms column: the vertex cover problem},
  journal   = {{SIGACT} News},
  volume    = {33},
  number    = {2},
  pages     = {31--33},
  year      = {2002},
  url       = {https://doi.org/10.1145/564585.564598},
  doi       = {10.1145/564585.564598},
  bibsource = {dblp computer science bibliography, https://dblp.org}
}

@article{drangeComputationalComplexityVertex2016,
  title     = {On the Computational Complexity of Vertex Integrity and Component Order Connectivity},
  author    = {P{\aa}l Gr{\o}n{\aa}s Drange and Markus S. Dregi and Pim van 't Hof},
  year      = {2016},
  journal   = {Algorithmica},
  volume    = {76},
  number    = {4},
  pages     = {1181--1202},
  doi       = {10.1007/s00453-016-0127-x},
  bibsource = {dblp computer science bibliography, https://dblp.org},
  url       = {https://doi.org/10.1007/s00453-016-0127-x}
}

@book{fominKernelizationTheoryParameterized2019,
  title     = {Kernelization: theory of parameterized preprocessing},
  author    = {Fomin, Fedor V. and Lokshtanov, Daniel and Saurabh, Saket and Zehavi, Meirav},
  year      = {2019},
  publisher = {Cambridge University Press},
  doi       = {10.1017/9781107415157}
}

@inproceedings{caselCombiningCrownStructures2024,
  title     = {Combining Crown Structures for Vulnerability Measures},
  booktitle = {19th International Symposium on Parameterized and Exact Computation, {IPEC} 2024, September 4-6, 2024, Royal Holloway, University of London, Egham, United Kingdom},
  author    = {Katrin Casel and Tobias Friedrich and Aikaterini Niklanovits and Kirill Simonov and Ziena Zeif},
  editor    = {{\'{E}}douard Bonnet and Pawel Rzazewski},
  year      = {2024},
  series    = {LIPIcs},
  volume    = {321},
  pages     = {1:1--1:15},
  publisher = {Schloss Dagstuhl - Leibniz-Zentrum f{\"{u}}r Informatik},
  doi       = {10.4230/LIPIcs.IPEC.2024.1},
  url       = {https://doi.org/10.4230/LIPIcs.IPEC.2024.1},
  urldate   = {2025-04-09},
  bibsource = {dblp computer science bibliography, https://dblp.org}
}

@inproceedings{DBLP:conf/esa/Casel0INZ21,
  title     = {Balanced Crown Decomposition for Connectivity Constraints},
  booktitle = {29th Annual European Symposium on Algorithms, {ESA} 2021, September 6-8, 2021, Lisbon, Portugal (Virtual Conference)},
  author    = {Katrin Casel and Tobias Friedrich and Davis Issac and Aikaterini Niklanovits and Ziena Zeif},
  editor    = {Petra Mutzel and Rasmus Pagh and Grzegorz Herman},
  year      = {2021},
  series    = {LIPIcs},
  volume    = {204},
  pages     = {26:1--26:15},
  publisher = {Schloss Dagstuhl - Leibniz-Zentrum f{\"{u}}r Informatik},
  doi       = {10.4230/LIPIcs.ESA.2021.26},
  url       = {https://doi.org/10.4230/LIPIcs.ESA.2021.26},
  bibsource = {dblp computer science bibliography, https://dblp.org}
}

@article{DBLP:journals/algorithmica/BaguleyFNNPZ25,
  title     = {Fixed Parameter Multi-Objective Evolutionary Algorithms for the W-Separator Problem},
  author    = {Samuel Baguley and Tobias Friedrich and Aneta Neumann and Frank Neumann and Marcus Pappik and Ziena Zeif},
  year      = {2025},
  journal   = {Algorithmica},
  volume    = {87},
  number    = {4},
  pages     = {537--571},
  doi       = {10.1007/s00453-024-01290-9},
  url       = {https://doi.org/10.1007/s00453-024-01290-9},
  bibsource = {dblp computer science bibliography, https://dblp.org}
}

@article{soleimanfallahKernelOrder2kc2011,
  author    = {Arezou Soleimanfallah and Anders Yeo},
  title     = {A kernel of order 2k-c for Vertex Cover},
  journal   = {Discret. Math.},
  volume    = {311},
  number    = {10-11},
  pages     = {892--895},
  year      = {2011},
  url       = {https://doi.org/10.1016/j.disc.2011.02.014},
  doi       = {10.1016/j.disc.2011.02.014},
  bibsource = {dblp computer science bibliography, https://dblp.org}
}

@article{lampisKernelOrder22011,
  author    = {Michael Lampis},
  title     = {A kernel of order 2 k-c log k for vertex cover},
  journal   = {Inf. Process. Lett.},
  volume    = {111},
  number    = {23-24},
  pages     = {1089--1091},
  year      = {2011},
  url       = {https://doi.org/10.1016/j.ipl.2011.09.003},
  doi       = {10.1016/J.IPL.2011.09.003},
  bibsource = {dblp computer science bibliography, https://dblp.org}
}

@article{xiaoLinearKernelsSeparating2017a,
  title     = {Linear kernels for separating a graph into components of bounded size},
  author    = {Mingyu Xiao},
  year      = {2017},
  journal   = {J. Comput. Syst. Sci.},
  volume    = {88},
  pages     = {260--270},
  doi       = {10.1016/j.jcss.2017.04.004},
  url       = {https://doi.org/10.1016/j.jcss.2017.04.004},
  bibsource = {dblp computer science bibliography, https://dblp.org}
}

@phdthesis{kazmierczak2003relationship,
  title  = {On the relationship between connectivity and component order connectivity},
  author = {Kazmierczak, Lawrence William},
  year   = {2003},
  school = {Stevens Institute of Technology},
  url    = {https://dl.acm.org/doi/10.5555/959796}
}

@article{ganianStructuralParameterizationsBoundedDegree2021,
  title    = {On Structural Parameterizations of the Bounded-Degree Vertex Deletion Problem},
  author   = {Ganian, Robert and Klute, Fabian and Ordyniak, Sebastian},
  year     = {2021},
  month    = jan,
  journal  = {Algorithmica},
  volume   = {83},
  number   = {1},
  pages    = {297--336},
  issn     = {1432-0541},
  doi      = {10.1007/s00453-020-00758-8},
  url      = {https://doi.org/10.1007/s00453-020-00758-8},
  urldate  = {2023-07-12},
  langid   = {english}
}

@inproceedings{dellKernelizationPackingProblems2012,
  author    = {Holger Dell and
               D{\'{a}}niel Marx},
  editor    = {Yuval Rabani},
  title     = {Kernelization of packing problems},
  booktitle = {Proceedings of the Twenty-Third Annual {ACM-SIAM} Symposium on Discrete
               Algorithms, {SODA} 2012, Kyoto, Japan, January 17-19, 2012},
  pages     = {68--81},
  publisher = {{SIAM}},
  year      = {2012},
  url       = {https://doi.org/10.1137/1.9781611973099.6},
  doi       = {10.1137/1.9781611973099.6},
  bibsource = {dblp computer science bibliography, https://dblp.org}
}

@article{giannopoulouUniformKernelizationComplexity2017,
  author    = {Archontia C. Giannopoulou and
               Bart M. P. Jansen and
               Daniel Lokshtanov and
               Saket Saurabh},
  title     = {Uniform Kernelization Complexity of Hitting Forbidden Minors},
  journal   = {{ACM} Trans. Algorithms},
  volume    = {13},
  number    = {3},
  pages     = {35:1--35:35},
  year      = {2017},
  url       = {https://doi.org/10.1145/3029051},
  doi       = {10.1145/3029051},
  bibsource = {dblp computer science bibliography, https://dblp.org}
}

@article{cockayneMatchingsTransversalsHypergraphs1979,
  author    = {Ernest J. Cockayne and
               Stephen T. Hedetniemi and
               Peter J. Slater},
  title     = {Matchings and transversals in hypergraphs, domination and independence-in
               trees},
  journal   = {J. Comb. Theory {B}},
  volume    = {26},
  number    = {1},
  pages     = {78--80},
  year      = {1979},
  url       = {https://doi.org/10.1016/0095-8956(79)90044-3},
  doi       = {10.1016/0095-8956(79)90044-3},
  bibsource = {dblp computer science bibliography, https://dblp.org}
}

@article{howieProductsIdempotentOrderpreserving1973,
  title   = {Products of Idempotent Order-Preserving Transformations},
  author  = {Howie, J. M. and Schein, B. M.},
  year    = {1973},
  journal = {Journal of the London Mathematical Society},
  volume  = {s2-7},
  number  = {2},
  pages   = {357--366},
  issn    = {0024-6107},
  doi     = {10.1112/jlms/s2-7.2.357},
  url     = {https://doi.org/10.1112/jlms/s2-7.2.357}
}

@article{howieProductsIdempotentsCertain1971,
  title   = {Products of Idempotents in Certain Semigroups of Transformations},
  author  = {Howie, J. M.},
  year    = {1971},
  journal = {Proceedings of the Edinburgh Mathematical Society},
  volume  = {17},
  number  = {3},
  pages   = {223--236},
  doi     = {10.1017/S0013091500026936}
}

@article{ahmadubellouniversityQuasiidempotentsFiniteSemigroup2023,
  title   = {Quasi-idempotents in finite semigroup of full order-preserving transformations},
  author  = {Imam, A. T. and Ibrahim, S. and Garba, G. U. and Usman, L. and Idris, A.},
  journal = {Algebra and Discrete Mathematics},
  pages   = {62--72},
  volume  = {35},
  number  = {1},
  year    = {2023},
  doi     = {10.12958/adm1846}
}

@article{higginsCombinatorialResultsSemigroups1993,
  title   = {Combinatorial Results for Semigroups of Order-Preserving Mappings},
  author  = {Higgins, Peter M.},
  year    = {1993},
  journal = {Mathematical Proceedings of the Cambridge Philosophical Society},
  volume  = {113},
  number  = {2},
  pages   = {281--296},
  doi     = {10.1017/S0305004100075964}
}

@article{higginsIdempotentDepthSemigroups1994,
  title   = {Idempotent Depth in Semigroups of Order-Preserving Mappings},
  author  = {Higgins, Peter M.},
  year    = {1994},
  journal = {Proceedings of the Royal Society of Edinburgh: Section A Mathematics},
  volume  = {124},
  number  = {5},
  pages   = {1045--1058},
  doi     = {10.1017/S0308210500022502}
}

@article{gomesRanksCertainSemigroups1992,
  title   = {On the Ranks of Certain Semigroups of Order-Preserving Transformations},
  author  = {Gomes, Gracinda M. S. and Howie, John M.},
  year    = {1992},
  journal = {Semigroup Forum},
  volume  = {45},
  number  = {1},
  pages   = {272--282},
  issn    = {1432-2137},
  doi     = {10.1007/BF03025769},
  url     = {https://doi.org/10.1007/BF03025769}
}

@article{kudryavtsevaPartializationCategoriesInverse2008,
  title     = {Partialization of Categories and Inverse Braid-Permutation Monoids},
  author    = {Kudryavtseva, Ganna and Mazorchuk, Volodymyr},
  year      = {2008},
  journal   = {International Journal of Algebra and Computation},
  volume    = {18},
  number    = {6},
  pages     = {989--1017},
  doi       = {10.1142/S0218196708004731},
  url       = {https://doi.org/10.1142/S0218196708004731},
  bibsource = {dblp computer science bibliography, https://dblp.org}
}

@inproceedings{bougeretKernelizationDichotomiesHitting2025,
  author    = {Marin Bougeret and
               Eric Brandwein and
               Ignasi Sau},
  editor    = {Meena Mahajan and
               Florin Manea and
               Annabelle McIver and
               Kim Thang Nguyen},
  title     = {Kernelization Dichotomies for Hitting Minors Under Structural Parameterizations},
  booktitle = {43rd International Symposium on Theoretical Aspects of Computer Science,
               {STACS} 2026, Grenoble, France, March 9-13, 2026},
  series    = {LIPIcs},
  volume    = {364},
  pages     = {17:1--17:19},
  publisher = {Schloss Dagstuhl - Leibniz-Zentrum f{\"{u}}r Informatik},
  year      = {2026},
  url       = {https://doi.org/10.4230/LIPIcs.STACS.2026.17},
  doi       = {10.4230/LIPICS.STACS.2026.17},
  bibsource = {dblp computer science bibliography, https://dblp.org}
}

@inproceedings{guptaLosingTreewidthSeparating2019,
  author    = {Anupam Gupta and
               Euiwoong Lee and
               Jason Li and
               Pasin Manurangsi and
               Michal Wlodarczyk},
  editor    = {Timothy M. Chan},
  title     = {Losing Treewidth by Separating Subsets},
  booktitle = {Proceedings of the Thirtieth Annual {ACM-SIAM} Symposium on Discrete
               Algorithms, {SODA} 2019, San Diego, California, USA, January 6-9,
               2019},
  pages     = {1731--1749},
  publisher = {{SIAM}},
  year      = {2019},
  url       = {https://doi.org/10.1137/1.9781611975482.104},
  doi       = {10.1137/1.9781611975482.104},
  bibsource = {dblp computer science bibliography, https://dblp.org}
}

@inproceedings{fominPlanarFdeletionApproximation2012,
  title     = {Planar F-Deletion: Approximation, Kernelization and Optimal FPT Algorithms},
  booktitle = {53rd Annual IEEE Symposium on Foundations of Computer Science, FOCS 2012, New Brunswick, NJ, USA, October 20-23, 2012},
  author    = {Fomin, Fedor V. and Lokshtanov, Daniel and Misra, Neeldhara and Saurabh, Saket},
  year      = 2012,
  pages     = {470--479},
  publisher = {IEEE Computer Society},
  doi       = {10.1109/FOCS.2012.62},
  url       = {https://doi.org/10.1109/FOCS.2012.62},
  bibsource = {dblp computer science bibliography, https://dblp.org}
}

\end{document}